\documentclass[11pt]{article}

\usepackage[numbers,sort&compress]{natbib}

\usepackage{tocloft}

\usepackage[title=article,pagegeom=wide,secfmt={},bibliographydefs=false]{phfnote}
\usepackage[smallproofs=false,thmset=empty]{phfthm}
\usepackage{phfqit}
\usepackage{phfparen}

\usepackage{mathptmx}

\usepackage[nameinlink,capitalize,noabbrev]{cleveref}

\phfLoadThmSet{[%
  parentcounter=section,%
  thmstyle=plain%
]}{[%
  parentcounter=section,%
  thmstyle=definition,%
  proofref=false%
]}{default}

\setlength{\cftbeforesecskip}{0.7em plus 1pt}

\makeatletter

\def\Hfn{\Hfnbase{H}{}{}}
\def\Sfn{\Hfn} \def\HH{\Hbase{H}{}}
\def\Hzero{\Hbase{\hat{H}}{\mathrm{max}}}

\newcommand{\Dhyp}[1][\eta]{\Dbase{\DSym}_{\mathrm{h}}^{#1}}

\DefineTunedQitObject{DCohz}{\DCohbaseParse}{\DCohbaseRender}{{\hat\DSym}}{ }
\DefineTunedQitObject{WProc}{\DCohbaseParse}{\DCohbaseRender}{{W}}{ }

\newcommand{\MM}{\Dbase{M}}

\DeclareMathOperator\YoungOp{Young}

\DeclareMathOperator\SHIFTop{\mathsf{SHIFT}}

\def\Young#1#2{\YoungOp(#1,#2)}

\def\mapinlambda#1[#2]{\;[\;{#1}\;]_{#2}\; }

\def\equationautorefname~#1\null{Eq.~(#1)\null }

\let\@oldunderscore\_
\def\_{\begingroup\fontfamily{txr}\selectfont\@oldunderscore\endgroup}

\makeatother 

\makeatletter
\appto\phfthm@hookproof@proof@startafterdisplay{%
  \parskip=0.5em%
  \parindent=0pt%
  \sloppy%
}

\renewcommand*{\eqref}[1]{%
  \hyperref[{#1}]{\textup{\tagform@{\ref*{#1}}}}%
}
\makeatother

\usepackage{ragged2e}
\DeclareCaptionJustification{myjust}{\justify}
\captionsetup{style=base,justification=myjust,font=small,singlelinecheck=true}

\parindent=1.5em\relax
\parskip=3pt plus 2ex\relax
\setstretch{1.06}
\captionsetup{font={stretch=1.06}}

\makeatletter
\AtBeginDocument{\immediate\write\@auxout{\string\citation{apsrev41Control}}}
\makeatother
\def\bibliographysetup{%
  \def\bibitemNoStop{}%
  \def\selectlanguage##1{}%
  \bibsep=2pt\relax
  \def\mybibliographyfont{}%
  \apptocmd{\thebibliography}{\def\ {\unskip\space}\mybibliographyfont}{}{}%
}

\numberwithin{equation}{section}

\begin{document}

\title{Thermodynamic Implementations of Quantum Processes}

\author{Philippe Faist,$^{1,2,3}$\quad Mario Berta,$^{4,5}$\quad
  and\quad Fernando G.\@ S.\@ L.\@ Brandao$^{1,5}$\\[1ex]
  \emph{\small $^1$%
    Institute for Quantum Information and Matter, Caltech, Pasadena, CA 91125, U.S.A.}\\
  \emph{\small $^2$%
    Institute for Theoretical Physics, ETH Zurich, 8093 Zurich, Switzerland}\\
  \emph{\small $^3$%
    Dahlem Center for Complex Quantum Systems, Freie Universit\"at Berlin, 14195 Berlin, Germany}\\
  \emph{\small $^4$%
    Department of Computing, Imperial College London, London SW7 2AZ,
    United Kingdom}\\
  \emph{\small $^5$%
    AWS  Center  for  Quantum  Computing,  Pasadena,  CA  91125,  U.S.A.}%
}

\date{July~21, 2021}

\maketitle

\begin{abstract}
  Recent understanding of the thermodynamics of small-scale systems have enabled
  the characterization of the thermodynamic requirements of implementing quantum
  processes for fixed input states. Here, we extend these results to construct
  optimal universal implementations of a given process, that is, implementations
  that are accurate for any possible input state even after many independent and
  identically distributed (i.i.d.\@) repetitions of the process.  We find that the
  optimal work cost rate of such an implementation is given by the thermodynamic
  capacity of the process, which is a single-letter and additive quantity
  defined as the maximal difference in relative entropy to the thermal state
  between the input and the output of the channel.
  As related results we find a new single-shot implementation of time-covariant
  processes and conditional erasure with nontrivial Hamiltonians,
  a new proof of the asymptotic
  equipartition property of the coherent relative entropy, and an optimal
  implementation of any i.i.d.\@ process with thermal operations for a fixed
  i.i.d.\@ input state.
  Beyond being a thermodynamic analogue of the reverse Shannon theorem for
  quantum channels, our results introduce a new notion of quantum typicality
  and present a thermodynamic application of convex-split methods.
\end{abstract}

\tableofcontents
\section{Introduction}

In the information-theoretic approach to thermodynamics, a careful analysis of the
resources required to perform thermodynamic tasks has allowed to consistently
and systematically describe the thermodynamic behaviour of quantum systems at
the nano-scale \cite{Goold2016JPA_review}. In particular, thermodynamics can be
phrased as a resource theory~\cite{Brandao2013_resource,Brandao2015PNAS_secondlaws,Chitambar2019RMP_resource}.
In a resource theory, one specifies which operations can be carried out at no
cost\,---\,the \emph{free operations}\,---\,and then one studies how much of
external resources (e.g., thermodynamic work) one needs to provide to carry out
operations that are not free. Two established resource theories for quantum
thermodynamics are \emph{thermal operations}~\cite{Brandao2013_resource,Brandao2015PNAS_secondlaws} and
\emph{Gibbs-preserving maps}~\cite{Janzing2000_cost,Faist2015NJP_Gibbs}.  In the
former, the free operations consist of energy-conserving interactions of the
system with a heat bath, while in the latter, the free operations are any
quantum operation that preserves the thermal state.  It is reasonable to assume
that thermal operations can be realized in an idealized setting, making them a
good choice of framework for constructing explicit protocols, whereas
Gibbs-preserving maps encompass a broader class of operations, allowing us to
derive stronger fundamental limits.

The resource theory approach to thermodynamics has revealed close connections with measures of information known from quantum information theory~\cite{Aberg2013_worklike,Horodecki2013_ThermoMaj}. Namely, single-shot thermodynamic and information-theoretic tasks are both quantified by relevant entropy measures~\cite{PhDRenner2005_SQKD,PhDTomamichel2012,BookTomamichel2016_Finite}.  Consequently, tools from quantum Shannon theory can be used to characterize tasks in thermodynamics, for instance to derive second-order asymptotics of the work cost of state transformations~\cite{Chubb2018Qu_beyond}. Recently, focus was shifted to understand the resource costs of quantum processes, rather than state transformations~\cite{Faist2015NatComm,Cirstoiu2017arXiv_gauge,Dana2017PRA_beyond,Faist2018PRX_workcost}. The information measure associated with quantum processes is the quantum capacity, along with its many variants~\cite{Smith10}. A natural question arises: What is the thermodynamic analogue of the quantum capacity?

Here, we ask how much work is required to implement a given quantum process, with
the requirement that the implementation is accurate for any possible input
state. In the single-instance regime, we find that the answer is a variation of
the results obtained in Ref.~\cite{Faist2018PRX_workcost}. However, in the
regime where we consider many independent and identically distributed (i.i.d.)
copies of the process, important differences arise due to typicality. We find
that the optimal work cost of such an implementation in the i.i.d.\@ regime is
given by the thermodynamic capacity, defined as the maximal difference between
the input and output free energy of the process over all possible input states. The fact that no implementation can perform better than the thermodynamic capacity follows fairly straightforwardly from the results of
Ref.~\cite{Faist2018PRX_workcost}. The technically challenging part of the
present paper is to show that there exist protocols that achieve this limit.

We provide three different constructions of such protocols, each valid
in different settings. In the first construction, we make the simplifying
assumption that Hamiltonian of the system is trivial as in
Ref.~\cite{Faist2015NatComm}. We then show that simple properties of one-shot
entropy measures, coupled with the post-selection
technique~\cite{Christandl2009PRL_Postselection}, provide an existence proof of
the required implementation. The implementation is given in terms of thermal
operations. In our second construction, we develop novel quantum typicality tools
which we use along with the post-selection technique to explicitly construct an
implementation in terms of Gibbs-preserving maps for any i.i.d.\@ process and
for any system Hamiltonian. In our third construction, we assume that the
i.i.d.\@ process is time-covariant, i.e., commutes with the time evolution. We then
use recent results on the convex-split lemma and position-based
decoding~\cite{Anshu2019IEEETIT_oneshot} to construct an implementation of a
time-covariant i.i.d.\@ process with thermal operations.

Our results imply that the thermodynamic resource theory of channels becomes
reversible in the i.i.d.\@ limit~\cite{OurShortPaperToAppear}. Namely, invoking the
results in Ref.~\cite{Navascues2015PRL_nonthermal}, we see that the work rate
that is required to implement a given i.i.d.\@ process is the same as what can
be extracted if the i.i.d.\@ process is provided to us as a black box. This provides a thermodynamic analogue of the reverse Shannon theorem from quantum information theory. This theorem states that the quantum mutual information of the channel uniquely characterizes the resources required to simulate the channel
with noiseless channel uses and shared entanglement, as well as to distil a
noiseless channel from many uses of the channel and shared entanglement~\cite{Bennett2014_reverse,Berta2011_reverse}. Indeed, our proof techniques are inspired by Refs.~\cite{PhDHarrow2005,Bennett2014_reverse,Haah2017IEEETIT_sampleoptimal,Noetzel2012arXiv_two}.

Finally, we provide some additional results that can be of independent interest:
A significantly simpler proof of the asymptotic equipartition property of the
quantity studied in Ref.~\cite{Faist2018PRX_workcost}, a new protocol for
Landauer erasure with side information for systems with a general
non-interacting Hamiltonian and in a time-covariant state, and a protocol for
implementing any general i.i.d.\@ process for a fixed i.i.d.\@ input state,
using thermal operations and a small source of coherence.

The remainder of this paper is structured as follows. Section \ref{sec:prelim} gives
the necessary preliminaries and fixes some
notation. Section \ref{sec:resource-theory-thermodynamics} introduces two resource
theories for thermodynamics, thermal operations and Gibbs-preserving maps. In
Section \ref{sec:thermodyn-capacity} we introduce the thermodynamic capacity and
present some elementary properties. In
Section \ref{sec:approach-using-quasi-convexity}, we provide our first construction for
a trivial Hamiltonian. In Section \ref{sec:proof-Weps-iid} we provide our second
construction, which is valid in the general setting and provides an
implementation in terms of Gibbs-preserving
maps. Section \ref{sec:optimal-universal-protocol-TO-timecovariant} provides our third
construction, valid for time-covariant i.i.d.\@ processes, and built with
thermal operations.
In \cref{sec:bonus-results}, we provide three related results that use
techniques developed in the above constructions.
Our conclusions are presented in Section \ref{sec:discussion}. Various more technical proof details are deferred to Appendices \ref{appx:first}--\ref{appx:last}.

\section{Preliminaries}\label{sec:prelim}

\subsection{Quantum states, quantum processes, and distance measures}

Each quantum system considered lives in a finite-dimensional Hilbert space. A
quantum state is a positive semi-definite operator $\rho$ satisfying
$\tr[\rho]=1$.  A sub-normalized quantum state is a positive semi-definite
operator $\rho$ satisfying $\tr[\rho]\leqslant 1$.  To each system $S$ is
associated a standard basis, usually denoted by $`{ \ket{k}_S }$.  For any two
systems $A,A'$, we denote by $A\simeq A'$ the fact that they are isometric. In
that case, we consider a representation in which the isometry maps the standard
basis onto the standard basis, i.e., $\IdentProc[A][A']{\proj{k}_A} = \proj{k}_{A'}$ for all $k$, where
$\IdentProc[A][A']{}$ denotes the identity process. For any two systems
$A\simeq A'$, we define the non-normalized maximally entangled reference ket
$\ket\Phi_{A:A'} = \sum_k\, \ket k_A\otimes \ket k_{A'}$.  Matrix inequalities
are with respect to the positive semi-definite cone: $A\leqslant B$ signifies
that $B-A$ is positive semi-definite. A completely positive map
$\mathcal{E}_{X\to X'}$ is a linear mapping that maps Hermitian operators on $X$
to Hermitian operators on $X'$ and that satisfies
$\mathcal{E}_{X\to X'}(\Phi_{X:R_X}) \geqslant 0$, where $R_X\simeq X$. The
adjoint $\mathcal{E}_{X\leftarrow X'}^\dagger$ of a completely positive map
$\mathcal{E}_{X\to X'}$ is the unique completely positive map $X'\to X$ that
satisfies $\tr[\mathcal{E}(Y)\,Z] = \tr[Y\mathcal{E}^\dagger(Z)]$ for all
operators $Y,Z$.  A completely positive map $\mathcal{E}_{X\to X'}$ is
trace-preserving if $\mathcal{E}^\dagger(\Ident_{X'}) = \Ident_X$ and trace
non-increasing if $\mathcal{E}^\dagger(\Ident_{X'}) \leqslant \Ident_X$.

Proximity of quantum states can be measured by the fidelity
$F(\rho,\sigma) = \norm{\sqrt\rho\sqrt\sigma}_1$, where the one-norm of an
operator is defined as $\norm{A}_1 = \tr\big[\sqrt{A^\dagger A}\big]$. The fidelity is
extended to sub-normalized states $\rho,\sigma$ as the generalized fidelity,
$\bar{F}(\rho,\sigma) = \norm{\sqrt\rho\sqrt\sigma}_1 +\sqrt{(1-\tr[\rho])(1-\tr[\sigma])}$, noting that
$F(\cdot,\cdot) = \bar{F}(\cdot,\cdot)$ whenever at least one of the states is
normalized. An associated metric can be defined for any sub-normalized states
as $P(\rho,\sigma) = \sqrt{1-\bar{F}^2(\rho,\sigma)}$, called the purified
distance~\cite{Tomamichel2010IEEE_Duality,PhDTomamichel2012,BookTomamichel2016_Finite}, or root infidelity, and is
closely related to the Bures distance and the quantum
angle~\cite{BookNielsenChuang2000}. The proximity of two sub-normalized quantum
states $\rho,\sigma$ may also be measured in the trace distance
$D(\rho,\sigma) = \frac12\norm{\rho-\sigma}_1$. We note that the one-norm of a
Hermitian operator $A$ can be expressed as
\begin{align}
\norm{A}_1 = \max_{\norm{Z}_\infty\leqslant 1} \tr[ZA]= \min_{\substack{\Delta_\pm\geqslant 0\\A = \Delta_+ - \Delta_-}}\tr[\Delta_+] + \tr[\Delta_-]\ ,\label{eq:one-norm-primal-dual}
\end{align}
where the first optimization ranges over Hermitian $Z$ operators and where the
second over positive semi-definite operators $\Delta_\pm$.  For any two states
$\rho,\sigma$ (one can even be sub-normalized), the purified distance and
the trace distance are related via
\begin{align}
D(\rho,\sigma) \leqslant P(\rho,\sigma) \leqslant \sqrt{2D(\rho,\sigma)}\ .
\end{align}
Similarly, we may define a distance measure for channels: For two completely
positive, trace non-increasing maps $\mathcal{T}_{X\to X'}$ and
$\mathcal{T}'_{X\to X'}$, the diamond norm distance is defined as
\begin{align}
\frac12\norm*{\mathcal{T}_{X\to X'} - \mathcal{T}'_{X\to X'}}_\diamond= \max_{\sigma_{XR}}D`\big(\mathcal{T}_{X\to X'}(\sigma_{XR}), \mathcal{T}'_{X\to X'}(\sigma_{XR}))\ ,
\end{align}
where the optimization ranges over all bipartite quantum states over $X$ and a
reference system $R\simeq X$.  The optimization may be restricted to pure states
without loss of generality.

\subsection{Entropy measures}

The von Neumann entropy of a quantum state $\rho$ is $\Sfn(\rho) = -\tr`[\rho\ln\rho]$. In this work, all entropies are defined in units of nats, using the natural logarithm $\ln(\cdot)$, instead of units of (qu)bits. A number of nats is equal to $\ln(2)$ times the corresponding number of qubits. The conditional von Neumann entropy of a bipartite state $\rho_{AB}$ is
given by
\begin{align}
  \HH[\rho]{A}[B] = \HH[\rho]{AB} - \HH[\rho]{B} = \Sfn(\rho_{AB}) - \Sfn(\rho_B)\ .
\end{align}
The quantum relative entropy is defined as
\begin{align}
  \DD{\rho}{\sigma} = \tr`\big[\rho`\big(\ln\rho - \ln\sigma)]\ ,
\end{align}
where $\rho$ is a quantum state and where $\sigma$ is any positive semi-definite
operator whose support contains the support of $\rho$.

One of our proofs relies on the hypothesis testing relative
entropy~\cite{Wang2012PRL_oneshot,Tomamichel2013_hierarchy,Matthews2014IEEETIT_blocklength,Buscemi2010IEEETIT_capacity,Brandao2011IEEETIT_oneshot}
in its form as presented in~\cite{Dupuis2013_DH}.  For any sub-normalized
quantum state $\rho$ and for any positive semi-definite operator $\sigma$ whose
support contains the support of $\rho$, we define it via the following
equivalent optimizations, which are semi-definite
programs~\cite{Watrous2009_sdps} in terms of the primal variable $Q\geqslant 0$
and the dual variables $\mu, X\geqslant 0$:
\begin{align}
  \exp`*{
  -\DHyp[\eta]{\rho}{\sigma}
  }
  \qquad
  =\qquad \mathrm{minimize:}\quad
  &
    \eta^{-1}\,\tr[Q\sigma]
    \\
  \mathrm{subject\ to:}\quad
  &
    Q \leqslant \Ident
    \nonumber\\
  &
    \tr[Q\rho] \geqslant \eta
    \nonumber
  \\[1ex]
  =\qquad \mathrm{maximize:}\quad
  &
    \mu - \eta^{-1}\tr[X]
    \label{eq:defn-hypothesis-testing-entropy--dual}
  \\
  \mathrm{subject\ to:}\quad
  &
    \mu\rho \leqslant \sigma + X\ .
    \nonumber
\end{align}
We also define for convenience the closely related quantity
\begin{align}
  \Dhyp[\eta]{\rho}{\sigma} =
  \DHyp[\eta]{\rho}{\sigma} - \ln(\eta)\ ,
\end{align}
which is sometimes also referred to as the hypothesis testing relative entropy.

\subsection{Schur-Weyl duality}

Consider a Hilbert space $\Hs_A$ and $n\in\mathbb{N}$. The group
$\GL(d_A)\times \SN(n)$ acts naturally on $\Hs_A^{\otimes n}$, where
$X\in\GL(d_A)$ acts as $X^{\otimes n}$ and where the permutation group permutes
the tensor factors.  We follow closely the notation of
Refs.~\cite{PhDHarrow2005,Haah2017IEEETIT_sampleoptimal}. Schur-Weyl tells us
that the full Hilbert space decomposes as
\begin{align}
  \Hs_A \simeq
  \bigoplus_\lambda\; \mathcal{V}_\lambda
  =
  \bigoplus_\lambda\; \mathcal{Q}_\lambda\otimes\mathcal{P}_\lambda\ ,
  \label{eq:Schur-Weyl-decomp-space}
\end{align}
where $\lambda\in\Young{n}{d}$ are Young diagrams with $n$ boxes and (at most)
$d$ rows, and where $\mathcal{Q}_\lambda$, $\mathcal{P}_\lambda$ are irreducible
representations of $\GL(d_A)$ and $\SN(n)$, respectively. The number of Young
diagrams in the decomposition above is at most $\poly(n)$, if $d_A$ is kept
constant.  We write $\poly(n)=O(\poly(n))$ in big O notation for terms whose
absolute value is upper bounded by some polynomial $n^c$ for $c\in\mathbb{N}$ in
the asymptotic limit $n\to\infty$.

We denote by $\Pi_{A^n}^\lambda$ the projector in $\Hs_A^{\otimes n}$
onto the term labelled by $\lambda$ in the decomposition above.  We denote by
$q_\lambda(X)$ a representing matrix of $X\in\GL(d_A)$ in the irreducible
representation labelled by $\lambda$; the operator $q_\lambda(X)$ lives in
$\mathcal{Q}_\lambda$.  We furthermore introduce the following notation, for any
$Y\in\mathcal{Q}_\lambda\otimes\mathcal{P}_\lambda$,
\begin{align}
  \mapinlambda{Y}[\lambda] =
  \Ident_{(\mathcal{Q}_\lambda\otimes\mathcal{P}_\lambda) \to A^n}
  \; Y \;
  \Ident_{(\mathcal{Q}_\lambda\otimes\mathcal{P}_\lambda) \leftarrow A^n}^\dagger\ ,
\end{align}
which represents the canonical embedding of an operator $Y$ on
$\mathcal{Q}_\lambda\otimes\mathcal{P}_\lambda$ into the space
$\Hs_A^{\otimes n}$, i.e., mapping $Y$ onto the corresponding block
in~\eqref{eq:Schur-Weyl-decomp-space}.  In particular,
\begin{align}
  \Pi^\lambda_{A^n} \mapinlambda{Y}[\lambda] \Pi^\lambda_{A^n}
  = \mapinlambda{Y}[\lambda]\ .
\end{align}
Any operator $X_{A^n}$ acting on the $n$ copies which commutes with all the permutations admits a decomposition of the
form
\begin{align}
  X_{A^n} = \sum_\lambda
  \mapinlambda{ X_\lambda \otimes \Ident_{\mathcal{P}_\lambda} }[\lambda]
  \label{eq:Schur-Weyl-perm-invar-operator-decomp}
\end{align}
for some set of operators $X_\lambda\in\mathcal{Q}_\lambda$.  In particular,
$[X_{A^n}, \Pi^\lambda_{A^n}] = 0$. We can make this more precise for i.i.d.\@{} states.  For any $X\in\GL(d_A)$, we
have that
\begin{gather}
  [\Pi_{A^n}^{\lambda}, X^{\otimes n}] = 0\\
  X^{\otimes n} =
  \sum_\lambda
  \mapinlambda{ q_\lambda(X)\otimes \Ident_{\mathcal{P}_\lambda} }[\lambda] \ .
  \label{eq:Schur-Weyl-iid-operator-decomp}
\end{gather}
For a given $\lambda\in\Young{n}{d}$, it is often useful to consider the
corresponding normalized probability distribution $\lambda/n = (\lambda_i/n)_i$.
The entropy of this distribution is given by
\begin{align}
  \bar{H}(\lambda) =
  H(\lambda/n) = -\sum_i \frac{\lambda_i}{n}\ln\frac{\lambda_i}{n}\ ,
\end{align}
where $\lambda_i$ is the number of boxes in the $i$-th row of the diagram.

If we have $n$ copies of a bipartite system $\Hs_A\otimes\Hs_B$, then we may
Schur-Weyl decompose $\Hs_A^{\otimes n}$, $\Hs_B^{\otimes n}$ and
$(\Hs_A\otimes\Hs_B)^{\otimes n}$ under the respective actions of
$\GL(d_A)\times\SN(n)$, $\GL(d_B)\times\SN(n)$ and $\GL(d_Ad_B)\times\SN(n)$.  A
useful property we will need here is that the projectors onto the respective
Schur-Weyl blocks commute between these decompositions.

\begin{lemma}
  \label{lemma:SchurWeylTensProdDecompCommute}
  Consider two spaces $\Hs_A,\Hs_B$ and let $\Pi^\lambda_{A^nB^n}$ and
  $\Pi^{\lambda'}_{A^n}$ be the projectors onto Schur-Weyl blocks of
  $\Hs_{AB}^{\otimes n}$ and $\Hs_{A}^{\otimes n}$, respectively, with
  $\lambda\in\Young{d_A d_B}{n}$ and $\lambda'\in\Young{d_A}{n}$. Then, we have
  \begin{align}
    [ \Pi^\lambda_{A^nB^n} , \Pi^{\lambda'}_{A^n} \otimes \Ident_{B^n} ] = 0\ .
  \end{align}
\end{lemma}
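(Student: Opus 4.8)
The claim is that Schur--Weyl projectors onto isotypic blocks of $\Hs_{AB}^{\otimes n}$ and of $\Hs_A^{\otimes n}$ (acting on the $A$-factors only) commute. The natural approach is to exhibit both projectors as elements of a common commutative algebra, or to show one is block-diagonal with respect to the decomposition induced by the other. Let me think about which.

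The key structural fact: $\Pi^{\lambda'}_{A^n}$ is built from the $\SN(n)$-action permuting the $A$-tensor-factors, and from the $\GL(d_A)$-action. More precisely, $\Pi^{\lambda'}_{A^n}$ is a polynomial in the permutation operators on $A^{\otimes n}$ (it lies in the image of the group algebra $\mathbb{C}[\SN(n)]$ under the permutation representation on $\Hs_A^{\otimes n}$). Now $\Pi^{\lambda'}_{A^n} \otimes \Ident_{B^n}$ — when we reorganize $\Hs_{AB}^{\otimes n}$ as $\Hs_A^{\otimes n}\otimes\Hs_B^{\otimes n}$ — is therefore a polynomial in the operators $W_\pi^A \otimes \Ident_{B^n}$ where $W_\pi^A$ permutes the $A$-slots. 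But $W_\pi^A\otimes W_\pi^B$ is exactly the permutation operator $W_\pi^{AB}$ on $\Hs_{AB}^{\otimes n}$, and $\Pi^\lambda_{A^nB^n}$ commutes with every $W_\pi^{AB}$ by definition (it projects onto an $\SN(n)$-isotypic component). That is not immediately enough, since $W_\pi^A\otimes\Ident_{B^n}$ is not of the form $W_\pi^{AB}$.

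So the cleaner route is the $\GL$-side. By Schur--Weyl applied to $A$ alone, $\Pi^{\lambda'}_{A^n}$ can be written as a limit/average of operators $X^{\otimes n}_A$ with $X\in\GL(d_A)$ (the span of $\{X^{\otimes n} : X\in\GL(d_A)\}$ is precisely the commutant of $\SN(n)$ on $\Hs_A^{\otimes n}$, which contains all the central projectors $\Pi^{\lambda'}_{A^n}$). Concretely, $\Pi^{\lambda'}_{A^n}$ lies in the algebra generated by $\{X_A^{\otimes n} : X\in\GL(d_A)\}$. Hence $\Pi^{\lambda'}_{A^n}\otimes\Ident_{B^n}$ lies in the algebra generated by $\{X_A^{\otimes n}\otimes\Ident_{B^n}\}= \{(X_A\oplus\text{-action})\}$; but $X_A^{\otimes n}\otimes\Ident_{B^n} = (X_A\otimes\Ident_B)^{\otimes n}$, which is the $n$-fold tensor power of the element $X_A\otimes\Ident_B\in\GL(d_Ad_B)$. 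By \eqref{eq:Schur-Weyl-iid-operator-decomp} applied to $\Hs_{AB}$ with the element $X_A\otimes\Ident_B$, we get $[(X_A\otimes\Ident_B)^{\otimes n},\Pi^\lambda_{A^nB^n}]=0$ for every such $X_A$. Since $\Pi^{\lambda'}_{A^n}\otimes\Ident_{B^n}$ is a linear combination (or norm-limit thereof) of these commuting operators, it commutes with $\Pi^\lambda_{A^nB^n}$ as well.

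The plan is thus: (i) recall that the central projectors of $\SN(n)$ acting on $\Hs_A^{\otimes n}$ lie in the span of $\{X_A^{\otimes n}: X\in\GL(d_A)\}$ — this is the double-commutant content of Schur--Weyl duality, and one may either cite it or note $\Pi^{\lambda'}_{A^n}$ is a permutation-invariant operator hence of the form \eqref{eq:Schur-Weyl-perm-invar-operator-decomp}, combined with the fact that the image of $\mathbb{C}[\GL(d_A)]$ under $X\mapsto X^{\otimes n}$ already surjects onto each block $\mathcal{Q}_{\lambda'}\otimes\Ident_{\mathcal{P}_{\lambda'}}$; (ii) write $\Pi^{\lambda'}_{A^n}\otimes\Ident_{B^n} = \sum_j c_j\, (X^{(j)}_A\otimes\Ident_B)^{\otimes n}$ for suitable $X^{(j)}_A\in\GL(d_A)$ and scalars $c_j$; (iii) apply \eqref{eq:Schur-Weyl-iid-operator-decomp} on the joint system $AB$ to each term, which commutes with $\Pi^\lambda_{A^nB^n}$ since $q_\lambda$ is a representation and \eqref{eq:Schur-Weyl-iid-operator-decomp} exhibits $(X_A\otimes\Ident_B)^{\otimes n}$ as block-diagonal in the $AB$ Schur--Weyl decomposition; (iv) conclude by linearity. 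The main obstacle is step (i): one must be careful that $\GL(d_A)$, not just $\SU(d_A)$ or $U(d_A)$, spans enough — but the standard statement is that $\mathrm{span}\{X^{\otimes n}:X\in\mathrm{GL}(d_A)\}$ equals the full commutant of the $\SN(n)$-action, which is exactly $\bigoplus_{\lambda'}\mathcal{B}(\mathcal{Q}_{\lambda'})\otimes\Ident_{\mathcal{P}_{\lambda'}}$, and this contains every $\Pi^{\lambda'}_{A^n}$. Everything else is bookkeeping.
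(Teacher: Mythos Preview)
Your proof is correct but takes a longer route than the paper's one-line argument. The paper simply observes that $\Pi^{\lambda'}_{A^n}\otimes\Ident_{B^n}$ is invariant under the $\SN(n)$-action permuting the copies of $A\otimes B$ (since $W_\pi^{AB}=W_\pi^A\otimes W_\pi^B$ under the identification $\Hs_{AB}^{\otimes n}\cong\Hs_A^{\otimes n}\otimes\Hs_B^{\otimes n}$, and $\Pi^{\lambda'}_{A^n}$ already commutes with each $W_\pi^A$), so~\eqref{eq:Schur-Weyl-perm-invar-operator-decomp} applies directly on $\Hs_{AB}^{\otimes n}$ and yields block-diagonality with respect to the $\Pi^\lambda_{A^nB^n}$. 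You in fact considered this route first but dismissed it on the grounds that $W_\pi^A\otimes\Ident_{B^n}\neq W_\pi^{AB}$; that is true but irrelevant, since one does not need to \emph{express} the operator via the $W_\pi^{AB}$, only to show it \emph{commutes} with every $W_\pi^{AB}$, which is immediate. Your $\GL$-side argument---writing $\Pi^{\lambda'}_{A^n}$ in the span of $\{X_A^{\otimes n}\}$ and then noting that $(X_A\otimes\Ident_B)^{\otimes n}$ is block-diagonal on $\Hs_{AB}^{\otimes n}$ via~\eqref{eq:Schur-Weyl-iid-operator-decomp}---is the dual path into the same commutant and is valid, but it requires the double-commutant content of Schur--Weyl on $\Hs_A^{\otimes n}$ as an extra ingredient that the paper's permutation-invariance observation avoids entirely.
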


\begin{proof}[**lemma:SchurWeylTensProdDecompCommute]
  $\Pi^{\lambda'}_{A^n}\otimes\Ident_{B^n}$ is invariant under the action of
  $S_n$ permuting the copies of $A\otimes B$, and so it admits a decomposition
  of the form~\eqref{eq:Schur-Weyl-perm-invar-operator-decomp} and commutes with
  $\Pi_{A^nB^n}^\lambda$.
\end{proof}

The following is another lemma about how much overlap Schur-Weyl blocks have on
a bipartite system versus on one of the two systems. This lemma forms the basis
of our universal typical subspace.

\begin{lemma}\label{lemma:Schur-Weyl-bipartite-trAn-PiLambdaPiLambdaprime}
  Consider $n\in\mathbb{N}$ copies of a bipartite system $\Hs_A\otimes \Hs_B$. Then, for any
  $\lambda\in\Young{d_A d_B}{n}$ and $\lambda'\in\Young{d_B}{n}$, we have
  \begin{align}
    \Pi^{\lambda'}_{B^n}\,\tr_{A^n}`\big[\Pi^\lambda_{A^nB^n}]\,\Pi^{\lambda'}_{B^n}
    \leqslant \poly(n)\, \ee^{n(\bar{H}(\lambda)-\bar{H}(\lambda'))} \,
    \Pi^{\lambda'}_{B^n}
  \end{align}
  noting that $[\Ident_{A^n}\otimes \Pi^{\lambda'}_{B^n}, \Pi^\lambda_{A^nB^n}]=0$.
\end{lemma}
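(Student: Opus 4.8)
The plan is to show that the positive semidefinite operator $\tr_{A^n}[\Pi^\lambda_{A^nB^n}]$ is \emph{block-scalar} with respect to the Schur-Weyl decomposition of $\Hs_B^{\otimes n}$, i.e.\ that it equals $\sum_{\lambda'} c_{\lambda,\lambda'}\,\Pi^{\lambda'}_{B^n}$ for some scalars $c_{\lambda,\lambda'}\geqslant 0$, and then to bound each $c_{\lambda,\lambda'}$ by the claimed quantity using standard dimension estimates for Schur-Weyl blocks. The desired inequality is then immediate, since projecting onto a single block gives $\Pi^{\lambda'}_{B^n}\tr_{A^n}[\Pi^\lambda_{A^nB^n}]\Pi^{\lambda'}_{B^n}=c_{\lambda,\lambda'}\Pi^{\lambda'}_{B^n}$.

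For the block-scalar structure: by definition of the Schur-Weyl projector of the joint system, $\Pi^\lambda_{A^nB^n}$ commutes with $(X\otimes Y)^{\otimes n}$ for all $X\in\GL(d_A)$, $Y\in\GL(d_B)$, and with the operator permuting the $n$ copies of $AB$, for every permutation. Taking $X,Y$ unitary and using that a partial trace over $A^n$ is unchanged when the $A^n$-register is conjugated by a unitary, conjugating $\Pi^\lambda_{A^nB^n}$ by $(X\otimes Y)^{\otimes n}$ leaves $\tr_{A^n}[\Pi^\lambda_{A^nB^n}]$ invariant up to conjugation of the $B^n$-register by $Y^{\otimes n}$; but $\Pi^\lambda_{A^nB^n}$ is itself invariant under this conjugation, so $\tr_{A^n}[\Pi^\lambda_{A^nB^n}]$ commutes with $Y^{\otimes n}$ for every unitary $Y$ on $B$. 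The same manipulation with a permutation of the copies (a permutation of the $A$-copies being unitary) shows it commutes with every permutation of the $n$ copies of $B$. By Schur-Weyl duality, an operator on $\Hs_B^{\otimes n}$ lying in the commutant of both the diagonal unitary action and of $\SN(n)$ is block-scalar, hence $\tr_{A^n}[\Pi^\lambda_{A^nB^n}]=\sum_{\lambda'}c_{\lambda,\lambda'}\Pi^{\lambda'}_{B^n}$ with $c_{\lambda,\lambda'}\geqslant 0$ (positivity of the scalars coming from $\tr_{A^n}[\Pi^\lambda_{A^nB^n}]\geqslant 0$); the commutation relation noted in the statement follows exactly as in \cref{lemma:SchurWeylTensProdDecompCommute}.

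It then remains to bound $c_{\lambda,\lambda'}$. Pairing the identity $\tr_{A^n}[\Pi^\lambda_{A^nB^n}]=\sum_{\lambda'}c_{\lambda,\lambda'}\Pi^{\lambda'}_{B^n}$ with $\Pi^{\lambda'}_{B^n}$, and writing $\dim\mathcal{P}_\mu$, $\dim\mathcal{Q}_\mu$ for the dimensions of the $\SN(n)$- and $\GL$-irreps in the $\mu$-block of the relevant tensor power, one gets
\[
  c_{\lambda,\lambda'}
  \;=\;\frac{\tr\big[(\Ident_{A^n}\otimes\Pi^{\lambda'}_{B^n})\,\Pi^\lambda_{A^nB^n}\big]}{\dim\mathcal{Q}_{\lambda'}\,\dim\mathcal{P}_{\lambda'}}
  \;\leqslant\;\frac{\tr[\Pi^\lambda_{A^nB^n}]}{\dim\mathcal{P}_{\lambda'}}
  \;=\;\frac{\dim\mathcal{Q}_\lambda\,\dim\mathcal{P}_\lambda}{\dim\mathcal{P}_{\lambda'}}\ ,
\]
where I bounded the numerator by $\tr[\Pi^\lambda_{A^nB^n}]$ (since $\Ident_{A^n}\otimes\Pi^{\lambda'}_{B^n}\leqslant\Ident$) and dropped $\dim\mathcal{Q}_{\lambda'}\geqslant 1$. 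I would then invoke the standard dimension bounds for Schur-Weyl blocks: for a Young diagram $\mu$ with $n$ boxes and at most $d$ rows ($d$ fixed), $\dim\mathcal{P}_\mu\leqslant\ee^{n\bar{H}(\mu)}$, $\dim\mathcal{P}_\mu\geqslant\poly(n)^{-1}\ee^{n\bar{H}(\mu)}$ and $\dim\mathcal{Q}_\mu\leqslant\poly(n)$, the polynomials depending only on $d$ (hence here only on $d_A,d_B$). Applying the first bound to $\mathcal{P}_\lambda$ (whose diagram has at most $d_Ad_B$ rows), the second to $\mathcal{P}_{\lambda'}$, and the third to $\mathcal{Q}_\lambda$, yields $c_{\lambda,\lambda'}\leqslant\poly(n)\,\ee^{n(\bar{H}(\lambda)-\bar{H}(\lambda'))}$, which is the assertion.

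The main obstacle is the block-scalar reduction in the second step: it is exactly there that one uses that $\Pi^\lambda_{A^nB^n}$ is the Schur-Weyl projector of the \emph{joint} system $AB$, so that it commutes with the full diagonal $\GL(d_A)\times\GL(d_B)$ action; a generic permutation-invariant projector would only place $\tr_{A^n}[\cdot]$ in the commutant of $\SN(n)$ on $\Hs_B^{\otimes n}$, which is far from block-scalar. The dimension estimates used afterwards are classical (hook-length and Weyl dimension formulas, together with the entropy bound on multinomial coefficients) and I would simply quote them rather than reprove them.
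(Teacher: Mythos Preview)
Your proof is correct and takes a genuinely different route from the paper's.

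The paper does not exploit block-scalarity. Instead it bounds $\Pi^\lambda_{A^nB^n}$ above by $\poly(n)\,\ee^{n\bar H(\lambda)}\int dU_{AB}\,(U_{AB}\,\mathrm{diag}(\lambda/n)\,U_{AB}^\dagger)^{\otimes n}$ (an integral representation from Haah et~al.), takes the partial trace over $A^n$ to get a Haar mixture of i.i.d.\ states $\omega_B^{\otimes n}$ on $B^n$, and then uses the bound $\|\Pi^{\lambda'}_{B^n}\omega_B^{\otimes n}\Pi^{\lambda'}_{B^n}\|_\infty\leqslant\poly(n)\,\ee^{-n\bar H(\lambda')}$ for every state $\omega_B$ (again from Haah et~al.) to conclude. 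Your argument instead observes that $\tr_{A^n}[\Pi^\lambda_{A^nB^n}]$ lies in the commutant of the full $\GL(d_B)\times\SN(n)$ action on $\Hs_B^{\otimes n}$ (because $\Pi^\lambda_{A^nB^n}$ commutes with all of $\GL(d_Ad_B)^{\otimes n}$, hence in particular with the tensor-factor subgroup), so by Schur's lemma it is a scalar on each Schur--Weyl block; you then read off that scalar as a ratio of dimensions and bound it with the classical estimates $\dim\mathcal{Q}_\lambda\leqslant\poly(n)$, $\dim\mathcal{P}_\lambda\leqslant\ee^{n\bar H(\lambda)}$, $\dim\mathcal{P}_{\lambda'}\geqslant\poly(n)^{-1}\ee^{n\bar H(\lambda')}$. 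Your approach is more self-contained---it needs only the standard dimension bounds rather than the integral representation and the operator-norm estimate on $q_{\lambda'}(\omega)$---and it makes the structural reason for the inequality (block-scalarity) explicit. The paper's route, on the other hand, fits naturally with the spectrum/entropy-estimation picture used elsewhere in the paper and avoids invoking the double-commutant observation.
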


The proof is provided in \cref{app:missing}.

\subsection{Estimating entropy}

Measuring the Young diagram $\lambda$\,---\,that is, performing the projective
measurement with operators $\{ \Pi_{A^n}^\lambda \}_\lambda$\,---\,yields a good
estimation of the spectrum of a state $\rho_A$ when given
$\rho_A^{\otimes n}$~\cite{Haah2017IEEETIT_sampleoptimal}. An estimate for the
entropy of $\rho$ is thus obtained by calculating the entropy $H(\lambda/n)$
corresponding to the probability distribution $\lambda/n$.

\begin{proposition}[Spectrum and entropy estimation~\protect\cite{Bennett2014_reverse,Haah2017IEEETIT_sampleoptimal,PhDHarrow2005}]%
  \label{prop:entropy-measurement-POVM-n-systems}
  Consider $n\in\mathbb{N}$ copies of a system $\Hs_A$. Then, the family of projectors
  $`{ \Pi^\lambda_{A^n} }_\lambda$ given by Schur-Weyl duality forms a POVM
  obeying the following property: For any $\delta>0$, there exists an $\eta>0$
  such that for any state $\rho_A$, we have
  \begin{align}
    \tr`*[ `*(\sum_{\lambda:~\bar{H}(\lambda) \in [\Hfn(\rho)\pm\delta] }
    \Pi_{A^n}^\lambda ) \rho_A^{\otimes n} ]
    \geqslant 1 - \poly(n)\exp`*(- n\eta)\ .
    \label{eq:measure-lambda-povm-entropy-estimation-success-probability}
  \end{align}
\end{proposition}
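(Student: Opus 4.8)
The plan is to reduce the claim to the well-known exponential concentration of the Schur-Weyl measurement outcome around the true spectrum, and then to pass from spectra to entropies using the uniform continuity of the Shannon entropy. That $\{\Pi^\lambda_{A^n}\}_\lambda$ is a POVM is immediate from the Schur-Weyl decomposition~\eqref{eq:Schur-Weyl-decomp-space}: the $\Pi^\lambda_{A^n}$ are mutually orthogonal projectors summing to $\Ident_{A^n}$. Hence measuring $\rho_A^{\otimes n}$ with it produces outcome $\lambda$ with probability $p_n(\lambda):=\tr[\Pi^\lambda_{A^n}\rho_A^{\otimes n}]$, where $\lambda$ ranges over the (at most $\poly(n)$ many) Young diagrams with $n$ boxes and at most $d_A$ rows.

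Next I would invoke the spectrum-estimation bounds of Refs.~\cite{PhDHarrow2005,Haah2017IEEETIT_sampleoptimal} (see also~\cite{Bennett2014_reverse}). Write $r=r(\rho)$ for the eigenvalues of $\rho$ listed in nonincreasing order and padded with zeros to a probability vector of length $d_A$, so that $\lambda/n$ is a vector of the same kind. These results rest on a Keyl-Werner-type estimate $p_n(\lambda)\leqslant\poly(n)\,\exp\big(-n\,D(\lambda/n\,\|\,r)\big)$, where $D(\cdot\,\|\,\cdot)$ is the classical relative entropy between probability vectors; combining it with Pinsker's inequality $D(\lambda/n\,\|\,r)\geqslant\tfrac12\norm{\lambda/n-r}_1^2$ and the $\poly(n)$ bound on the number of outcomes yields
\begin{align}
  \Pr_{\lambda\sim p_n}\big[\,\norm{\lambda/n-r}_1>\epsilon\,\big]\;\leqslant\;\poly(n)\,\exp\big(-\tfrac12\,n\epsilon^2\big)\qquad\mathrm{for\ every\ }\epsilon>0\ ,
\end{align}
and, crucially, this holds \emph{uniformly} over all states $\rho$, since the prefactor and the rate depend only on $d_A$.

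Finally I would convert this spectral estimate into the stated entropy estimate. Since $\Hfn(\rho)=\Sfn(\rho)=H(r)$ and $\bar{H}(\lambda)=H(\lambda/n)$, a Fannes-type continuity bound for the Shannon entropy (for probability vectors $p,q$ with $d_A$ entries and $t:=\tfrac12\norm{p-q}_1\leqslant 1-1/d_A$ one has $\big|H(p)-H(q)\big|\leqslant t\ln(d_A-1)+h_2(t)$ with $h_2$ the binary entropy, the right-hand side tending to $0$ as $t\to 0$) shows that, given $\delta>0$, one may fix $\epsilon=\epsilon(\delta,d_A)>0$ small enough that $\norm{\lambda/n-r}_1\leqslant\epsilon$ forces $\big|\bar{H}(\lambda)-\Hfn(\rho)\big|\leqslant\delta$. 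Then the set of $\lambda$ with $\bar{H}(\lambda)\in[\Hfn(\rho)\pm\delta]$ contains every $\lambda$ with $\norm{\lambda/n-r}_1\leqslant\epsilon$, so the concentration bound gives
\begin{align}
  \tr\Big[\Big(\sum_{\lambda:\,\bar{H}(\lambda)\in[\Hfn(\rho)\pm\delta]}\Pi^\lambda_{A^n}\Big)\rho_A^{\otimes n}\Big]\;=\;\Pr_{\lambda\sim p_n}\big[\,\bar{H}(\lambda)\in[\Hfn(\rho)\pm\delta]\,\big]\;\geqslant\;1-\poly(n)\,\exp\big(-\tfrac12\,n\epsilon^2\big)\ ,
\end{align}
i.e.\ the asserted bound with $\eta=\tfrac12\epsilon^2>0$ (depending only on $\delta$ and $d_A$).

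The only substantive ingredient is the uniform-in-$\rho$ exponential concentration in the first displayed inequality; everything after it is elementary continuity. The point requiring care is precisely this uniformity: both the concentration rate and the continuity modulus must be independent of $\rho$ so that a single $\eta$ works for all input states, which holds here because both depend only on the fixed dimension $d_A$. (The degenerate case $d_A=1$ is trivial.)
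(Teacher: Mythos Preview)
Your proof is correct and follows essentially the same strategy as the paper's own proof: both invoke the exponential concentration of the Schur--Weyl spectrum estimator around the true eigenvalues (the paper cites \cite[Eq.~(6.23)]{PhDHarrow2005} directly, while you make the Keyl--Werner bound and Pinsker's inequality explicit), and then apply the Fannes--Audenaert continuity bound to pass from closeness of spectra to closeness of entropies, arriving at $\eta$ quadratic in the spectral accuracy parameter. Your version is arguably cleaner in comparing $\lambda/n$ to the sorted spectrum $r(\rho)$ rather than to $\rho$ itself, but the substance is identical.
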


The proof is provided in \cref{app:missing}.

\subsection{Estimating energy}

\begin{proposition}\label{prop:energy-measurement-POVM-n-systems}
  Consider any observable $H_A$ on $\Hs_A$ and write $\Gamma_A =
  \ee^{-H_A}$. Then, the set of projectors $`*{R_{A^n}^k}$ onto the eigenspaces
  of $\Gamma_A^{\otimes n}$ forms a POVM satisfying the following properties:
  \begin{enumerate}[label=(\roman*)]
  \item There are at most $\poly(n)$ POVM elements, with the label $k$ running
    over a set $k\in \mathcal{K}_{n}(H_A)\subset\mathbb{R}$;
  \item We have $[R_{A^n}^k, \Gamma_A^{\otimes n}] = 0$ and
    $\ee^{-nk}\, R_{A^n}^k = R_{A^n}^k\,\Gamma_A^{\otimes n}$;
  \item For any $\delta>0$ and for any state $\rho_A$,
    \begin{align}
      \tr`*[ R_{A^n}^{\approx_\delta \tr[\rho_A H_A]}
      \rho_A^{\otimes n} ] \geqslant 1 -
      2 \ee^{-n\eta}
      \label{eq:energy-measurement-POVM-n-systems-exp-success-on-iid-states}
    \end{align}
    with $\eta=\delta^2/(2\norm{H_A}_\infty^2)$ and
    where we defined for any $h\in\mathbb{R}$ that
    \begin{align}
      R_{A^n}^{\approx_\delta h} =
      \sum_{k\in\mathcal{K}_{n}(H_A) \;:\; \abs{k - h}\leqslant\delta}
      R_{A^n}^k\ ;
    \end{align}
  \item For any $h\in\mathbb{R}$, we have
  \begin{align}
    \ee^{-n(k+\delta)}R_{A^n}^{\approx_\delta h} \leqslant
    R_{A^n}^{\approx_\delta h} \, \Gamma_A^{\otimes n} \leqslant
    \ee^{-n(k-\delta)}R_{A^n}^{\approx_\delta h}\ .
    \end{align}
  \end{enumerate}
\end{proposition}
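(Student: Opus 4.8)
The plan is to build the POVM $\{R_{A^n}^k\}$ directly from the spectral decomposition of the commuting operators $\{H_A^{(i)}\}_{i=1}^n$, where $H_A^{(i)}$ denotes $H_A$ acting on the $i$-th tensor factor. Writing $H_A = \sum_a E_a\,\Pi_a$ for the spectral decomposition of $H_A$, the operator $\Gamma_A^{\otimes n} = \ee^{-\sum_i H_A^{(i)}}$ has eigenvalues $\ee^{-n\bar{E}}$ where $\bar{E} = \frac1n\sum_i E_{a_i}$ ranges over averages of $n$ (not necessarily distinct) eigenvalues of $H_A$. I would let $\mathcal{K}_n(H_A)$ be the set of all such averages $\bar{E}$, and define $R_{A^n}^k$ to be the spectral projector of $\frac1n\sum_i H_A^{(i)}$ onto eigenvalue $k$; equivalently $R_{A^n}^k = \sum \Pi_{a_1}\otimes\cdots\otimes\Pi_{a_n}$ over all tuples with $\frac1n\sum_i E_{a_i} = k$. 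With this definition items (i) and (ii) are essentially immediate: the number of distinct averages of $n$ draws from a fixed finite set of size $d_A$ is at most $\binom{n+d_A-1}{d_A-1} = \poly(n)$ (type counting), giving (i); and (ii) is just the statement that $\Gamma_A^{\otimes n}$ acts as the scalar $\ee^{-nk}$ on the range of $R_{A^n}^k$, together with the obvious commutation $[R_{A^n}^k,\Gamma_A^{\otimes n}]=0$ since both are diagonal in the same eigenbasis.

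For item (iii), the key observation is that, on an i.i.d.\ state $\rho_A^{\otimes n}$, the random variable obtained by measuring $\frac1n\sum_i H_A^{(i)}$ is exactly the empirical average $\frac1n\sum_i Y_i$ of $n$ i.i.d.\ bounded real random variables $Y_i$, each distributed as the outcome of measuring $H_A$ on $\rho_A$, hence with mean $\tr[\rho_A H_A]$ and taking values in $[-\norm{H_A}_\infty,\norm{H_A}_\infty]$, an interval of width $2\norm{H_A}_\infty$. The probability of obtaining an outcome $k$ with $\abs{k-\tr[\rho_A H_A]}>\delta$ is then controlled by Hoeffding's inequality, giving a bound $2\exp(-2n\delta^2/(2\norm{H_A}_\infty)^2) = 2\exp(-n\delta^2/(2\norm{H_A}_\infty^2))$, which is exactly the claimed bound with $\eta = \delta^2/(2\norm{H_A}_\infty^2)$. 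Equivalently one can cite \cref{prop:entropy-measurement-POVM-n-systems}-style concentration; but here the direct Hoeffding route is cleaner because the measured quantity is a genuine i.i.d.\ empirical mean. I should take care that the eigenvalue $0$ of $H_A$ is not assumed, so $Y_i$ may be negative — this is why the relevant range is $[-\norm{H_A}_\infty,\norm{H_A}_\infty]$ rather than $[0,\norm{H_A}_\infty]$, accounting for the factor $(2\norm{H_A}_\infty)^2$ in the Hoeffding denominator.

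Item (iv) is a sandwich bound that follows from (ii) by grouping: $R_{A^n}^{\approx_\delta h}\,\Gamma_A^{\otimes n} = \sum_{k:\abs{k-h}\le\delta} \ee^{-nk}\,R_{A^n}^k$, and since each surviving $k$ satisfies $h-\delta\le k\le h+\delta$, we have $\ee^{-n(h+\delta)}\le \ee^{-nk}\le\ee^{-n(h-\delta)}$ for every term, so the whole sum is squeezed between $\ee^{-n(h+\delta)}R_{A^n}^{\approx_\delta h}$ and $\ee^{-n(h-\delta)}R_{A^n}^{\approx_\delta h}$. (The statement in the proposition is phrased with $k$ in place of $h$, which I read as a typo for $h$; I would silently write it with $h$.) The only step requiring genuine care is (iii), and specifically getting the constant in the exponent right: the subtlety is purely bookkeeping about the support width of the single-letter random variable, and once one fixes the range as $2\norm{H_A}_\infty$ the Hoeffding constant drops out as stated. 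Everything else is a routine unpacking of simultaneous diagonalizability and type counting.
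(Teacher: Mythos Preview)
Your proposal is correct and follows essentially the same approach as the paper: type counting for (i), immediate spectral facts for (ii) and (iv), and Hoeffding's inequality applied to the i.i.d.\ single-copy energy measurements for (iii), with the same bookkeeping $\Delta H_A \leqslant 2\norm{H_A}_\infty$ to obtain the stated constant $\eta = \delta^2/(2\norm{H_A}_\infty^2)$. Your observation that the ``$k$'' in item~(iv) should read ``$h$'' is also correct.
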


The proof is provided in \cref{app:missing}.

\subsection{Post-selection technique}

The post-selection technique is useful for bounding the diamond norm of a
candidate smoothed channel to a target ideal i.i.d.\@{} channel.

\begin{theorem}[Post-selection technique~\cite{Christandl2009PRL_Postselection}]
  \noproofref\label{x:post-selection-technique}
  Let $X,X'$ be quantum systems, $\mathcal{E}_{X\to X'}$ be a completely
  positive, trace-preserving map, and $\mathcal{T}_{X^n\to X^{\prime n}}$ be a
  completely positive, trace non-increasing map. Furthermore, let
  $\bar{R}\simeq X$,
  \begin{align}
    \zeta_{X^n} = \tr_{\bar{R}^n}`*[
    \int d\phi_{X\bar{R}}\,\proj\phi_{X\bar{R}}^{\otimes n} ]
    = \int d\sigma_X\,\sigma_X^{\otimes n}\ ,
  \end{align}
  where $d\phi_{X\bar{R}}$ denotes the Haar-induced measure on the pure states
  on $X\otimes\bar{R}$, and $d\sigma_X$ its induced measure on $X$ after partial
  trace, and let $\ket\zeta_{X^nR}$ be a purification of $\zeta_{X^n}$. Then, we have
  \begin{align}
    \frac12\norm{\mathcal{T} - \mathcal{E}^{\otimes n}}_{\diamond}
    \leqslant \poly(n)\,
    D`\big(\mathcal{T}(\zeta_{X^nR}), \mathcal{E}^{\otimes n}(\zeta_{X^nR}))\ .
  \end{align}
  Moreover, for all $n\in\mathbb{N}$ there exists a set $`*{ \ket{\phi_i}_{X\bar{R}} }$ of at most $\poly(n)$ states, and a probability distribution $`*{ p_i }$, providing a purification of $\zeta_{X^n}$ as
  \begin{align}
    \ket\zeta_{X^n\bar{R}^nR'}
    = \sum_i \sqrt{p_i}\, \ket{\phi_i}_{X\bar{R}}^{\otimes n}\otimes\ket{i}_{R'}
    \label{eq:de-Finetti-state-purified-with-poly-iid-states}
  \end{align}
  with a register $R'$ of size $\poly(n)$.
\end{theorem}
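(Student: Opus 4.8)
The plan is to reproduce the de Finetti / post-selection argument of~\cite{Christandl2009PRL_Postselection}, whose crux is an operator inequality bounding an arbitrary \emph{permutation-invariant} input by $\poly(n)$ times the universal de Finetti state $\zeta_{X^n}$; before that inequality can be applied one reduces to permutation-covariant $\mathcal{T}$ and permutation-invariant test inputs. First I would replace $\mathcal{T}$ by its symmetrization $\bar{\mathcal{T}}(\cdot)=\tfrac1{n!}\sum_\pi W'_\pi\,\mathcal{T}(W_\pi^\dagger\,\cdot\,W_\pi)\,(W'_\pi)^\dagger$, with $W_\pi,W'_\pi$ the permutation unitaries on $X^n,X^{\prime n}$: since $\mathcal{E}^{\otimes n}$ is permutation covariant, $\tfrac12\norm{\bar{\mathcal{T}}-\mathcal{E}^{\otimes n}}_\diamond\leqslant\tfrac12\norm{\mathcal{T}-\mathcal{E}^{\otimes n}}_\diamond$ (triangle inequality and invariance of $\norm{\cdot}_\diamond$ under unitary conjugation), and since $\zeta_{X^n}$ admits a permutation-invariant purification, $D\bigl(\bar{\mathcal{T}}(\zeta_{X^nR}),\mathcal{E}^{\otimes n}(\zeta_{X^nR})\bigr)\leqslant D\bigl(\mathcal{T}(\zeta_{X^nR}),\mathcal{E}^{\otimes n}(\zeta_{X^nR})\bigr)$ (joint convexity of $D$), so it suffices to treat permutation-covariant $\mathcal{T}$ --- which is anyway the case of interest in applications, where $\mathcal{T}$ is a symmetrized candidate implementation. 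Writing $\tfrac12\norm{\mathcal{T}-\mathcal{E}^{\otimes n}}_\diamond=\max_{\psi_{X^nR}}D\bigl(\mathcal{T}(\psi_{X^nR}),\mathcal{E}^{\otimes n}(\psi_{X^nR})\bigr)$ over pure inputs (as recalled in the preliminaries), I would then symmetrize the input: averaging $\psi_{X^nR}$ over the $W_\pi$ and recording $\pi$ in an auxiliary classical register leaves the objective unchanged (covariance of both channels and orthogonality of the records), so the maximizer may be assumed to have permutation-invariant marginal $\sigma_{X^n}$ on $X^n$.

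The key step is then the operator inequality $\sigma_{X^n}\leqslant g_{n,d^2}\,\zeta_{X^n}$ for every permutation-invariant $\sigma_{X^n}$, with $g_{n,d^2}=\binom{n+d^2-1}{n}\leqslant(n+1)^{d^2-1}=\poly(n)$ and $d=d_X$. I would derive it from two standard facts about symmetric subspaces: every permutation-invariant $\sigma_{X^n}$ has a purification $\ket\psi_{X^n\bar{R}^n}$ (with $\bar{R}\simeq X$) supported on the symmetric subspace of $(X\bar{R})^{\otimes n}$, and the projector $\Pi_{\mathrm{sym}}$ onto that subspace equals $g_{n,d^2}\int d\phi_{X\bar{R}}\,\proj{\phi}_{X\bar{R}}^{\otimes n}$. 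Then $\proj{\psi}_{X^n\bar{R}^n}\leqslant\Pi_{\mathrm{sym}}$, and tracing out $\bar{R}^n$ gives $\sigma_{X^n}\leqslant g_{n,d^2}\int d\phi_{X\bar{R}}\,\tr_{\bar{R}}[\proj{\phi}_{X\bar{R}}]^{\otimes n}=g_{n,d^2}\,\zeta_{X^n}$, the last equality being the definition of $\zeta_{X^n}$. From $\sigma_{X^n}\leqslant\poly(n)\,\zeta_{X^n}$, the standard dictionary between operator dominance and transformations of purifications yields an operator $M$ acting only on the reference, with $\norm{M}_\infty^2\leqslant\poly(n)$ and $\ket\psi_{X^nR}=(\Ident_{X^n}\otimes M)\ket\zeta_{X^nR}$ for a convenient purification $\ket\zeta_{X^nR}$ of $\zeta_{X^n}$ (purifications being isometrically related on the reference, which changes neither side). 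Since $M$ commutes past $\mathcal{T}$ and $\mathcal{E}^{\otimes n}$ (which act on $X^n$) and $\norm{AYA^\dagger}_1\leqslant\norm{A}_\infty^2\norm{Y}_1$,
\begin{align*}
  D\bigl(\mathcal{T}(\psi_{X^nR}),\mathcal{E}^{\otimes n}(\psi_{X^nR})\bigr)
  &\leqslant \norm{M}_\infty^{2}\, D\bigl(\mathcal{T}(\zeta_{X^nR}),\mathcal{E}^{\otimes n}(\zeta_{X^nR})\bigr)\\
  &\leqslant \poly(n)\, D\bigl(\mathcal{T}(\zeta_{X^nR}),\mathcal{E}^{\otimes n}(\zeta_{X^nR})\bigr)\ ,
\end{align*}
and maximizing over $\psi_{X^nR}$ proves the bound.

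For the ``moreover'' part I would invoke Carath\'eodory's theorem: $\int d\phi_{X\bar{R}}\,\proj{\phi}_{X\bar{R}}^{\otimes n}=\Pi_{\mathrm{sym}}/g_{n,d^2}$ is a point in the convex hull of $\{\proj{\phi}_{X\bar{R}}^{\otimes n}\}$, whose elements are Hermitian operators supported on the $g_{n,d^2}$-dimensional symmetric subspace of $(X\bar{R})^{\otimes n}$, hence lie in a real vector space of dimension $g_{n,d^2}^2=\poly(n)$; so it equals a convex combination $\sum_{i=1}^N p_i\,\proj{\phi_i}_{X\bar{R}}^{\otimes n}$ with $N\leqslant g_{n,d^2}^2+1=\poly(n)$. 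Setting $\ket\zeta_{X^n\bar{R}^nR'}=\sum_i\sqrt{p_i}\,\ket{\phi_i}_{X\bar{R}}^{\otimes n}\otimes\ket{i}_{R'}$ with orthonormal $\ket{i}_{R'}$, orthogonality kills the cross terms under the partial trace over $\bar{R}^nR'$, so its $X^n$-marginal is $\sum_i p_i\,\tr_{\bar{R}}[\proj{\phi_i}_{X\bar{R}}]^{\otimes n}=\tr_{\bar{R}^n}[\Pi_{\mathrm{sym}}/g_{n,d^2}]=\zeta_{X^n}$, which exhibits the asserted purification with register $R'$ of size $N=\poly(n)$.

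The step I expect to be the main obstacle is the de Finetti operator inequality $\sigma_{X^n}\leqslant\poly(n)\,\zeta_{X^n}$: this is where a \emph{polynomial} (rather than exponential) overhead has to be extracted, which works only because the symmetric subspace has $\poly(n)$ dimension and a symmetric purification is available, and it is precisely because this inequality genuinely needs $\sigma_{X^n}$ to be permutation invariant that the two symmetrization reductions cannot be skipped. The remaining ingredients --- the pure-state formula for $\norm{\cdot}_\diamond$, the symmetrizations, the operator-dominance dictionary, and Carath\'eodory's theorem --- are routine.
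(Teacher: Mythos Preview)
The paper does not give its own proof of this theorem; immediately after the statement it records that the first part is \cite[Eq.~(4)]{Christandl2009PRL_Postselection} and the second is \cite[Cor.~D.6]{Berta2011_reverse}. Your argument follows the standard de~Finetti route of those references, and once $\mathcal{T}$ is assumed permutation covariant it is correct: the input symmetrization, the operator inequality $\sigma_{X^n}\leqslant g_{n,d^2}\,\zeta_{X^n}$ via a symmetric-subspace purification, the dictionary $\ket\psi=(\Ident_{X^n}\otimes M)\ket\zeta$ with $\norm{M}_\infty^2\leqslant g_{n,d^2}$, and the Carath\'eodory argument for the ``moreover'' clause are all fine and match the cited proofs.

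The genuine gap is your reduction from general $\mathcal{T}$ to covariant $\bar{\mathcal{T}}$. You correctly note that symmetrization decreases \emph{both} sides, $\tfrac12\norm{\bar{\mathcal{T}}-\mathcal{E}^{\otimes n}}_\diamond\leqslant\tfrac12\norm{\mathcal{T}-\mathcal{E}^{\otimes n}}_\diamond$ and $D(\bar{\mathcal{T}}(\zeta),\mathcal{E}^{\otimes n}(\zeta))\leqslant D(\mathcal{T}(\zeta),\mathcal{E}^{\otimes n}(\zeta))$, but then ``it suffices to treat permutation-covariant $\mathcal{T}$'' does not follow: proving the bound for $\bar{\mathcal{T}}$ yields only $\tfrac12\norm{\bar{\mathcal{T}}-\mathcal{E}^{\otimes n}}_\diamond\leqslant\poly(n)\,D(\mathcal{T}(\zeta),\mathcal{E}^{\otimes n}(\zeta))$, with the wrong left-hand side. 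This is not a repairable technicality; the bound is actually \emph{false} without covariance. The state $\zeta_{X^n}$ has eigenvalues that are exponentially small in $n$ (for qubits, on the Schur--Weyl block with $\lambda_1=\lambda_2=n/2$ the eigenvalue is $\int(\det\sigma)^{n/2}\,d\sigma\leqslant 4^{-n/2}$). Taking $\mathcal{E}=\IdentProc{}$ and $\mathcal{T}(\cdot)=U(\cdot)U^\dagger$ with $U$ a unitary that swaps two orthogonal eigenvectors $\ket v,\ket w$ of $\zeta_{X^n}$ in such a block gives $\tfrac12\norm{\mathcal{T}-\IdentProc{}}_\diamond=1$, while $D\bigl(U\zeta_{X^nR}U^\dagger,\zeta_{X^nR}\bigr)=\sqrt{1-\bigl(\tr[U\zeta_{X^n}]\bigr)^2}$ is exponentially small. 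The cited references do assume permutation covariance; the paper's statement is slightly loose on this point, but in every application here (\cref{prop:iid-process-diamond-norm-with-any-W}, \cref{thm:constr-trivialH}, \cref{thm:construction-using-typicality-full}, \cref{x:universal-magical-thermal-op}) the candidate $\mathcal{T}$ is covariant by construction, so your parenthetical ``which is anyway the case of interest'' is the correct resolution --- it should just be stated as a hypothesis rather than as a reduction.
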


The first part of the theorem is \cite[Eq.~(4)]{Christandl2009PRL_Postselection} and the second part is, e.g., found as \cite[Cor.~D.6]{Berta2011_reverse}. The following proposition shows that a given channel is close to an i.i.d.\@{} channel, if it behaves as expected on all i.i.d.\@{} states with exponentially good accuracy.

\begin{proposition}
  \label{prop:iid-process-diamond-norm-with-any-W}
  For three systems $X,X',E$, let $V_{X\to X'E}$ be an isometry and
  $W_{X^n\to X'^nE^n}$ be an isometry which commutes with the permutations of
  the $n$ systems. Furthermore, assume that there exists $\eta>0$ independent of
  $n$ such that for all pure states $\proj\sigma_{XR_X}$ with a reference system
  $R_X\simeq X$, we have
  \begin{align}
    \Re`*{ \bra{\sigma}_{XR_X}^{\otimes n}
    (V^\dagger_{X\leftarrow X'E})^{\otimes n} \, W_{X^n\to X'^nE^n}
    \, \ket{\sigma}_{XR_X}^{\otimes n} }
    \geqslant 1 - \poly(n) \exp(-n\eta)\ .
  \end{align}
  For
  $\mathcal{E}_{X\to X'}`(\cdot) = \tr_E`\big[V_{X\to X'E}\,(\cdot)\,V^\dagger]$
  and
  $\mathcal{T}_{X^n\to X'^n}(\cdot)
  =\tr_{E^n}`\big[W_{X^n\to{}X'^nE^n}\,(\cdot)\,W^\dagger]$ we then have
  \begin{align}
    \frac12\norm[\big]{ \mathcal{T}_{X^n\to X'^n} - \mathcal{E}_{X\to X'}^{\otimes n} }_\diamond
    \leqslant \poly(n) \exp(-n\eta/2)\ .
  \end{align}
\end{proposition}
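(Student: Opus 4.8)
The plan is to combine the post-selection technique (\cref{x:post-selection-technique}) with the given pointwise exponential accuracy bound, using the de Finetti state's structure as a $\poly(n)$-size convex combination of i.i.d.\ pure states. First I would invoke \cref{x:post-selection-technique} to reduce bounding $\frac12\norm{\mathcal{T}-\mathcal{E}^{\otimes n}}_\diamond$ to bounding $\poly(n)\,D\big(\mathcal{T}(\zeta_{X^nR}),\mathcal{E}^{\otimes n}(\zeta_{X^nR})\big)$ for the specific purified de Finetti state $\ket\zeta_{X^n\bar R^n R'}$, where $R$ now plays the role of $\bar R^nR'$. Since both $\mathcal{T}$ and $\mathcal{E}^{\otimes n}$ are realized by dilating isometries $W$ and $V^{\otimes n}$, I would pass from trace distance to purified distance and use that for pure states the purified distance is controlled by the overlap: if $\ket{\psi}=W\ket\zeta$ and $\ket{\psi'}=(V^{\otimes n}\otimes\Ident_R)\ket\zeta$ (extending $V^{\otimes n}$ trivially on the purifying registers), then $P(\psi,\psi')^2 = 1 - \abs{\braket{\psi'}{\psi}}^2 \leqslant 2\big(1-\Re\braket{\psi'}{\psi}\big)$. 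So it suffices to lower-bound $\Re\braket{\psi'}{\psi} = \Re\,\bra\zeta (V^\dagger)^{\otimes n} W\ket\zeta$.

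The key step is to evaluate this overlap on $\ket\zeta_{X^n\bar R^nR'} = \sum_i\sqrt{p_i}\,\ket{\phi_i}_{X\bar R}^{\otimes n}\otimes\ket i_{R'}$. Because $W$ commutes with permutations — and hence, I would argue, $(V^\dagger)^{\otimes n}W$ maps within each symmetric sector in a way that does not mix the orthogonal $\ket i_{R'}$ flags — the cross terms $i\neq i'$ vanish (the $R'$ register is untouched by $W$ and $V^{\otimes n}$, so $\bra{i'}\otimes\cdots\,\ket i$ contributes $\delta_{ii'}$). This collapses the overlap to $\sum_i p_i\,\Re\,\bra{\phi_i}_{X\bar R}^{\otimes n}(V^\dagger)^{\otimes n}W\ket{\phi_i}_{X\bar R}^{\otimes n}$, which is a convex combination over $i$ of exactly the quantities assumed in the hypothesis (taking $\ket\sigma_{XR_X}=\ket{\phi_i}_{X\bar R}$, identifying $R_X$ with $\bar R$). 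Each such term is at least $1 - \poly(n)\exp(-n\eta)$, so the full overlap is $\geqslant 1 - \poly(n)\exp(-n\eta)$ as well, since $\sum_i p_i = 1$.

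Putting the pieces together: $\Re\braket{\psi'}{\psi}\geqslant 1-\poly(n)e^{-n\eta}$ gives $P(\psi,\psi')\leqslant \sqrt{2\poly(n)e^{-n\eta}} = \poly(n)e^{-n\eta/2}$, hence $D\big(\mathcal{T}(\zeta_{X^nR}),\mathcal{E}^{\otimes n}(\zeta_{X^nR})\big)\leqslant P(\psi,\psi')\leqslant\poly(n)e^{-n\eta/2}$ by $D\leqslant P$ and monotonicity of trace distance under the partial trace that produces $\mathcal{T}(\zeta),\mathcal{E}^{\otimes n}(\zeta)$ from $\psi,\psi'$. Multiplying by the $\poly(n)$ prefactor from the post-selection technique and absorbing it into $\poly(n)$ yields the claimed bound $\frac12\norm{\mathcal{T}-\mathcal{E}^{\otimes n}}_\diamond \leqslant \poly(n)e^{-n\eta/2}$. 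I expect the main obstacle to be the bookkeeping around the purifying registers: one must check carefully that $W$ (defined only on $X^n\to X'^nE^n$) and $V^{\otimes n}$ act as the identity on $\bar R^n R'$, so that the permutation-invariance of $W$ genuinely kills the off-diagonal $i\neq i'$ terms and the overlap reduces cleanly to the hypothesized pointwise inner products; the rest is routine manipulation of purified versus trace distance.
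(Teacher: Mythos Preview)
Your proof is correct and follows essentially the same approach as the paper's: reduce to the purified de Finetti state via the post-selection technique, collapse the overlap $\Re\,\bra\zeta (V^\dagger)^{\otimes n}W\ket\zeta$ to a convex combination of the hypothesized i.i.d.\ inner products via orthogonality of the $\ket i_{R'}$ flags, and then convert overlap $\to$ purified distance $\to$ trace distance before applying the $\poly(n)$ post-selection prefactor. One small clarification: the cross-term cancellation is purely because $W$ and $V^{\otimes n}$ act as the identity on the register $R'$ (your parenthetical reason), not because of the permutation invariance of $W$---the paper's proof does not invoke permutation invariance at this step either.
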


The proof is provided in \cref{app:missing}.

\section{Resource theory of thermodynamics} \label{sec:resource-theory-thermodynamics}

\subsection{Gibbs-preserving maps}

We consider the framework of Ref.~\cite{Faist2018PRX_workcost}, where for each
system $S$ considered a positive semi-definite operator $\Gamma_S\geqslant 0$ is
associated. A trace non-increasing, completely positive map $\Phi_{A\to B}$ is
allowed for free if it satisfies $\Phi_{A\to B}(\Gamma_A) \leqslant
\Gamma_B$. In the case of a system $S$ with Hamiltonian $H_S$, and in the
presence of a single heat bath at inverse temperature $\beta$, the relevant
thermodynamic framework is given by setting $\Gamma_S = \ee^{-\beta H_S}$.  In
the remainder of this paper, when using the present framework, it is convenient
to work with the $\Gamma$ operators on an abstract level. The results then also
apply to situations where several different thermodynamic baths are considered,
or in more general settings where a specific operator needs to be conserved by
the spontaneous evolution of the system~\cite{Faist2018PRX_workcost}.

The resources required to enable non-free operations are counted using an
explicit system that provides these resources, such as an \emph{information
  battery}.  An information battery is a large register $W$ whose associated
operator $\Gamma_W$ is simply $\Gamma_W=\Ident_W$ (i.e., $H_W = 0$).  The
information battery is required to be in a state of the special form
$\tau_W^m = P_W^m/\tr[P_W^m]$ where $P_W^m$ is a projector of rank
$\ee^{m}$. That is, $\tau_W^m$ has uniform eigenvalues over a given rank
$\ee^{m}$.
We denote the \emph{charge} or \emph{resource value} of a battery state
$\tau_W^m$ by $w(\tau_W^m) = \ln(d) - m$, where $d$ is the dimension of the
information battery.  The value $w(\tau)$ measures the amount of purity present
in the state $\tau$, which is the basic resource required to implement maps that
are not already Gibbs-preserving maps.
We choose to measure $w(\tau)$ in units of number of pure nats, equal to
$\ln(2)$ times a number of pure qubits. %
A Gibbs-preserving map that acts jointly on a system and an information battery,
and which maps the input battery state $\tau$ to the output battery state
$\tau'$, is deemed to \emph{consume an amount of work} $w = w(\tau) - w(\tau')$.

The resources can be counted in terms of thermodynamic work in units of energy
if we are given a heat bath at inverse temperature $T$.  Recall that a pure
qubit can be converted to $kT\ln(2)$ work using a Szil\'ard engine, where $k$ is
Boltzmann's constant~\cite{Szilard1929ZeitschriftFuerPhysik}.  By counting
purity in nats instead of qubits, we get rid of the $\ln(2)$ factor: A number
$\lambda$ of pure nats can be converted into $\lambda\,kT$ thermodynamic work
using a Szil\'ard-type engine. We count work exclusively in equivalent of pure nats, for simplicity, as opposed
to units of energy. The two are directly related by a factor $\beta^{-1}=kT$.  Furthermore, this eliminates the factor $\beta$ from
otherwise essentially information-theoretic expressions, and our theorems thus
directly apply to cases where $\Gamma_X,\Gamma_{X'}$ are any abstract positive
semi-definite operators which are not necessarily defined via a Hamiltonian.

Let $\Phi_{XW\to X'W}$ be a Gibbs-preserving map acting on an information
battery $W$, and let $\tau_W^{m}$, $\tau_W^{m'}$ be two information battery
states.  An implementation running the operation $\Phi_{XW\to X'W}$ with the
given input and output battery states is tasked to (a) make available the input
battery state, (b) apply the operation $\Phi_{XW\to X'W}$, and (c) check that
the output battery state is appropriate (e.g., for possible future re-use).  For
the verification in Point (c) it is sufficient to measure the two-outcome POVM
$`{ P_W^{m'}, \Ident-P_W^{m'} }$; as long as the first outcome is observed, it
is always possible to bring the state to $\tau_W^{m'}$ by applying a completely
thermalizing operation on the support of $P_W^{m'}$ (here, this is a completely
randomizing or completely symmetrizing operation).  In the constructions
presented in the present paper, we allow this verification measurement to fail
with a small fixed probability $\epsilon>0$.

A convenient mathematical object to characterize what the operation does on the
system is the following.  The \emph{effective work process}
$\mathcal{T}_{X\to X'F}$ associated with $\Phi_{XW\to X'W}$ and
$(\tau_W^m,\tau_W^{m'})$ is the trace non-increasing map defined as
\begin{align}
  \mathcal{T}_{X\to X'}(\cdot)
  &= \tr_W\left[ P_W^{m'}\; \Phi_{XW\to X'W}`\big( (\cdot) \otimes \tau_W^m ) \right]\ .
    \label{eq:effective-work-process}
\end{align}

The question of implementing a process $\mathcal{E}$ becomes the issue of
finding a Gibbs-preserving map along with battery states such that the
associated effective work process is close to $\mathcal{E}$.  Specifically, if
$\norm{ \mathcal{T}_{X\to X'} - \mathcal{E}_{X\to X'}}_\diamond \leqslant
\epsilon$, then we can assert that the failure probability in Point (c) above is
bounded by $\epsilon$ for all possible inputs on $X$; the operation therefore
implements $\mathcal{E}_{X\to X'}$ accurately with high success probability.

A useful characterization of which processes can be implemented using an
information battery is given by the following proposition.

\begin{proposition}[{{\protect\cite[Proposition~I]{Faist2018PRX_workcost}}}]
  \noproofref
  \label{prop:fw-T-battery-characterization}
  Let $\Gamma_X,\Gamma_{X'}\geqslant 0$, $\mathcal{T}_{X\to X'}$ be a
  completely positive, trace non-increasing map, and $w\in\mathbb{R}$. Then, the following are
  equivalent:
  \begin{enumerate}[label=(\roman*)]
  \item We have
    $ \mathcal{T}_{X\to X'}(\Gamma_X) \leqslant \ee^{w}\, \Gamma_{X'} $;
  \item \label{item:prop-fw-T-battery-characterization-exist-GPM} For all
    $\delta>0$ there exists an information battery $W$ and two battery states
    $\tau_W,\tau'_W$ such that $w(\tau_W) - w(\tau'_W) \leqslant w+\delta$, and
    there exists a Gibbs-preserving map $\Phi_{XW\to X'W}$ with
    $\mathcal{T}_{X\to X'}$ the effective work process associated with
    $\Phi_{XW\to X'W}$ and $(\tau_W, \tau'_W)$.
\end{enumerate}
\end{proposition}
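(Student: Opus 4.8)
The plan is to prove the two implications separately. The reverse direction \textup{(ii)}~$\Rightarrow$~\textup{(i)} is a short estimate. Suppose we are given, for some $\delta>0$, a Gibbs-preserving map $\Phi_{XW\to X'W}$ and battery states $\tau_W=\tau_W^m$, $\tau'_W=\tau_W^{m'}$ on a register $W$ of dimension $d$, with $w(\tau_W)-w(\tau'_W)=m'-m\leqslant w+\delta$ and $\mathcal{T}_{X\to X'}(\cdot)=\tr_W[P_W^{m'}\,\Phi_{XW\to X'W}((\cdot)\otimes\tau_W^m)]$. Since $\tau_W^m=\ee^{-m}P_W^m\leqslant\ee^{-m}\Ident_W=\ee^{-m}\Gamma_W$, complete positivity of $\Phi$ together with the Gibbs-preservation $\Phi(\Gamma_X\otimes\Gamma_W)\leqslant\Gamma_{X'}\otimes\Gamma_W$ give $\Phi(\Gamma_X\otimes\tau_W^m)\leqslant\ee^{-m}\,\Gamma_{X'}\otimes\Ident_W$. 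Using the elementary identity $\tr_W[(\Ident\otimes P)A]=\tr_W[(\Ident\otimes P)A(\Ident\otimes P)]$ valid for any projector $P$, I would conjugate this bound by $\Ident_{X'}\otimes P_W^{m'}$ and trace out $W$ to obtain $\mathcal{T}_{X\to X'}(\Gamma_X)\leqslant\ee^{-m}\tr[P_W^{m'}]\,\Gamma_{X'}=\ee^{\,m'-m}\,\Gamma_{X'}\leqslant\ee^{\,w+\delta}\,\Gamma_{X'}$; since this holds for every $\delta>0$ and the positive semi-definite cone is closed, \textup{(i)} follows.

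For \textup{(i)}~$\Rightarrow$~\textup{(ii)}, the idea is to implement $\mathcal{T}_{X\to X'}$ by a map that postselects the input battery onto its block $P_W^m$, discards it, applies $\mathcal{T}_{X\to X'}$ to the system, and writes a fresh state $\tau_W^{m'}$ on the output register. Given $\delta>0$, I would choose positive integers $r=\ee^m$, $r'=\ee^{m'}$ with $\ln(r'/r)\in[\,w,\,w+\delta\,]$ --- possible since for $r$ large enough the interval $[\,r\ee^{w},\,r\ee^{w+\delta}\,]$ has length at least $1$ and hence contains an integer $r'\geqslant1$ --- take $W$ of dimension $d\geqslant\max(r,r')$ with $\Gamma_W=\Ident_W$, let $P_W^m,P_W^{m'}$ be projectors of ranks $r,r'$, and define
\[
  \Phi_{XW\to X'W}(\sigma_{XW})
  = \mathcal{T}_{X\to X'}\!\big(\tr_W\big[(\Ident_X\otimes P_W^m)\,\sigma_{XW}\,(\Ident_X\otimes P_W^m)\big]\big)\otimes\tau_W^{m'}\ .
\]
This is completely positive (a conjugation, a partial trace, $\mathcal{T}_{X\to X'}$, and tensoring with a fixed state, composed) and trace non-increasing, since $\Phi^\dagger(\Ident_{X'W})=\mathcal{T}^\dagger(\Ident_{X'})\otimes P_W^m\leqslant\Ident_{XW}$ using $\mathcal{T}^\dagger(\Ident_{X'})\leqslant\Ident_X$.

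It then remains to verify three points, each a short computation. First, the effective work process associated with $\Phi$ and $(\tau_W^m,\tau_W^{m'})$ is exactly $\mathcal{T}_{X\to X'}$: since $\tr_W[(\Ident\otimes P_W^m)(\rho_X\otimes\tau_W^m)(\Ident\otimes P_W^m)]=\rho_X$ (because $\tr[P_W^m\tau_W^m]=1$), one gets $\Phi(\rho_X\otimes\tau_W^m)=\mathcal{T}_{X\to X'}(\rho_X)\otimes\tau_W^{m'}$, hence $\tr_W[P_W^{m'}\,\Phi(\rho_X\otimes\tau_W^m)]=\mathcal{T}_{X\to X'}(\rho_X)$. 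Second, $\Phi$ is Gibbs-preserving: $\Phi(\Gamma_X\otimes\Ident_W)=\tr[P_W^m]\,\mathcal{T}_{X\to X'}(\Gamma_X)\otimes\tau_W^{m'}=\ee^{\,m-m'}\,\mathcal{T}_{X\to X'}(\Gamma_X)\otimes P_W^{m'}\leqslant\ee^{\,w-(m'-m)}\,\Gamma_{X'}\otimes P_W^{m'}\leqslant\Gamma_{X'}\otimes\Ident_W$, where the first inequality uses \textup{(i)} and the last uses $m'-m\geqslant w$. Third, the consumed work is $w(\tau_W^m)-w(\tau_W^{m'})=(\ln d-m)-(\ln d-m')=m'-m\leqslant w+\delta$. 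The one step I expect to be genuinely delicate is this last rounding argument: the battery ranks are integers, so $w$ cannot in general be attained exactly, and one must guarantee that an admissible ratio $\ln(r'/r)$ lands in the window $[\,w,\,w+\delta\,]$ --- which is precisely where the tolerance $\delta$, together with the freedom to enlarge $W$, is used. Everything else reduces to the projector identity above and elementary operator inequalities.
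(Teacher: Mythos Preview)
Your proposal is correct and matches the paper's approach exactly. The paper does not give its own proof (the result is quoted from \cite{Faist2018PRX_workcost}), but it does record the explicit Gibbs-preserving map used in the construction, namely $\Phi_{XW\to X'W}(\cdot)=\mathcal{T}_{X\to X'}\bigl[\tr_W\bigl(P_W^m\,(\cdot)\bigr)\bigr]\otimes\tau_W^{m'}$ with the freedom to choose any $m,m'$ satisfying $m'-m\geqslant w$; this is precisely your map (your conjugation by $P_W^m$ inside the partial trace is equivalent since $P_W^m$ is a projector), and your verification of the three points---effective work process, Gibbs-sub-preservation, and the integer-rank rounding handled by the $\delta$ slack---fills in the details the paper leaves to the reference.
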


Therefore, to show that one can implement $\mathcal{E}_{X\to X'}$ with
Gibbs-preserving maps while expending work $w$, it suffices to exhibit a map
$\mathcal{T}_{X\to X'}$ that is $\epsilon$-close to $\mathcal{E}_{X\to X'}$ in
diamond distance and that satisfies
$\mathcal{T}_{X\to X'}(\Gamma_X)\leqslant\ee^{w}\Gamma_{X'}$.  From the proof in
\cite{Faist2018PRX_workcost} we know in
Point~\ref{item:prop-fw-T-battery-characterization-exist-GPM} above that $W$,
$\tau_W \equiv \tau_W^{m}$ and $\tau_W' \equiv \tau_W^{m'}$ can be chosen freely as long
as $m' - m = w(\tau_W) - w(\tau'_W) \geqslant w$ and that the corresponding
Gibbs-preserving map is given by
\begin{align}
  \Phi_{XW\to X'W}(\cdot)
  &= \mathcal{T}_{X\to X'}`\big[ \tr_W`\big(P_W^m\, (\cdot)) ]\,\otimes\tau_W^{m'}\ .
    \label{eq:explicit-GPM-from-effective-work-process-and-taus}
\end{align}

In Ref.~\cite{Faist2018PRX_workcost}, the resource cost $w$ of
implementing a process $\mathcal{E}_{X\to X'}$ (any completely positive,
trace-preserving map) up to an accuracy $\epsilon\geqslant 0$ in terms of
proximity of the process matrix given a fixed input state $\sigma_X$, counted in
pure nats, was shown to be given by the \emph{coherent relative entropy}
\begin{align}\label{eq:coherent-relative-entropy}
  w = 
  - \DCohz[\epsilon]{*\mathcal{E}_{X\to X'}(\sigma_{XR_X})}{X}{X'}{\Gamma_X}{\Gamma_{X'}}
  = \ln \min_{\substack{\mathcal{T}(\Gamma_X)\leqslant\alpha\Gamma_{X'}\\
  P(\mathcal{T}(\sigma_{XR_X}), \mathcal{E}(\sigma_{XR_X})) \leqslant \epsilon
  }} \alpha\ ,
\end{align}
where $\sigma_{XR_X}$ is the purification of $\sigma_X$ on a system
$R_X\simeq X$ given by $\ket\sigma_{XR} = \sigma_X^{1/2}\,\ket\Phi_{X:R_X}$, and
where the optimization ranges over completely positive, trace non-increasing maps
$\mathcal{T}_{X\to X'}$. The coherent relative entropy enjoys a collection of properties in relation to
the conditional min- and max-entropy, and to the min- and max-relative entropy.
It satisfies the following asymptotic equipartition property: For a completely
positive, trace-preserving map $\mathcal{E}_{X\to X'}$ and quantum state
$\sigma_X$ we have for $0<\epsilon<1$ that
\begin{align}
  \lim_{n\to\infty} \frac1n
  \DCohz[\epsilon]`\big{*\mathcal{E}_{X\to X'}^{\otimes n}(\sigma_{XR}^{\otimes n})}{X^n}{X'^n}{\Gamma_X^{\otimes n}}{\Gamma_{X'}^{\otimes n}}
  = \DD{\sigma_X}{\Gamma_X} - \DD{\mathcal{E}(\sigma_X)}{\Gamma_{X'}}\ .
  \label{eq:coh-rel-entr-AEP}
\end{align}

\subsection{Thermal operations}

The framework of Gibbs-sub-preserving maps is technically convenient, but it is unclear whether any Gibbs-sub-preserving operation can be implemented at no work cost using other frameworks. This includes for example thermal operations that might be considered more operational

Here, we consider the alternative framework of \emph{thermal operations}~\cite{Brandao2013_resource,Horodecki2013_ThermoMaj,Brandao2015PNAS_secondlaws}. Each system $S$ of interest has an associated Hamiltonian $H_S$ and is not
interacting with the other systems. For a given fixed inverse temperature
$\beta$, we allow the following operations to be carried out for free:
\begin{enumerate}[label=(\roman*)]
\item Apply any unitary operation that commutes with the total Hamiltonian;
\item Bring in any ancillary system in its Gibbs state at inverse temperature
  $\beta$; and
\item Discard any system.
\end{enumerate}
The most general transformation a system $S$ can undergo under this set of rules
is a \emph{thermal operation}.  A thermal operations is any process that can be
implemented using an additional system $B$ with any Hamiltonian $H_B$ and with
any unitary $U_{SB}$ satisfying $[U_{SB}, H_S + H_B] = 0$, resulting in the
completely positive, trace-preserving map
\begin{align}
  \Phi_S(\cdot) = \tr_B\bigl[ U_{SB} \, \bigl( (\cdot)\otimes\gamma_B \bigr) \,
  U_{SB}^\dagger \bigr]\ ,
\end{align}
where $\gamma_B = e^{-\beta H_B} / \tr[e^{-\beta H_B}]$ is the Gibbs state of
the bath system $B$.  Observe that any concatenation of thermal operations is
again a thermal operation.

Clearly, any thermal operation $\Phi_S$ leaves the thermal
state $\gamma_S = \ee^{-\beta H_S}/\tr[\ee^{-\beta H_S}]$ on $S$
invariant. Hence, any lower bound on the work cost of an implementation derived
in the framework of Gibbs-preserving maps also applies to thermal
operations. We use the same definitions of work and the effective work process for thermal
operations as we defined for Gibbs-preserving maps earlier: an information
battery is used to account for work, and the effective work process associated
with a thermal operation $\Phi_{XW\to XW}$ with respect to battery states
$(\tau_W^m, \tau_W^{m'})$ is also defined by~\eqref{eq:effective-work-process}.

When considering only states that commute with the Hamiltonian, a powerful tool
to characterize possible state transformations is the notion of
thermomajorization~\cite{Horodecki2013_ThermoMaj}.  In the fully quantum regime,
there is in contrast no known simple mathematical characterization of the work
required to implement a quantum process with thermal operations.  In fact,
because thermal operations are time-covariant, it is impossible to
implement processes that are not time-covariant, even if the latter might admit
an implementation with a Gibbs-preserving map~\cite{Faist2015NJP_Gibbs}.

We will later use a primitive that transforms a thermal state into a pure energy
eigenstate. The next statement follows directly from \cite[Eq.~(8) and Suppl.\@
Note~4]{Horodecki2013_ThermoMaj}.
\begin{proposition}
  \noproofref
  \label{thm:thermal-to-zero-thermomaj-workcost}
  Let $\gamma_X = \ee^{-\beta H_X}/\tr[\ee^{-\beta H_X}]$ be the thermal state
  on a system $X$ with Hamiltonian $H_X$, and let $\ket{E}_X$ be a pure energy
  eigenstate of $H_X$.  There exists a thermal operation $\Phi_{XW}$ on an
  information battery with battery states $(\tau_W, \tau_W')$ such that
  $\Phi_{XW}`\big(\gamma_X\otimes\tau_W) = \proj{E}_X\otimes\tau'_W$ and such
  that $w(\tau_W)-w(\tau_W')$ can be chosen arbitrarily close to
  $\beta E + \ln\tr[\ee^{-\beta H_X}]$.
\end{proposition}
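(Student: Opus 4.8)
The plan is to reduce the claim to the thermomajorization criterion for convertibility under thermal operations between states block-diagonal in energy~\cite{Horodecki2013_ThermoMaj}, applied to the joint system $XW$, and then to evaluate the resulting inequality explicitly. Since the information battery has $H_W=0$, the total Hamiltonian on $XW$ is $H_X$, and the states $\gamma_X\otimes\tau_W^m$, $\proj{E}_X\otimes\tau_W^{m'}$ and the joint Gibbs state are all diagonal in a common energy eigenbasis indexed by pairs $(i,a)$, with $i$ running over an energy eigenbasis of $H_X$ (energies $E_i$) and $a\in\{1,\dots,d_W\}$ over a basis of $W$; the chosen eigenvector $\ket{E}_X$ enters merely as one basis element of Gibbs weight $\ee^{-\beta E}/Z_X$, where $Z_X=\tr[\ee^{-\beta H_X}]$, so any degeneracy of the level $E$ is irrelevant. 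By \cite[Eq.~(8) and Suppl.\ Note~4]{Horodecki2013_ThermoMaj}, a thermal operation $\Phi_{XW}$ on $XW$ with $\Phi_{XW}(\gamma_X\otimes\tau_W^m)=\proj{E}_X\otimes\tau_W^{m'}$ exists if and only if $\gamma_X\otimes\tau_W^m$ thermomajorizes $\proj{E}_X\otimes\tau_W^{m'}$.

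I would then compute the two thermomajorization curves. For $\gamma_X\otimes\tau_W^m$ the population-to-Gibbs-weight ratio equals the constant $d_W/\ee^{m}$ on its support, which carries total Gibbs weight $\ee^{m}/d_W$, and is zero elsewhere; hence its curve rises linearly from $(0,0)$ to $(\ee^{m}/d_W,\,1)$ and is then flat up to $(1,1)$. For $\proj{E}_X\otimes\tau_W^{m'}$ the ratio is the constant $d_W\,Z_X\,\ee^{\beta E}/\ee^{m'}$ on its support, of total Gibbs weight $\ee^{m'}\ee^{-\beta E}/(Z_X d_W)$, and is zero elsewhere; hence its curve rises from $(0,0)$ to $(\ee^{m'}\ee^{-\beta E}/(Z_X d_W),\,1)$ and is then flat. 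Both are therefore elbow-shaped, and a short case split over the three intervals cut out by the two corner abscissas shows that the first dominates the second everywhere precisely when the first reaches height $1$ no later than the second, i.e.\ when $\ee^{m}/d_W\leqslant \ee^{m'}\ee^{-\beta E}/(Z_X d_W)$, which rearranges to $m'-m\geqslant \beta E+\ln Z_X$.

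Since $w(\tau_W^m)-w(\tau_W^{m'})=m'-m$, it only remains to choose the batteries so that $m'-m$ tends to $\beta E+\ln Z_X$ from above: taking $\ee^{m}=N$, $\ee^{m'}=\lceil N\,Z_X\,\ee^{\beta E}\rceil$ and $d_W=\ee^{m'}$ for integers $N\to\infty$ does this, and for each such choice the thermomajorization condition holds, so the required thermal operation exists and realizes $\Phi_{XW}(\gamma_X\otimes\tau_W^m)=\proj{E}_X\otimes\tau_W^{m'}$ exactly. I expect the only mildly delicate steps to be the comparison of the two elbow-shaped curves (routine, since a smaller corner abscissa forces a steeper initial slope and hence dominance on every interval) and the bookkeeping that the ranks $\ee^{m},\ee^{m'}$ can be kept integral and $\leqslant d_W$ while $m'-m$ still converges to the stated value. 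No separate optimality statement is required, since the inequality $m'-m\geqslant\beta E+\ln Z_X$ just derived is itself the tight lower bound on the work; equivalently it follows from data processing of the relative entropy to $\Gamma_X=\ee^{-\beta H_X}$ under the effective work process, using $\gamma_X=\Gamma_X/Z_X$ and that $\ket{E}_X$ is an eigenvector of $\Gamma_X$.
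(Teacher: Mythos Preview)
Your proposal is correct and follows precisely the route the paper indicates: the paper does not give its own proof of this proposition but simply states that it ``follows directly from \cite[Eq.~(8) and Suppl.\ Note~4]{Horodecki2013_ThermoMaj},'' and your argument is exactly the thermomajorization computation that this citation points to, worked out in detail. Your identification of the two elbow-shaped Lorenz curves, the reduction of dominance to the single inequality $m'-m\geqslant\beta E+\ln Z_X$, and the approximation of the infimum via integer ranks are all correct and constitute the standard derivation.
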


\section{Thermodynamic capacity}
\label{sec:thermodyn-capacity}
\label{sec:main-result}
\label{sec:statement-problem-conditions-univ-impl}

\subsection{Definition}\label{sec:definition}
Let $X,X'$ be quantum systems, $\mathcal{E}_{X\to X'}$ be a quantum process, and
${\epsilon>0}$. We seek a free thermodynamic operation (either a thermal
operation or a Gibbs preserving map) $\Phi_{X^nW\to X'^nW}$ that acts on
$X^{\otimes n}$ and a battery $W$, with output on $X'^{\otimes n}$ and $W$, as
well as information battery states $\tau_W^{\mathrm{(i)}}$ and
$\tau_W^{\mathrm{(f)}}$, such that:
\begin{enumerate}[label=(\roman*)]
\item The effective work process $\mathcal{T}_{X^n\to X'^n}$ of
  $\Phi_{X^nW\to X'^nW}$ with respect to
  $\left(\tau_W^{(\mathrm{i})},\tau_W^{(\mathrm{f})}\right)$ is $\epsilon$-close in diamond
  distance to $\mathcal{E}_{X\to X'}^{\otimes n}$;
\item We seek to minimize the work consumption per copy $w$ given by
  \begin{align}
    w = \frac1n\left[ w\left(\tau_W^{(\mathrm{i})}\right) - w\left(\tau_W^{(\mathrm{f})}\right)\right]\ .
  \end{align}
\end{enumerate}

Our main result is a collection of three independent constructions of such
implementations in different regimes, using either Gibbs-preserving maps or
thermal operations. In each case, the amount of work consumed per copy is given
by a quantity which we call the \emph{thermodynamic capacity} of the process,
and which turns out to be the minimal work cost an implementation satisfying the
above conditions can achieve. The thermodynamic capacity of a completely
positive, trace-preserving map $\mathcal{E}_{X\to X'}$ relative to operators
$\Gamma_X, \Gamma_{X'} > 0$ is defined as
\begin{align}
  \label{eq:thermodynamic-capacity-with-relative-entropy}
  T(\mathcal{E})
  = \sup_{\sigma_X} \, `\Big{
    \DD{\mathcal{E}_{X\to X'}(\sigma_X)}{\Gamma_{X'}}
    - \DD{\sigma_X}{\Gamma_X}
  }\ .
\end{align}
In a fully thermodynamic context where $\Gamma_X = \ee^{-\beta H_X}$ and
$\Gamma_{X'} = \ee^{-\beta H'_{X'}}$, one can choose to express the
thermodynamic capacity in units of energy rather than in nats, in which case a
pre-factor $\beta^{-1}$ may be included in the definition above such that the
thermodynamic capacity is a difference of free energies
\begin{align}
  T(\mathcal{E}) = \sup_{\sigma} `\Big{ F_{H'}(\mathcal{E}(\sigma)) - F_H(\sigma) }
  \quad\text{with}\quad
  F_H(\rho) = \beta^{-1} \DD{\rho}{\ee^{-\beta H}}\ .
\end{align}

\paragraph{Construction for trivial Hamiltonians}
First, in Section \ref{sec:approach-using-quasi-convexity} we consider the
special case where $\Gamma_X=\Ident_X$ and $\Gamma_{X'}=\Ident_{X'}$
corresponding to trivial Hamiltonians and show that simple considerations based
on properties of known entropy measures guarantee the existence of a universal
implementation of $\mathcal{E}^{\otimes n}$ with either thermal operations or
Gibbs-preserving maps.

\paragraph{Construction using Gibbs-preserving maps}
Second, in Section \ref{sec:proof-Weps-iid} we consider the case of general $\Gamma_X,\Gamma_{X'}$ and we construct
a universal implementation of $\mathcal{E}_{X\to X'}^{\otimes n}$ with
Gibbs-preserving maps, based on new typicality considerations.

\paragraph{Construction using thermal operations}
Third, for arbitrary Hamiltonians we construct in Section \ref{sec:optimal-universal-protocol-TO-timecovariant} a universal implementation of $\mathcal{E}_{X\to X'}^{\otimes n}$ with thermal operations, assuming that $\mathcal{E}$ is time-covariant, i.e., that it commutes with the time evolution
operation.

\subsection{Properties}

The thermodynamic capacity is a convex optimization program. Namely, the objective function of the optimization in~\eqref{eq:thermodynamic-capacity-with-relative-entropy} can be written as
\begin{align}
  \hspace*{3em}
  &\hspace*{-3em}
  \DD{\mathcal{E}_{X\to X'}(\sigma_X)}{\Gamma_{X'}}
  - \DD{\sigma_X}{\Gamma_X}
    \nonumber\\
  &= -\Hfn(\mathcal{E}_{X\to X'}(\sigma_X)) + \Hfn(\sigma_X)
    - \tr`*[ \mathcal{E}_{X\to X'}(\sigma_X)\,\ln\Gamma_{X'} ]
    + \tr`*[ \sigma_X\,\ln\Gamma_{X} ]
    \nonumber\\
  &= \HH[\rho]{E}[X']
    - \tr`*[ \mathcal{E}_{X\to X'}(\sigma_X)\,\ln\Gamma_{X'} ]
    + \tr`*[ \sigma_X\,\ln\Gamma_{X} ]
    \ ,
    \label{eq:mkuoyiftukghlj}
\end{align}
where we defined the state $\rho_{EX'} = V_{X\to X'E} \sigma_X V^\dagger$ using
a Stinespring dilation $V_{X\to X'E}$ of $\mathcal{E}_{X\to X'}$ into an
environment system $E$, satisfying
$\mathcal{E}_{X\to X'}(\cdot) = \tr_E`*[ V\,(\cdot)\,V^\dagger]$. The
conditional entropy is concave in the quantum state as
$\HH[\rho]{E}[X'] = { -\DD{\rho_{EX'}}{\Ident_E\otimes\rho_{X'}} }$ and the
quantum relative entropy is jointly convex. The other terms
in~\eqref{eq:mkuoyiftukghlj} are linear. Hence, the
optimization~\eqref{eq:thermodynamic-capacity-with-relative-entropy} is a convex
optimization that can be carried out efficiently for small system
sizes~\cite{BookBoyd2004ConvexOptimization}. Indeed, we have successfully
computed the thermodynamic capacity of simple example quantum channels acting on
few qubits with Python code, using the QuTip
framework~\cite{QuTipPy_4_1_0,Johansson2013CPC_QuTip2} and the CVXOPT
optimization software~\cite{CVXOPTPy_1_1_9} (see also \cite{Berta21} for a
direct algorithm).

The thermodynamic capacity is additive~\cite{Navascues2015PRL_nonthermal}. As a
consequence of this property, it is not necessary to include a stabilization
over a reference system in the definition of the thermodynamic capacity. That
is, had we optimized over bipartite states $\sigma_{XR}$ with a reference system
$R$ for any $\Gamma_R$, on which the process acts as the identity process, we
would be effectively computing $T(\mathcal{E}\otimes\IdentProc[R]{})$. However,
additivity implies that $T(\mathcal{E}\otimes\IdentProc[R]{}) = T(\mathcal{E})$.

\begin{proposition}[Additivity of thermodynamic
  capacity~\cite{Navascues2015PRL_nonthermal}]%
  \label{prop:thermo-capacity-additive}
  For $\Gamma_{X},\Gamma_{X'},\Gamma_Z, \Gamma_{Z'} > 0$ and quantum
  channels $\mathcal{E}_{X\to X'}$, $\mathcal{F}_{Z\to Z'}$ we have
  \begin{align}
    T(\mathcal{E}\otimes\mathcal{F}) = T(\mathcal{E})+T(\mathcal{F})\ .
  \end{align}
\end{proposition}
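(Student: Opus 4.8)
The plan is to prove additivity in the standard way, by establishing the two inequalities separately. The inequality $T(\mathcal{E}\otimes\mathcal{F}) \geqslant T(\mathcal{E})+T(\mathcal{F})$ is the easy direction: given near-optimizers $\sigma_X$ for $T(\mathcal{E})$ and $\sigma'_Z$ for $T(\mathcal{F})$, I would feed the product state $\sigma_X\otimes\sigma'_Z$ into the optimization for $T(\mathcal{E}\otimes\mathcal{F})$. Since the relative entropy is additive on tensor products, $\DD{(\mathcal{E}\otimes\mathcal{F})(\sigma\otimes\sigma')}{\Gamma_{X'}\otimes\Gamma_{Z'}} = \DD{\mathcal{E}(\sigma)}{\Gamma_{X'}} + \DD{\mathcal{F}(\sigma')}{\Gamma_{Z'}}$ and likewise for the input term, so the objective value splits as the sum, and taking suprema gives the superadditivity bound.

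The hard direction is $T(\mathcal{E}\otimes\mathcal{F}) \leqslant T(\mathcal{E})+T(\mathcal{F})$, i.e. that a joint (possibly entangled) input state cannot beat the sum of the individual capacities. Here I would use the expression~\eqref{eq:mkuoyiftukghlj}: for a Stinespring dilation $V_{XZ\to X'Z'EE'}$ of $\mathcal{E}\otimes\mathcal{F}$ (which can be taken as $V^{\mathcal{E}}_{X\to X'E}\otimes V^{\mathcal{F}}_{Z\to Z'E'}$), and writing $\rho$ for the output state on $X'Z'EE'$ obtained from an input $\sigma_{XZ}$, the objective becomes
\begin{align}
  \HH[\rho]{EE'}[X'Z'] - \tr[\mathcal{E}(\sigma_X)\ln\Gamma_{X'}] - \tr[\mathcal{F}(\sigma_Z)\ln\Gamma_{Z'}] + \tr[\sigma_X\ln\Gamma_X] + \tr[\sigma_Z\ln\Gamma_Z]\ ,
  \nonumber
\end{align}
where $\sigma_X = \tr_Z[\sigma_{XZ}]$ and $\sigma_Z = \tr_X[\sigma_{XZ}]$ are the marginals, and the two linear ``$\tr[\cdot\ln\Gamma]$'' pairs already depend only on those marginals. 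So the entire difficulty is concentrated in showing that the conditional entropy term splits appropriately, namely
\begin{align}
  \HH[\rho]{EE'}[X'Z'] \leqslant \HH[\rho]{E}[X'] + \HH[\rho']{E'}[Z']\ ,
  \nonumber
\end{align}
where $\rho_{X'E} = V^{\mathcal{E}}\sigma_X V^{\mathcal{E}\dagger}$ and $\rho'_{Z'E'} = V^{\mathcal{F}}\sigma_Z V^{\mathcal{F}\dagger}$ are the outputs of the individual channels on the marginals. Once this is in hand, each of the two resulting groups of terms is bounded above by $T(\mathcal{E})$ and $T(\mathcal{F})$ respectively (using~\eqref{eq:mkuoyiftukghlj} in reverse), and we are done.

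The main obstacle is thus precisely this subadditivity of conditional entropy under the tensor-product channel structure. The natural route is strong subadditivity / weak monotonicity: since $V^{\mathcal{E}}\otimes V^{\mathcal{F}}$ acts locally, the state $\rho_{X'Z'EE'}$ factors in a way that makes $X'E$ correlated with $Z'E'$ only through the input correlations of $\sigma_{XZ}$. I would argue that $\HH[\rho]{EE'}[X'Z'] = -\HH[\rho]{EE'}[R]$ for a purifying reference $R$ of the global pure state $V\otimes V'\ket{\sigma}_{XZR}$, and then apply weak monotonicity of the conditional entropy (equivalently strong subadditivity) to the tripartition, using that tracing out $E'$ from the $X'Z'EE'$ system together with the local structure gives $\HH{EE'}[X'Z'] \leqslant \HH{E}[X'] + \HH{E'}[Z']$ — the key point being that after the local isometries the relevant marginals $\rho_{X'E}$ and $\rho_{Z'E'}$ coincide with the single-channel outputs on $\sigma_X$ and $\sigma_Z$. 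Alternatively, and perhaps more cleanly, I would invoke the known additivity of the relevant single-letter channel quantity (the $-\DD{\mathcal{E}(\sigma_X)}{\Gamma_{X'}}+\DD{\sigma_X}{\Gamma_X}$ objective is, up to signs, the kind of quantity whose channel-version additivity is established in Ref.~\cite{Navascues2015PRL_nonthermal}), and simply cite that result, since the statement of the proposition already attributes it there. In a self-contained write-up I would spell out the weak-monotonicity step, as that is the only non-routine ingredient; everything else (additivity of relative entropy on products, linearity of the energy terms in the marginals) is bookkeeping.
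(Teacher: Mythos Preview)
Your proposal is correct and follows essentially the same route as the paper: product states for superadditivity, then rewrite the joint objective via~\eqref{eq:mkuoyiftukghlj} so that only the conditional-entropy term is nontrivial, and bound that term using strong subadditivity together with the observation that the local isometry structure makes the relevant marginals $\rho_{X'E}$, $\rho_{Z'E'}$ coincide with the single-channel outputs on $\sigma_X$, $\sigma_Z$. The paper executes the key inequality slightly more directly than you sketch: rather than passing through a purifying reference and weak monotonicity, it applies the chain rule $\HH[\rho]{EE'}[X'Z'] = \HH[\rho]{E}[X'Z'] + \HH[\rho]{E'}[EX'Z']$ and then two applications of $H(A\mid BC)\leqslant H(A\mid B)$ to drop the extra conditioning systems, yielding $\HH[\rho]{E}[X'] + \HH[\rho]{E'}[Z']$ in one line; your purification detour is unnecessary but harmless.
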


For completeness we provide an independent proof of additivity, to ensure
validity in the general setting of abstract $\Gamma$ operators.

\begin{proof}[**prop:thermo-capacity-additive]
  Let $\sigma_X,\tau_Z$ be states achieving the thermodynamic capacity of
  $T(\mathcal{E})$ and $T(\mathcal{F})$, respectively. Then,
  $\sigma_X\otimes\tau_Z$ is a candidate for $T(\mathcal{E}\otimes\mathcal{F})$,
  yielding
  \begin{align}
    T(\mathcal{E}\otimes\mathcal{F})
    &\geqslant
    \DD`\big{\mathcal{E}(\sigma)\otimes\mathcal{F}(\tau)}{\Gamma_{X'}\otimes\Gamma_{Z'}}
    - \DD`\big{\sigma\otimes\tau}{\Gamma_{X}\otimes\Gamma_{Z}}
      \nonumber\\
    &= \DD{\mathcal{E}(\sigma)}{\Gamma_{X'}} - \DD{\sigma}{\Gamma_X}
    + \DD{\mathcal{F}(\tau)}{\Gamma_{Z'}} - \DD{\tau}{\Gamma_Z}
    \nonumber\\
    &= T(\mathcal{E}) + T(\mathcal{F})\ .
  \end{align}
  Now, let $\zeta_{XZ}$ achieve the optimum for
  $T(\mathcal{E}\otimes\mathcal{F})$.  Let $V_{X\to E_1X'}$, $W_{Z\to E_2Z'}$ be
  Stinespring isometries of $\mathcal{E}$ and $\mathcal{F}$ respectively, such
  that $\mathcal{E}(\cdot) = \tr_{E_1}`*[V\,(\cdot)\,V^\dagger]$ and
  $\mathcal{F}(\cdot) = \tr_{E_2}`*[W\,(\cdot)\,W^\dagger]$.  Let
  $\rho_{E_1E_2X'Z'} = (V\otimes W)\, \zeta \,(V\otimes W)^\dagger$. Then, we have
  \begin{align}
    T(\mathcal{E}\otimes\mathcal{F})
    &= \DD`\big{(\mathcal{E}\otimes\mathcal{F})(\zeta)}{\Gamma_{X'}\otimes\Gamma_{Z'}}
      - \DD`\big{\zeta_{XZ}}{\Gamma_X\otimes\Gamma_Z}
      \nonumber\\
    &= \HH[\rho]{E_1E_2}[X'Z']
      - \tr`*[\rho_{X'Z'} \ln`*(\Gamma_{X'}\otimes\Gamma_{Z'})]
      + \tr`*[\zeta_{XZ} \ln`*(\Gamma_{X}\otimes\Gamma_{Z})]\ ,
      \nonumber\\
    &= \HH[\rho]{E_1E_2}[X'Z']
      - \tr`*[\rho_{X'} \ln`*(\Gamma_{X'})] - \tr`*[\rho_{Z'}\ln`*(\Gamma_{Z'})]
      \nonumber\\
    &\quad  + \tr`*[\zeta_{X} \ln`*(\Gamma_{X})] + \tr`*[\zeta_Z\ln`*(\Gamma_{Z})]
      \label{eq:iuewfwlkfjgifs}
  \end{align}
  since $\ln(A\otimes B) = \ln(A)\otimes\Ident + \Ident\otimes\ln(B)$.  Invoking
  the chain rule of the von Neumann entropy, and then strong sub-additivity of
  the entropy, we see that
  $\HH[\rho]{E_1E_2}[X'Z'] = \HH[\rho]{E_1}[X'Z'] + \HH[\rho]{E_2}[E_1X'Z']
  \leqslant \HH[\rho]{E_1}[X'] + \HH[\rho]{E_2}[Z']$. Hence, we have
  \begin{align}
    \text{\eqref{eq:iuewfwlkfjgifs}}
    &\leqslant \HH[\rho]{E_1}[X'] - \tr`*[\rho_{X'}\ln(\Gamma_{X'})]
      + \tr`*[\zeta_X\ln`*(\Gamma_X)]
      \nonumber\\
    &\quad + \HH[\rho]{E_2}[Z'] - \tr`*[\rho_{Z'}\ln(\Gamma_{Z'})]
      + \tr`*[\zeta_Z\ln`*(\Gamma_Z)]
      \nonumber\\
    &\leqslant T(\mathcal{E}) + T(\mathcal{F})\ ,
  \end{align}
  where the last inequality holds because the reduced states $\zeta_X, \zeta_Z$
  are optimization candidates for $T(\mathcal{E})$ and $T(\mathcal{F})$,
  respectively.
\end{proof}

A special case worth mentioning is when $\Gamma_X=\Ident_X$, $\Gamma_{X'} = \Ident_{X'}$, which corresponds to the
situation where the Hamiltonians of $X$ and $X'$ are trivial. For any quantum channel $\mathcal{E}_{X\to X'}$, let $V_{X\to X'E}$ be a Stinespring dilation isometry with
$\mathcal{E}_{X\to X'}`*(\cdot) = \tr_E`*[V\,(\cdot)\,V^\dagger]$. Then, we have
\begin{align}
  T(\mathcal{E})
  = \sup_\sigma`*{ \Hfn(\sigma_X) - \Hfn(\mathcal{E}(\sigma_X)) }
  = \sup_\sigma \HH[V\sigma V^\dagger]{E}[X'] \ .
\end{align}
That is, the thermodynamic capacity characterizes by how much the channel is
capable of reducing the entropy of its input, or equivalently, how much entropy
the channel is capable of dumping into the environment when conditioned on the
output. We note that the quantity $-T(\mathcal{E})$ has previously been studied
in the information theory literature as the entropy gain of quantum channels
\cite{Alicki2004arXiv_isotropic,Devetak2006CMP_multiplicativity,Holevo2011ISIT_entropygain,Holevo2010DM_infinitedim,Holevo2011TMP_CJ,BookHolevo2012_QuSystemsChannelsInformation,Buscemi2016PRA_reversibility,Gour2021PRR_entropychannel}. Our
work can be seen as giving a precise operational interpretation to this quantity.

\subsection{Optimality}

Here, we show that any universal implementation that obeys our stated conditions
in Section \ref{sec:definition} must necessarily consume an amount of work that
is lower bounded by the thermodynamic capacity.  That is, any universal
implementation that consumes an amount of work equal to the thermodynamic
capacity is optimal. This lower bound is simple to prove, because a universal
implementation of a process must necessarily be a good implementation for any
individual i.i.d.\@{} input state, a situation where the optimal work cost is
known~\cite{Faist2018PRX_workcost}. Furthermore, any scheme that satisfies the
requirements of Section \ref{sec:statement-problem-conditions-univ-impl} at work
cost $w$ per copy counted with standard battery states of
Ref.~\cite{Faist2018PRX_workcost}, has an effective process
$\mathcal{T}_{X^n\to X'^n}$ on the systems that obeys
$\mathcal{T}(\Gamma_X^{\otimes n}) \leqslant \ee^{nw}\Gamma_{X'}^{\otimes n}$.
This is because any thermal operation is in particular a Gibbs-preserving map,
and the work cost is characterized by \cref{prop:fw-T-battery-characterization}.
The following shows that for any such implementation, the work consumed $w$ per
copy cannot be less than the thermodynamic capacity of the process.

\begin{proposition}%
  \label{lemma:main-thm-easy}
  Let $\epsilon>0$, $\Gamma_X,\Gamma_{X'} > 0$, $\mathcal{E}_{X\to X'}$ a
  completely positive, trace-preserving map, and $\mathcal{T}_{X^n\to X'^n}$ a
  completely positive, trace non-increasing map such that we have
  $\norm{\mathcal{T} - \mathcal{E}^{\otimes
      n}}_\diamond/2\leqslant\epsilon$. For $w\in\mathbb{R}$ such that
  $\mathcal{T}_{X^n\to X'^n}(\Gamma_X^{\otimes n})
  \leqslant\ee^{nw}\,\Gamma_{X'}^{\otimes n}$, we have in the limit $n\to\infty$
  that $w\geqslant T(\mathcal{E})$.
\end{proposition}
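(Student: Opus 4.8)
The plan is to reduce the universal claim to the single-shot lower bound of Ref.~\cite{Faist2018PRX_workcost} applied to a well-chosen i.i.d.\@ input, and then invoke the AEP for the coherent relative entropy, \eqref{eq:coh-rel-entr-AEP}. First I would fix an arbitrary state $\sigma_X$ and let $\sigma_{XR_X}$ be its canonical purification. Since $\norm{\mathcal{T}-\mathcal{E}^{\otimes n}}_\diamond/2\leqslant\epsilon$, the definition of the diamond norm gives $P\big(\mathcal{T}(\sigma_{XR_X}^{\otimes n}),\mathcal{E}^{\otimes n}(\sigma_{XR_X}^{\otimes n})\big)\leqslant\sqrt{2\epsilon}$ (using $D\leqslant P\leqslant\sqrt{2D}$ and that $\sigma_{XR_X}^{\otimes n}$ is a valid test state for the $X^n$-to-$X'^n$ channel on input system $X^n$ with reference $R_X^n$). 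Combined with the hypothesis $\mathcal{T}(\Gamma_X^{\otimes n})\leqslant\ee^{nw}\Gamma_{X'}^{\otimes n}$, the map $\mathcal{T}$ is a feasible point in the optimization defining the coherent relative entropy at smoothing $\epsilon'=\sqrt{2\epsilon}$, so
\begin{align}
  nw \;\geqslant\; -\DCohz[\epsilon']`\big{*\mathcal{E}_{X\to X'}^{\otimes n}(\sigma_{XR_X}^{\otimes n})}{X^n}{X'^n}{\Gamma_X^{\otimes n}}{\Gamma_{X'}^{\otimes n}}\ .
\end{align}

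Second, I would divide by $n$ and take $n\to\infty$. By the asymptotic equipartition property~\eqref{eq:coh-rel-entr-AEP}, the right-hand side converges to $\DD{\mathcal{E}(\sigma_X)}{\Gamma_{X'}} - \DD{\sigma_X}{\Gamma_X}$, hence
\begin{align}
  \liminf_{n\to\infty} w \;\geqslant\; \DD{\mathcal{E}(\sigma_X)}{\Gamma_{X'}} - \DD{\sigma_X}{\Gamma_X}\ .
\end{align}
Since this holds for every input state $\sigma_X$, I may take the supremum over $\sigma_X$ on the right-hand side, and by the definition~\eqref{eq:thermodynamic-capacity-with-relative-entropy} this supremum is exactly $T(\mathcal{E})$, giving $w\geqslant T(\mathcal{E})$ in the limit, as claimed.

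The only delicate points are bookkeeping rather than conceptual. One is making sure the smoothing parameter $\epsilon'=\sqrt{2\epsilon}$ lies in $(0,1)$ so that the AEP statement~\eqref{eq:coh-rel-entr-AEP} applies; this just requires $\epsilon<1/2$, which is harmless since the interesting regime is small $\epsilon$ and the bound only strengthens as $\epsilon$ decreases (the coherent relative entropy is monotone in the smoothing parameter). The second is the observation, already noted in the paragraph preceding the statement, that any feasible $\mathcal{T}$ coming from a battery implementation automatically satisfies $\mathcal{T}(\Gamma_X^{\otimes n})\leqslant\ee^{nw}\Gamma_{X'}^{\otimes n}$ via \cref{prop:fw-T-battery-characterization}, so the hypothesis of the proposition is exactly what one gets from a genuine implementation. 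I do not anticipate a real obstacle here; the content of the proposition is essentially a direct corollary of the single-shot characterization plus the AEP.
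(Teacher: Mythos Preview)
Your proposal is correct and follows essentially the same argument as the paper's own proof: fix an arbitrary $\sigma_X$, use the diamond-norm bound to get a purified-distance bound of $\sqrt{2\epsilon}$, observe that $\mathcal{T}$ is then a feasible candidate in the coherent relative entropy at smoothing $\sqrt{2\epsilon}$, apply the AEP~\eqref{eq:coh-rel-entr-AEP}, and finally take the supremum over $\sigma_X$. The only differences are cosmetic (your explicit $\liminf$ and your remark on the range of the smoothing parameter), not substantive.
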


\begin{proof}[**lemma:main-thm-easy]
  Let $\mathcal{T}$ with
  $\frac12\norm{\mathcal{E} - \mathcal{T}}_\diamond\leqslant\epsilon$,
  $\sigma_X$ be a quantum state, and
  $\ket\sigma_{XR_X} = \sigma_X^{1/2}\,\ket\Phi_{X:R_X}$. Then, by definition of
  the diamond norm it must hold that
  $D`\big(\mathcal{E}(\sigma_{XR_X}), \mathcal{T}(\sigma_{XR_X})) \leqslant
  \epsilon$, which implies that
  $P`\big(\mathcal{E}(\sigma_{XR_X}), \mathcal{T}(\sigma_{XR_X})) \leqslant
  \sqrt{2\epsilon}$. We have that $\mathcal{T}$ is a valid optimization
  candidate for the definition of the coherent relative entropy and thus
  \begin{align}
    -\DCohz[\sqrt{2\epsilon}]`\big{*\mathcal{E}_{X\to X'}^{\otimes n}(\sigma_{XR_X}^{\otimes n})}{X^n}{X'^n}{\Gamma_X^{\otimes n}}{\Gamma_{X'}^{\otimes n}}
    \leqslant nw\ .
  \end{align}
  For $n\to\infty$, we can employ the asymptotic equipartition of the coherent relative
  entropy~\eqref{eq:coh-rel-entr-AEP} to see that
  \begin{align}
    \DD{\mathcal{E}(\sigma_X)}{\Gamma_{X'}} -
    \DD{\sigma_X}{\Gamma_X} \leqslant w\ .
  \end{align}
  Since this inequality holds for all $\sigma_X$, we deduce that $T(\mathcal{E})\leqslant w$.
\end{proof}

\section{Construction \#1: Trivial Hamiltonians}
\label{sec:approach-using-quasi-convexity}

\subsection{Statement and proof sketch}

Instead of constructing explicitly an implementation that satisfies the requirements of Section \ref{sec:statement-problem-conditions-univ-impl}, one might hope that the implementation could be given implicitly as the solution of a semi-definite program representing an entropy measure. This proof idea was indeed exploited in other contexts in Refs.~\cite{Berta2011_reverse,Berta2014IEEETIT_InfGainMeas}. Here, we define the one-shot entropy-like quantity
\begin{align}\label{eq:univ-process-work-cost-def-Weps}
  \WProc[\epsilon]{\mathcal{E}}{X}{X'}{\Gamma_X}{\Gamma_{X'}}
  =  \min_{\substack{\mathcal{T}(\Gamma_X)
  \leqslant \ee^{w} \Gamma_{X'}\\
  \frac12\norm*{\mathcal{T} - \mathcal{E}}_\diamond \leqslant \epsilon}}
  w \ ,
\end{align}
where $\mathcal{T}_{X\to X'}$ ranges over all trace non-increasing, completely
positive maps. The proof strategy would then be to relate this entropy measure
to the coherent relative entropy, and to exploit known properties of the latter
in the i.i.d.\@ regime to provide an upper bound to the expression
\begin{align}
  \frac{1}{n}\WProc[\epsilon]{\mathcal{E}^{\otimes n}}{X^n}{X'^n}{\Gamma_X^{\otimes n}}{\Gamma_{X'}^{\otimes n}}\ .
\end{align}
Should this upper bound behave like $T(\mathcal{E})$ to leading order, then the
$\mathcal{T}$ equal to the optimal solution
to~\eqref{eq:univ-process-work-cost-def-Weps} defines an implementation in terms
of Gibbs-preserving maps thanks to \cref{prop:fw-T-battery-characterization}. It
turns out that this proof strategy works well in the special case of trivial
Hamiltonians, but fails in the general case.

The core technical statement that underlies our Construction \#1 is summarized
in the following theorem.

\begin{theorem}%
  \label{thm:constr-trivialH}%
  Let $\mathcal{E}_{X\to X'}$ be a completely positive, trace-preserving map,
  and $\epsilon>0$. Then we have
  \begin{align}
    \lim_{n\to\infty} \frac1n
    \WProc[\epsilon]{\mathcal{E}^{\otimes n}}{X^n}{X'^n}{\Ident_{X^n}}{\Ident_{X'^n}}
    = T(\mathcal{E})\ ,
  \end{align}
  where
  $T(\mathcal{E}) = \max_{\sigma_X}\left\{\Sfn(\sigma_X)
    -\Sfn(\mathcal{E}(\sigma_X))\right\}$.
\end{theorem}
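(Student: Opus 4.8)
The plan is to establish separately that $\liminf_n\frac1n\WProc[\epsilon]{\mathcal{E}^{\otimes n}}{X^n}{X'^n}{\Ident_{X^n}}{\Ident_{X'^n}}\geqslant T(\mathcal{E})$ and $\limsup_n\frac1n\WProc[\epsilon]{\mathcal{E}^{\otimes n}}{X^n}{X'^n}{\Ident_{X^n}}{\Ident_{X'^n}}\leqslant T(\mathcal{E})$; together with $\liminf\leqslant\limsup$ this gives the stated limit. The lower bound is immediate from \cref{lemma:main-thm-easy}: writing $w_n=\frac1n\WProc[\epsilon]{\mathcal{E}^{\otimes n}}{X^n}{X'^n}{\Ident_{X^n}}{\Ident_{X'^n}}$, an optimal $\mathcal{T}_{X^n\to X'^n}$ for the program \eqref{eq:univ-process-work-cost-def-Weps} satisfies $\frac12\norm{\mathcal{T}-\mathcal{E}^{\otimes n}}_\diamond\leqslant\epsilon$ and $\mathcal{T}(\Ident_{X^n})\leqslant\ee^{nw_n}\Ident_{X'^n}$, which are precisely the hypotheses of \cref{lemma:main-thm-easy} with $\Gamma_X=\Ident_X$, $\Gamma_{X'}=\Ident_{X'}$ (for which $T(\mathcal{E})=\max_\sigma\{\Sfn(\sigma)-\Sfn(\mathcal{E}(\sigma))\}$, since $\DD{\rho}{\Ident}=-\Sfn(\rho)$), so $w_n\geqslant T(\mathcal{E})-o(1)$.

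For the upper bound, fix $\delta>0$. The decisive reduction is the post-selection technique (\cref{x:post-selection-technique}): a trace non-increasing $\mathcal{T}_{X^n\to X'^n}$ with $\frac12\norm{\mathcal{T}-\mathcal{E}^{\otimes n}}_\diamond\leqslant\epsilon$ is guaranteed as soon as $\mathcal{T}$ reproduces $\mathcal{E}^{\otimes n}$ to within purified distance $\epsilon/\poly(n)$ on the single de Finetti input $\zeta_{X^nR}$ (using $D\leqslant P$). Since the optimizer of $-\DCohz[\epsilon/\poly(n)]{*\mathcal{E}^{\otimes n}(\zeta_{X^nR})}{X^n}{X'^n}{\Ident_{X^n}}{\Ident_{X'^n}}$ is exactly such a $\mathcal{T}$ and is feasible for \eqref{eq:univ-process-work-cost-def-Weps}, this yields $\WProc[\epsilon]{\mathcal{E}^{\otimes n}}{X^n}{X'^n}{\Ident_{X^n}}{\Ident_{X'^n}}\leqslant-\DCohz[\epsilon/\poly(n)]{*\mathcal{E}^{\otimes n}(\zeta_{X^nR})}{X^n}{X'^n}{\Ident_{X^n}}{\Ident_{X'^n}}$, and it remains to bound the right-hand side by $n(T(\mathcal{E})+\delta)$ for all large $n$. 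Here I would (i) use the second part of \cref{x:post-selection-technique} to write $\zeta_{X^n}$ as a mixture of $\poly(n)$ flagged i.i.d.\ states $\proj{\phi_i}_{X\bar{R}}^{\otimes n}$; (ii) exploit that, for trivial $\Gamma$'s, the coherent relative entropy is (up to sign and smoothing) a smooth conditional entropy of the Stinespring-dilated state, which obeys a quasi-convexity bound incurring only an additive $\log\poly(n)=o(n)$ overhead for the $\poly(n)$-dimensional classifying register, so the estimate reduces to the individual i.i.d.\ states; and (iii) apply the asymptotic equipartition property \eqref{eq:coh-rel-entr-AEP}, which on $\proj{\phi_i}^{\otimes n}$ gives per-copy rate $\DD{\mathcal{E}(\sigma_i)}{\Ident_{X'}}-\DD{\sigma_i}{\Ident_X}=\Sfn(\sigma_i)-\Sfn(\mathcal{E}(\sigma_i))\leqslant T(\mathcal{E})$ with $\sigma_i=\tr_{\bar{R}}\proj{\phi_i}$. (The same rate is visible concretely from Schur--Weyl duality: if $\mathcal{T}$ measures the input diagram $\lambda$, applies $\mathcal{E}^{\otimes n}$, and keeps only output diagrams $\mu$ for which $\bar{H}(\lambda)-\bar{H}(\mu)$ is no larger than about $T(\mathcal{E})$, then \eqref{eq:Schur-Weyl-perm-invar-operator-decomp} together with the dimension bounds $\dim\mathcal{P}_\lambda\leqslant\ee^{n\bar{H}(\lambda)}$ and $\dim\mathcal{P}_\mu\geqslant\poly(n)^{-1}\ee^{n\bar{H}(\mu)}$ force $\mathcal{T}(\Ident_{X^n})\leqslant\poly(n)\,\ee^{n(T(\mathcal{E})+2\delta)}\,\Ident_{X'^n}$.)

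The main obstacle is the accuracy half, equivalently the quasi-convexity/de-Finetti step: one must control $D\bigl(\mathcal{T}(\zeta_{X^nR}),\mathcal{E}^{\otimes n}(\zeta_{X^nR})\bigr)$, and $\zeta_{X^nR}$ is a \emph{coherent} superposition of i.i.d.\ states, not a classical mixture. The diagonal contributions $\proj{\phi_i}^{\otimes n}$ are handled by entropy estimation (\cref{prop:entropy-measurement-POVM-n-systems}) and the gentle-measurement lemma, since the retained Schur pairs contain all typical ones; but the off-diagonal contributions $\bigl(\ket{\phi_i}\bra{\phi_j}\bigr)^{\otimes n}$ with $i\neq j$ must also be bounded — they decay exponentially when $\sigma_i$ and $\sigma_j$ have disjoint typical subspaces and otherwise merge into the diagonal analysis — and this universal-typicality bookkeeping is the technically involved point. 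A lesser subtlety is that the smoothing parameter handed to the coherent relative entropy shrinks like $\epsilon/\poly(n)$, so \eqref{eq:coh-rel-entr-AEP} must be invoked in a form uniform over the compact set of single-copy input states and stable under a polynomially vanishing $\epsilon$; both follow from standard second-order estimates. Letting $\delta\to0$ and combining with the lower bound gives $\lim_n\frac1n\WProc[\epsilon]{\mathcal{E}^{\otimes n}}{X^n}{X'^n}{\Ident_{X^n}}{\Ident_{X'^n}}=T(\mathcal{E})$.
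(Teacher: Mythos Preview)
Your outline is correct and matches the paper's proof: lower bound via \cref{lemma:main-thm-easy}; upper bound via post-selection to reduce to the coherent relative entropy on the de Finetti input, then the trivial-$\Gamma$ reduction of $-\DCohz$ to a smooth conditional max-entropy $\Hmaxf{E^n}[X'^n]$ of the Stinespring-dilated state, then quasi-convexity of $\Hmaxf{}$ over a $\poly(n)$-term mixture (Morgan--Winter), then the max-entropy AEP on each i.i.d.\ term.

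The ``main obstacle'' you flag, however, is not actually present in this route, and worrying about it suggests you have merged two distinct steps. The accuracy on the purified de Finetti state $\zeta_{X^nR}$ is not something you have to verify separately: the optimizer of $\DCohz[\epsilon/\poly(n)]{*\mathcal{E}^{\otimes n}(\zeta_{X^nR})}{X^n}{X'^n}{\Ident}{\Ident}$ is \emph{by definition} a trace non-increasing $\mathcal{T}$ with $P(\mathcal{T}(\zeta_{X^nR}),\mathcal{E}^{\otimes n}(\zeta_{X^nR}))\leqslant\epsilon/\poly(n)$, and post-selection then upgrades this to the diamond-norm bound. The quasi-convexity step is a \emph{separate}, purely entropic estimate applied to $\Hmaxf[\rho]{E^n}[X'^n]$, and crucially this entropy depends only on the marginal $\rho_{E^nX'^n}=V^{\otimes n}\zeta_{X^n}(V^\dagger)^{\otimes n}$. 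Since $\zeta_{X^n}=\sum_i p_i\,\sigma_i^{\otimes n}$ is a genuine \emph{classical} mixture (the reference $\bar{R}^nR'$ has been traced out), the Morgan--Winter lemma applies directly with penalty $\ln\poly(n)$; there are no ``off-diagonal contributions'' to control. Your parenthetical Schur--Weyl construction would indeed face the coherent-superposition issue you describe, but the entropic route sidesteps it entirely. The subtlety about the $\epsilon/\poly(n)$ smoothing you mention is real and is handled exactly as you say, via the AEP of the smooth max-entropy with polynomially shrinking smoothing.
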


This implementation is constructed by taking the implicit optimal solution
$\mathcal{T}_{X^n\to X'^n}$ in the semi-definite
program~\eqref{eq:univ-process-work-cost-def-Weps} for
$\frac1n \WProc[\epsilon]{*\mathcal{E}_{X\to X'}^{\otimes n}}{X^n}{X'^n}{\Ident_{X^n}}{\Ident_{X'^n}}$, and using
\cref{prop:fw-T-battery-characterization} to construct an associated
Gibbs-preserving map acting on battery states
via~\eqref{eq:explicit-GPM-from-effective-work-process-and-taus}.  In summary,
for any $\delta'>0$, for $n$ large enough and choosing any $m,m'$ such that
$m-m' \leqslant nT(\mathcal{E}) + \delta'$, the full implementation map in terms
of $\mathcal{T}_{X^n\to X'^n}$ becomes
\begin{align}
  \Phi_{X^nW\to X'^nW}(\cdot)
  &= \mathcal{T}_{X^n\to X'^n}`\big( \tr_W`[ P_W^m (\cdot) ] ) \otimes \tau_W^{m'}\ .
\end{align}
We emphasise that \cref{thm:constr-trivialH} exactly covers the entropy gain of
quantum channels as studied in
\cite{Alicki2004arXiv_isotropic,Devetak2006CMP_multiplicativity,Holevo2011ISIT_entropygain,Holevo2010DM_infinitedim,Holevo2011TMP_CJ,BookHolevo2012_QuSystemsChannelsInformation,Buscemi2016PRA_reversibility,Gour2021PRR_entropychannel}.

\begin{proof}[*thm:constr-trivialH]
  By using the post-selection technique (\cref{x:post-selection-technique}) and
  recalling that the fixed-input state case is given by the coherent relative
  entropy, we find
  \begin{align}
    \WProc[\epsilon]`\big{*\mathcal{E}^{\otimes n}_{X\to X'}}{X^n}{X'^n}{\Ident_{X^n}}{\Ident_{X'^n}}
    \leqslant
    - \DCohz[\epsilon/\!\poly(n)]`\big{*\mathcal{E}^{\otimes n}_{X\to X'}(\zeta_{X^nR_X^n})}{X^n}{X'^n}{\Ident_{X^n}}{\Ident_{X'^n}}\ .
  \label{eq:kadau4pwaoau}
\end{align}
In the case of trivial Hamiltonians, the coherent relative entropy reduces to
the smooth max-entropy (cf.\@~\cite[Props.~28 and~26]{Faist2018PRX_workcost} and
also Ref.~\cite{Faist16}). More precisely, we have
\begin{align}
  \DCohz[\epsilon]`\big{\rho}{X}{X'}{\Ident_X}{\Ident_{X'}}
  \geqslant - \Hmaxf[\rho][c\epsilon^\alpha]{E}[X'] + g(\epsilon)\ ,
\end{align}
where $\ket\rho_{X'R_XE}$ is a pure state, where $c>0$, $0<\alpha<1$, $g(\epsilon)$ are universal and do not depend on the state or the dimensions of the systems, and the smooth max-entropy is defined as
\begin{align}
  \Hmaxf[\rho][\epsilon]{E}[X']
  &= \min_{P(\hat\rho, \rho)\leqslant\epsilon}
    \Hmaxf[\hat\rho]{E}[X']\ ;
  \\
  \Hmaxf[\hat\rho]{E}[X']
  &= \max_{0\leq\omega_{X'}\leq\Ident} \,
    \ln\, \norm[\big]{\hat\rho_{EX'}^{1/2}\omega_{X'}^{1/2}}^2_1\ .
\end{align}
Thus, we have
\begin{align}
  \eqref{eq:kadau4pwaoau}
  \leqslant \Hmaxf[\rho][\epsilon^\alpha/\!\poly(n)]{ E^n }[ X'^n ] + g(\epsilon)\ ,
  \label{eq:fiosdufanjo}
\end{align}
where
$\rho_{X'^nE^n} = V_{X\to X'E}^{\otimes n} \zeta_{X^n} (V^\dagger)^{\otimes n} =
\int d\sigma\, (V\sigma V^\dagger)^{\otimes n}$ and $V_{X\to X'E}$ is a
Stinespring dilation isometry of $\mathcal{E}_{X\to X'}$ as
$\mathcal{E}_{X\to X'}(\cdot) = \tr_E `*[ V_{X\to X'E}\,(\cdot)\,
V^\dagger]$. At this point we invoke two facts. First, note that the de Finetti
state can be written as a mixture of only $\poly(n)$ i.i.d.\@{} states, instead of a
continuous average (\cref{x:post-selection-technique}): There exists a set
$`{ \sigma_i }$ of at most $\poly(n)$ states and a distribution $`{ p_i }$ such
that $\zeta_{X^n} = \sum_i p_i \sigma_i^{\otimes n}$. Second, we invoke the
property that the conditional max-entropy is quasi-convex up to a penalty term,
namely, that the conditional max-entropy of $\sum_i p_i \rho_i$ is less than or
equal to the maximum over the set of max-entropies corresponding to each
$\rho_i$, plus a term proportional to the logarithm of the number of terms in
the sum~\cite[Lemma~11]{Morgan2014IEEETIT_prettystrong}. Hence, with
$\rho_i = V\,\sigma_i\,V^\dagger$, we get
\begin{align}
  \text{\eqref{eq:fiosdufanjo}}
  \leqslant
  \max_i \Hmaxf[\rho_i^{\otimes n}][\epsilon^\alpha/\!\poly(n)]{ E^n }[ X'^n ]
  + \ln(\poly(n)) + g(\epsilon) \ .
  \label{eq:fiodubashfjnk}
\end{align}
Now, we are in business because the max-entropy is evaluated on an i.i.d.\@{} state,
and we know that it asymptotically goes to the von Neumann entropy in this
regime~\cite{Tomamichel2009IEEE_AEP}. Also,
$\lim_{n\to\infty} (1/n)\big\{\ln(\poly`(n)) + g(\epsilon)\big\}=0$ and hence
\begin{align}
  \lim_{n\to\infty} \frac1n
  \WProc[\epsilon]`\big{*\mathcal{E}^{\otimes n}_{X\to X'}}{X^n}{X'^n}%
  {\Ident_{X^n}}{\Ident_{X'^n}}
  &\leqslant 
  \max_i\, \HH[\rho_i]{E}[X']
  \nonumber\\
  &= 
  \max_i `*{ \Hfn(\sigma_i) - \Hfn(\mathcal{E}(\sigma_i)) }
  \nonumber\\
  &\leqslant 
    \max_\sigma `*{ \Hfn(\sigma) - \Hfn(\mathcal{E}(\sigma)) }
    \nonumber\\
  &= T(\mathcal{E})
\end{align}
noting that $\HH{E}[X'] = \HH{EX'} - \HH{X'} = \HH{X} - \HH{X'}$.
\end{proof}

\subsection{Challenges for %
  extension to non-trivial Hamiltonians}
\label{sec:entropic-proof-approach-nontrivial-Hamiltonians}

Naturally, one might ask whether it is possible to extend this proof to the case
of non-trivial $\Gamma$ operators.  Interestingly, this is not possible, at least
not in a naive way.  The problem is that we need a quasi-convexity property of
the form
\begin{multline}
  - \DCohz[\epsilon]`\big{*\mathcal{E}_{X\to X'}(\sigma_{XR_X})}{X}{X'}{\Gamma_{X}}{\Gamma_{X'}}
  \\
  \stackrel{\text{\large?}}{\leqslant}
  \max_i 
  `*( - \DCohz[\epsilon]`\big{*\mathcal{E}_{X\to X'}(\sigma^i_{XR_X})}{X}{X'}{\Gamma_{X}}{\Gamma_{X'}} ) + \text{(penalty)}\ ,
  \label{eq:quasi-convexity-coh-rel-entr-false}
\end{multline}
where $\sigma_X = \sum p_i \sigma_X^i$ and
$\ket\sigma_{XR} = \sigma_X^{1/2}\,\ket\Phi_{X:R_X}$,
$\ket{\sigma^i}_{XR} = (\sigma^i_X)^{1/2}\,\ket\Phi_{X:R_X}$, and where the
$\text{(penalty)}$ term scales in a favourable way in $n$, say of order
$\ln(\poly(M))$ where $M$ is the number of terms in the convex decomposition as
for the max-entropy. In fact, Eq.~\eqref{eq:quasi-convexity-coh-rel-entr-false}
is false, as can be shown using an explicit counterexample on a two-level system
which we present below. As this example is based on physical reasons, the
coherent relative entropy is not even approximately quasi-convex. We note that a
priori we cannot rule out a quasi-convexity property that might have a penalty
term that depends on properties of the $\Gamma$ operators, yet such a term is
likely to scale unfavourably with $n$.

Our example is as follows. Consider a two-level system with a Hamiltonian $H$
with energy levels $\ket0,\ket1$ at corresponding energies $E_0=0$ and
$E_1>0$. The corresponding $\Gamma$ operator is $\Gamma = g_0\proj0 + g_1\proj1$
with $g_0 = 1$, $g_1 = \ee^{-\beta E_1}$.  Consider the process consisting in
erasing the input and creating the output state $\ket+$, where we define
$\ket\pm = [\ket0\pm\ket1]/\sqrt2$. That is, we consider the process
$\mathcal{E}(\cdot) = \tr[\cdot]\,\proj+$. Suppose the input state is maximally
mixed, $\sigma=\Ident/2$, such that
$\rho_{X'R_X} = \proj+_{X'}\otimes\Ident_{R_X}/2$. If $E_0=0$ and
$E_1\to\infty$, then this process requires a lot of work; intuitively, with
probability $1/2$ we start in the ground state $\ket0$ and need to prepare the
output state $\ket+$ which has high energy.

For $\epsilon=0$, we can see this
because the input state is full rank, hence $\mathcal{T}=\mathcal{E}$; then
$\mathcal{E}(\Gamma) = \tr[\Gamma]\proj+$ and the smallest $\alpha$ such that
$\mathcal{E}(\Gamma) \leqslant \alpha\Gamma$ is given by
\begin{multline}
\alpha/\tr[\Gamma] = \norm[\big]{\Gamma^{-1/2}\proj+\Gamma^{-1/2}}_\infty =
\dmatrixel+{\Gamma^{-1}} = (g_0^{-1} + g_1^{-1})/2
\\
= (1 + \ee^{\beta E_1})/2 \geqslant \ee^{\beta E_1}/2\ .
\end{multline}
Noting that $\tr[\Gamma]\geqslant 1$, we have
$\alpha \geqslant \ee^{\beta E_1}/2$, and hence the energy cost of the
transformation $\Ident/2\to\ket+$ is
\begin{align}
  \text{energy cost}
  = - \beta^{-1}\DCohz{*\mathcal{E}_{X\to X'}(\sigma_{XR_X})}{X}{X'}{\Gamma}{\Gamma}
  = \beta^{-1} \ln\alpha
  \geqslant E_1 - \beta^{-1}\ln(2)\ .
\end{align}
Clearly, this work cost can become arbitrarily large if $E_1\to\infty$.  On the
other hand, we can perform the transformation $\ket+\to\ket+$ obviously at no
work cost; similarly, $\ket-\to\ket+$ can be carried out by letting the system
time-evolve under its own Hamiltonian for exactly the time interval required to
pick up a relative phase $(-1)$ between the $\ket0$ and $\ket1$ states.  This
also costs no work because it is a unitary operation that commutes with the
Hamiltonian.  We thus have our counter-example to the quasi-convexity of the
coherent relative entropy.  The transformation $\Ident/2\to\ket+$ is very hard,
but the individual transformations $\ket\pm\to\ket+$ are trivial, noting that
$\Ident/2=(1/2)\proj+ + (1/2)\proj-$.

We show in \cref{app:smooth} how to make the above claim robust against an
accuracy tolerance $\epsilon\geq0$.

\section{Construction \#2: Gibbs-preserving maps}\label{sec:proof-Weps-iid}

\subsection{Statement and proof sketch}

Here, we present a general construction of a universal implementation of an
i.i.d.\@{} process using Gibbs-preserving maps according to the requirements
of~\cref{sec:definition}. The idea is to explicitly construct an implementation
using a novel notion of quantum typicality. We introduce notions of quantum
typicality that apply to quantum processes and universally capture regions of
the Hilbert space where the conditional entropy (respectively the relative
entropy difference) has a given value. This generalizes existing notions of typical projectors to a quantum typical operator that applies to bipartite states, is relative to a $\Gamma$ operator,
and universal.

The main result behind the construction in this section is the following
theorem.

\begin{theorem}\label{thm:construction-using-typicality-full}
  Let $\Gamma_X,\Gamma_{X'} > 0$, $\mathcal{E}_{X\to X'}$ be a completely
  positive, trace-preserving map, and $\epsilon>0$. Then, for $\delta>0$ and
  $n\in\mathbb{N}$ large enough there exists a completely positive map
  $\mathcal{T}_{X^n\to X'^n}$ such that:
  \begin{enumerate}[label=(\roman*)]
  \item $\mathcal{T}_{X^n\to X'^n}$ is trace non-increasing;
  \item
    $\norm[\big]{\mathcal{T}_{X^n\to X'^n} -
      \mathcal{E}_{X\to{}X'}^{\otimes{}n}}_\diamond \leqslant \epsilon$;
  \item $\mathcal{T}_{X^n\to X'^n}`\big(\Gamma_X^{\otimes n})
    \leqslant \ee^{n[T(\mathcal{E}) + 4\delta + n^{-1}\ln(\poly(n))]}\,\Gamma_{X'}^{\otimes n}$.
  \end{enumerate}
\end{theorem}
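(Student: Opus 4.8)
The plan is to construct the map $\mathcal{T}_{X^n\to X'^n}$ explicitly using the post-selection technique together with a new ``universal typical operator'' built from the Schur-Weyl tools of \cref{lemma:Schur-Weyl-bipartite-trAn-PiLambdaPiLambdaprime} and the energy-estimation POVM of \cref{prop:energy-measurement-POVM-n-systems}. First I would fix a Stinespring dilation $V_{X\to X'E}$ of $\mathcal{E}$ and recall the de Finetti state $\zeta_{X^n}=\int d\sigma_X\,\sigma_X^{\otimes n}$ from \cref{x:post-selection-technique}, together with its purification $\ket\zeta_{X^n\bar R^n R'}$. By \cref{prop:iid-process-diamond-norm-with-any-W}, it suffices to produce a permutation-covariant isometry $W_{X^n\to X'^nE^nE'}$ (with a small ancilla $E'$) such that the overlap $\Re\bra{\sigma}^{\otimes n}(V^\dagger)^{\otimes n}W\ket{\sigma}^{\otimes n}$ is exponentially close to $1$ for every pure i.i.d.\@{} input, and simultaneously such that the resulting channel $\mathcal{T}=\tr_{E^nE'}[W(\cdot)W^\dagger]$ satisfies the Gibbs-subpreservation bound $\mathcal{T}(\Gamma_X^{\otimes n})\leqslant \ee^{n[T(\mathcal{E})+O(\delta)]}\Gamma_{X'}^{\otimes n}$. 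The map $W$ should act as $V^{\otimes n}$ followed by a ``projection onto the universal typical region''; the defect operator compensating the lost norm is dumped into the fresh ancilla $E'$, which keeps $W$ an isometry and keeps $\mathcal{T}$ trace non-increasing.

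The key construction is the typical region itself. On the environment-output side I would measure, on $\rho_{X'^nE^n}=V^{\otimes n}(\cdot)(V^\dagger)^{\otimes n}$, three commuting or approximately-commuting data: (a) the Schur-Weyl label $\lambda$ on the $X'^nE^n$ system, hence an estimate of $\bar H(\mathcal{E}(\sigma)$-type quantities via \cref{prop:entropy-measurement-POVM-n-systems}; (b) the Schur-Weyl label $\lambda'$ on $X'^n$ alone, which via \cref{lemma:Schur-Weyl-bipartite-trAn-PiLambdaPiLambdaprime} controls $\tr_{E^n}[\Pi^\lambda]\lesssim\poly(n)\,\ee^{n(\bar H(\lambda)-\bar H(\lambda'))}\Pi^{\lambda'}$; and (c) the energy labels for $\Gamma_X^{\otimes n}$ on the input and $\Gamma_{X'}^{\otimes n}$ on the output, using \cref{prop:energy-measurement-POVM-n-systems}(ii)--(iv) to turn $\Gamma$-weights into scalar exponentials $\ee^{-nk}$ on projector subspaces. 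The typical operator $\Pi_{\mathrm{typ}}$ is the sum of those joint-label projectors for which the resulting exponent $\bar H(\lambda)-\bar H(\lambda')+k_{\mathrm{out}}-k_{\mathrm{in}}$ lies below $T(\mathcal{E})+O(\delta)$; by \eqref{eq:mkuoyiftukghlj} this is exactly the combination $\HH[\rho]{E}[X'] - \tr[\rho_{X'}\ln\Gamma_{X'}] + \tr[\sigma\ln\Gamma_X]$ that defines $T(\mathcal{E})$, so the complementary ``atypical'' projector has exponentially small weight on every $(V\sigma V^\dagger)^{\otimes n}$ by the estimation propositions, giving condition (ii) via \cref{prop:iid-process-diamond-norm-with-any-W}. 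For condition (iii), sandwiching $\Gamma_X^{\otimes n}$ (equivalently $V^{\otimes n}\Gamma_X^{\otimes n}(V^\dagger)^{\otimes n}$, pushed through the isometry and the energy projectors) against $\Pi_{\mathrm{typ}}$ and using the operator inequality in \cref{lemma:Schur-Weyl-bipartite-trAn-PiLambdaPiLambdaprime} plus \cref{prop:energy-measurement-POVM-n-systems}(iv) converts each block into at most $\ee^{n[T(\mathcal{E})+O(\delta)]}\poly(n)$ times the corresponding block of $\Gamma_{X'}^{\otimes n}$, and summing the $\poly(n)$ blocks (each bound holding blockwise because all the relevant projectors commute) yields the claimed bound after absorbing $\poly(n)$ into the $n^{-1}\ln(\poly(n))$ term.

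The main obstacle I anticipate is the non-commutativity between the Schur-Weyl projectors on $X'^nE^n$ versus on $X'^n$ alone and the $\Gamma$-energy projectors: these do not all exactly commute, so the ``joint-label projector'' picture is only approximate, and I would need to either (i) work with the commuting sub-family guaranteed by \cref{lemma:SchurWeylTensProdDecompCommute} (the $X'^n$-block projectors do commute with the $X'^nE^n$-block projectors) and then handle the energy projectors by a separate smoothing/perturbation argument, or (ii) restrict attention to the energy eigenbasis where $\Gamma$ is diagonal and absorb the mismatch into the $\delta$-slack using a Fannes-type or operator-perturbation estimate of the kind underlying \cref{prop:energy-measurement-POVM-n-systems}. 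A secondary subtlety is that the post-selection reference state $\zeta_{X^n}$ is not i.i.d.\@{}, so condition (iii) must be argued for the honest $\mathcal{T}$ acting on the genuine $\Gamma_X^{\otimes n}$ (which \emph{is} i.i.d.\@{} and permutation-invariant), not on $\zeta$; this is fine because $W$ is permutation-covariant and $\Gamma_X^{\otimes n}$ decomposes cleanly under Schur-Weyl via \eqref{eq:Schur-Weyl-iid-operator-decomp}, but it requires stating the typical-operator bound as an operator inequality valid on the whole space rather than just in expectation on $\zeta$. Ironing out these commutation issues\,---\,most likely by passing to a carefully chosen basis and tracking a single extra additive $\delta$\,---\,is where the real work lies; everything else is bookkeeping with the estimation propositions and \cref{prop:iid-process-diamond-norm-with-any-W}.
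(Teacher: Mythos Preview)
Your plan matches the paper's: Stinespring dilation $V_{X\to X'E}$, then on $E^nX'^n$ a universal typical smoothing operator assembled from exactly the four projector families you list (Schur--Weyl on $E^nX'^n$ and on $X'^n$; energy for $V\Gamma_XV^\dagger$ and for $\Gamma_{X'}$), with the same threshold, and closeness via \cref{prop:iid-process-diamond-norm-with-any-W}. The isometry completion with $E'$ is an unnecessary detour: the paper works directly with the trace non-increasing $\mathcal T(\cdot)=\tr_{E^n}\bigl[M\,V^{\otimes n}(\cdot)(V^\dagger)^{\otimes n}M^\dagger\bigr]$, and the proof of \cref{prop:iid-process-diamond-norm-with-any-W} goes through when $W=MV^{\otimes n}$ is merely a contraction. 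The real gap is in your handling of the non-commutativity.

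Neither workaround you propose is viable: there is in general no common eigenbasis for $V\Gamma_XV^\dagger$ and $\Ident_E\otimes\Gamma_{X'}$, so your option~(ii) fails outright, and option~(i) is not a concrete plan. The paper's resolution (the content of \cref{thm:universal-relative-conditional-typical-smoother}) is to abandon the requirement that the typical operator be a projector and set $M=\sum S^\ell_{X'^n}\Pi^\mu_{X'^n}\Pi^\lambda_{E^nX'^n}R^k_{E^nX'^n}$ (sum over typical labels) in this specific order: the three leftmost families commute with one another, while $R^k$ need not commute with $S^\ell$ or $\Pi^\mu$. One verifies $M^\dagger M\leqslant\Ident$ by a direct block calculation exploiting that commuting triple. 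For the overlap bound one controls $\Re\bigl\{\bra{\rho}^{\otimes n}M\ket{\rho}^{\otimes n}\bigr\}$ (the real part, not a trace) by splitting the sum into a piece where all four labels are individually $\delta$-close to their expected values\,---\,handled by peeling off one projector at a time via Cauchy--Schwarz\,---\,and a remainder of $\poly(n)$ terms each exponentially small. For property~(iii) the non-commutativity is absorbed by the pinching-like inequality of \cref{lemma:Ai-coherent-can-ignore}, which costs only a $\poly(n)$ prefactor and reduces the estimate to the diagonal term where your argument via \cref{lemma:Schur-Weyl-bipartite-trAn-PiLambdaPiLambdaprime} and \cref{prop:energy-measurement-POVM-n-systems}(ii),(iv) applies directly.
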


Note that we have $n^{-1}\ln(\poly(n)) \to 0$ as $n\to\infty$, and that we can
take $\delta\to 0$ after taking $n\to\infty$. Thanks to
\cref{prop:fw-T-battery-characterization}, the mapping
$\mathcal{T}_{X^n\to X'^n}$ defines an implementation of the i.i.d.\@{} process
$\mathcal{E}_{X\to X'}^{\otimes n}$ in terms of Gibbs-preserving maps and a
battery, whose work cost rate is given to leading order by the thermodynamic
capacity $T(\mathcal{E})$ after taking $\delta\to0$.

As for Construction~\#1, the full Gibbs-preserving map implementing the required
process is assembled in two steps, first constructing the map
$\mathcal{T}_{X^n\to X'^n}$ in \cref{thm:construction-using-typicality-full} and
then using \cref{prop:fw-T-battery-characterization} to obtain the full
Gibbs-preserving map. Let $V_{X\to X'E}$ be a Stinespring dilation isometry of
$\mathcal{E}_{X\to X'}$. For $\delta>0$, we introduce a universal conditional
and relative typical smoothing operator $M_{E^nX'^n}^{x,\delta}$ (see later
\cref{defn:universal-relative-conditional-typical-smoother} and
\cref{thm:universal-relative-conditional-typical-smoother}) with
$x = -nT(\mathcal{E})$ and relative to $\Gamma_{X'E} \equiv V\Gamma_X V^\dagger$
and $\Gamma_{X'}$. The map $\mathcal{T}_{X^n\to X'^n}$ is then constructed as
\begin{align}
  \mathcal{T}_{X^n\to X'^n}(\cdot)
  &= \tr_{E^n} \left[ M_{E^nX'^n}^{x,\delta} \,V_{X\to X'E}^{\otimes n}\, (\cdot)\,
    V_{X\leftarrow X'E}^{\dagger\,\otimes n} M_{E^nX'^n}^{x,\delta\,\dagger} \right]\ .
\end{align}
Finally, we employ \cref{prop:fw-T-battery-characterization} to construct an associated
Gibbs-preserving map acting on battery states via~\eqref{eq:explicit-GPM-from-effective-work-process-and-taus}.  For any $\delta'>0$, for $n$ large enough and choosing any $m,m'$ such that
$m-m' \leqslant nT(\mathcal{E}) + 4\delta + n^{-1}\ln\poly(n) + \delta'$, the
full implementation map in terms of $\mathcal{T}_{X^n\to X'^n}$ becomes
\begin{align}
  \Phi_{X^nW\to X'^nW}(\cdot)
  &= \mathcal{T}_{X^n\to X'^n}`\big( \tr_W`[ P_W^m (\cdot) ] ) \otimes \tau_W^{m'}\ .
\end{align}

\subsection{Construction via universal conditional and relative typicality}

The main ingredient of our proof is a notion of a universal conditional and
relative typical smoothing operator that enables us to discard events
that are very unlikely to appear in the process while accounting for how much
they contribute to the overall work cost. This operator is inspired by
similar constructions in Refs.~\cite{Bjelakovic2003arXiv_revisted,Berta2015QIC_monotonicity}. However, in additional to being ``relative'' as in \cite{Bjelakovic2003arXiv_revisted} our smoothing
operator is also simultaneously ``conditional'' and ``universal''.

\begin{definition}
  \label{defn:universal-relative-conditional-typical-smoother}
  Let $\Gamma_{AB},\Gamma_B'\geqslant 0$ and $x\in\mathbb{R}$.  A \emph{universal
    conditional and relative typical smoothing operator} $M_{A^nB^n}^{x,\delta}$
  with parameter $\delta>0$ is an operator on $A^nB^n$ that satisfies the
  following conditions:
  \begin{enumerate}[label=(\roman*)]
  \item \label{item:univ-rel-cond-typ-sm-trnonincreasing}
    $`\big(M^{x,\delta}_{A^nB^n})^\dagger\; M^{x,\delta}_{A^nB^n} \leqslant\Ident$\ ;
  \item \label{item:univ-rel-cond-typ-sm-high-state-weight} There exists $\xi>0$
    independent of $n$ with the following property: For any pure state
    $\ket{\rho}_{ABR}$ with $\rho_{AB}$ (respectively $\rho_B$) in the support
    of $\Gamma_{AB}$ (respectively $\Gamma_B'$) and such that
    $\DD{\rho_{AB}}{\Gamma_{AB}} - \DD{\rho_B}{\Gamma_B'} \geqslant x$, it
    holds that
    \begin{align}
      \Re`*{ \bra{\rho}_{ABR}^{\otimes n} \, M^{x,\delta}_{A^nB^n}\,
      \ket{\rho}_{ABR}^{\otimes n} } \geqslant 1 - \poly(n)\exp(-n\xi)\ ;
    \end{align}
  \item \label{item:univ-rel-cond-typ-sm-conditional-Gamma-weight}
    $\tr_{A^n}`\Big[M^{x,\delta}_{A^nB^n}\,\Gamma_{AB}^{\otimes n}\,
    `\big(M^{x,\delta}_{A^nB^n})^\dagger] \leqslant
    \poly`(n)\,\ee^{-n(x-4\delta)}\,\Gamma_{B}'^{\otimes n}$\ .
  \end{enumerate}
\end{definition}

Note that the smoothing operator is defined as a general operator of norm
bounded by one, as opposed to the usual definition of typical subspaces or
typical projectors. The main reason is that it is not known to us in general if
such an object can be chosen to be a projector.  By using the real part in
Point~\ref{item:univ-rel-cond-typ-sm-high-state-weight} above, we ensure that a
process that applies the operator $M_{A^nB^n}^{x,\delta}$ preserves coherences
when it is applied to a superposition of several states
$`{ \ket{\rho}_{ABR}^{\otimes n} }$. This property would not have been ensured
if instead, we had merely asserted that
$M_{A^nB^n}^{x,\delta}\ket{\rho}_{ABR}^{\otimes n}$ and
$\ket{\rho}_{ABR}^{\otimes n}$ have high absolute value overlap or are close in
fidelity. If $M_{A^nB^n}^{x,\delta}$ is a projector then the expression reduces
to $\tr(M_{A^nB^n}^{x,\delta}\rho)$ as one usually considers for projectors on
typical subspaces.

The core technical statement of Construction \#2 is to show the existence of a
universal conditional and relative smoothing operator, which is as follows.

\begin{proposition}\label{thm:universal-relative-conditional-typical-smoother}
  Let $\Gamma_{AB},\Gamma_B'\geqslant 0$, $x\in\mathbb{R}$, as well as
  $n\in\mathbb{N}$ and $\delta>0$.  There exists a universal conditional and
  relative typical smoothing operator $M_{A^nB^n}^{x,\delta}$ that is
  furthermore permutation-invariant. Moreover, if $[\Gamma_{AB}, \Ident_A\otimes\Gamma_B']=0$, then
  $M^{x,\delta}_{A^nB^n}$ can be chosen to be a projector satisfying
  $[M^{x,\delta}_{A^nB^n},\Gamma_{B}'^{\otimes n}] = 0$ and
  $[M^{x,\delta}_{A^nB^n},\Gamma_{AB}^{\otimes n}] = 0$.
\end{proposition}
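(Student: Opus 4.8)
The plan is to build $M^{x,\delta}_{A^nB^n}$ as a permutation-invariant operator assembled from the Schur-Weyl machinery and the energy-estimation POVM of \cref{prop:energy-measurement-POVM-n-systems}, exactly so that its action on an i.i.d.\ pure state $\ket\rho_{ABR}^{\otimes n}$ simultaneously (a) captures the typical Young diagram of $\rho_{AB}$ and of $\rho_B$, (b) captures the typical ``energy'' with respect to the operators $\ln\Gamma_{AB}$ and $\ln\Gamma_B'$, and (c) controls the ratio $\ee^{n[\DD{\rho_{AB}}{\Gamma_{AB}}-\DD{\rho_B}{\Gamma_B'}]}$ that governs Point~\ref{item:univ-rel-cond-typ-sm-conditional-Gamma-weight}. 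The commuting case is the warm-up: when $[\Gamma_{AB},\Ident_A\otimes\Gamma_B']=0$ one can take the ordinary projector onto the joint typical set for the spectra of $\rho_{AB}$, $\rho_B$ and the energies of $\Gamma_{AB}$, $\Gamma_B'$; since $\Pi^\lambda_{A^nB^n}$, $\Pi^{\lambda'}_{B^n}\otimes\Ident_{A^n}$ (via \cref{lemma:SchurWeylTensProdDecompCommute}), $R^k_{(AB)^n}$ and $R^{k'}_{B^n}\otimes\Ident_{A^n}$ all commute among themselves and with $\Gamma_{AB}^{\otimes n}$ and $\Gamma_B'^{\otimes n}$, the product of the relevant spectral/energy projectors is itself a projector with the stated commutation properties. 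Point~\ref{item:univ-rel-cond-typ-sm-high-state-weight} then follows from the union bound applied to \cref{prop:entropy-measurement-POVM-n-systems} (for the $\lambda,\lambda'$ estimates, using $\bar H(\lambda)\approx H(\rho_{AB})$, $\bar H(\lambda')\approx H(\rho_B)$) and \cref{prop:energy-measurement-POVM-n-systems}(iii) (for the energies $\approx\tr[\rho_{AB}\ln\Gamma_{AB}^{-1}]$ etc.), while Point~\ref{item:univ-rel-cond-typ-sm-conditional-Gamma-weight} follows by bounding $\tr_{A^n}$ on the typical block using \cref{lemma:Schur-Weyl-bipartite-trAn-PiLambdaPiLambdaprime} together with the $\Gamma$-eigenvalue bounds from \cref{prop:energy-measurement-POVM-n-systems}(iv): the entropy exponent $\ee^{n(\bar H(\lambda)-\bar H(\lambda'))}$ combines with the two energy exponents to give precisely $\ee^{-n(x-O(\delta))}$ on the typical set, up to $\poly(n)$ factors, whenever the state satisfies the relative-entropy-difference hypothesis $\DD{\rho_{AB}}{\Gamma_{AB}}-\DD{\rho_B}{\Gamma_B'}\geqslant x$.

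For the general (non-commuting) case, the obstacle is that $\Gamma_{AB}$ and $\Ident_A\otimes\Gamma_B'$ no longer share an eigenbasis, so a single projector onto a joint eigenspace is unavailable, and moreover the ``energy'' of a state with respect to $\ln\Gamma_{AB}$ and the reduced ``energy'' with respect to $\ln\Gamma_B'$ cannot be read off simultaneously from a product of commuting POVMs. The plan here is to decouple the $\Gamma$-dependence from the state-dependence: first apply a permutation-invariant pinching / twirling in the eigenbasis of $\Gamma_{AB}^{\otimes n}$ (equivalently, average over the group of diagonal unitaries, or over time-translations generated by $\ln\Gamma_{AB}$) combined with the pinching in the eigenbasis of $\Gamma_B'^{\otimes n}$ acting on the $B^n$ factor, so that afterwards one is effectively in a ``classical'' situation where the $R^k$ projectors do commute with everything relevant; then compose with the Schur-Weyl spectral projectors for $\rho_{AB}$ and $\rho_B$ as before. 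Since pinchings are CPTP and their (sub-unital) Kraus operators have norm $\leqslant 1$, the resulting operator still satisfies Point~\ref{item:univ-rel-cond-typ-sm-trnonincreasing}; permutation invariance is preserved because all ingredients are built from $\SN(n)$-invariant data; Point~\ref{item:univ-rel-cond-typ-sm-conditional-Gamma-weight} is unaffected because pinching $\Gamma_{AB}^{\otimes n}$ in its own eigenbasis leaves it invariant, so the trace bound reduces to the commuting computation. The price is in Point~\ref{item:univ-rel-cond-typ-sm-high-state-weight}: one must show that pinching $\ket\rho^{\otimes n}_{ABR}$ in the $\Gamma_{AB}^{\otimes n}$-eigenbasis and the $\Gamma_B'^{\otimes n}$-eigenbasis on $B^n$ perturbs it by at most $\poly(n)\ee^{-n\xi'}$ in the relevant overlap — this is where one invokes the pinching inequality together with the fact that the number of distinct eigenvalue-``types'' of $\Gamma_{AB}^{\otimes n}$ and $\Gamma_B'^{\otimes n}$ is only $\poly(n)$ (\cref{prop:energy-measurement-POVM-n-systems}(i)), and that the state $\rho^{\otimes n}_{ABR}$ is, up to exponentially small error, supported on the typical energy shell, on which the pinching acts nearly trivially up to a $\poly(n)$ multiplicative cost absorbed into the $\poly(n)$ prefactor. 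Taking the real part in Point~\ref{item:univ-rel-cond-typ-sm-high-state-weight} rather than the modulus is exactly what lets this ``close to $1$ up to $\poly(n)\ee^{-n\xi}$'' survive the composition of several such near-identity maps.

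The main obstacle I expect is controlling, in the non-commuting case, the interaction between the $\Gamma$-pinching on the full system $A^nB^n$ and the $\Gamma_B'$-pinching on the marginal $B^n$: these two do not commute with each other (since $[\Gamma_{AB},\Ident_A\otimes\Gamma_B']\neq0$), so one has to argue that applying them in sequence still leaves the i.i.d.\ state essentially invariant, and that the order chosen is compatible with the trace bound in Point~\ref{item:univ-rel-cond-typ-sm-conditional-Gamma-weight}. My plan to handle this is to apply the $B^n$-side $\Gamma_B'$-pinching \emph{last} when proving Point~\ref{item:univ-rel-cond-typ-sm-high-state-weight} (so both pinchings are near-identity on the typical shell and errors add up by a union bound), while for Point~\ref{item:univ-rel-cond-typ-sm-conditional-Gamma-weight} the $\Gamma_B'$-pinching on the output is harmless because it is applied to an operator proportional to $\Gamma_B'^{\otimes n}$, which it fixes; the $\Gamma_{AB}$-pinching is likewise harmless there because it fixes $\Gamma_{AB}^{\otimes n}$. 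Everything else — the union bounds, the exponent arithmetic combining $\bar H(\lambda)-\bar H(\lambda')$ with the two energy shifts into $x-4\delta$, and the $\poly(n)$ bookkeeping — is routine given Propositions~\ref{prop:entropy-measurement-POVM-n-systems}--\ref{prop:energy-measurement-POVM-n-systems} and \cref{lemma:Schur-Weyl-bipartite-trAn-PiLambdaPiLambdaprime}.
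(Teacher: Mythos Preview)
Your commuting-case plan is correct and matches the paper. For the non-commuting case, however, there is a genuine gap at the level of the definition: the object $M^{x,\delta}_{A^nB^n}$ in \cref{defn:universal-relative-conditional-typical-smoother} must be a single \emph{operator} with $M^\dagger M\leqslant\Ident$, entering Property~(ii) as $\Re\,\bra{\rho}^{\otimes n}M\ket{\rho}^{\otimes n}$ and Property~(iii) as $\tr_{A^n}[M\,\Gamma_{AB}^{\otimes n}\,M^\dagger]$. A pinching (averaging over diagonal unitaries, or time-twirling) is a multi-Kraus completely positive map, and composing two non-commuting pinchings with Schur--Weyl projectors still yields a map, not an operator. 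Your remark that the Kraus operators have norm at most one does not single out \emph{which} operator plays the role of $M$, and no individual Kraus operator of the composed channel will by itself carry the full typical weight needed in Property~(ii). The ``near-identity on the typical shell up to $\poly(n)$'' heuristic is a statement about a channel acting on a state, not about a single contraction applied to a pure ket; it does not directly deliver the required real-part overlap bound.

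The paper's resolution is more direct and sidesteps the obstacle you flag: it never tries to make the four families commute. It simply fixes an order and sets
\[
  M^{x,\delta}_{A^nB^n}=\sum_{\substack{k,\ell,\lambda,\mu:\\ k-\bar H(\lambda)-\ell+\bar H(\mu)\geqslant x-4\delta}} S^\ell_{B^n}\,\Pi^\mu_{B^n}\,\Pi^\lambda_{A^nB^n}\,R^k_{A^nB^n}\,,
\]
an honest operator that is generically not a projector. The three leftmost families $S^\ell,\Pi^\mu,\Pi^\lambda$ commute pairwise by permutation invariance, and the paper uses this to collapse $M^\dagger M$ and bound it by $\Ident$. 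Property~(ii) is proved by splitting the overlap into a main term $\blacksquare_1$ (all four indices in their individually typical ranges for the given $\rho$, so the constraint is automatically met) and a tail $\blacksquare_2$, then peeling off $R^k,\Pi^\lambda,\Pi^\mu,S^\ell$ one at a time via Cauchy--Schwarz, each removal costing $\poly(n)\ee^{-n\eta}$ from \cref{prop:entropy-measurement-POVM-n-systems} or \cref{prop:energy-measurement-POVM-n-systems}(iii). Property~(iii) is where a pinching-type inequality does appear, but only as the operator bound of \cref{lemma:Ai-coherent-can-ignore}: paying a $\poly(n)$ factor replaces $M\Gamma^{\otimes n}M^\dagger$ by the diagonal sum over $(k,\ell,\lambda,\mu)$, after which your anticipated exponent arithmetic via \cref{lemma:Schur-Weyl-bipartite-trAn-PiLambdaPiLambdaprime} and \cref{prop:energy-measurement-POVM-n-systems}(ii),(iv) goes through verbatim. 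The non-commutativity of $R^k$ with $S^\ell,\Pi^\mu$ is thus handled by the choice of ordering (energy projector $R^k$ on the right, $S^\ell$ on the left), not by dephasing the state.
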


In the following, %
we present the proof of \cref{thm:construction-using-typicality-full} based on
the existence of such the smoothing operator from
\cref{thm:universal-relative-conditional-typical-smoother}. The more technical
proof of \cref{thm:universal-relative-conditional-typical-smoother}
is then given in \cref{sec:constr-univ-cond-rel-typ-smoother-proof}.

\begin{proof}[*thm:construction-using-typicality-full]
  Let $V_{X\to X'E}$ be a Stinespring dilation of $\mathcal{E}_{X\to X'}$ into
  an environment system $E\simeq X\otimes X'$. For $n\in\mathbb{N}$ we need to find a suitable candidate implementation
  $\mathcal{T}_{X^n\to X'^n}$. Let
  \begin{align}
    x = -\max_{\sigma_X}\Big\{ \DD{\mathcal{E}(\sigma_X)}{\Gamma_{X'}}
    -  \DD{\sigma_X}{\Gamma_X} \Big\} = -T(\mathcal{E})\ .
  \end{align}
  For any $\delta>0$ let $M^{x,\delta}_{E^nX'^n}$ be the operator constructed by
  \cref{thm:universal-relative-conditional-typical-smoother}, with the system
  $E$ playing the role of the system $A$, with
  $V_{X\to X'E}\,\Gamma_X\,V_{X\leftarrow X'E}^\dagger$ as $\Gamma_{AB}$ and
  with $\Gamma_{X'}$ as $\Gamma'_{B}$.  Now, define
  \begin{align}
    \mathcal{T}_{X^n\to X'^n}(\cdot) = \tr_{E^n}`*[
    M^{x,\delta}_{E^nX'^n} V_{X\to X'E}^{\otimes n} \; `\big(\cdot) \;
    `\big(V^\dagger_{X\leftarrow X'E})^{\otimes n} `\big(M^{x,\delta}_{E^nX'^n})^\dagger]
  \end{align}
  noting that $\mathcal{T}_{X^n\to X'^n}$ is trace non-increasing by
  construction thanks to
  Property~\ref{item:univ-rel-cond-typ-sm-trnonincreasing} of
  \cref{defn:universal-relative-conditional-typical-smoother}.
  
  Let $\ket\sigma_{XR_X}$ be any pure state, and define
  $\ket\rho_{X'ER_X} = V_{X\to X'E}\,\ket\sigma_{XR_X}$. By construction,
  $\DD`\big{\rho_{EX'}}{(V_{X\to X'E}\Gamma_{X}V^\dagger)} - \DD{\rho_{X'}}{\Gamma_{X'}}
  = \DD{\sigma_{X}}{\Gamma_{X}} - \DD{\mathcal{E}(\sigma_{X})}{\Gamma_{X'}}
  \geqslant x$. Then Property~\ref{item:univ-rel-cond-typ-sm-high-state-weight} of
  \cref{thm:universal-relative-conditional-typical-smoother} tells us that there
  exists a $\xi>0$ independent of both $\rho$ and $n$ such that
  \begin{align}
    \Re`*{
    \bra{\rho}_{X'ER_X}^{\otimes n}\, M^{x,\delta}_{E^nX'^n}\,
    \ket{\rho}_{X'ER_X}^{\otimes n}
    }
    \geqslant 1 - \poly(n)\,\exp(-n\xi)\ .
  \end{align}
  The conditions of \cref{prop:iid-process-diamond-norm-with-any-W} are
  fulfilled, with
  $W_{X^n\to X'^nE^n} = M^{x,\delta}_{A^nB^n} \, V_{X\to X'E}^{\otimes n}$,
  thanks furthermore to the fact that $M_{E^nX'^n}^{x,\delta}$ is
  permutation-invariant as guaranteed by
  \cref{thm:universal-relative-conditional-typical-smoother}. Hence, we have
  \begin{align}
    \frac12\norm[\big]{ \mathcal{T}_{X^n\to X'^n} -
    \mathcal{E}_{X\to X'}^{\otimes n} }_\diamond
    \leqslant \poly(n)\,\exp(-n\xi/2)\ .
  \end{align}
  For $n\in\mathbb{N}$ large enough this becomes smaller than any fixed
  $\epsilon>0$.  Furthermore, by
  Property~\ref{item:univ-rel-cond-typ-sm-conditional-Gamma-weight} of
  \cref{defn:universal-relative-conditional-typical-smoother}, we have that
  \begin{align}
    \mathcal{T}_{X^n\to X'^n}`\big(\Gamma_{X}^{\otimes n})
    &= \tr_{E^n}`\big[ M^{x,\delta}_{E^nX'^n} \, `\big(V_{X\to X'E} \,
    \Gamma_X V_{X\leftarrow X'E}^\dagger )^{\otimes n}
    (M^{x,\delta}_{E^nX'^n})^\dagger ]
    \nonumber\\
    &\leqslant
    \poly(n)\,\ee^{-n(x-4\delta)}\,\Gamma_{X'}^{\otimes n}
  \end{align}
  as required.
\end{proof}

\subsection{Universal conditional and relative typical smoothing operator}
\label{sec:constr-univ-cond-rel-typ-smoother-proof}

We now turn to the proof of
\cref{thm:universal-relative-conditional-typical-smoother}, giving an explicit
construction of a universal conditional and relative typical smoothing
operator. As the proof of
\cref{thm:universal-relative-conditional-typical-smoother} is quite lengthy, it
can be instructive to consider a simpler version of our typical smoothing
operator which applies in the case where the Hamiltonians are trivial. We carry
out this analysis in \cref{appx:universal-conditional-typical-projector}.

\begin{proof}[*thm:universal-relative-conditional-typical-smoother]
  \allowdisplaybreaks
  First, we claim that we can assume $\Gamma_{AB}>0$ and $\Gamma_B'>0$ without
  loss of generality.  Indeed, if either operator is not positive definite, then
  we can first construct the operator $\widetilde{M}_{A^nB^n}^{x,\delta}$
  associated with modified operators $\widetilde{\Gamma}_{AB}>0$ and
  $\widetilde{\Gamma}_B'>0$ where all the zero eigenvalues of $\Gamma_{AB}$ and
  $\Gamma_B'$ are replaced by some arbitrary fixed strictly positive constant
  (e.g., one); we can then set
  $M_{A^nB^n}^{x,\delta} = P^{\Gamma'}_{B^n} \widetilde{M}_{A^nB^n}^{x,\delta}
  P^\Gamma_{A^nB^n} $, where $P^{\Gamma}_{A^nB^n}$ (respectively
  $P^{\Gamma'}_{B^n}$) is the projector onto the support of
  $\Gamma_{AB}^{\otimes n}$ (respectively $\Gamma_B'^{\otimes n}$).  The
  operator $M_{A^nB^n}^{x,\delta}$ constructed in this way satisfies all of the
  required properties. For the remainder of this proof we thus assume that $\Gamma_{AB}>0$ and
  $\Gamma_B'>0$.

  Let $`*{R_{A^nB^n}^k}$ be the POVM constructed by
  \cref{prop:energy-measurement-POVM-n-systems} for
  $H_{AB} = -\ln(\Gamma_{AB})$.  Similarly, let $`*{S_{B^n}^\ell}$ be the
  corresponding POVM constructed in
  \cref{prop:energy-measurement-POVM-n-systems} for
  $H_{B}' = -\ln(\Gamma'_{B})$.  Also, as before, we denote by
  $\Pi^\lambda_{A^nB^n}$ and by $\Pi^{\mu}_{B^n}$ the projectors on the
  Schur-Weyl blocks labelled by the Young diagrams
  $\lambda\in\Young{d_{A}d_{B}}{n}$ and $\mu\in\Young{d_{B}}{n}$.
  Let
  \begin{align}
    M^{x,\delta}_{A^nB^n} = \sum_{\substack{ k,\ell,\lambda,\mu\;:\\
    k-\bar{H}(\lambda)-\ell+\bar{H}(\mu) \geqslant x - 4\delta}}
    S^\ell_{B^n}\,\Pi^{\mu}_{B^n}\,\Pi^\lambda_{A^nB^n}\,R^k_{A^nB^n}\ .
    \label{eq:univ-cond-rel-typ-smoother-construction}
  \end{align}
  Note that $[S^\ell_{B^n}, \Pi^{\mu}_{B^n}] = 0$ because $S^\ell_{B^n}$ is
  permutation-invariant, and
  $[\Ident_{A^n}\otimes S^\ell_{B^n}, \Pi^{\lambda}_{A^nB^n}] = 0$ because
  $\Ident_{A^n}\otimes S^\ell_{B^n}$ is permutation-invariant.  Recall also that
  $[\Ident_{A^n}\otimes \Pi^\mu_{B^n}, \Pi^{\lambda}_{A^nB^n}] = 0$ for the same
  reason. The operator $M_{A^nB^n}^{x,\delta}$ is permutation-invariant by
  construction.  Then, we have
  \begin{align}
    M^{x,\delta\ \dagger}_{A^nB^n} M^{x,\delta}_{A^nB^n}
    &= \sum_{\substack{ k,\ell,\lambda,\mu,\\
    k',\ell',\lambda',\mu'\;:\\
    k-\bar{H}(\lambda)-\ell+\bar{H}(\mu) \geqslant x - 4\delta\\
    k'-\bar{H}(\lambda')-\ell'+\bar{H}(\mu') \geqslant x - 4\delta}}
    R^{k}_{A^nB^n}\,\Pi^{\lambda}_{A^nB^n}\, \Pi^{\mu}_{B^n}\, S^{\ell}_{B^n}
    S^{\ell'}_{B^n}\, \Pi^{\mu'}_{B^n} \, \Pi^{\lambda'}_{A^nB^n} \, R^{k'}_{A^nB^n}
    \nonumber\\[1ex]
    &=
    \sum_{\substack{ k,k',\ell,\lambda,\mu\;:\\
    k-\bar{H}(\lambda)-\ell+\bar{H}(\mu) \geqslant x - 4\delta\\
    k'-\bar{H}(\lambda)-\ell+\bar{H}(\mu) \geqslant x - 4\delta}}
    R^{k}_{A^nB^n } \,
    `\big(\Pi^{\lambda}_{A^nB^n}\, \Pi^{\mu}_{B^n}\, S^{\ell}_{B^n})
    \, R^{k'}_{A^nB^n}
    \nonumber\\[1ex]
    &= \sum_{k,k'}\; R^k_{A^nB^n} `*(
      \sum_{\substack{\ell,\lambda,\mu\:\\
    k-\bar{H}(\lambda)-\ell+\bar{H}(\mu) \geqslant x - 4\delta\\
    k'-\bar{H}(\lambda)-\ell+\bar{H}(\mu) \geqslant x - 4\delta
    }} \Pi^{\lambda}_{A^nB^n}\, \Pi^{\mu}_{B^n}\, S^{\ell}_{B^n}
    )  R^{k'}_{A^nB^n}
    \nonumber\\[1ex]
    &\leqslant \sum_{k,k'}\; R^k_{A^nB^n} R^{k'}_{A^nB^n}
    \nonumber\\[1ex]
    &= \sum_{k}\; R^k_{A^nB^n} = \Ident_{A^nB^n}
  \end{align}
  recalling that the operators
  $(\Pi^\lambda_{A^nB^n}, \Pi^\mu_{B^n}, S^\ell_{B^n})$ form a commuting set
  of projectors, and where in the third line the inner sum is taken to be the
  zero operator if no triplet $(\ell,\lambda,\mu)$ satisfies the given
  constraints.  This shows Property~\ref{item:univ-rel-cond-typ-sm-trnonincreasing}.

  Now, consider any state $\ket\rho_{ABR}$, where $R$ is any reference system,
  and assume that
  $\DD{\rho_{AB}}{\Gamma_{AB}} - \DD{\rho_B}{\Gamma'_B} \geqslant x$.  Rewrite
  this condition as %
  \begin{align}
    x \leqslant -H(\rho_{AB}) - \tr[\rho_{AB}\ln\Gamma_{AB}]
     + H(\rho_B) + \tr[\rho_B\ln\Gamma'_B]\ .
  \end{align}
  We write
  \begin{align}
    \bra{\rho}_{ABR}^{\otimes n} \,
    M^{x,\delta}_{A^nB^n} \, \ket{\rho}_{ABR}^{\otimes n}
    &=
    \sum_{\substack{k,\ell,\lambda,\mu\;:\\
    k-\bar{H}(\lambda)-\ell+\bar{H}(\mu)\geqslant x-4\delta}}
    \bra{\rho}_{ABR}^{\otimes n} 
    `\big(S_{B^n}^\ell \Pi_{B^n}^\mu \Pi_{A^nB^n}^\lambda R_{A^nB^n}^k) \,
    \ket\rho_{ABR}^{\otimes n}
      \nonumber\\
    &= \ \blacksquare_1\  +\  \blacksquare_2\ \ ,
    \label{eq:toreygfusdbhkjnoit903}
  \end{align}
  where we define
  \begin{subequations}
    \begin{align}
      \blacksquare_1 &=
                       \sum_{\substack{k,\ell,\lambda,\mu\;:\\
      k \geqslant-\!\tr[\rho_{AB}\ln\Gamma_{AB}] - \delta\\
      \bar{H}(\lambda) \leqslant H(\rho_{AB}) + \delta \\
      \ell \leqslant -\!\tr[\rho_{B}\ln\Gamma'_{B}] + \delta\\
      \bar{H}(\mu)\geqslant H(\rho_{B}) - \delta}}
      \bra{\rho}_{ABR}^{\otimes n} 
      `\big(S_{B^n}^\ell \Pi_{B^n}^\mu \Pi_{A^nB^n}^\lambda R_{A^nB^n}^k) \,
      \ket\rho_{ABR}^{\otimes n}\ ;
      \\
      \blacksquare_2 &=
                       \sum_{\substack{k,\ell,\lambda,\mu\;:\\
      k-\bar{H}(\lambda)-\ell+\bar{H}(\mu)\geqslant x-4\delta\hspace{1ex}\text{AND}\\
      [~
      k < -\!\tr[\rho_{AB}\ln\Gamma_{AB}] - \delta\hspace{1ex}\text{OR}\\
      \bar{H}(\lambda) > H(\rho_{AB}) + \delta\hspace{1ex}\text{OR}\\
      \ell > -\!\tr[\rho_{B}\ln\Gamma'_{B}] + \delta\hspace{1ex}\text{OR}\\
      \bar{H}(\mu) < H(\rho_{B}) - \delta
      ~]}}
      \bra{\rho}_{ABR}^{\otimes n} 
      `\big(S_{B^n}^\ell \Pi_{B^n}^\mu \Pi_{A^nB^n}^\lambda R_{A^nB^n}^k) \,
      \ket\rho_{ABR}^{\otimes n}\ ,
      \label{eq:guhilufgydsuhijflkjbaidlsj}
    \end{align}
  \end{subequations}
  further noting that the conditions in the sum defining $\blacksquare_1$ indeed
  imply that
  $k-\bar{H}(\lambda)-\ell+\bar{H}(\mu) \geqslant -\tr[\rho_{AB}\ln\Gamma_{AB}]
  - H(\rho_{AB}) + \tr[\rho_{B}\ln\Gamma'_{B}] + H(\rho_{B}) - 4\delta \geqslant
  x - 4\delta$. We first consider $\blacksquare_1$. Define the projectors
  \begin{subequations}
    \begin{align}
      X_1 &= \sum_{k\geqslant-\!\tr[\rho_{AB}\ln\Gamma_{AB}] - \delta}
            R^k_{A^nB^n}\ ;
      & X_1^\perp &= \Ident - X_1\ ;
      \\
      X_2 &= \sum_{\bar{H}(\lambda) \leqslant H(\rho_{AB}) + \delta}
            \Pi_{A^nB^n}^\lambda\ ;
      & X_2^\perp &= \Ident - X_2\ ;
      \\
      X_3 &= \sum_{\bar{H}(\mu) \geqslant H(\rho_{B}) - \delta}
            \Pi_{B^n}^\mu\ ;
      & X_3^\perp &= \Ident - X_3\ ;
        \\
      X_4 &=
            \sum_{\ell\leqslant-\!\tr[\rho_{B}\ln\Gamma'_{B}] + \delta}
            S^\ell_{B^n}\ ;
      & X_4^\perp &= \Ident - X_4\ ,
    \end{align}
  \end{subequations}
  and observe that
  \begin{align}
    \Re`*{\ \blacksquare_1\ }\ =\ 
    \Re`*{\bra{\rho}_{ABR}^{\otimes n} \; `\big(\;X_4\;X_3\;X_2\;X_1\;) \;
    \ket{\rho}_{ABR}^{\otimes n} }\ .
  \end{align}
  Thanks to \cref{prop:energy-measurement-POVM-n-systems}, we have
  $\norm{\;X_1^\perp\; \ket\rho_{ABR}^{\otimes n}} \leqslant 2\exp(-n\eta/2)$,
  recalling that $\norm{P\ket\psi} = \sqrt{\tr[P\psi]}$, and hence
  \begin{align}
    \hspace*{3em}
    &\hspace*{-3em}
    \Re`*{\bra{\rho}_{ABR}^{\otimes n} \; X_4\;X_3\;X_2\;X_1 \;
    \ket{\rho}_{ABR}^{\otimes n} }\notag
    \\
    &= \Re`*{\bra{\rho}_{ABR}^{\otimes n} \; X_4\;X_3\;X_2 \;
    \ket{\rho}_{ABR}^{\otimes n} }
    - \Re`*{\bra{\rho}_{ABR}^{\otimes n} \; X_4\;X_3\;X_2\;X_1^\perp \;
      \ket{\rho}_{ABR}^{\otimes n} }\notag
    \\
    &\geqslant \Re`*{\bra{\rho}_{ABR}^{\otimes n} \; X_4\;X_3\;X_2 \;
    \ket{\rho}_{ABR}^{\otimes n} } - 2\exp(-n\eta/2)
  \end{align}
  using Cauchy-Schwarz to assert that
  $\Re(\braket\chi\psi) \leqslant \abs{\braket\chi\psi} \leqslant
  \norm{\ket\chi}\,\norm{\ket\psi}$. Similarly, using \cref{prop:entropy-measurement-POVM-n-systems}, we have
  $\norm{\;X_2^\perp\; \ket\rho_{ABR}^{\otimes n}} \leqslant
  \poly(n)\exp(-n\eta/2)$.  Also, we have
  $\norm{\;X_3^\perp\; \ket\rho_{ABR}^{\otimes n}} \leqslant
  \poly(n)\exp(-n\eta/2)$, and
  $\norm{\;X_4^\perp\; \ket\rho_{ABR}^{\otimes n}} \leqslant 2\exp(-n\eta/2)$,
  yielding
  \begin{align}
    \Re`*{\bra{\rho}_{ABR}^{\otimes n} \; X_4\;X_3\;X_2 \;
    \ket{\rho}_{ABR}^{\otimes n} }
    &\geqslant 
    \Re`*{\bra{\rho}_{ABR}^{\otimes n} \; X_4\;X_3 \;
      \ket{\rho}_{ABR}^{\otimes n} } - \poly(n)\,\exp(-n\eta/2)\ ;
      \\
    \Re`*{\bra{\rho}_{ABR}^{\otimes n} \; X_4\;X_3 \;
    \ket{\rho}_{ABR}^{\otimes n} }
    &\geqslant 
    \Re`*{\bra{\rho}_{ABR}^{\otimes n} \; X_4 \;
      \ket{\rho}_{ABR}^{\otimes n} } - \poly(n)\,\exp(-n\eta/2)\ ;
      \\
    \Re`*{\bra{\rho}_{ABR}^{\otimes n} \; X_4 \;
    \ket{\rho}_{ABR}^{\otimes n} }
    &\geqslant 1 - 2\,\exp(-n\eta/2)\ .
  \end{align}
  We take all these $\eta$'s to be the same, by choosing if necessary the
  minimum of the four possibly different $\eta$s. Hence, we have
  \begin{align}
    \Re`*{\ \blacksquare_1\ }\ \geqslant \ 1 - \poly(n)\,\exp(-n\eta/2)\ .
  \end{align}
  Now we consider the term $\blacksquare_2$.  We know that
  \begin{subequations}
    \begin{align}
      \norm*{ R^k_{A^nB^n} \ket\rho_{ABR}^{\otimes n} }
      &\leqslant \exp(-n\eta/2)
      &&\text{if $k<-\tr[\rho_{AB}\ln\Gamma_{AB}] - \delta$}\ ; \\
      \norm*{ \Pi_{A^nB^n}^\lambda \ket{\rho}_{ABR}^{\otimes n} }
      &\leqslant \poly(n)\exp(-n\eta/2)
      &&\text{if $\bar{H}(\lambda) > H(\rho_{AB}) + \delta$}\ ;\\
      \norm*{ S^\ell_{B^n} \ket{\rho}_{ABR}^{\otimes n} }
      &\leqslant \exp(-n\eta/2)
      &&\text{if $\ell>-\tr[\rho_{B}\ln\Gamma'_{B}] + \delta$}\ ;\\
      \norm*{ \Pi_{B^n}^\mu \ket{\rho}_{ABR}^{\otimes n} }
      &\leqslant \poly(n)\exp(-n\eta/2)
      &&\text{if $\bar{H}(\mu) < H(\rho_{B}) - \delta$}
    \end{align}
  \end{subequations}
  recalling that $\norm{P\ket\psi} = \sqrt{\tr[P\psi]}$.  So, for each term in
  the sum~\eqref{eq:guhilufgydsuhijflkjbaidlsj}, we have
  \begin{align}
      \abs*{
      \bra{\rho}_{ABR}^{\otimes n} 
      `\big(S_{B^n}^\ell \Pi_{B^n}^\mu \Pi_{A^nB^n}^\lambda R_{A^nB^n}^k) \,
      \ket\rho_{ABR}^{\otimes n}
      }
    &=
      \abs*{
      `\big(\bra{\rho}_{ABR}^{\otimes n} 
      S_{B^n}^\ell \Pi_{B^n}^\mu \Pi_{A^nB^n}^\lambda) `\big(R_{A^nB^n}^k \,
      \ket\rho_{ABR}^{\otimes n})
      }
      \nonumber\\
    &\leqslant
      \norm*{\; R_{A^nB^n}^k \, \ket\rho_{ABR}^{\otimes n}\; }
      \cdot \norm*{\; `\big(S_{B^n}^\ell \Pi_{B^n}^\mu \Pi_{A^nB^n}^\lambda) \,
      \ket\rho_{ABR}^{\otimes n}\; }
      \nonumber\\
    &\leqslant \poly(n)\,\exp(-n\eta/2)
  \end{align}
  using the Cauchy-Schwarz inequality and because at least one of the four
  conditions is violated, causing at least one of the two the norms to decay
  exponentially (noting also that $S_{B^n}^\ell, \Pi_{B^n}^\mu, \Pi_{A^nB^n}^\lambda$ all commute).  Because
  there are only at most $\poly(n)$ terms, we have
  \begin{align}
    \abs{\;\blacksquare_2\;}\ &\leqslant
    \sum_{\substack{k,\ell,\lambda,\mu\;:\\
    k-\bar{H}(\lambda)-\ell+\bar{H}(\mu)\geqslant x-4\delta\hspace{1ex}\text{AND}\\
    [~
    k < -\!\tr[\sigma_X\ln\Gamma_X] - \delta\hspace{1ex}\text{OR}\\
    \bar{H}(\lambda) > H(\sigma_X) + \delta\hspace{1ex}\text{OR}\\
    \ell > -\!\tr[\rho_{X'}\ln\Gamma_{X'}] + \delta\hspace{1ex}\text{OR}\\
    \bar{H}(\mu)< H(\mathcal{E}(\sigma_X)) - \delta
    ~]}}
    \abs*{
      \bra{\rho}_{ABR}^{\otimes n} 
    `\big(S_{B^n}^\ell \Pi_{B^n}^\mu \Pi_{A^nB^n}^\lambda R_{A^nB^n}^k) \,
    \ket\rho_{ABR}^{\otimes n}
    }
    \nonumber\\
    &\leqslant \poly(n)\exp(-n\eta/2)\ .
  \end{align}
  Hence, we have
  \begin{align}
    \Re`*{ \bra{\rho}_{ABR}^{\otimes n} \,
    M^{x,\delta}_{A^nB^n} \, \ket{\rho}_{ABR}^{\otimes n}
      }\ 
    &=\ 
    \Re`*{\ \blacksquare_1\ }\ +\ \Re`*{\ \blacksquare_2\ }
      \nonumber\\
    &\geqslant\ 
    \Re`*{\ \blacksquare_1\ }\ -\ \abs*{\ \blacksquare_2\ }
      \nonumber\\
    &\geqslant\  1 - \poly(n)\exp(-n\eta/2)
  \end{align}
  proving Property~\ref{item:univ-rel-cond-typ-sm-high-state-weight} for
  $\xi=\eta/2$.  Note that $\xi$ does not depend on the state
  $\ket\sigma_{XR}$. Now, we prove
  Property~\ref{item:univ-rel-cond-typ-sm-conditional-Gamma-weight}. Using
  \cref{lemma:Ai-coherent-can-ignore} and dropping some subsystem indices for
  readability, we have
  \begin{align}
    &\tr_{A^n} `\big[ M^{x,\delta}_{A^nB^n} \Gamma_{AB}^{\otimes n}
    `\big(M^{x,\delta}_{A^nB^n})^\dagger ]
    \notag\\
    &\leqslant \poly(n) \sum_{\substack{k,\ell,\lambda,\mu\;:\\
    k-\bar{H}(\lambda)-\ell+\bar{H}(\mu) \geqslant x-4\delta}}
    \tr_{A^n}`*[ S^\ell \Pi^\mu \Pi^\lambda R^k
    \, \Gamma^{\otimes n} \,
    R^k \Pi^\lambda \Pi^\mu S^\ell]\ .
    \label{eq:klBHjVg}
  \end{align}
  Recall that, using \cref{prop:energy-measurement-POVM-n-systems} and
  \cref{lemma:Schur-Weyl-bipartite-trAn-PiLambdaPiLambdaprime},
  \begin{align}
    R_{A^nB^n}^k \, \Gamma_{AB}^{\otimes n}
    &\leqslant \ee^{-nk}\, R_{A^nB^n}^k\leqslant \ee^{-nk}\, \Ident_{A^nB^n}\ ;
    \\
    \Pi^\mu_{B^n} \tr_{A^n}`*[ \Pi^\lambda_{A^nB^n}] \Pi^\mu_{B^n}
    &\leqslant \poly(n)\,\exp(n(\bar{H}(\lambda)-\bar{H}(\mu)))\,\Ident_{B^n}\ ;
    \\
    S_{B^n}^\ell
    &\leqslant \ee^{n\ell}\; S_{B^n}^\ell \, \Gamma_{B}'^{\otimes n}
    \leqslant \ee^{n\ell}\; \Gamma_{B}'^{\otimes n}
  \end{align}
  further recalling that $[R_{A^nB^n}^k, \Gamma_{AB}^{\otimes n}] = 0$ and
  $[S_{B^n}^\ell , \Gamma_{B}'^{\otimes n}] = 0$.  Combining these together
  yields
  \begin{align}
    \text{\eqref{eq:klBHjVg}}
    &\leqslant \poly(n) \sum_{\substack{ k,\ell,\lambda,\mu\;:\\
    k-\bar{H}(\lambda)-\ell+\bar{H}(\mu) \geqslant x-4\delta}}
    \ee^{-nk}\,
    S^\ell\, \Pi^\mu \, \tr_{A^n}`\big[\Pi^\lambda_{A^nB^n}] \, \Pi^\mu\, S^\ell
    \nonumber\\
    &\leqslant \sum_{\substack{ k,\ell,\lambda,\mu\;:\\
    k-\bar{H}(\lambda)-\ell+\bar{H}(\mu) \geqslant x-4\delta}}
    \poly(n)\,\ee^{-nk+n(\bar{H}(\lambda)-\bar{H}(\mu))}\, S^\ell_{B^n}
    \nonumber\\
    &\leqslant \sum_{\substack{ k,\ell,\lambda,\mu\;:\\
    k-\bar{H}(\lambda)-\ell+\bar{H}(\mu) \geqslant x-4\delta}}
    \poly(n)\,\ee^{-n(k-\bar{H}(\lambda)+\bar{H}(\mu)-\ell)}\, \Gamma_{B}'^{\otimes n}
    \nonumber\\[.7ex]
    &\leqslant 
    \poly(n)\,\ee^{-n(x-4\delta)}\, \Gamma_{B}'^{\otimes n}\ .
  \end{align}

  Finally, suppose that $[\Gamma_{AB}, \Gamma_B'] = 0$, meaning that we can
  choose a simultaneous eigenbasis for $\Gamma_{AB}$ and $\Gamma_{B'}$.  Then
  the operator $M_{A^nB^n}^{x,\delta}$ is a projector, as can be seen
  in~\eqref{eq:univ-cond-rel-typ-smoother-construction} since in that case
  $`{ S_{B^n}^\ell }, `{ \Pi^\mu_{B^n} }, `{ \Pi^\lambda_{A^nB^n} }, `{
    R^k_{A^nB^n} }$ are all complete sets of projectors all elements of which
  commute pairwise between different sets.  Furthermore,
  $\Gamma_{B'}^{\otimes n}$ and $\Gamma_{AB}^{\otimes n}$ both commute with all
  of these projectors and therefore also with $M_{A^nB^n}^{x,\delta}$.
\end{proof}

\section{Construction \#3: Thermal operations}
\label{sec:optimal-universal-protocol-TO-timecovariant}

\subsection{Statement and proof sketch}
We now present a construction of a universal thermodynamic implementation of a
time-covariant i.i.d.\@{} process, using the framework of thermal operations instead
of Gibbs-preserving maps.

\begin{theorem}
  \label{x:universal-magical-thermal-op}
  Let $X$ be a quantum system, $H_X$ a Hermitian operator, $\beta\geqslant 0$,
  $\mathcal{E}_{X\to X}$ a completely positive, trace-preserving map satisfying
  \begin{align}
    \mathcal{E}_{X\to X}(\ee^{-iH_Xt}\,(\cdot)\,\ee^{iH_Xt})
    = \ee^{-iH_Xt}\,\mathcal{E}_{X\to X}(\cdot)\,\ee^{iH_Xt}
    \qquad\text{for all \(t\in\mathbb{R}\).}
  \end{align}
  Let $\epsilon>0$.  Let $\delta>0$ be small enough and $n\in\mathbb{N}$ be
  large enough.  Then, there exists an information battery $W$, a thermal
  operation $\Phi_{X^nW}$, and battery states $\tau_W^{(\mathrm{i})}$ and
  $\tau_W^{(\mathrm{f})}$ such that:
  \begin{enumerate}[label=(\roman*)]
  \item The effective work process $\mathcal{T}_{X^n\to X^n}$ associated with
    $\Phi_{X^nW}$ and
    $\left(\tau_W^{(\mathrm{i})}, \tau_W^{(\mathrm{f})}\right)$ satisfies
    \begin{align}
      \frac12\norm{ \mathcal{T}_{X^n\to X^n} - \mathcal{E}_{X\to X'}^{\otimes n} }_\diamond
      \leqslant \epsilon\ ;
    \end{align}
  \item The work cost per copy satisfies
    \begin{align}
      \lim_{\delta\to 0} \lim_{n\to \infty}
      \frac1n\left[w\left(\tau_W^{\mathrm{(i)}}\right) - w\left(\tau_W^{\mathrm{(f)}}\right)\right]
      =  T(\mathcal{E})\ .
    \end{align}
  \end{enumerate}
\end{theorem}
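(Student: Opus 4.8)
The claimed limit is an equality, but the direction $\geqslant T(\mathcal E)$ is immediate from \cref{lemma:main-thm-easy}: any thermal operation is in particular a Gibbs-preserving map, so the effective work process obeys $\mathcal T_{X^n\to X^n}(\Gamma_X^{\otimes n})\leqslant\exp(nw)\,\Gamma_X^{\otimes n}$ for the per-copy work cost $w$, whence $w\geqslant T(\mathcal E)$ in the limit. The plan is therefore to establish achievability by an explicit thermal-operations construction. First I would use the post-selection technique (\cref{x:post-selection-technique}) to reduce the universality requirement to faithfulness on the purified de Finetti state, which is a mixture of only $\poly(n)$ i.i.d.\ states $\sigma_i^{\otimes n}$ (see also \cref{prop:iid-process-diamond-norm-with-any-W} for the permutation-covariant packaging): it then suffices to build a good implementation for each $\sigma_i^{\otimes n}$ separately and combine the $\poly(n)$ branches by a coherent position-based selection, with overhead $\ln\poly(n)=o(n)$ work per copy.

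For the single-shot building block, dilate $\mathcal E$ to a Stinespring isometry $V_{X\to XE}$; time-covariance of $\mathcal E$ lets us equip $E$ with a Hamiltonian $H_E$ so that $V\,\ee^{-iH_Xt}=\ee^{-i(H_X+H_E)t}\,V$, i.e.\ $V$ is energy-conserving. For a known input $\sigma$, set $\rho_{XE}=V\sigma V^\dagger$. The implementation of $\mathcal E^{\otimes n}$ on $\sigma^{\otimes n}$ then (a) runs $V^{\otimes n}$ \emph{coherently} with the aid of a bounded reference clock --- admissible in the thermal-operations framework precisely because $V$ conserves energy and a clock of size $\poly(n)$, costing $o(n)$ work per copy, emulates the isometry (the ``small source of coherence'' ingredient, cf.\ the bonus results); and (b) disposes of the environment register $E^n$ by conditional erasure with side information for systems with nontrivial Hamiltonians in a time-covariant state. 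Following \cite{Anshu2019IEEETIT_oneshot}, this erasure is carried out with the convex-split lemma and position-based decoding: one embeds the genuine $E^n$ among roughly $\exp$ of a $\delta$-smoothed max-relative-entropy $D_{\max}$ many decoy systems drawn from the Gibbs weight $\Gamma_E=\ee^{-\beta H_E}$, which decouples $E^n$ from the output and the reference, while position-based decoding --- with a budget governed by a hypothesis-testing relative entropy $D_{\mathrm{h}}$ of the same state pair --- closes the loop. The only other primitives needed are energy-conserving unitaries, Gibbs baths, and the thermal-state-to-eigenstate operation of \cref{thm:thermal-to-zero-thermomaj-workcost}; the net purity drawn from the battery, read off via the effective-work-process formalism, is the $D_{\max}$-type cost of the decoupling, offset by what the ($D_{\mathrm{h}}$-type) decoding recovers, together with the usual energy- and entropy-bookkeeping terms of conditional erasure.

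To pass to rates, the spectrum/entropy and energy estimation POVMs of \cref{prop:entropy-measurement-POVM-n-systems} and \cref{prop:energy-measurement-POVM-n-systems} force every one-shot quantity above --- the $D_{\max}$ cost of the convex split, the $D_{\mathrm{h}}$ budget of the decoding, and the conditional entropies of the erasure --- to concentrate, with error $\poly(n)\exp(-n\eta)$ uniform in $\sigma$, onto its von Neumann value. Using the chain rule for the relative entropy and the additivity of $\ln\Gamma$-type terms over tensor factors (exactly as in the additivity proof of \cref{prop:thermo-capacity-additive}), the per-copy purity balance telescopes to $\DD{\mathcal E(\sigma)}{\Gamma_X}-\DD{\sigma}{\Gamma_X}$. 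Since the error bound and the post-selection overhead do not depend on $\sigma$, the supremum over $\sigma$ may be taken, yielding a work rate tending to $T(\mathcal E)$ after $n\to\infty$ and then $\delta\to 0$; additivity of $T$ (\cref{prop:thermo-capacity-additive}) moreover guarantees that no regularized quantity survives the limit.

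I expect the main obstacle to be making the convex-split and position-based-decoding bookkeeping respect \emph{exact} energy conservation rather than mere Gibbs-preservation: in the thermal-operations framework every unitary used in the decoupling and decoding must commute with the \emph{total} Hamiltonian of system, bath, battery and clock, which is precisely where the time-covariance hypothesis becomes unavoidable and where the analysis of the bounded coherence source does the heavy lifting. A secondary subtlety is that achievability meets the optimality bound only because the $D_{\max}$ cost of the convex split and the $D_{\mathrm{h}}$ budget of the position-based decoding collapse to the \emph{same} relative entropy in the i.i.d.\ limit --- the asymptotic equipartition property --- so that no gap opens between the achievable rate and $T(\mathcal E)$.
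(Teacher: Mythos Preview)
Your high-level ingredients are right, but there are two genuine gaps.

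First, the clock is unnecessary and reflects a misreading of what time-covariance buys. \cref{x:covariant-process-H-env} gives not merely an intertwining isometry but an energy-conserving \emph{unitary} $V_{XE}$ with $[V_{XE},H_X+H_E]=0$ together with a zero-energy eigenstate $\ket0_E$. Preparing $\ket0_E^{\otimes n}$ from thermal ancillas via \cref{thm:thermal-to-zero-thermomaj-workcost} costs exactly $-n\beta F_E$, and then $V_{XE}^{\otimes n}$ is itself a thermal operation with no approximation. The bounded coherence source belongs to the separate result in \cref{appx:impl-TO-any-iidchannel-iidinput} (arbitrary channels on a \emph{fixed} input) and plays no role here.

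Second, and more seriously, ``build a good implementation for each $\sigma_i^{\otimes n}$ separately and combine the $\poly(n)$ branches by a coherent position-based selection'' does not work: in the purified de Finetti state the index $i$ lives in the reference $R'$, which your operation cannot access, and any attempt to learn it from $X^n$ destroys the coherence you must preserve. The paper avoids branching altogether. It feeds \cref{thm:univ-conditional-erasure-PBD-as-thermalop} a \emph{single} hypothesis-test operator, namely the universal conditional and relative typical smoothing operator $P_{E^nX^n}^{x,\delta}$ of \cref{thm:universal-relative-conditional-typical-smoother} (a genuine projector here because $[\Gamma_{XE},\Ident_E\otimes\Gamma_X]=0$), which simultaneously satisfies the bounds~\eqref{eq:univ-isom-cond-eras-conditions-hypo-test} for \emph{every} $\rho_{EX}^{\otimes n}$ with $\DD{\rho_{EX}}{\Gamma_{XE}}-\DD{\rho_X}{\Gamma_X}\geqslant -T(\mathcal E)$. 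That universal projector is the technical heart of the proof and is what lets one thermal operation work for all inputs at once; only afterwards is the post-selection technique invoked, to certify that this already-built single map is diamond-close to $\mathcal E^{\otimes n}$. Relatedly, there is no separate $D_{\max}$ decoupling cost to be offset by a $D_{\mathrm h}$ recovery: the conditional-erasure primitive (\cref{x:universal-isometry-cond-erasure-J}) swaps $E^n$ into a random bath slot indexed by the battery value $j$, and the only parameter is how large $\ee^m$ can be while position-based decoding of $j$ still succeeds, governed directly by the second line of~\eqref{eq:univ-isom-cond-eras-conditions-hypo-test}. The work bookkeeping is then simply $-n\beta F_E$ (ancilla preparation) minus $m\sim n(x-\beta F_E)$ (battery extraction), giving $-nx=nT(\mathcal E)$ in the limit.
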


The main idea in the present construction is to first carry out a Stinespring
dilation unitary explicitly using suitable ancillas as the environment system,
and then to apply a conditional erasure process that resets the ancillas to a
standard state while using the output of the process as side information. The idea of implementing a process in this fashion was also employed in Ref.~\cite{Faist2015NatComm}.

Our core technical contribution for Construction \#3 is to show how to build a
thermodynamic protocol for universal conditional erasure, using the idea of
position-based decoding~\cite{Anshu2017PRL_convexsplit,Anshu2019IEEETIT_oneshot,Anshu2018IEEETIT_redistribution,Anshu2018IEEETIT_SlepianWolf,Anshu2019IEEETIT_compression,Majenz2017PRL_catdecoupling,Anshu2020ITIT_partially,Berta18}. The assembly of the full thermal operation is slightly more involved than
Constructions~\#1 and~\#2, because we cannot use \cref{prop:fw-T-battery-characterization}. The construction will be illustrated in \cref{fig:ConstrProofThermalOp-IIDCase}, using a conditional erasure
primitive whose construction is illustrated in \cref{fig:ConstrProofThermalOp}.

\subsection{Universal conditional erasure}

Conditional erasure is a task that is of independent interest because it
generalizes Landauer's erasure principle to situations where a quantum memory is
available.  A protocol for thermodynamic conditional erasure of a system using a
memory as quantum side information was given in ref.~\cite{delRio2011Nature} for
trivial Hamiltonians.  Here, we study the problem of finding a universal
protocol for conditional erasure, whose accuracy is guaranteed for any input
state on $n$ copies of a system, and where the system and memory Hamiltonians
can be arbitrary.

\begin{definition}[Universal conditional erasure]%
  \label{defn:universal-conditional-erasure}
  Consider two systems $S,M$.  Let $\sigma_S$ be a fixed state, let
  $\mathcal{S}_{SM} = `{ \rho_{SM} }$ be an arbitrary set of states on
  $S\otimes M$, and let $\delta'\geqslant 0$. A \emph{universal conditional
    $\delta'$-erasure process} of $S$ using $M$ as side information is a
  completely positive, trace non-increasing map $\mathcal{T}_{SM\to SM}$ such
  that for all $\rho_{SM}\in \mathcal{S}_{SM}$, and writing $\ket{\rho}_{SMR}$ a
  purification of $\rho_{SM}$, we have
  \begin{align}
    F`\big( \mathcal{T}_{SM\to SM}(\rho_{SMR}),  \sigma_S \otimes \rho_{MR} )
    \geqslant 1 - \delta'\ .
  \end{align}
\end{definition}

We provide a thermodynamic protocol for universal conditional erasure.

\begin{proposition}\label{thm:univ-conditional-erasure-PBD-as-thermalop}
  Let $S,M$ be systems with Hamiltonians $H_S,H_M$ and let $\gamma_S$ refer to
  the thermal state on $S$.  Let $\mathcal{S}_{SM}$ be an arbitrary set of
  states on $S\otimes M$.  Let $m\geqslant 0$ such that $e^m$ is integer.  Let
  $P_{SM}$ be a Hermitian operator satisfying
  $0\leqslant P_{SM}\leqslant \Ident$ and $[P_{SM}, H_S+H_M]=0$, and assume that
  there exists $\kappa,\kappa'\geqslant 0$ such that for all
  $\rho_{SM}\in\mathcal{S}_{SM}$ we have
  \begin{subequations}\label{eq:univ-isom-cond-eras-conditions-hypo-test}
    \begin{align}
      \tr`\big[P_{SM}\,\rho_{SM}]
      &\geqslant 1 - \kappa\ ;
      \\
      \tr`*[P_{SM}\,`*(\gamma_S\otimes\rho_{M})]
      &\leqslant \frac{\kappa'}{e^m}\ .
    \end{align}
  \end{subequations}
  Then, there exists a thermal operation $\mathcal{R}_{SMJ\to SMJ}$ acting on the
  systems $SM$ and an information battery $J$, such that the effective work
  process $\mathcal{T}_{SM\to SM}$ of $\mathcal{R}_{SMJ\to SMJ}$ with respect to
  the battery states $(\tau_J^{m},\ket0_J)$ is a universal conditional
  $(2\kappa+4\kappa')$-erasure process with $\sigma_S=\gamma_S$ for the set of
  states $\mathcal{S}_{SM}'$, where $\mathcal{S}_{SM}'$ is the convex hull of
  $\mathcal{S}_{SM}$.
\end{proposition}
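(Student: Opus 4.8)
The plan is to realise a thermodynamic version of \emph{position‑based decoding}~\cite{Anshu2019IEEETIT_oneshot} directly as a thermal operation. Set $N = e^m$ (an integer by hypothesis). The operation $\mathcal{R}_{SMJ}$ will: (a) bring in $N$ fresh ancillas $T_1,\dots,T_N$, each in the Gibbs state $\gamma_S$ and each with Hamiltonian $H_S$; (b) using the register inside $J$ that carries the uniform state $\tau_J^m$ as a control taking values $j\in\{1,\dots,N\}$, apply the swap $S\leftrightarrow T_j$ conditioned on that control, which leaves $S$ in $\gamma_S$ and deposits the input into slot $T_j$; (c) run position‑based decoding of the slot index from $M T_1\cdots T_N$, using the square‑root (Hayashi--Nagaoka) measurement $\Lambda_j = (\sum_{j'}P^{(j')})^{-1/2}P^{(j)}(\sum_{j'}P^{(j')})^{-1/2}$ built from $P^{(j)} := P_{SM}$ acting on the pair $(T_j,M)$, together with $\Lambda_\perp = \Ident - \sum_j\Lambda_j \geqslant 0$ taking up any slack; (d) use the decoded index to reset the battery register to the pure state $\ket 0_J$ and to return the decoding apparatus to a standard state, while the input's information ends up in the discarded ancillas; (e) discard $T_1,\dots,T_N$. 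The net effect, after post‑selecting $J$ onto $\ket 0_J$ as in the definition of the effective work process, is to replace $S$ by $\gamma_S$ while leaving $M$ (and its correlations with any reference) intact.

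Steps (a) and (e) are free by definition; the controlled swap in (b) is a permutation of isospectral systems controlled by the Hamiltonian‑free battery register, hence commutes with the total Hamiltonian $H_S+\sum_iH_{T_i}+H_M$. For (c)--(d) the crucial input is the hypothesis $[P_{SM},H_S+H_M]=0$: it gives $[P^{(j)},H_M+\sum_iH_{T_i}]=0$, hence $[\Lambda_{\hat\jmath},H_M+\sum_iH_{T_i}]=0$ for every $\hat\jmath$, so the Naimark isometry $\sum_{\hat\jmath}\sqrt{\Lambda_{\hat\jmath}}\otimes\ket{\hat\jmath}\bra 0$ maps each energy eigenspace into itself and extends to an energy‑conserving unitary; the controlled relabelings used to reset the battery act on Hamiltonian‑free registers. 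The delicate part is that, unlike Constructions~\#1 and~\#2, we cannot invoke \cref{prop:fw-T-battery-characterization}, so $\mathcal{R}_{SMJ}$ must be assembled by hand as an honest thermal operation, and the decoding pointer together with the extracted work of $m$ nats must be bookkept so that the battery genuinely runs from $\tau_J^m$ to $\ket 0_J$.

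For the accuracy, first observe that the two conditions in~\eqref{eq:univ-isom-cond-eras-conditions-hypo-test} are affine in $\rho_{SM}$ (the second through its dependence on $\rho_M$), so they continue to hold on the convex hull $\mathcal{S}_{SM}'$; since $\mathcal{R}_{SMJ}$ is a fixed operation it suffices to verify the erasure bound for an arbitrary $\rho_{SM}\in\mathcal{S}_{SM}'$ with purification $\ket\rho_{SMR}$. Let $\sigma_j$ be the reduced state on $M T_1\cdots T_N$ after step (b) on the branch with control value $j$ (input $\rho_{SM}$ on $(M,T_j)$, $\gamma_S$ on the other slots). Then $\tr[P^{(j)}\sigma_j] = \tr[P_{SM}\rho_{SM}]\geqslant 1-\kappa$ and $\tr[P^{(j')}\sigma_j] = \tr[P_{SM}(\gamma_S\otimes\rho_M)]\leqslant\kappa'/N$ for $j'\neq j$, so the Hayashi--Nagaoka operator inequality $\Ident-\Lambda_j \leqslant 2(\Ident-P^{(j)})+4\sum_{j'\neq j}P^{(j')}$ gives $\tr[\Lambda_j\sigma_j]\geqslant 1-2\kappa-4(N-1)\kappa'/N\geqslant 1-(2\kappa+4\kappa')$, uniformly in $j$. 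Tracking the (entirely unitary) steps (b)--(e) on $\ket\rho_{SMR}$, writing $\ket\phi$ for the purified slot‑$j$ state, and then performing the post‑selection, one finds that $\mathcal{T}_{SM\to SM}(\rho_{SMR})$ restricted to $SMR$ equals $\gamma_S\otimes\omega_{MR}$, where $\omega_{MR}$ is the partial trace over the discarded systems of $\sqrt{\Lambda_j}\,\proj\phi\,\sqrt{\Lambda_j}$ (independent of $j$ by relabelling symmetry). By multiplicativity of fidelity under tensor products, monotonicity under partial trace, and the elementary identity $F\big(\sqrt{\Lambda}\proj\phi\sqrt{\Lambda},\proj\phi\big)=\bra\phi\sqrt{\Lambda}\ket\phi$ for $0\leqslant\Lambda\leqslant\Ident$, this yields $F(\mathcal{T}(\rho_{SMR}),\gamma_S\otimes\rho_{MR}) = F(\omega_{MR},\rho_{MR})\geqslant\bra\phi\sqrt{\Lambda_j}\ket\phi\geqslant\bra\phi\Lambda_j\ket\phi = \tr[\Lambda_j\sigma_j]\geqslant 1-(2\kappa+4\kappa')$, which is exactly the claimed bound.

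The main obstacle I expect is therefore not the error estimate\,---\,which is a one‑shot Hayashi--Nagaoka computation with no lossy smoothing and hence the clean constant $2\kappa+4\kappa'$ survives\,---\,but the verification that the construction is a legitimate free thermal operation: that every controlled permutation and the decoding Naimark unitary commute with the total Hamiltonian, that the decoding pointer is supplied and reset consistently with the information‑battery rules (thermal operations provide no free purity, so this must be reconciled with the stated work gain of $m$ nats), that discarding the slot ancillas is the only irreversible move, and that the final post‑selection onto $\ket 0_J$ interacts correctly with the sub‑normalisation of the effective work process. This is also where a figure such as \cref{fig:ConstrProofThermalOp} earns its keep.
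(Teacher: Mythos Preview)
Your high-level construction and error analysis are the paper's: controlled swap of $S$ into one of $N=e^m$ thermal ancillas indexed by the battery register, pretty-good measurement on $MT_1\cdots T_N$ to recover the index, Hayashi--Nagaoka for the $1-(2\kappa+4\kappa')$ bound, and extension to the convex hull by affinity of the hypotheses~\eqref{eq:univ-isom-cond-eras-conditions-hypo-test}. Your factorisation $\mathcal{T}(\rho_{SMR})=\gamma_S\otimes\omega_{MR}$ is correct (after the swap, $S$ holds a fresh thermal state and is never touched again) and gives a clean route to the fidelity; the paper instead applies Uhlmann directly to two explicit purifications indexed by the battery reference $R_J$, but both arguments yield the same constant.

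The gap is precisely the one you flag, and it is real rather than merely a bookkeeping verification. A Naimark pointer $K$ started in $\ket 0_K$ cannot be returned to $\ket 0_K$: after the controlled shift and post-selection on $J_1=0$ (i.e.\ $\hat\jmath=j$), the pointer holds $j$, which is maximally mixed over $N$ values once you trace over the battery's purifying reference. If $K$ is part of the battery, the purity gained in $J_1$ is exactly cancelled by the purity lost in $K$, so the battery does not reach $\ket 0_J$ and no net work is extracted; if $K$ is a discarded ancilla, it must start thermal (maximally mixed for $H_K=0$), and the Naimark isometry is unavailable.

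The paper sidesteps this by not introducing a pointer at all. It writes the decode-and-reset step directly as the operator
\[
  W^{(2)}_{MAJ}=\sum_{\hat\jmath}\Omega^{\hat\jmath}_{MA}\otimes\mathsf{SHIFT}_J(-\hat\jmath)\ ,
\]
using the POVM elements $\Omega^{\hat\jmath}$ themselves (not their square roots). By \cref{x:POVM-controlled-unitary} this satisfies $(W^{(2)})^\dagger W^{(2)}\leqslant\Ident$, and since $[P_{SM},H_S+H_M]=0$ it commutes with the total Hamiltonian; it is then completed to a full energy-conserving unitary using a single extra qubit inside $J$ via \cref{x:complete-energy-preserving-subunitary-to-unitary}. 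Post-selection on $\bra 0_J$ (which includes the extra qubit) picks out exactly the $\hat\jmath=j$ term with action $\Omega^j$, and the residual failure probability from the extra qubit landing in $\ket 1$ is already absorbed into the Hayashi--Nagaoka bound. That is the missing idea in your sketch.
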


The proof of \cref{thm:univ-conditional-erasure-PBD-as-thermalop} is developed
in the rest of this section. We start by reformulating the ideas of the
convex-split lemma, the position-based decoding, and the catalytic decoupling
schemes~\cite{Anshu2017PRL_convexsplit,Anshu2019IEEETIT_oneshot,Anshu2018IEEETIT_redistribution,Anshu2018IEEETIT_SlepianWolf,Anshu2019IEEETIT_compression,Majenz2017PRL_catdecoupling,Anshu2020ITIT_partially,Berta18}
to form a protocol for universal conditional erasure. The underlying ideas of
the following proposition are the same as, e.g., in
Ref.~\cite{Anshu2019IEEETIT_oneshot}.  Yet, our technical statement differs in
some aspects and that is why we provide a proof for completeness. The setting is
depicted in \cref{fig:ConstrProofThermalOp}.

\begin{figure}
  \centering
  \includegraphics{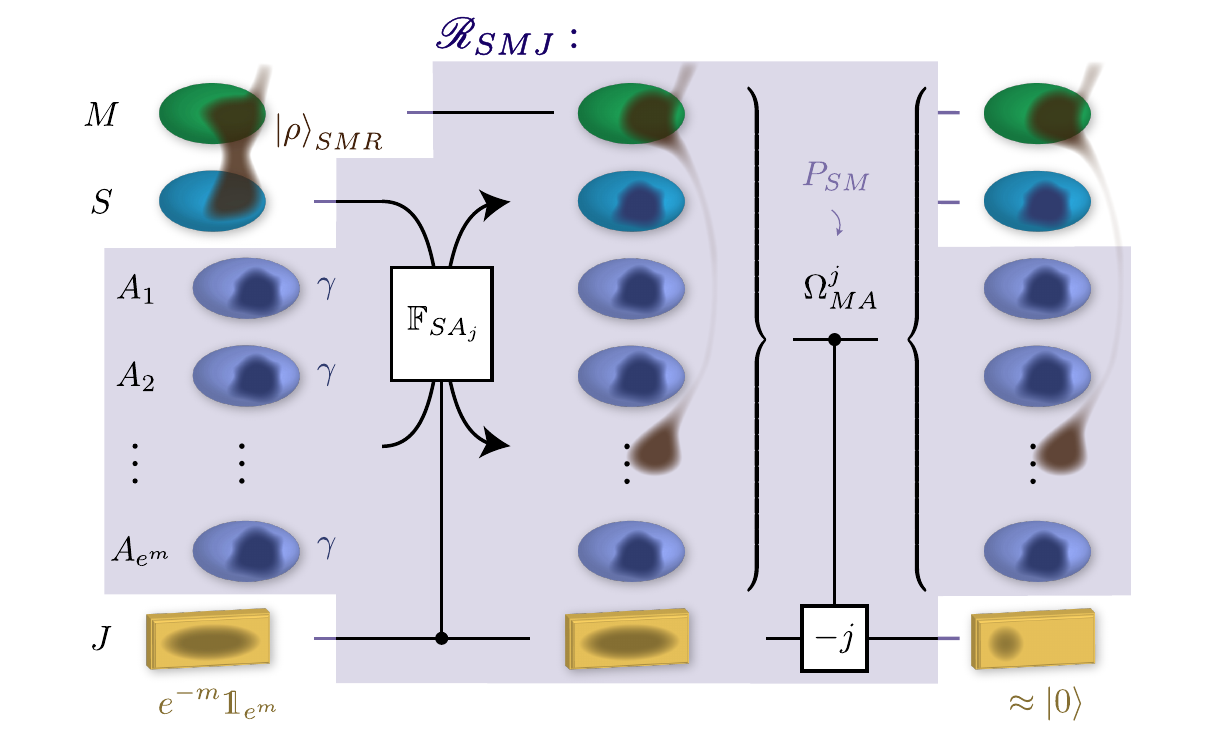}
  \caption{Construction of the thermal operation for universal conditional
    erasure using position-based decoding~\protect\cite{Anshu2019IEEETIT_oneshot},
    illustrating the construction in the proof of
    \cref{thm:univ-conditional-erasure-PBD-as-thermalop} and
    \cref{x:universal-isometry-cond-erasure-J}.  We define a map
    $\mathcal{R}_{SMJ}$ that acts on a system $S$ to reset, a quantum memory $M$
    and a register $J$, which is promised to be initialized in the uniformly
    mixed state $e^{-m}\Ident_{e^{m}}$ of rank $e^m$ for a fixed and known value
    of $m$.  A state $\rho_{SM}$ of the system and the memory is purified by a
    reference system $R$ (not pictured).  The map $\mathcal{R}_{SMJ}$ outputs
    the system $S$ in a state close to the thermal state $\gamma_S$ and the
    register $J$ in a state close to the pure state $\ket0_J$, all while
    ensuring that $\rho_{MR}$ remains unchanged (up to small errors), for all
    states $\rho_{SM}$ in a given class of states $\mathcal{S}_{SM}$.  The
    routine is provided a POVM effect $P_{SM}$ whose task is to distinguish
    $\rho_{SM}$ from $\gamma_S\otimes\rho_M$ in a hypothesis test for all
    $\rho_{SM}\in\mathcal{S}_{SM}$.  As long as $m$ is not too large (as
    determined by how well $P_{SM}$ can perform this distinguishing), the
    procedure completes successfully.  To implement $\mathcal{R}_{SMJ}$ (shaded
    region) we involve $\ee^m$ ancillas $A = A_1\ldots A_{\ee^m}$ with
    $A_j \simeq S$, each initialized in the thermal state
    $\gamma_{A_j} = \gamma_S$.  Then $S$ and $A_j$ are coherently swapped
    ($\mathbb{F}_{SA_j}$) conditioned on the value stored in $J$.  If $m$ is not
    too large, a POVM $`{ \Omega^j_{MA} }$ can infer the value $j$ stored in
    $J$, up to a small error; the POVM is constructed from $P_{SM}$.  We then
    coherently reset the $J$ register to zero by conditioning on this outcome
    (up to a small error).  The full procedure is a thermal operation where the
    ancillas are the heat bath and $J$ is an information battery such that $m$
    work has been extracted in units of pure nats (see main text).}
  \label{fig:ConstrProofThermalOp}
\end{figure}

\begin{lemma}[Conditional erasure unitary using position-based decoding]
  \label{x:universal-isometry-cond-erasure-J}
  Consider two systems $S,M$ and fix $m\geqslant 0$ such that $e^m$ is
  integer. Let $J$ be a large register of dimension at least $2e^m$, and choose
  a fixed basis $`{\ket{j}_J}$. Now, let $\gamma_S$ be any state,
  $\mathcal{S}_{SM}$ an arbitrary set of quantum states on $S\otimes M$,
  $P_{SM}$ a Hermitian operator satisfying $0\leqslant P_{SM} \leqslant\Ident$,
  and assume that there exists $\kappa,\kappa'\geqslant 0$ such that for all
  $\rho_{SM}\in\mathcal{S}_{SM}$ the
  conditions~\eqref{eq:univ-isom-cond-eras-conditions-hypo-test} hold.
  Furthermore, let $A = A_1\otimes \cdots \otimes A_{e^m}$ be a collection of
  ancilla systems with each $A_j\simeq S$, and let
  $A' = A'_1\otimes \cdots\otimes A'_{e^m}$ be a copy of the full collection of
  ancilla systems. We write a purification of $\gamma_{A_j}$ on $A'_j$ as
  $\ket\gamma_{A_jA'_j} = \gamma_{A_j}^{1/2}\,\ket\Phi_{A_j:A'_j}$.  Let
  $\mathcal{S}_{SM}'$ be the convex hull of $\mathcal{S}_{SM}$.  Then, there
  exists a unitary operator $W^{(m)}_{SMAJ\to SMAJ}$ satisfying the following
  property: For any reference system $R$, for any pure tripartite state
  $\ket\rho_{SMR}$ with $\rho_{SM}\in\mathcal{S}_{SM}'$, and for any $\ket{j}_J$
  with $1\leqslant j\leqslant e^m$, we have
  \begin{align}
    \Re`*{
    `\big( \bra{\hat\tau^j(\rho_{SMR})}_{RMSAA'}\otimes\bra0_J ) \;
    W^{(m)}_{SMAJ} \;
      `\big(\ket\rho_{RMS}\otimes
          \ket\gamma_{A_\cdot A'_\cdot}^{\otimes e^m}\otimes \ket{j}_J)
      }
      \geqslant 1 - (2\kappa+4\kappa')\ ,
    \label{eq:overlap-Wm-on-rhoinit-with-tauj-final}
  \end{align}
  where we have defined
  \begin{align}
    \ket{\hat\tau^j(\rho_{SMR})}_{RMSAA'}
    = 
    \ket\rho_{A_jMR}\otimes\ket\gamma_{SA'_j}\otimes
    [\ket\gamma^{\otimes({e^m}-1)}]_{AA'\setminus A_jA'_j}
    \label{eq:tau-j-after-swap-with-superancilla}
  \end{align}
  and by the notation $AA'\setminus A_jA'_j$ we refer to all $AA'$ systems
  except $A_j A'_j$. Moreover, for any observables $H_S$, $H_M$ such that
  $[P_{SM}, H_S+H_M] = 0$, the unitary $W^{(m)}_{SMAJ}$ may be chosen such that
  $[H_S + H_M + \sum H_{A_j}, W^{(m)}_{SMAJ}] = 0$, where $H_{A_j} = H_S$.
\end{lemma}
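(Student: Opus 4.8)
The plan is to realise $W^{(m)}_{SMAJ}$ as a composition $W^{(m)}=U_2\,U_1$. The first unitary is the controlled swap $U_1=\sum_{j}(\proj j)_J\otimes\mathbb F^{(j)}$, where $\mathbb F^{(j)}=\mathbb F_{SA_j}$ swaps $S$ with the ancilla $A_j$ (recall $A_j\simeq S$) for $1\leqslant j\leqslant e^m$ and $\mathbb F^{(j)}=\Ident$ otherwise. Applied to $\ket\rho_{RMS}\otimes\ket\gamma_{A_\cdot A'_\cdot}^{\otimes e^m}\otimes\ket j_J$ with $1\leqslant j\leqslant e^m$ it produces \emph{exactly} the state $\ket{\hat\tau^j(\rho_{SMR})}_{RMSAA'}\otimes\ket j_J$ of \eqref{eq:tau-j-after-swap-with-superancilla}: after the swap $SA'_j$ carries $\ket\gamma$, $A_jMR$ carries $\ket\rho$, and all remaining $A_kA'_k$ are untouched. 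It therefore only remains to exhibit a unitary $U_2$ on $MAJ$ resetting $J$ from $\ket j$ to (approximately) $\ket 0$, using the $MA$-marginal $\hat\tau^j_{MA}=\rho_{MA_j}\otimes\bigotimes_{k\neq j}\gamma_{A_k}$ — in which the ``signal'' $\rho_{MA_j}$ sits at position $j$ while every other $A_k$ holds $\gamma_S$ — as side information. This is the setting of position-based decoding.

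For $\hat j\in\{1,\dots,e^m\}$ write $P_{A_{\hat j}M}$ for $P_{SM}$ with $S$ relabelled to $A_{\hat j}$ (tensored with the identity on the remaining ancillas), set $\Sigma=\sum_{k=1}^{e^m}P_{A_kM}$, and define the position-based-decoding POVM $\{\Omega^{\hat j}\}_{\hat j=0}^{e^m}$ by $\Omega^{\hat j}=\Sigma^{-1/2}P_{A_{\hat j}M}\Sigma^{-1/2}$ for $\hat j\geqslant1$ (generalised inverse on $\operatorname{supp}\Sigma$) and $\Omega^0=\Ident-\sum_{\hat j\geqslant1}\Omega^{\hat j}\geqslant0$. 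The operation ``coherently apply $\{\Omega^{\hat j}\}$ to $MA$, then cyclically shift $J$ by the inferred outcome'' is packaged into a genuine unitary $U_2$ on $MAJ$ by exploiting the dimensional slack $\dim J\geqslant 2e^m$: a subspace of $J$ plays the role of the Naimark dilation register, and the unitary completion of the resulting contraction is chosen — a short dimension count suffices — orthogonally to the principal branch $\ket\psi_{MA}\ket{\hat j}_J\mapsto(\sqrt{\Omega^{\hat j}}\ket\psi)_{MA}\ket 0_J$, so that it does not contribute to the relevant matrix element. Since $U_1$ is exact and $U_2$ acts as the identity on $R,S,A'$, the left-hand side of \eqref{eq:overlap-Wm-on-rhoinit-with-tauj-final} equals $\Re\tr\big[(\hat\tau^j_{MA}\otimes(\ket j\bra 0)_J)\,U_2\big]$; evaluating the principal branch collapses this to the $\hat j=j$ term $\tr[\hat\tau^j_{MA}\sqrt{\Omega^j}]$, a nonnegative real number with $\tr[\hat\tau^j_{MA}\sqrt{\Omega^j}]\geqslant\tr[\hat\tau^j_{MA}\Omega^j]$ by operator monotonicity of the square root ($\sqrt{\Omega^j}\geqslant\Omega^j$ since $0\leqslant\Omega^j\leqslant\Ident$).

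The remaining estimate is the standard position-based-decoding bound. The Hayashi--Nagaoka inequality $\Ident-(S+T)^{-1/2}S(S+T)^{-1/2}\leqslant 2(\Ident-S)+4T$ with $S=P_{A_jM}$ and $T=\sum_{k\neq j}P_{A_kM}$, evaluated in the state $\hat\tau^j_{MA}$ — whose marginals are $\hat\tau^j_{MA_j}=\rho_{MA_j}$ and $\hat\tau^j_{MA_k}=\rho_M\otimes\gamma_{A_k}$ for $k\neq j$ — together with the hypotheses \eqref{eq:univ-isom-cond-eras-conditions-hypo-test}, gives
\[
  1-\tr[\hat\tau^j_{MA}\,\Omega^j]\ \leqslant\ 2\big(1-\tr[P_{SM}\rho_{SM}]\big)+4(e^m-1)\,\tr[P_{SM}(\gamma_S\otimes\rho_M)]\ \leqslant\ 2\kappa+4\kappa'\,.
\]
Every term in this chain is linear in $\rho_{SM}$ and both hypotheses survive convex combinations, so the bound $\tr[\hat\tau^j_{MA}\sqrt{\Omega^j}]\geqslant 1-(2\kappa+4\kappa')$ — hence \eqref{eq:overlap-Wm-on-rhoinit-with-tauj-final} — holds for every $\rho_{SM}$ in the convex hull $\mathcal S_{SM}'$.

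For the energy-conservation claim, set $H_{A_k}=H_S$, $H_J=0$, and $H_{\mathrm{tot}}=H_S+H_M+\sum_kH_{A_k}$. Each swap $\mathbb F_{SA_j}$ commutes with $H_S+H_{A_j}$ and fixes the other terms, and $(\proj j)_J$ is energy-trivial, so $[U_1,H_{\mathrm{tot}}]=0$. Relabelling $[P_{SM},H_S+H_M]=0$ gives $[P_{A_kM},H_{A_k}+H_M]=0$, and $P_{A_kM}$ trivially commutes with $H_{A_i}$ for $i\neq k$; hence $\Sigma$, every $\Omega^{\hat j}$ and every $\sqrt{\Omega^{\hat j}}$ commute with $H_M+\sum_kH_{A_k}$, and since the $J$-factors in $U_2$ are energy-trivial, the principal branch commutes with $H_{\mathrm{tot}}$; choosing the unitary completion block-diagonal in the $H_{\mathrm{tot}}$-eigenspaces of $MAJ$ yields $[U_2,H_{\mathrm{tot}}]=0$, hence $[W^{(m)}_{SMAJ},H_S+H_M+\sum_jH_{A_j}]=0$. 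The conceptual heart is the Hayashi--Nagaoka estimate; the step I expect to require the most care is turning the coherent decoder into a single unitary supported only on the permitted registers $SMAJ$, since no fresh outcome register is available — the slack in $\dim J$ must simultaneously act as the Naimark dilation space, absorb the harmless unitary completion, and be compatible with the energy-conservation constraint.
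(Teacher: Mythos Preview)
Your proposal is correct and follows the same overall strategy as the paper's proof: controlled swap, then position-based decoding via the pretty-good measurement $\{\Omega^{\hat j}\}$, with the error controlled by the Hayashi--Nagaoka inequality, and the same arguments for convexity and energy conservation.

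The one technical difference lies in how the decoder is made unitary. The paper defines the explicit contraction $W^{(2)}=\sum_{j'}\Omega^{j'}_{MA}\otimes\mathsf{SHIFT}_J(-j')$ using the POVM elements themselves (\emph{not} their square roots), verifies $W'^\dagger W'\leqslant\Ident$ via a short ``POVM-controlled-unitary'' lemma, obtains the overlap directly as $\langle\hat\tau^j|\Omega^j|\hat\tau^j\rangle$, and only then dilates $W'=W^{(2)}W^{(1)}$ to a full unitary using the spare qubit in $J$. You instead ask for a unitary whose $(0,j)$-block equals $\sqrt{\Omega^j}$ and use $\sqrt{\Omega^j}\geqslant\Omega^j$ to reach the same Hayashi--Nagaoka quantity. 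Both routes work; the paper's is slightly more concrete and avoids the extra square-root step, while the point you correctly flag as delicate---completing to a genuine unitary on $MAJ$ alone---is handled in the paper by the explicit $\mathsf{SHIFT}$ construction rather than a dimension count. (Your version does go through: the row $(\sqrt{\Omega^j})_{j=0}^{e^m}$ is the adjoint of the Naimark isometry $\sum_j\sqrt{\Omega^j}\otimes\ket j_J$, hence a co-isometry, and can be completed to a unitary on $MA\otimes J$.)
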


Intuitively, we absorb the initial randomness present in the register $J$, e.g.,
given to us by the environment in a mixed state, and return it in a pure state;
$J$ can therefore be identified as an information battery.  Similarly, $A$ can
be identified as a heat bath.

\begin{proof}[**x:universal-isometry-cond-erasure-J]
  First observe that we can assume $\mathcal{S}_{SM}$ to be a convex set,
  because any convex combination of states in $\mathcal{S}_{SM}$ also satisfies
  the conditions~\eqref{eq:univ-isom-cond-eras-conditions-hypo-test}.  For the
  rest of the proof we assume without loss of generality that
  $\mathcal{S}_{SM}=\mathcal{S}'_{SM}$.

  The operator $W$ is defined in two steps. The first operation simply consists
  on conditionally swapping $S$ with $A_j$, depending on the value stored in
  $J$.  Then, we infer again from $MA$ which $j$ we swapped $S$ with, in order
  to coherently reset the register $J$ back to the zero state
  (approximately). We define the first unitary operation as $W^{(1)}$, acting on
  systems $SAJ$
  \begin{align}
    W^{(1)}_{SAJ} =  \sum_{j} \mathbb{F}_{SA_{j}}\otimes\proj{j}_J\ ,
  \end{align}
  where $\mathbb{F}_{SA_j}$ denotes the swap operator between the two designated
  systems. Observe that $W^{(1)}$ maps $\rho$ onto $\hat\tau^j$ according to
  \begin{align}
    \hspace*{5em}
    &\hspace*{-5em}
      W^{(1)}_{SQJ}
    `*( \ket\rho_{RMS}\otimes \ket\gamma_{A_\cdot A'_\cdot}^{\otimes {e^m}}\otimes \ket{j}_J )
      \nonumber\\
    &=\quad
        \ket\rho_{RMA_j} \otimes \ket\gamma_{SA_j'} \otimes
        \left[\ket\gamma^{\otimes ({e^m}-1)}\right]_{AA'\setminus A_jA_j'} \otimes \ket{j}_J
        \nonumber\\
    &=\quad
      \ket{\hat\tau^j}_{SRMAA'}\otimes \ket{j}_J\ .
  \end{align}
  The second step is more tricky. We need to infer from the systems $MA$ alone which
  $j$ was stored in $J$.  Fortunately the answer is provided in the form of
  position-based decoding~\cite{Anshu2019IEEETIT_oneshot}, using a pretty good
  measurement.  Define
  \begin{align}
    \Lambda^j_{MA} = P_{MA_j}\otimes\Ident_{A\setminus A_j}
  \end{align}
  such that $`{\Lambda^j_{MA}}$ is a set of positive operators.  We can form a
  POVM $`{ \Omega^j_{MA} }_j \cup `{ \Omega^\perp_{MA} }$ by normalizing the
  $\Lambda^j$'s as follows:
  \begin{align}
    \Omega^j_{MA}
    &= \Lambda_{MA}^{-1/2}\,\Lambda^j_{MA}\,\Lambda_{MA}^{-1/2}\ ;
    &
    \Lambda_{MA}
    &= \sum_j \Lambda^j_{MA}\ ;
    &
      \Omega^\perp_{MA}
    &= \Ident - \sum_j \Omega^j_{MA}\ .
  \end{align}
  We would now like to lower bound $\tr[\Omega^j_{MA}\hat\tau^j_{MA}]$.
  Following the proof
  of~\cite[\theoremautorefname~2\null]{Anshu2019IEEETIT_oneshot}, we first invoke
  the Hayashi-Nagaoka inequality~\cite{Hayashi2003IEEETIT_formulas}, which
  states that for any operators $0\leqslant A \leqslant \Ident$, $B\geqslant 0$,
  we have
  \begin{align}
    \Ident - (A+B)^{-1/2}\,A\,(A+B)^{-1/2} \leqslant 2(\Ident - A) + 4B\ .
  \end{align}
  Applying this inequality with $A = \Lambda^j_{MA}$ and
  $B=\sum_{j'\neq j} \Lambda^{j'}_{MA}$ we obtain
  \begin{align}
    \tr`*[`*(\Ident-\Omega^j)\hat\tau^j_{MA}]
    &\leqslant 2 \tr`*[`*(\Ident - \Lambda^j_{MA}) \hat\tau^j_{MA}]
      + 4\sum_{j'\neq j}\tr`*[\Lambda^{j'}_{MA}\hat\tau^j_{MA}]
      \nonumber\\
    &\leqslant 2 \tr`*[`*(\Ident - P_{SM}) \rho_{SM}]
      + 4m\,\tr`*[ P_{SM} `*( \gamma_S\otimes\rho_M) ]
      \nonumber\\
    &\leqslant 2\kappa + 4\kappa'\ .
      \label{eq:ofieruygt8ydsui}
  \end{align}
Now, let $\SHIFTop_J(x) = \sum_j \ketbra{j+x}{j}_J$
  denote the SHIFT operation on the $J$ register, modulo ${e^m}$; note that $`\big(
  \SHIFTop_J(x) )^\dagger = \SHIFTop_J(-x)$. We define
  \begin{align}
    W^{(2)}_{MAJ}
    &=`*( \sum_{j} \Omega^{j}_{MA}\otimes \SHIFTop_J(-j) )\ ;
    &
    W'_{SMAJ}
    &=
    W^{(2)}_{MAJ}
    W^{(1)}_{SAJ}
  \end{align}
  and we see that $W'^\dagger W'\leqslant\Ident$ thanks to
  \cref{x:POVM-controlled-unitary}. Then, we have
  \begin{align}
    \hspace*{5em}
    &\hspace*{-5em}
    W'_{SMAJ} \,
    `*(\ket\rho_{RMS}\otimes \ket\phi_{A_\cdot A'_\cdot}^{\otimes {e^m}}\otimes \ket{j}_J)
      \nonumber\\
    &=\quad
    `*( \sum_{j'} \Omega^{j'}_{MA}\otimes \SHIFTop_J(-j') ) \;
    `*( \ket{\hat\tau^j}_{SRMAA'}\otimes \ket{j}_J )
      \nonumber\\
    &=\quad
      \sum_{j'} `*( \Omega^{j'}_{MA}\,\ket{\hat\tau^j}_{RMSAA'} ) \otimes \ket{j-j'}\ .
  \end{align}
  Thanks to \cref{x:complete-subunitary-to-unitary}, the operator $W'_{SMAJ}$ can be completed to a
  full unitary $W_{SMAJ}$ by using an extra qubit in the $J$ register, and such
  that $\bra0_J W_{SMAJ} \ket{j}_J = \bra0_J W'_{SMAJ} \ket{j}_J$ for all
  $j=1,\ldots,{e^m}$ (with the convention that $\ket{j}_J$ for
  $j\leqslant {e^m}$ forces the extra qubit to be in the zero state). So,
  recalling~\eqref{eq:ofieruygt8ydsui},
  \begin{align}
    \hspace*{3em}
    &\hspace*{-3em}
    `*( \bra{\hat\tau^j}_{RMSAA'}\otimes\bra{0}_J )
    W_{SMAJ} \,
    `*(\ket\rho_{RMS}\otimes \ket\phi_{A_\cdot A'_\cdot}^{\otimes {e^m}}\otimes \ket{j}_J)
    \nonumber\\
    &=
    `*( \bra{\hat\tau^j}_{RMSAA'}\otimes\bra{0}_J )
    W'_{SMAJ} \,
    `*(\ket\rho_{RMS}\otimes \ket\phi_{A_\cdot A'_\cdot}^{\otimes {e^m}}\otimes \ket{j}_J)
    \nonumber \\
    &= \matrixel{\hat\tau^j}{\Omega^j_{MA}}{\hat\tau^j}_{RMSAA'}
    \nonumber \\
    &\geqslant 1 - (2\kappa+4\kappa')\ .
  \end{align}
  To prove the last part of the claim, let $H_S, H_M$ be observables such that
  $[P_{SM}, H_S+H_M] = 0$ and $[H_S, \gamma_S]=0$. Let $H_{A_j} = H_S$ and we
  write $H_A = \sum_j H_{A_j}$.  For all $j$, we have
  \begin{align}
    [H_S + H_M + H_A, \Lambda^j_{MA}]
    = `\big[H_S + {\textstyle \sum_{j'\neq j}} H_{A_{j'}}, \Lambda^j_{MA}] +
    [H_M + H_{A_j}, P_{MA_j}]
    = 0\ .
  \end{align}
  This implies that $[H_S + H_M + H_A, \Lambda_{MA}] = 0$, and in turn
  $`\big[H_S + H_M + H_A, \Lambda_{MA}^{-1/2}] = 0$, and thus also
  $[H_S + H_M + H_A, \Omega^j] = 0$.  Hence, we have
  \begin{align}
    `\big[H_S + H_M + H_A, W^{(2)}_{MAJ}] = 0\ .
  \end{align}
  Clearly, $[H_S + H_M + H_A, W^{(1)}_{SAJ}] = 0$, and hence
  $[H_S+H_M+H_A, W'_{SMAJ}] = 0$.  Using
  \cref{x:complete-energy-preserving-subunitary-to-unitary} instead of
  \cref{x:complete-subunitary-to-unitary}, we may further enforce
  $[H_S + H_M + H_A, W_{SMAJ}] = 0$, as required.
\end{proof}

We now give the proof of \cref{thm:univ-conditional-erasure-PBD-as-thermalop}.

\begin{proof}[*thm:univ-conditional-erasure-PBD-as-thermalop]
  Let $W^{(m)}_{SMAJ}$ be the energy-conserving unitary as in
  \cref{x:universal-isometry-cond-erasure-J} and define the thermal operation
  \begin{align}
    \mathcal{R}_{SMJ}(\cdot) = \tr_{A}`\big[ W^{(m)}_{SMAJ}
    `\big((\cdot)\otimes\gamma_{A}) W^{(m)\,\dagger}_{SMAJ}]\ .
    \label{eq:univ-cond-eras-PBD-Rmap}
  \end{align}
  Identifying $J$ as an information battery, the associated effective work
  process of $\mathcal{R}_{SMJ}$ with respect to $(\tau_J^m,\ket0_J)$ is
  \begin{align}
    \mathcal{T}_{SM\to SM}(\cdot)
    &= \tr_{A}`\big[ \bra0_J W^{(m)}_{SMAJ}
      `\big((\cdot)\otimes\gamma_A\otimes\tau_J^{m})
      W^{(m)\,\dagger}_{SMAJ} \ket0_J ]\ .
  \end{align}
  Let $\rho_{SM}\in\mathcal{S}_{SM}'$ and let $\ket\rho_{SMR}$ be a purification
  of $\rho_{SM}$.  We have that the state vector
  \begin{align}
  e^{-m/2}\sum_{j=1}^{e^m} \bra0_J
  W_{SMAJ}^{m}`\big(\ket{\rho}_{SMR}\otimes\ket\gamma_{AA'}^{\otimes
    e^m}\otimes \ket{j}_J) \otimes\ket{j}_{R_J}
  \end{align}
  is a purification of
  $\mathcal{T}_{SM\to SM}(\rho_{SMR})$, where $R_J$ is an additional register.
  Similarly, the state vector
  \begin{align}
  e^{-m/2}\sum_{j=1}^{e^m}\ket{\hat\tau^j(\rho_{SMR})}_{RMSAA'}\otimes\ket{j}_{R_J}
  \end{align}
  is a purification of $\gamma_S\otimes\rho_{MR}$. Then, with Uhlmann's theorem
  we find
  \begin{align}
    \hspace*{1em}&\hspace*{-1em}
    F`\big( \mathcal{T}_{SM\to SM}(\rho_{SMR}) , \gamma_S\otimes\rho_{MR})
    \nonumber\\
    &\geqslant
      e^{-m}\sum_{j=1}^{e^m}
      \Re`*{
    `\big( \bra{\hat\tau^j(\rho_{SMR})}_{RMSAA'}\otimes\bra0_J ) \;
    W^{(m)}_{SMAJ} \;
      `\big(\ket\rho_{RMS}\otimes
          \ket\gamma_{A_\cdot A'_\cdot}^{\otimes e^m}\otimes \ket{j}_J)
      }
    \nonumber\\
    &\geqslant
      1 - (2\kappa+4\kappa')\ ,
  \end{align}
  making use of~\eqref{eq:overlap-Wm-on-rhoinit-with-tauj-final}.
\end{proof}

\subsection{Construction via universal conditional erasure}

This section is devoted to the proof of \cref{x:universal-magical-thermal-op}.
The strategy is to exploit the fact that time-covariant processes admit a
Stinespring dilation with an energy-conserving unitary using an environment
system with a separate Hamiltonian. This property enables us to map the problem
of implementing such a process directly to a conditional erasure problem with a
system and memory that are non-interacting.

The following lemma formalizes the property of time-covariant processes we make
use of. Various proofs of this lemma can be found
in~\cite{Scutaru1979RMP_covariant},~\cite[Appendix~B]{Keyl1999JMP_cloning}
and~\cite[Theorem 25]{PhDMarvian2012_symmetry}.

\begin{lemma}[Stinespring dilation of covariant  processes~{\protect\cite{Scutaru1979RMP_covariant,Keyl1999JMP_cloning,PhDMarvian2012_symmetry}}]%
  \label{x:covariant-process-H-env}
  Let $X$ be a quantum system with Hamiltonian $H_X$, and $\mathcal{E}_{X\to X}$
  be a completely positive, trace-preserving map that is covariant with respect
  to time evolution. That is, for all $t$ we have
  \begin{align}
    \mathcal{E}_{X\to X}(\ee^{-iH_X t}\,(\cdot)\,\ee^{iH_X t})
    = \ee^{-iH_X t}\, \mathcal{E}_{X\to X}(\cdot)\,\ee^{iH_X t} \ .
  \end{align}
  Then, there exists a system $E$ with Hamiltonian $H_E$ including an eigenstate $\ket{0}_E$ of zero energy, as well as a unitary $V_{EX\to EX}$ such that
  \begin{align}
    \mathcal{E}_{X\to X}(\cdot) = \tr_E`*[ V\,`*(\proj0_E \otimes (\cdot))\,V^\dagger]
    \label{eq:EXX-covariant-Stinespring}
  \end{align}
  as well as $V\, `(H_X+H_E) \, V^\dagger = H_X + H_E$.
\end{lemma}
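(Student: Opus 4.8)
The plan is to start from an arbitrary \emph{minimal} Stinespring dilation of $\mathcal{E}_{X\to X}$ and to upgrade it to an energy-conserving unitary by exploiting the essential uniqueness of that dilation, which is the route taken in the cited references. First I would fix a minimal Stinespring isometry $\tilde V\colon \mathcal{H}_X\to\mathcal{H}_X\otimes\mathcal{H}_{E'}$ with $\mathcal{E}_{X\to X}(\cdot) = \tr_{E'}[\tilde V\,(\cdot)\,\tilde V^\dagger]$. For each $t\in\mathbb{R}$, set $\tilde V_t = (\ee^{iH_Xt}\otimes\Ident_{E'})\,\tilde V\,\ee^{-iH_Xt}$; a short computation using the covariance hypothesis gives $\tr_{E'}[\tilde V_t\,(\cdot)\,\tilde V_t^\dagger] = \mathcal{E}_{X\to X}(\cdot)$, and one checks that $\tilde V_t$ is again a minimal Stinespring isometry (conjugating by the unitaries $\ee^{\pm iH_Xt}$ does not change the span of the vectors $(\bra{\phi}_X\otimes\Ident_{E'})\tilde V_t\ket{\psi}_X$). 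By uniqueness of the minimal dilation there is then a unique unitary $u_t$ on $\mathcal{H}_{E'}$ with $\tilde V_t = (\Ident_X\otimes u_t)\,\tilde V$.

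The next step is to identify $u_t$ with a time evolution. Uniqueness forces $u_0 = \Ident_{E'}$ and $u_{s+t} = u_s u_t$ for all $s,t$, so $(u_t)_{t\in\mathbb{R}}$ is a genuine one-parameter unitary group rather than merely a projective representation; smoothness of $t\mapsto\tilde V_t$ transfers to $t\mapsto u_t$ through the minimal structure, and Stone's theorem then yields $u_t = \ee^{-iH_{E'}t}$ for some Hermitian $H_{E'}$ on $\mathcal{H}_{E'}$. Substituting this back, the two expressions $\tilde V_t = (\Ident_X\otimes\ee^{-iH_{E'}t})\tilde V$ and $\tilde V_t = (\ee^{iH_Xt}\otimes\Ident_{E'})\tilde V\ee^{-iH_Xt}$ combine, upon differentiating at $t=0$, into the intertwining relation $(H_X\otimes\Ident_{E'}+\Ident_X\otimes H_{E'})\,\tilde V = \tilde V\,H_X$.

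Finally I would promote $\tilde V$ to the required unitary. Enlarge the environment by one dimension, $\mathcal{H}_E = \mathcal{H}_{E'}\oplus\mathbb{C}\ket{0}_E$ with $H_E = H_{E'}\oplus 0$, so that $\ket{0}_E$ is a zero-energy eigenstate of $H_E$, and define the partial isometry $V_0(\ket{\psi}_X\otimes\ket{0}_E) = \tilde V\ket{\psi}_X$ from $\mathcal{K}_{\mathrm{in}} = \mathcal{H}_X\otimes\ket{0}_E$ onto $\mathcal{K}_{\mathrm{out}} = \tilde V(\mathcal{H}_X)\subseteq\mathcal{H}_X\otimes\mathcal{H}_{E'}$. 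Writing $H_{\mathrm{tot}} = H_X\otimes\Ident_E+\Ident_X\otimes H_E$, both $\mathcal{K}_{\mathrm{in}}$ and $\mathcal{K}_{\mathrm{out}}$ are $H_{\mathrm{tot}}$-invariant --- the former because $H_E\ket{0}_E = 0$, the latter by the intertwining relation together with $H_E|_{\mathcal{H}_{E'}} = H_{E'}$ --- and $V_0$ conjugates $H_{\mathrm{tot}}|_{\mathcal{K}_{\mathrm{in}}}$ onto $H_{\mathrm{tot}}|_{\mathcal{K}_{\mathrm{out}}}$. Since these subspaces have equal dimension $\dim\mathcal{H}_X$, their orthogonal complements are $H_{\mathrm{tot}}$-invariant of equal dimension and carry the same spectrum with multiplicity (subtract the common spectrum of the restrictions from that of $H_{\mathrm{tot}}$), hence are unitarily equivalent; picking a conjugating unitary $V_1$ between the complements and setting $V = V_0\oplus V_1$ gives a unitary on $\mathcal{H}_X\otimes\mathcal{H}_E$ with $[V,H_{\mathrm{tot}}]=0$, i.e.\ $V(H_X+H_E)V^\dagger = H_X+H_E$, and with $\tr_E[V(\proj{0}_E\otimes(\cdot))V^\dagger] = \tr_{E'}[\tilde V(\cdot)\tilde V^\dagger] = \mathcal{E}_{X\to X}(\cdot)$.

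I expect the main obstacle to be the middle step: ensuring that $u_t$ is an honest one-parameter group, since only then can it be written as $\ee^{-iH_{E'}t}$ with a bona fide environment Hamiltonian, which in turn is what makes the zero-energy bookkeeping in the last step go through; it is precisely the uniqueness of the minimal dilation (and the fact that $\mathbb{R}$ is simply connected, so no residual $t$-dependent phase can survive) that delivers this. Once the intertwining relation is established, completing the partial isometry to a full energy-conserving unitary is the routine spectral argument on complementary invariant subspaces sketched above.
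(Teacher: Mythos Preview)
Your proposal is correct and follows essentially the same route as the paper's own proof. Both arguments extract an environment unitary $u_t$ (the paper's $W_E^{(t)}$) from the covariance hypothesis, identify it as a one-parameter group generated by some $H_E$, derive the intertwining relation $(H_X+H_E)\tilde V = \tilde V H_X$, adjoin a zero-energy eigenstate $\ket{0}_E$, and then complete the partial isometry $\tilde V\bra{0}_E$ to a full energy-conserving unitary by a spectral matching on complementary invariant subspaces. The only stylistic difference is that you invoke minimality and uniqueness of the minimal Stinespring dilation to obtain a \emph{unique} $u_t$ (which immediately gives the group law), whereas the paper works with an arbitrary dilation, applies Uhlmann's theorem to the Choi state to get $W_E^{(t)}$, and then restricts to the support of $\tr_X[V'\Phi V'^\dagger]$ to argue the representation property; your formulation is arguably the cleaner way to handle exactly the subtlety you flagged as the main obstacle.
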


We provide an additional proof in \cref{app:missing}. The main idea behind the
construction in the following proof of \cref{x:universal-magical-thermal-op} is
depicted in \cref{fig:ConstrProofThermalOp-IIDCase}.

\begin{figure}
  \centering
  \includegraphics{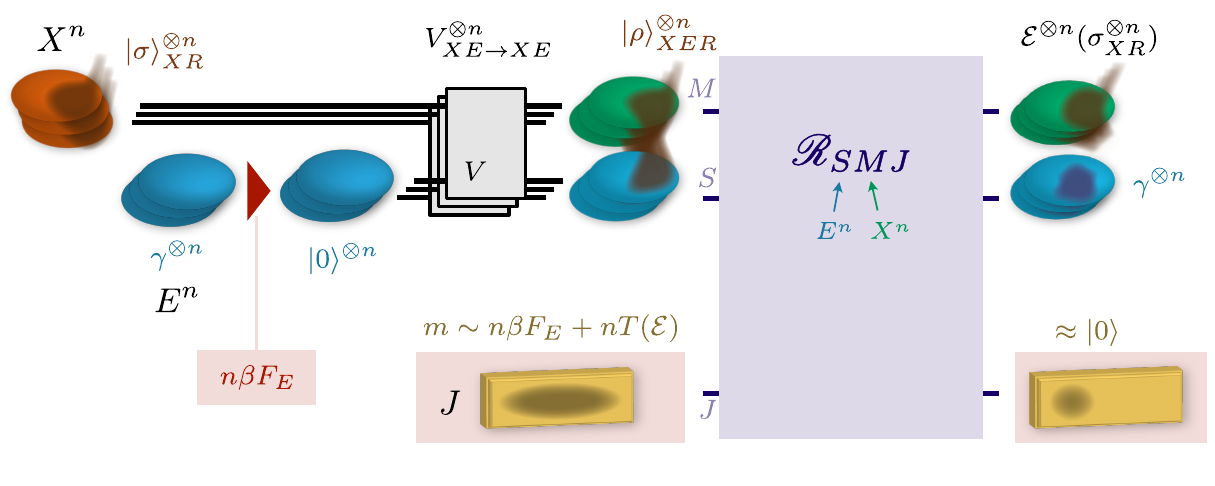}
  \caption{The conditional erasure procedure in \cref{fig:ConstrProofThermalOp}
    can be used to construct an i.i.d.\@{} implementation of a given time-covariant
    process (\cref{x:universal-magical-thermal-op}).  First we apply an
    energy-conserving Stinespring dilation of the process on all input copies,
    using a zero-initialized ancilla as environment system $E$ for each copy.
    We then invoke the conditional erasure procedure $\mathcal{R}_{E^nX^nJ}$ to
    reset $E^n$ to the thermal state $\gamma_E^{\otimes n}$ using $X'^n$ as a
    memory, while extracting work using an information battery $J$.  Here, the
    projector that can distinguish $\rho_{EX'}^{\otimes n}$ from
    $\Ident_{E^n}\otimes\rho_{X'^n}$ is the universal conditional typical
    projector given by \cref{prop:univ-quant-cond-typical-subspace}.  The fact
    that $\mathcal{R}_{E^nX^nJ}$ preserves the correlations
    $[\mathcal{E}(\sigma_{XR})]^{\otimes n}$ between the memory (output systems
    $X'^n$) and the reference $R^n$ ensures that the process is implemented
    accurately.  The amount of work extracted by $\mathcal{R}_{E^nX^nJ}$ is
    $m \sim n[\beta F_E+T(\mathcal{E})]$ but $\sim n \beta F_E$ work has to be
    paid to prepare the initially pure $E^n$ ancillas, where
    $\beta F_E=-\ln\tr(\ee^{-\beta H_E})$.  The overall work extracted is
    $\sim T(\mathcal{E})$ per copy.}
  \label{fig:ConstrProofThermalOp-IIDCase}
\end{figure}

\begin{proof}[*x:universal-magical-thermal-op]
  \allowdisplaybreaks
  Thanks to \cref{x:covariant-process-H-env}, there exists an environment system
  $E$ with Hamiltonian $H_E$, as well as an energy-conserving unitary $V_{XE}$
  and a state $\ket0_E$ of zero energy such
  that~\eqref{eq:EXX-covariant-Stinespring} holds.  Let
  $F_E = -\beta^{-1}\ln(Z_E)$ with $Z_E = \tr[\ee^{-\beta H_E}]$. We define
  \begin{align}
    x = \min_\sigma `*{ \DD{\sigma}{\ee^{-\beta H_X}}
    - \DD{\mathcal{E}(\sigma)}{\ee^{-\beta H_X}} }
    = - T(\mathcal{E}) \ .
  \end{align}
  Writing $\rho_{XE} = V_{XE} `*(\proj0_E\otimes\sigma_X) V_{XE}^\dagger$, we
  have that
  $x = \min_{\sigma_X} \bigl\{ -\Sfn(\sigma_X) + \beta\tr[\sigma_X H_X] +
    \Sfn(\rho_{X}) - \beta\tr[\rho_X H_X] \bigr\}$. By
  $\tr[\sigma_X H_X] = \tr`\big[(\proj0_E\otimes\sigma_X)(H_X + H_E)] =
  \tr`\big[\rho_{XE} `*(H_X+H_E)]$, we see that
  \begin{align}
    x  = \min_{\sigma_X} `*{
    -\Sfn(\rho_{XE}) + \Sfn(\rho_{X}) + \beta\tr[\rho_E H_E]
    }\ .
  \end{align}
  Observe that for any such $\rho_{XE}$, we have
  \begin{align}
    -\HH[\rho]{E}[X] + \beta \tr[\rho_E H_E]
    &\geqslant -\HH[\rho]{E} + \beta \tr[\rho_E H_E] + \ln(Z) - \ln(Z)
      \nonumber\\
    &= \DD{\rho_E}{\gamma_E} + \beta F_E
      \geqslant \beta F_E
  \end{align}
  using the sub-additivity of the von Neumann entropy and the fact that relative
  entropy is positive for normalized states. Hence, we have $x\geqslant \beta F_E$.

  Let
  \begin{align}
    \mathcal{S}_{E^nX^n}
    = `\big{ \rho_{EX}^{\otimes n}\ :\ \rho_{EX} =
      V_{XE}(\proj0_E\otimes\sigma_X)V_{XE}^\dagger\ \text{for some \(\sigma_X\)} }\ ,
  \end{align}
  noting that for all $\rho_{EX}^{\otimes n}\in\mathcal{S}_{E^nX^n}$, we have
  $\DD{\rho_{EX}}{\ee^{-\beta(H_X+H_E)}} - \DD{\rho_X}{\ee^{-\beta H_X}} =
  \DD{\sigma}{\ee^{-\beta H_X}} - \DD{\mathcal{E}(\sigma)}{\ee^{-\beta H_X}}
  \geqslant x$.  Let $P_{E^nX^n}^{x,\delta}$ be the universal typical and
  relative conditional operator furnished by
  \cref{thm:universal-relative-conditional-typical-smoother}, where
  $\Gamma_X = \ee^{-\beta H_X}$ and
  $\Gamma_{XE} = \ee^{-\beta`(H_X+H_E)} = \Gamma_X\otimes\Gamma_E$ with
  $\Gamma_E = \ee^{-\beta H_E}$.  Since $\Gamma_{XE}$ commutes with
  $\Ident_E\otimes\Gamma_X$,
  \cref{thm:universal-relative-conditional-typical-smoother} guarantees that
  $P_{E^nX^n}^{x,\delta}$ is a projector which furthermore commutes with
  $\Gamma_{XE}^{\otimes n}$ and $\Gamma_X^{\otimes n}$.  We proceed to show that
  $P_{E^nX^n}^{x,\delta}$ can perform a hypothesis test between
  $\rho_{EX}^{\otimes n}$ and $\gamma_E^{\otimes n}\otimes\rho_X^{\otimes n}$.
  Recalling \cref{defn:universal-relative-conditional-typical-smoother} we have
  \begin{align}
    \tr`\big[ P_{E^nX^n}^{x,\delta} \rho_{EX}^{\otimes n} ]
    &\geqslant 1- \kappa\ ,
  \end{align}
  with $\kappa = \poly`(n)\,\ee^{-n\eta}$ for some $\eta>0$ independent of
  $\rho$ and $n$.  By construction we have
  $\Ident_X\otimes\Gamma_E = \Gamma_X^{-1/2}\Gamma_{XE}\Gamma_X^{-1/2}$, and so
  thanks to Point~\ref{item:univ-rel-cond-typ-sm-conditional-Gamma-weight} of
  \cref{defn:universal-relative-conditional-typical-smoother} we can compute
  \begin{align}
    \tr_{E^n}`\big[ P_{E^nX^n}^{x,\delta} \Gamma_E^{\otimes n} ]
    &= 
    `\big(\Gamma_{X}^{-1/2})^{\otimes n}
    \tr_{E^n}`\big[ P_{E^nX^n}^{x,\delta} \Gamma_{XE}^{\otimes n} ]
    `\big(\Gamma_{X}^{-1/2})^{\otimes n}
      \nonumber\\
    &\leqslant \poly`(n) \exp`(-n(x-4\delta))\,\Ident_{X^n}\ ,
  \end{align}
  where we furthermore used the fact that $P_{E^nX^n}^{x,\delta}$ commutes with
  $\Gamma_{XE}^{\otimes n}$ and with $\Gamma_{X}^{\otimes n}$.  We therefore see
  using $\gamma_E = \Gamma_E/\tr[\Gamma_E]$ that
  \begin{align}
    \tr`\big[ P_{E^nX^n}^{x,\delta} \rho_X^{\otimes n}\otimes\gamma_E^{\otimes n} ]
    &\leqslant \frac1{\tr[\Gamma_E^{\otimes n}]} \, \poly(n)\exp`\big(-n(x-4\delta))\,
    \tr`\big[\rho_X^{\otimes n}]
    \nonumber\\
    &= \poly(n)\,\exp`\big(-n(x - \beta F_E - 4\delta))\ .
  \end{align}
  Let
  \begin{align}
    \ee^m = 
    \bigl\lfloor \exp`\big{n(x - \beta F_E - 4\delta - \eta)} \bigr\rfloor\ ,
    \label{eq:3g28foikjls}
  \end{align}
  such that
  $\tr`\big[ P_{E^nX^n}^{x,\delta} \rho_X^{\otimes n}\otimes\gamma_E^{\otimes n}
  ] \leqslant e^{-m}\kappa'$ by choosing $\kappa' = \poly(n)\ee^{-n\eta}$.

  Now let $J$ be a register of dimension at least $2\ee^m$ and let
  $\mathcal{R}_{E^nX^nJ}$ be the thermal operation furnished by
  \cref{thm:univ-conditional-erasure-PBD-as-thermalop} for $S=E^n$, $M=X^n$,
  $\mathcal{S}_{E^nX^n}$, $P_{E^nX^n}^{x,\delta}$, $m$, $\kappa$, and $\kappa'$
  as defined above.  Here, we have assumed that $x > \beta F_E$, and that
  furthermore $\delta,\eta$ are small enough such that
  $4\delta+\eta < (x-\beta F_E)$; if instead $x=\beta F_E$ then we can set
  $e^m=1$ and
  $\mathcal{R}_{E^nX^nJ}(\cdot) = \tr_{E^n}(\cdot)\otimes\gamma_E^{\otimes n}$
  (which is a thermal operation) in the following.

  We proceed to show that the effective work process
  $\mathcal{T}^{\mathcal{R}}_{E^nX^n\to E^nX^n}$ of $\mathcal{R}_{E^nX^nJ}$ with respect to
  $(\tau_J^m,\ket0_J)$ is close to the partial trace map
  $\mathcal{T}_{E^nX^n\to E^nX^n}^{(0)}(\cdot) =
  \tr_{E^n}(\cdot)\otimes\gamma_E^{\otimes n}$ in diamond distance.  We invoke
  the post-selection technique (\cref{x:post-selection-technique}) to show this.
  Let $\zeta_{E^nX^n}$ be the de Finetti state which
  via~\eqref{eq:de-Finetti-state-purified-with-poly-iid-states} can be written
  as the convex combination of a finite number of i.i.d.\@{} states
  \begin{align}
    \zeta_{E^nX^n} = \sum p_i \phi_i^{\otimes n}\ .
  \end{align}
  Hence $\zeta_{E^nX^n}$ lies in the convex hull of $\mathcal{S}_{E^nX^n}$, and
  from \cref{thm:univ-conditional-erasure-PBD-as-thermalop} and
  \cref{defn:universal-conditional-erasure} we see that for a purification
  $\ket\zeta_{E^nX^nR}$ of $\zeta_{E^nX^n}$ we have
  \begin{align}
    F`\big( \mathcal{T}^{\mathcal{R}}_{E^nX^n\to E^nX^n}(\zeta_{E^nX^nR}), 
    \gamma_E^{\otimes n}\otimes \tr_{E^n}(\zeta_{E^nX^nR}) )
    \geq 1 - (2\kappa+4\kappa')\ .
  \end{align}
  Using $D(\rho,\sigma) \leq \sqrt{1 - F(\rho,\sigma)}$ along with
  \cref{x:post-selection-technique} we find
  \begin{align}
    \frac12\norm{
    \mathcal{T}^{\mathcal{R}}_{E^nX^n\to E^nX^n} - \mathcal{T}_{E^nX^n\to E^nX^n}^{(0)}
    }_{\diamond}
    \leqslant \sqrt{2\kappa + 4\kappa'} = \poly(n)\,\ee^{-n\eta/2}\ .
  \end{align}

  We can start piecing together the full process.  Our overall protocol needs to
  (a)~bring in a heat bath $E^n$, i.e., ancillas initialized in their thermal
  state, (b)~prepare the states $\ket0_E^{\otimes n}$ on the ancillas using an
  auxiliary information battery (denoted by $W'$ below), (c)~apply the
  energy-conserving unitary $V_{XE}^{\otimes n}$, (d)~apply
  $\mathcal{R}_{E^nX^nJ}$ using an information battery $J$ initialized in the
  state $\tau_J^m$, and (e)~discard the ancillas.

  As explained in \cref{sec:resource-theory-thermodynamics}, there exists a
  thermal operation $\widetilde\Phi_{E^nW'}$ on the ancillas and an information
  battery $W'$ along with battery states $(\tau_{W'}^{(1)}, \tau_{W'}^{(2)})$
  such that
  $\widetilde\Phi_{E^nW'}( \gamma_E^{\otimes n} \otimes \tau_{W'}^{(1)} ) =
  \proj0_E^{\otimes n}\otimes \tau_{W'}^{(2)}$ and with
  $w(\tau_{W'}^{(1)}) - w(\tau_{W'}^{(2)})$ arbitrarily close to $-\beta n F_E$.
  Now let $W = J\otimes W'$,
  $\tau_W^{(\mathrm{i})} = \tau_{W'}^{(1)}\otimes \tau_J^{m}$,
  $\tau_W^{(\mathrm{f})} = \tau_{W'}^{(2)}\otimes\proj0_J$, and define
  \begin{align}
    \Phi_{X^nW}(\cdot)
    &= \tr_{E^n}`\Big[
      \mathcal{R}_{E^nX^nJ}`\Big(
      V_{XE}^{\otimes n} \;
      \widetilde\Phi_{E^nW'}`\big( (\cdot)\otimes\gamma_E^{\otimes n}) \;
      (V_{XE}^{\otimes n})^\dagger
      )
      ]\ .
  \end{align}
  The map $\Phi_{X^nW}$ is a thermal operation because it is a concatenation of
  thermal operations.  The overall heat bath is formed of the systems $E^n$, the
  ancillas $A^n$ used in the implementation of $\mathcal{R}_{E^nX^nJ}$, as well
  as the implicit heat bath used in the implementation of
  $\widetilde{\Phi}_{E^nW'}$.  The system $W=J\otimes W'$ is the information
  battery.  We can verify that the associated effective work process with
  respect to $`\big(\tau_W^{(\mathrm{i})},\tau_W^{(\mathrm{f})})$ is
  \begin{align}
    \mathcal{T}_{X^n}(\cdot)
    &= 
      \bra[\big]{0}_J
      \tr_{E^n}`\Big[
      \mathcal{R}_{E^nX^nJ}`\Big(
      V_{XE}^{\otimes n} \;
      \tr_{W'}`\big[P_{W'}^{(2)}\;
      \widetilde\Phi_{E^nW'}`\big(
      (\cdot)\otimes\tau_{W'}^{(1)}\otimes\tau_J^{m}\otimes\gamma_E^{\otimes n}
      )
      ] \;
      (V_{XE}^{\otimes n})^\dagger
      )
      ]
      \ket[\big]{0}_J
      \nonumber\\
    &=
      \tr_{E^n}`\Big[
      \bra[\big]{0}_J
      \mathcal{R}_{E^nX^nJ}`\Big(
      `\big[
      V_{XE}^{\otimes n} \; `\big(
      (\cdot)\otimes\proj0_E^{\otimes n}
      )\;
      (V_{XE}^{\otimes n})^\dagger ]
      \otimes\tau_J^{m}
      )
      \ket[\big]{0}_J
      ]
      \nonumber\\
    &=
      \tr_{E^n}`\Big[
      \mathcal{T}^{\mathcal{R}}_{E^nX^n}`\Big(
      V_{XE}^{\otimes n} \; `\big(
      (\cdot)\otimes\proj0_E^{\otimes n}
      )\;
      (V_{XE}^{\otimes n})^\dagger 
      )
      ]
      \nonumber\\
    &=
      \tr_{E^n}`\Big[
      V_{XE}^{\otimes n} \; `\big(
      (\cdot)\otimes\proj0_E^{\otimes n}
      )\;
      (V_{XE}^{\otimes n})^\dagger 
      ] + \Delta_{X^n}(\cdot)
      \nonumber\\
    &=
      \mathcal{E}_{X\to X}^{\otimes n}`(\cdot)
      + \Delta_{X^n}(\cdot)\ ,
  \end{align}
  where
  $\Delta_{X^n}(\cdot) = \tr_{E^n}`\big(
  \mathcal{T}^{\mathcal{R}}_{X^nE^n}(\cdot) - \mathcal{T}^{(0)}_{X^nE^n}(\cdot)
  )$ satisfies
  $(1/2)\norm{\Delta_{X^n}}_\diamond \leqslant \poly(n)\ee^{-n\eta/2}$.
  Therefore for any fixed $\epsilon$ and for $n$ large enough we have
  $(1/2)\norm{\mathcal{T}_{X^n} - \mathcal{E}_{X\to X}^{\otimes n}}_\diamond
  \leqslant \epsilon$.

  The associated work cost per copy satisfies
  \begin{align}
    \lim_{\delta\to0} \lim_{n\to\infty}
    \frac1n`\big[w(\tau_W^{(\mathrm{i})}) - w(\tau_W^{(\mathrm{f})})]
    &= \lim_{\delta\to0} \lim_{n\to\infty}
      \frac1n`\big[
      w(\tau_{W'}^{(1)}) - w(\tau_{W'}^{(2)}) - m
      ]
      \nonumber\\
    &=
    \lim_{\delta\to0} \lim_{n\to\infty}
      \frac1n`\big[
      -n\beta F_E
      - n(x-\beta F_E - 4\delta + \eta)
      + \upsilon
      ]
      \nonumber\\
    &=
      T(\mathcal{E})\ ,
  \end{align}
  recalling~\eqref{eq:3g28foikjls}, where $0 \leqslant \upsilon \leqslant 2$
  accounts for the rounding error in~\eqref{eq:3g28foikjls} and a possible
  arbitrarily small difference between $-n\beta F_E$ and
  $w(\tau_{W'}^{(1)}) - w(\tau_{W'}^{(2)})$, and recalling that $\eta\to 0$ as
  $\delta\to 0$.
\end{proof}

\section{Related Results}
\label{sec:bonus-results}

\subsection{Alternative proof of the asymptotic equiparition
  property of the coherent relative entropy}
\label{appx:AEP-coh-rel-entr}

Here we have a new proof of the asymptotic equipartition property (AEP) for the
coherent relative entropy, with an explicit expression of a smoothing process
which does the job.  This isn't directly useful for our universality result, but
it uses similar ideas and gives some intuition about the AEP of the coherent
relative entropy.  Furthermore, we get an explicit process that is near-optimal
in the definition of the coherent relative entropy for an i.i.d.\@{} process matrix.

Recall the definition of the relative typical
subspace~\cite{Bjelakovic2003arXiv_revisted,Berta2015QIC_monotonicity}:

\begin{proposition}[Relative typical
  projector~\protect\cite{Bjelakovic2003arXiv_revisted,Berta2015QIC_monotonicity}]
  \label{x:relative-typical-projector}
  Let $\rho,\tau\geqslant 0$ be operators on a finite dimensional
  Hilbert space $\Hs$ with $\tr(\rho)=1$, and let $\delta>0$.  There
  exists a constant $\eta>0$, and for all $n$ there exists a
  projector $\Pi^{n,\delta}_{\rho|\tau}$ such that the following conditions
  hold:
  \begin{subequations}
    \begin{gather}
      `\big[\Pi^{n,\delta}_{\rho|\tau},\tau^{\otimes n}]=0 \ ;
      \label{eq:rel-typ-proj--commuting-tau}\\
      \ee^{-n`(\MM{\rho}{\tau}+\delta)}\, \Pi^{n,\delta}_{\rho|\tau} \;\leqslant\;
      \Pi^{n,\delta}_{\rho|\tau}\,\tau^{\otimes n}\,\Pi^{n,\delta}_{\rho|\tau} \;\leqslant\;
      \ee^{-n`(\MM{\rho}{\tau}-\delta)}\, \Pi^{n,\delta}_{\rho|\tau}\ ;
      \label{eq:rel-typ-proj--equipartition}\\
      \tr`\big(\Pi^{n,\delta}_{\rho|\tau}\,\rho^{\otimes n}) \geqslant 1 - 2\ee^{- n \eta}\ ,
      \label{eq:rel-typ-proj--exp-good-weight}
    \end{gather}
  \end{subequations}
  where we have defined
  \begin{align}
    \MM{\rho}{\tau} := -\tr(\rho\ln\tau)\ .
  \end{align}
  The usual (weakly) typical projector for a state $\rho$ is obtained by
  choosing $\tau=\rho$:
  \begin{align}
    \Pi^{n,\delta}_\rho = \Pi^{n,\delta}_{\rho|\rho}\ .
  \end{align}
\end{proposition}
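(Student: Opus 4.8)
The plan is to take $\Pi^{n,\delta}_{\rho|\tau}$ to be a spectral projector of $\tau^{\otimes n}$, so that conditions~\eqref{eq:rel-typ-proj--commuting-tau} and~\eqref{eq:rel-typ-proj--equipartition} hold essentially by construction, and then to obtain~\eqref{eq:rel-typ-proj--exp-good-weight} from a Chernoff-type concentration estimate applied to the diagonal of $\rho$ in the eigenbasis of $\tau$. Throughout one assumes $\mathrm{supp}(\rho)\subseteq\mathrm{supp}(\tau)$, which is implicit in the statement since otherwise $\MM{\rho}{\tau}$ is infinite. First I would diagonalise $\tau = \sum_x t_x\,\proj{x}$ and record two facts about the numbers $p_x := \bra{x}\rho\ket{x}$: they form a probability distribution ($p_x\geqslant 0$, $\sum_x p_x = \tr[\rho] = 1$), and $p_x = 0$ whenever $t_x = 0$, because then $\ket{x}\perp\mathrm{supp}(\tau)\supseteq\mathrm{supp}(\rho)$; consequently $\sum_x p_x(-\ln t_x) = -\tr[\rho\ln\tau] = \MM{\rho}{\tau}$, with the kernel terms harmlessly absent.

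Next I would let $\mathcal{G}_n$ be the set of index strings $x^n = (x_1,\dots,x_n)$ over the eigenbasis of $\tau$ with $-\frac1n\sum_{i=1}^n\ln t_{x_i} \in [\MM{\rho}{\tau}\pm\delta]$, and set
\begin{align}
  \Pi^{n,\delta}_{\rho|\tau} &= \sum_{x^n\in\mathcal{G}_n}\proj{x^n}\ ,
  \qquad \ket{x^n} := \ket{x_1}\otimes\cdots\otimes\ket{x_n}\ ,
\end{align}
which is exactly the spectral projector of $\tau^{\otimes n} = \sum_{x^n}\bigl(\prod_i t_{x_i}\bigr)\proj{x^n}$ onto the eigenvalue window $[\ee^{-n(\MM{\rho}{\tau}+\delta)},\,\ee^{-n(\MM{\rho}{\tau}-\delta)}]$. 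Being diagonal in the eigenbasis of $\tau^{\otimes n}$, it commutes with $\tau^{\otimes n}$, which is~\eqref{eq:rel-typ-proj--commuting-tau}; and $\Pi^{n,\delta}_{\rho|\tau}\,\tau^{\otimes n}\,\Pi^{n,\delta}_{\rho|\tau} = \sum_{x^n\in\mathcal{G}_n}\bigl(\prod_i t_{x_i}\bigr)\proj{x^n}$ has all its eigenvalues inside that window by definition of $\mathcal{G}_n$, which is~\eqref{eq:rel-typ-proj--equipartition}.

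It remains to establish~\eqref{eq:rel-typ-proj--exp-good-weight}. Since $\Pi^{n,\delta}_{\rho|\tau}$ is diagonal in the $\tau^{\otimes n}$-eigenbasis, $\tr\bigl[\Pi^{n,\delta}_{\rho|\tau}\,\rho^{\otimes n}\bigr] = \sum_{x^n\in\mathcal{G}_n}\bra{x^n}\rho^{\otimes n}\ket{x^n} = \sum_{x^n\in\mathcal{G}_n}\prod_i p_{x_i}$, which is precisely the probability that $n$ i.i.d.\ samples $x_1,\dots,x_n$ drawn from $\{p_x\}$ satisfy $\bigl|\frac1n\sum_i(-\ln t_{x_i}) - \MM{\rho}{\tau}\bigr|\leqslant\delta$. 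I would then invoke a standard Chernoff--Hoeffding bound: the i.i.d.\ variables $Z_i := -\ln t_{x_i}$ have mean $\MM{\rho}{\tau}$ by the computation above and take values in the bounded interval $[-\ln\norm{\tau}_\infty,\,-\ln t_{\mathrm{min}}]$, with $t_{\mathrm{min}}$ the least nonzero eigenvalue of $\tau$, so that $\tr\bigl[\Pi^{n,\delta}_{\rho|\tau}\,\rho^{\otimes n}\bigr] \geqslant 1 - 2\ee^{-n\eta}$ with $\eta = 2\delta^2/\bigl(\ln(\norm{\tau}_\infty/t_{\mathrm{min}})\bigr)^2 > 0$ independent of $n$. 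Finally, specialising to $\tau = \rho$ gives $\MM{\rho}{\rho} = -\tr[\rho\ln\rho] = \Sfn(\rho)$ and recovers the usual weakly typical projector, so one sets $\Pi^{n,\delta}_\rho := \Pi^{n,\delta}_{\rho|\rho}$.

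The only place where genuine care is needed — and what I regard as the ``main obstacle'' — is~\eqref{eq:rel-typ-proj--exp-good-weight}, precisely because $\rho$ and $\tau$ need not commute and so $\rho^{\otimes n}$ is \emph{not} diagonal in the basis defining $\Pi^{n,\delta}_{\rho|\tau}$. What rescues the argument is that $\tr[\Pi^{n,\delta}_{\rho|\tau}\rho^{\otimes n}]$ depends on $\rho$ only through its diagonal $\{p_x\}$ in the $\tau$-eigenbasis, which is an honest probability distribution whose induced mean of $-\ln t_x$ equals $\MM{\rho}{\tau}$; the subsidiary point that $\{-\ln t_x\}$ is bounded, needed for the exponential tail, is exactly where the hypothesis $\mathrm{supp}(\rho)\subseteq\mathrm{supp}(\tau)$ (forcing $p_x = 0$ on $\ker\tau$) enters.
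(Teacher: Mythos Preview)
Your proposal is correct and follows essentially the same approach as the paper: the paper defers the construction and properties~\eqref{eq:rel-typ-proj--commuting-tau}--\eqref{eq:rel-typ-proj--equipartition} to the cited references (your explicit spectral-projector construction is the standard one), and for~\eqref{eq:rel-typ-proj--exp-good-weight} it likewise observes that $\tr\bigl[(\Ident-\Pi^{n,\delta}_{\rho|\tau})\rho^{\otimes n}\bigr]$ is the probability that $n$ i.i.d.\ measurements of the observable $-\ln\tau$ on $\rho$ average more than $\delta$ away from $\MM{\rho}{\tau}$, and bounds this via Hoeffding's inequality. Your remark that only the diagonal $\{p_x\}$ of $\rho$ in the $\tau$-eigenbasis enters is exactly the paper's ``measurement of $-\ln\tau$'' framing.
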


The construction of the relative typical projector, as well as the proof of
properties~\eqref{eq:rel-typ-proj--commuting-tau}
and~\eqref{eq:rel-typ-proj--equipartition} are presented in
Refs.~\cite{Bjelakovic2003arXiv_revisted,Berta2015QIC_monotonicity}.  Here we
show property~\eqref{eq:rel-typ-proj--exp-good-weight}.

\begin{proof}[*x:relative-typical-projector]
  The construction of
  Refs.~\cite{Bjelakovic2003arXiv_revisted,Berta2015QIC_monotonicity} satisfies
  properties~\eqref{eq:rel-typ-proj--commuting-tau}
  and~\eqref{eq:rel-typ-proj--equipartition}; it remains to
  prove~\eqref{eq:rel-typ-proj--exp-good-weight}.  Consider the quantity
  $\tr`\big( (\Ident - \Pi^{n,\delta}_{\rho|\tau})\,\rho^{\otimes n} )$, and
  note that it corresponds to the probability that a sequence of measurements of
  copies $\rho$ of the observable $-\ln(\tau)$ ensemble averages to a quantity
  that is $\delta$-far from $\MM{\rho}{\tau}=-\tr(\rho\ln\tau)$.  Let $Z_j$ for
  $j=1,\ldots,n$ be random variables where $Z_j$ is the outcome of the
  measurement of $-\ln(\tau)$ on the $j$-th system of $\rho^{\otimes n}$.  Then
  using Hoeffding's inequality we find
  \begin{align}
    \tr`\big( (\Ident - \Pi^{n,\delta}_{\rho|\tau})\,\rho^{\otimes n} )
    =
    \Pr`*[ \abs*{ \frac1n \sum Z_j  - `\big(-\tr(\rho\ln\tau)) } > \delta ] \leqslant
    2\exp`*( -n\eta )\ ,
  \end{align}
  for some $\eta \geqslant \delta^2/\norm{-\ln\tau}_\infty$, noting that the
  difference between the maximum and minimum eigenvalues of $-\ln\tau$ is upper
  bounded by $2\norm{-\ln\tau}_\infty$.
\end{proof}

Now we may present the new proof of the asymptotic equipartition property of the
coherent relative entropy.

\begin{proposition}
  \label{prop:AEP-coh-rel-entr-typicality}
  \allowdisplaybreaks
  Let $\Gamma_X,\Gamma_{X'}\geqslant 0$, let $R_X\simeq X$ and let
  $\ket{\sigma}_{X:R_X}$ be any state.  Write
  $\rho_{X'R_X} = \mathcal{E}(\sigma_{XR_X}) =
  \rho_{R_X}^{1/2}\,E_{X'R_X}\,\rho_{R_X}^{1/2}$ writing
  $E_{X'R_X} = \mathcal{E}(\Phi_{X:R_X})$.  Let
  $\Gamma_X,\Gamma_{X'}\geqslant 0$. For any $\delta>0$, and for any $n$, let
  \begin{equation}
    \begin{aligned}
      S_{X'^n} &= \Pi^{n,\delta}_{\rho_{X'}|\Gamma_{X'}^{-1}}\ ;
      &
      Q_{X'^n} &= \Pi^{n,\delta}_{\rho_{X'}}\ ;
      &
      P_{X^n} &= \Pi^{n,\delta}_{\sigma_{X}|\Gamma_{R_X}}\ ,
      &
      R_{X^n} &= \Pi^{n,\delta}_{\sigma_{X}}\ .
    \end{aligned}
  \end{equation}
  Then the completely positive map
  \begin{align}
    \mathcal{T}_{X^n\to X'^n}(\cdot)
    = S_{X'^n}\,Q_{X'^n}\, \mathcal{E}_{X\to X'}^{\otimes n}`\big(
    R_{X^n} \, P_{X^n}\, (\cdot) \, P_{X^n} \,R_{X^n}
    ) \,Q_{X'^n}\,S_{X'^n}\ ,
  \end{align}
  is trace-nonincreasing and satisfies
  \begin{subequations}
    \begin{gather}
      \mathcal{T}_{X^n\to X'^n}(\Gamma_X^{\otimes n})
      \leqslant
      \ee^{-n`*[ \DD{\sigma_X}{\Gamma_X} - \DD{\rho_{X'}}{\Gamma_{X'}} - 4\delta ]}\,
      \Gamma_{X'}^{\otimes n}\ ;
      \label{eq:prop-AEP-coh-rel-entr-typicality-cond-T-Gamma}
      \\
      P`\big(\mathcal{T}_{X^n\to X'^n}`\big(\sigma_{XR_X}^{\otimes n}),
      \rho_{X'R_X}^{\otimes n})
      \leqslant 4\ee^{-n\eta'}\ ,
      \label{eq:prop-AEP-coh-rel-entr-typicality-cond-T-rho-close}
    \end{gather}
  \end{subequations}
  for some $\eta'>0$.  This implies that for any $\epsilon>0$,
  \begin{align}
    \lim_{n\to\infty}
    \frac1n \DCohz[\epsilon]`\big{*\rho_{X'R_X}^{\otimes n}}{X^n}{X'^n}%
    {\Gamma_X^{\otimes n}}{\Gamma_{X'}^{\otimes n}}
    \geqslant  \DD{\sigma_X}{\Gamma_X}
    - \DD{\rho_{X'}}{\Gamma_{X'}}\ .
  \end{align}
\end{proposition}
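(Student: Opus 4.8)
The plan is to verify the three itemized properties of the explicit map $\mathcal{T}_{X^n\to X'^n}$ and then read off the asymptotic lower bound directly from the definition~\eqref{eq:coherent-relative-entropy}. Trace non-increasingness is the easy part: $P_{X^n},R_{X^n},Q_{X'^n},S_{X'^n}$ are projectors, hence $\leqslant\Ident$, and $\mathcal{E}_{X\to X'}^{\dagger}$ is unital since $\mathcal{E}_{X\to X'}$ is trace-preserving; computing the adjoint gives $\mathcal{T}^{\dagger}(\Ident)=P_{X^n}R_{X^n}(\mathcal{E}^{\dagger})^{\otimes n}(Q_{X'^n}S_{X'^n}Q_{X'^n})R_{X^n}P_{X^n}$, and using $Q_{X'^n}S_{X'^n}Q_{X'^n}\leqslant Q_{X'^n}\leqslant\Ident$ (as $S_{X'^n}\leqslant\Ident$), then positivity and unitality of $(\mathcal{E}^{\dagger})^{\otimes n}$, then $R_{X^n},P_{X^n}\leqslant\Ident$, one obtains $\mathcal{T}^{\dagger}(\Ident)\leqslant P_{X^n}R_{X^n}P_{X^n}\leqslant\Ident$.

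For the operator bound~\eqref{eq:prop-AEP-coh-rel-entr-typicality-cond-T-Gamma}, I would feed $\Gamma_X^{\otimes n}$ through the four projectors using the equipartition estimates of \cref{x:relative-typical-projector}, each projector accounting for one factor $\ee^{\pm n\delta}$. Working from the inside out: relative typicality of $P_{X^n}$ turns $\Gamma_X^{\otimes n}$ into $\ee^{-n(\MM{\sigma_X}{\Gamma_X}-\delta)}P_{X^n}$ (sandwiched between the $P_{X^n}$'s); the ordinary typical projector $R_{X^n}$ commutes with $\sigma_X^{\otimes n}$ so that its lower equipartition bound inverts to $R_{X^n}\leqslant\ee^{n(\Hfn(\sigma_X)+\delta)}\sigma_X^{\otimes n}$, and since $\mathcal{E}^{\otimes n}(\sigma_X^{\otimes n})=\rho_{X'}^{\otimes n}$ and $\MM{\sigma_X}{\Gamma_X}=\DD{\sigma_X}{\Gamma_X}+\Hfn(\sigma_X)$, applying the positive map $\mathcal{E}^{\otimes n}$ yields $\mathcal{E}^{\otimes n}(R_{X^n}P_{X^n}\Gamma_X^{\otimes n}P_{X^n}R_{X^n})\leqslant\ee^{-n(\DD{\sigma_X}{\Gamma_X}-2\delta)}\rho_{X'}^{\otimes n}$. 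On the output side, $Q_{X'^n}$ replaces $\rho_{X'}^{\otimes n}$ by $\ee^{-n(\Hfn(\rho_{X'})-\delta)}Q_{X'^n}$, and relative typicality of $S_{X'^n}=\Pi^{n,\delta}_{\rho_{X'}|\Gamma_{X'}^{-1}}$ (which commutes with $\Gamma_{X'}^{\otimes n}$) gives, after multiplying its equipartition bound for $(\Gamma_{X'}^{-1})^{\otimes n}$ by $\Gamma_{X'}^{\otimes n}$ and using $\MM{\rho_{X'}}{\Gamma_{X'}^{-1}}=-\Hfn(\rho_{X'})-\DD{\rho_{X'}}{\Gamma_{X'}}$, that $S_{X'^n}\leqslant\ee^{n(\Hfn(\rho_{X'})+\DD{\rho_{X'}}{\Gamma_{X'}}+\delta)}\Gamma_{X'}^{\otimes n}$. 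Conjugating the middle bound by $S_{X'^n}Q_{X'^n}$, substituting, and collecting the exponents, the $\Hfn$ terms cancel and exactly $\mathcal{T}(\Gamma_X^{\otimes n})\leqslant\ee^{-n(\DD{\sigma_X}{\Gamma_X}-\DD{\rho_{X'}}{\Gamma_{X'}}-4\delta)}\Gamma_{X'}^{\otimes n}$ remains.

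For the closeness~\eqref{eq:prop-AEP-coh-rel-entr-typicality-cond-T-rho-close}, I would telescope through the sub-normalized states $\tau_0=\sigma_{XR_X}^{\otimes n}$, $\tau_1=P_{X^n}\tau_0P_{X^n}$, $\tau_2=R_{X^n}\tau_1R_{X^n}$, $\tau_3=\mathcal{E}^{\otimes n}(\tau_2)$, $\tau_4=Q_{X'^n}\tau_3Q_{X'^n}$, $\tau_5=S_{X'^n}\tau_4S_{X'^n}=\mathcal{T}(\sigma_{XR_X}^{\otimes n})$. By the triangle inequality for the purified distance together with its monotonicity under the channel $\mathcal{E}^{\otimes n}$ (which gives $P(\tau_3,\rho_{X'R_X}^{\otimes n})=P(\mathcal{E}^{\otimes n}(\tau_2),\mathcal{E}^{\otimes n}(\sigma_{XR_X}^{\otimes n}))\leqslant P(\tau_2,\tau_1)+P(\tau_1,\tau_0)$) one gets $P(\tau_5,\rho_{X'R_X}^{\otimes n})\leqslant P(\tau_5,\tau_4)+P(\tau_4,\tau_3)+P(\tau_2,\tau_1)+P(\tau_1,\tau_0)$. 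Each of these four terms is controlled by the gentle-measurement lemma: property~\eqref{eq:rel-typ-proj--exp-good-weight} of \cref{x:relative-typical-projector} gives $\tr(P_{X^n}\sigma_X^{\otimes n})$, $\tr(R_{X^n}\sigma_X^{\otimes n})$, $\tr(Q_{X'^n}\rho_{X'}^{\otimes n})$ and $\tr(S_{X'^n}\rho_{X'}^{\otimes n})\geqslant1-2\ee^{-n\eta}$, and since each $\tau_i$ is itself within $\poly(n)\ee^{-n\eta}$ in trace distance of the relevant i.i.d.\@ state (using $D\leqslant P$ and the preceding steps), the relevant projector weight on $\tau_i$ stays exponentially close to its full trace, so each of the four terms is $O(\ee^{-n\eta'})$ for a common $\eta'>0$. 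Summing yields $P(\mathcal{T}(\sigma_{XR_X}^{\otimes n}),\rho_{X'R_X}^{\otimes n})\leqslant4\ee^{-n\eta'}$.

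Given all three properties, the final claim is immediate: for $n$ large enough that $4\ee^{-n\eta'}\leqslant\epsilon$, the map $\mathcal{T}_{X^n\to X'^n}$ is an admissible candidate in the minimization~\eqref{eq:coherent-relative-entropy} defining the coherent relative entropy of $\rho_{X'R_X}^{\otimes n}$ relative to $\Gamma_X^{\otimes n},\Gamma_{X'}^{\otimes n}$, achieving the value $\alpha=\ee^{-n(\DD{\sigma_X}{\Gamma_X}-\DD{\rho_{X'}}{\Gamma_{X'}}-4\delta)}$; hence $\frac1n$ times that quantity is $\geqslant\DD{\sigma_X}{\Gamma_X}-\DD{\rho_{X'}}{\Gamma_{X'}}-4\delta$ for all large $n$, and letting $n\to\infty$ and then $\delta\to0$ gives the stated bound. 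The step I expect to be most delicate is the exponent accounting in the second paragraph: pairing each ordinary typical projector with a relative typical one (rather than using a single relative typical projector) is precisely what makes the von Neumann entropies $\Hfn(\sigma_X),\Hfn(\rho_{X'})$ cancel while the cross-entropies $\MM{\sigma_X}{\Gamma_X},\MM{\rho_{X'}}{\Gamma_{X'}^{-1}}$ assemble into the two relative entropies with only $4\delta$ of slack, and it is also what pins down the correct reading of ``$\Gamma_{R_X}$'' in the definition of $P_{X^n}$ (the operator $\Gamma_X$ read through the identification $R_X\simeq X$).
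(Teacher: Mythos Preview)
Your proposal is correct and takes a genuinely different route from the paper on both technical points.

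For the operator bound~\eqref{eq:prop-AEP-coh-rel-entr-typicality-cond-T-Gamma}, the paper works in the Stinespring picture: it purifies $\mathcal{E}$ via $V_{X\to X'E}$, transfers $R_{X^n}$ to a dual projector $R_{X'^nE^n}=V^{\otimes n}R_{X^n}(V^\dagger)^{\otimes n}$, exploits $E_{X'E}\leqslant\Ident$, and then bounds $Q_{X'^n}\tr_{E^n}[R_{X'^nE^n}]Q_{X'^n}$ via the typical projector on $\rho_{X'E}^{\otimes n}$ (using $\Hfn(\rho_{X'E})=\Hfn(\sigma_X)$). Your argument never leaves the input/output picture: the single inequality $R_{X^n}\leqslant\ee^{n(\Hfn(\sigma_X)+\delta)}\sigma_X^{\otimes n}$ followed by positivity of $\mathcal{E}^{\otimes n}$ lands directly on $\rho_{X'}^{\otimes n}$, after which $Q_{X'^n}$ and $S_{X'^n}$ are applied exactly as you describe. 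This is strictly more elementary than the paper's route; the paper's detour through the environment is closer in spirit to the universal construction of \cref{sec:proof-Weps-iid} but is not needed here.

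For the closeness bound~\eqref{eq:prop-AEP-coh-rel-entr-typicality-cond-T-rho-close}, the paper instead computes a single purified overlap $\Re\{\bra{\rho}^{\otimes n}S_{X'^n}Q_{X'^n}R_{X'^nE^n}P_{X'^nE^n}\ket{\rho}^{\otimes n}\}$ and peels off the four projectors one by one via Cauchy--Schwarz on the pure state, which produces the constant~$4$ cleanly. Your gentle-measurement telescope on sub-normalized intermediate states is equally valid; each step degrades the exponent (since you must first argue $\tau_i$ is trace-close to the relevant i.i.d.\ state before applying \cref{x:gentle-measurement-lemma}), but the resulting bound $C\ee^{-n\eta''}$ still implies $4\ee^{-n\eta'}$ for a smaller $\eta'$ because the purified distance is always $\leqslant 1$.
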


\begin{proof}[*prop:AEP-coh-rel-entr-typicality]
  It will prove convenient to work in the purified space, so let
  $E\simeq X'\otimes R_X$, and let $\ket{E}_{X'R_XE}$ be a purification of
  $E_{X'R_X}$.  Let $V_{X\to X'E}$ be the corresponding isometry which satisfies
  $\ket{E}_{X'R_XE} = V_{X\to X'E}\,\ket\Phi_{X:R_X}$; this isometry is just a
  Stinespring dilation of $\mathcal{E}$.  Now define
  \begin{align}
    \ket{T}_{X'^nR_X^nE^n}
    = S_{X'^n}\,Q_{X'^n}\,P_{R_X^n}\,R_{R_X^n} \; \ket{E}_{X'R_XE}^{\otimes n}\ ,
  \end{align}
  where $P_{R_X^n} = t_{X^n\to R_X^n}(P_{X^n})$ and
  $R_{R_X^n} = t_{X^n\to R_X^n}(R_{X^n})$.
  We begin by showing~\eqref{eq:prop-AEP-coh-rel-entr-typicality-cond-T-Gamma}.
  Writing $\Gamma_{R_X} = t_{X\to R_X}(\Gamma_X)$, we have
  \begin{align}
    \hspace*{1em}
    &\hspace*{-1em}
      (\Gamma_{X'}^{-1/2})^{\otimes n} \, \tr_{R_X^n}`\big[
      T_{X'^nR_X^n}\,\Gamma_{R_X}^{\otimes n}
      ] \, (\Gamma_{X'}^{-1/2})^{\otimes n}
      \nonumber\\
    &= (\Gamma_{X'}^{-1/2})^{\otimes n} \,S_{X'^n} Q_{X'^n} \;
      \tr_{R_X^n}`*[
      R_{R_X^n} \, E_{X'R_X}^{\otimes n} \, R_{R_X^n} \,
      (P_{R_X^n}\,\Gamma_{R_X}^{\otimes n}\,P_{R_X^n})
      ] \;
      Q_{X'^n} S_{X'^n}\,(\Gamma_{X'}^{-1/2})^{\otimes n}
      \nonumber\\
    &\leqslant \ee^{-n(\MM{\sigma_X}{\Gamma_X}-\delta)}\;
      (\Gamma_{X'}^{-1/2})^{\otimes n} \,S_{X'^n} Q_{X'^n} \;
      \tr_{R_X^n}`*[
      R_{R_X^n} \, E_{X'R_X}^{\otimes n} \, R_{R_X^n}
      ] \;
      Q_{X'^n} S_{X'^n}\,(\Gamma_{X'}^{-1/2})^{\otimes n} \ ,
      \label{eq:calc-fudbiafdosafnjdsa}
  \end{align}
  recalling that
  $P_{R_X^n}\,\Gamma_{R_X}^{\otimes n}\,P_{R_X^n} \leqslant
  \ee^{-n(\MM{\sigma_X}{\Gamma_X}-\delta)}\; \Ident_{R_X^n}$.  Now define
  $R_{X'^nE^n}$ as the dual projector of $R_{R_X^n}$ with respect to
  $\ket{E}_{X'R_XE}^{\otimes n}$: Indeed, we have
  $\ket{E}_{X'R_XE} = V_{X\to X'E}\ket{\Phi}_{X:R_X}$; we may thus define
  $R_{X'^nE^n} = (V_{X\to X'E})^{\otimes n}\, R_{X^n} \,
  (V^\dagger)^{\otimes n}$ in such a way that
  $R_{X'^nE^n}\,\ket{E}_{X'R_XE}^{\otimes n} = V_{X\to X'E}^{\otimes n}\,
  `\big(R_{X^n} \otimes \Ident_{R_X^n})\,\ket\Phi_{X:R_X} =
  R_{R_X^n}\,\ket{E}_{X'R_XE}^{\otimes n}$.  Then compute
  \begin{align}
    \hspace*{2em}
    &\hspace*{-2em}
    Q_{X'^n} \, \tr_{R_X^n}`*[
    R_{R_X^n} \, E_{X'R_X}^{\otimes n} \, R_{R_X^n}
    ] \, Q_{X'^n}
    \nonumber \\
    &= Q_{X'^n} \, \tr_{R_X^nE^n}`*[
    R_{X'^nE^n} \, E_{X'R_XE}^{\otimes n} \, R_{X'^nE^n}
    ] \, Q_{X'^n}
    \nonumber\\
    &= Q_{X'^n} \, \tr_{E^n}`*[
    R_{X'^nE^n} \, E_{X'E}^{\otimes n} \, R_{X'^nE^n}
    ] \, Q_{X'^n}
    \nonumber\\
    &\leqslant Q_{X'^n} \, \tr_{E^n}`*[ R_{X'^nE^n} ] \, Q_{X'^n}
    \nonumber \displaybreak[2] \\
    &\leqslant \ee^{n(S(\rho_{X'E})+\delta)}
      Q_{X'^n} \, \tr_{E^n}`*[ R_{X'^nE^n} \rho_{X'E}^{\otimes n} R_{X'^nE^n} ]
      \, Q_{X'^n}
    \nonumber\\
    &\leqslant \ee^{n(S(\rho_{X'E})+\delta)}
      Q_{X'^n} \, \tr_{E^n}`*[ \rho_{X'E}^{\otimes n} ]
      \, Q_{X'^n}
    \nonumber\\
    &= \ee^{n(S(\rho_{X'E})+\delta)}
      Q_{X'^n} \, \rho_{X'}^{\otimes n} \, Q_{X'^n}
    \nonumber\\
    &\leqslant \ee^{n(S(\rho_{X'E})-S(\rho_{X'})+2\delta)}
      Q_{X'^n}
    \nonumber\\
    &\leqslant \ee^{n(S(\sigma_{X})-S(\rho_{X'})+2\delta)}
      \Ident_{X'^n}\ .
  \end{align}
  where we have used the fact that $E_{X'E}\leqslant\Ident_{X'E}$ (since
  $E_{R_X}\leqslant\Ident_{R_X}$), the usual properties of the typical
  projectors, the fact that $[R_{X'^nE^n},\rho_{X'E}^{\otimes n}]=0$, as well as
  the fact that $S(\rho_{X'E}) = S(\rho_{R_X}) = S(\sigma_X)$ because
  $\rho_{X'R_XE}$ is a pure state.  We may then return to
  \begin{align}
    \text{\eqref{eq:calc-fudbiafdosafnjdsa}}
    &\leqslant
      \ee^{-n`(\MM{\sigma_X}{\Gamma_X}-S(\rho_{X'E})+S(\rho_{X'})-3\delta)}
      \; S_{X'^n} \, (\Gamma_{X'}^{-1})^{\otimes n} \, S_{X'^n}
      \nonumber\\
    &\leqslant
      \ee^{-n`(\MM{\sigma_X}{\Gamma_X}+\MM{\rho_{X'}}{\Gamma_{X'}^{-1}}
      -S(\rho_{X'E})+S(\rho_{X'})-4\delta)}
      \; \Ident\ ,
  \end{align}
  recalling that $S_{X'^n}$ and $(\Gamma_{X'}^{-1})^{\otimes n}$ commute.  A
  simple calculation then yields
  \begin{align}
    \hspace*{3em}
    &\hspace*{-3em}
    \MM{\sigma_X}{\Gamma_X}+\MM{\rho_{X'}}{\Gamma_{X'}^{-1}}
    -S(\sigma_{X})+S(\rho_{X'})
    \nonumber\\
    &= -\tr(\sigma_X\ln\Gamma_X) + \tr(\rho_{X'}\ln\Gamma_{X'})
    +\tr(\sigma_X\ln\sigma_X) - \tr(\rho_{X'}\ln\rho_{X'})
      \nonumber\\
    &= \DD{\sigma_X}{\Gamma_X} - \DD{\rho_{X'}}{\Gamma_{X'}}\ ,
  \end{align}
  which proves~\eqref{eq:prop-AEP-coh-rel-entr-typicality-cond-T-Gamma}.  Now we
  go for~\eqref{eq:prop-AEP-coh-rel-entr-typicality-cond-T-rho-close}.  Let
  $\eta_R, \eta_P, \eta_Q, \eta_S>0$ be the corresponding parameters provided
  by \cref{x:relative-typical-projector} for $R_{X^n}$, $P_{X^n}$,
  $Q_{X'^n}$, and $S_{X'^n}$, and let
  $\eta = \min(\eta_R, \eta_P, \eta_Q, \eta_S)$.  Then we may compute
  \begin{align}
    \hspace*{3em}
    &\hspace*{-3em}
    \Re`*{
    \bra{\rho}_{X'ER_X}^{\otimes n} `\big(\sigma_{R_X}^{1/2})^{\otimes n} \,
      \ket{T}_{X'^nR_X^nE^n} }
      \nonumber\\
    &= 
    \Re`*{
      \bra{\rho}_{X'ER_X}^{\otimes n} 
      S_{X'^n} \, Q_{X'^n} \, R_{X'^nE^n} \, P_{X'^nE^n} \,
      \ket{\rho}_{X'R_XE}^{\otimes n} }\ ,
      \label{eq:fiodanfdsoa}
  \end{align}
  where analogously to $R_{X'^nE^n}$ we define
  $P_{X'^nE^n} = V^{\otimes n} P_{X^n} (V^\dagger)^{\otimes n}$, and noting that
  $`\big(\sigma_{R_X}^{1/2})^{\otimes n}\ket{E}_{X'R_XE}^{\otimes n} =
  \ket{\rho}_{X'R_XE}^{\otimes n}$.
  Compute
  \begin{subequations}
    \begin{align}
      \norm[\big]{ (\Ident - P_{X'^nE^n}) \ket{\rho}_{X'R_XE}^{\otimes n} }^2
      &= \bra{\rho}_{X'R_XE}^{\otimes n} \, (\Ident - P_{X'^nE^n}) \,
        \ket{\rho}_{X'R_XE}^{\otimes n} \nonumber\\
      & \qquad =
        \tr`\big(\sigma_{X}\,(\Ident - P_{X^n})) \leqslant 2\exp`(-n\eta)\ ;
      \\
      \norm[\big]{ (\Ident - R_{X'^nE^n}) \ket{\rho}_{X'R_XE}^{\otimes n} }^2
      &= \tr`\big(\sigma_{X}\,(\Ident - R_{X^n})) \leqslant 2\exp`(-n\eta)\ ;
      \\
      \norm[\big]{ (\Ident - Q_{X'^n}) \ket{\rho}_{X'R_XE}^{\otimes n} }^2
      &= \tr`\big(\rho_{X'}\,(\Ident - Q_{X'^n})) \leqslant 2\exp`(-n\eta)\ ;
      \\
      \norm[\big]{ (\Ident - S_{X'^n}) \ket{\rho}_{X'R_XE}^{\otimes n} }^2
      &= \tr`\big(\rho_{X'}\,(\Ident - S_{X'^n})) \leqslant 2\exp`(-n\eta)\ ;
    \end{align}
  \end{subequations}
  exploiting each time property~\eqref{eq:rel-typ-proj--exp-good-weight} of the
  corresponding relative typical projector.
  Now we use the fact that for any states $\ket\psi,\ket{\psi'}$, and for any
  $0\leqslant X \leqslant\Ident$, we have
  $\Re`*{\matrixel{\psi'}{X}{\psi}} = \Re`*{\braket{\psi'}{\psi}} -
  \Re`*{\matrixel{\psi'}{(\Ident-X)}{\psi}} \geqslant
  \Re`*{\braket{\psi'}{\psi}} - \norm[\big]{\ket{\psi'}}
  \norm[\big]{(\Ident-X)\ket{\psi}}$, where the last inequality holds by
  Cauchy-Schwarz.  
  Then
  \begin{align}
    \text{\eqref{eq:fiodanfdsoa}}
    &\geqslant \Re`*{
      \bra{\rho}_{X'ER_X}^{\otimes n} 
      S_{X'^n} \, Q_{X'^n} \, R_{X'^nE^n} \, 
      \ket{\rho}_{X'R_XE}^{\otimes n} } - \sqrt{2}\exp(-n\eta/2)
      \nonumber\\
    &\geqslant \Re`*{
      \bra{\rho}_{X'ER_X}^{\otimes n} 
      S_{X'^n} \, Q_{X'^n} \, 
      \ket{\rho}_{X'R_XE}^{\otimes n} } - 2 \sqrt{2}\exp(-n\eta/2)
      \nonumber\\
    &\geqslant \ldots
      \nonumber\\
    &\geqslant 1 - 4\sqrt{2}\exp(-n\eta/2)\ .
  \end{align}
  Since $\ket\rho_{X'R_XE}^{\otimes n}$ is a purification of
  $\rho_{X'R_X}^{\otimes n}$, and
  $`\big(\sigma_{R_X}^{1/2})^{\otimes n}\ket{T}_{X'^nR_X^nE^n}$ is a
  purification of $\mathcal{T}_{X^n\to X'^n}(\sigma_{XR_X}^{\otimes n})$, we
  have
  \begin{align}
    F`\big(\rho_{X'R_X}^{\otimes n},
    \mathcal{T}_{X^n\to X'^n}(\sigma_{XR_X}^{\otimes n}))
    &\geqslant \abs[\big]{
      \bra{\rho}_{X'ER_X}^{\otimes n} `\big(\sigma_{R_X}^{1/2})^{\otimes n}
      \ket{T}_{X'^nR_X^nE^n}
    }
      \nonumber\\
    &\geqslant \Re`\big{
    \bra{\rho}_{X'ER_X}^{\otimes n} `\big(\sigma_{R_X}^{1/2})^{\otimes n}
      \ket{T}_{X'^nR_X^nE^n} }
      \nonumber\\
    &\geqslant 1 - 4\sqrt{2}\exp(-n\eta/2)\ .
  \end{align}
  Hence
  \begin{align}
    P`*(\rho_{X'R_X}^{\otimes n}, \mathcal{T}_{X^n\to X'^n}(\sigma_{XR_X}^{\otimes n}))
    \leqslant \sqrt{1 - (1 - 4\sqrt{2}\exp(-n\eta/2))^2}
    \leqslant \sqrt{8\sqrt{2}}\,\exp(-n\eta/4)\ ,
  \end{align}
  using the fact that $\sqrt{1 - (1-x)^2} \leqslant \sqrt{2x}$, thus
  proving~\eqref{eq:prop-AEP-coh-rel-entr-typicality-cond-T-rho-close} noting
  that $(8\sqrt{2})^{1/2}\leqslant 4$.
\end{proof}

\subsection{Optimal implementation of any i.i.d.\@{}
  channel with thermal operations on an i.i.d.\@{}
  input state}
\label{appx:impl-TO-any-iidchannel-iidinput}

Here, we show that for any channel $\mathcal{E}_{X\to X'}$, mapping a system $X$
with Hamiltonian $H_X$ to a system $X'$ with Hamiltonian $H_{X'}$, and for any
fixed input state $\sigma_X$, then there exists an optimal implementation with
thermal operations which uses a small amount of coherence, and an amount of work
per copy which is asymptotically equal to
\begin{align}
  W_\mathrm{T.O.} =
  \beta^{-1} \DD{\mathcal{E}(\sigma)}{\ee^{-\beta H_{X'}}}
  - \beta^{-1} \DD{\sigma}{\ee^{-\beta H_X}}\ .
\end{align}

The coherence is counted using a half-infinite energy latter with spacing $x$,
i.e., with Hamiltonian
$  H_C = \sum_{k=0}^{d_C} k x \proj{k}_C$,
as considered in Ref.~\cite{Aberg2014PRL_catalytic}. The state is initialized in
the state $\ket{\eta^L} := L^{-1/2} \sum_{k=0}^{L-1} \ket{\ell_0+k}$, where
$\ell_0$ is a base energy offset.  Such a system may be consumed entirely by the
process, i.e., at the end of the process we may return it in any state we like.
We assume that, in some reasonable setting, such a state on such a system can be
prepared using an amount of work of the same order as the dimension of the
system $d_C$.  In the following, $d_C$ will be sublinear in the number of copies
$n$, so asymptotically for $n\to\infty$, the coherence source will require
negligible resources to create when counted per copy.  Actually, we will use two
such sources $C_1, C_2$: We think of each of these as single-use disposable
systems, and we need such systems at two stages in our protocol.

\begin{proposition}
  \label{prop:iid-implementation-of-coherent-channel-with-thermal-operation}
  Let $X,X'$ be systems, let $H_X, H_{X'}$ be the corresponding Hamiltonians,
  and let $\beta\geqslant 0$.  Let $\mathcal{E}_{X\to X'}$ be a completely
  positive, trace-preserving map, and let $\sigma_X$ be any input state.  Let
  $R\simeq X$ and let $\ket\sigma_{XR} = \sigma_X^{1/2}\,\ket\Phi_{X:R}$ be a
  purification of $\sigma_X$.
  Then for any $0<\theta<1/3$, for any $n$ and for any $\delta>0$ there exists a
  thermal operation which acts on $X\to X'$ and an information battery $W$ whose
  process matrix is $\epsilon$-close to
  $\mathcal{E}^{\otimes n}_{X\to X'}(\sigma_{XR}^{\otimes n})$ and which uses
  two coherence sources $C_{1},C_{2}$ and an amount of work
  $W^{n}_{\mathrm{T.O.}}$, with
  \begin{subequations}
    \begin{align}
      \frac1n W^{n}_{\mathrm{T.O.}}
      &\leqslant F(\rho_{X'}, H_{X'}) - F(\sigma_X, H_X)
        + O(n^{-1/2}) + O(n^{-1}\log(1/\theta))\ ; \\
      \epsilon &\leqslant 3\theta + O(\exp(-cn\delta^2))\ ;\\
      d_{C_i}/n &\leqslant O`\big( \delta/\theta^2 )\ ,
    \end{align}
  \end{subequations}
  with $F(\sigma, H) = \DD{\sigma}{\ee^{-\beta H}}$ and for some $c>0$ depending
  only on $H_X,H_{X'}$.
\end{proposition}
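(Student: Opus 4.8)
The plan is to adapt Construction~\#3 (\cref{x:universal-magical-thermal-op}): perform a Stinespring dilation of $\mathcal{E}$ explicitly with an ancillary environment, then run the thermodynamic conditional-erasure protocol to return the environment to the bath while extracting the residual work. The only new difficulty is that, unlike in the time-covariant case, the dilation unitary of a general $\mathcal{E}$ is not energy-conserving; I would pay for this with a sublinear amount of coherence in the sense of Ref.~\cite{Aberg2014PRL_catalytic}. Concretely, fix a Stinespring dilation $V_{X\to X'E}$ of $\mathcal{E}$ with an environment $E\simeq X\otimes X'$ carrying the trivial Hamiltonian $H_E=0$, so that $\mathcal{E}(\cdot)=\tr_E[U_{XE}((\cdot)\otimes\proj{0}_E)U_{XE}^\dagger]$ for a unitary $U_{XE}$ and a fixed zero-energy state $\ket{0}_E$. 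Write $\rho_{X'E}=U_{XE}(\sigma_X\otimes\proj{0}_E)U_{XE}^\dagger$ and $\rho_{X'R}=\mathcal{E}(\sigma_{XR})$; trace preservation gives $\rho_R=\sigma_R$, hence $H(\rho_{X'E})=H(\sigma_X)$ and $H(E|X')_\rho=H(\sigma_X)-H(\rho_{X'})$.

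\textbf{Phase~I (prepare and dilate).}
First I would bring in $E^n$ in its thermal state $\gamma_E^{\otimes n}=(\Ident/d_E)^{\otimes n}$ for free and reset it to $\proj{0}_E^{\otimes n}$ by a thermal operation on an information battery; by \cref{thm:thermal-to-zero-thermomaj-workcost} this costs $n\ln d_E$ nats. I would then implement $U_{XE}^{\otimes n}$. Since this unitary is not energy-conserving, I would first project $X^n$ and $X'^n$ onto the $\delta$-energy-typical subspaces of $H_X^{\otimes n}$ and $H_{X'}^{\otimes n}$ (\cref{prop:energy-measurement-POVM-n-systems}), which is free up to error $O(\ee^{-cn\delta^2})$ for the i.i.d.\@ states at hand and, crucially, confines the relevant total-energy window to width $O(n\delta)$; a half-infinite coherence ladder $C_1$ of dimension $d_{C_1}=O(n\delta/\theta^2)$ then realizes $U_{XE}^{\otimes n}$ up to error $\theta$ as an energy-conserving global unitary on system, information battery, ladder, and bath, depositing the average energy change $n\beta(\tr[H_{X'}\rho_{X'}]-\tr[H_X\sigma_X])$ as work and leaving $C_1$ negligibly disturbed; I then discard $C_1$. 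After Phase~I the state on $X'^nE^n$ is $\theta$-plus-exponentially close to $\rho_{X'E}^{\otimes n}$, still correlated with $R^n$ through $X'^n$.

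\textbf{Phase~II (conditional erasure of the environment) and work ledger.}
Next I would invoke \cref{thm:univ-conditional-erasure-PBD-as-thermalop} with $S=E^n$, $M=X'^n$, $\sigma_S=\gamma_E^{\otimes n}$ and $\mathcal{S}_{SM}=\{\rho_{X'E}^{\otimes n}\}$, using as hypothesis-testing operator the relative typical projector $P=\Pi^{n,\delta}_{\rho_{X'E}\,|\,\gamma_E\otimes\rho_{X'}}$ of \cref{x:relative-typical-projector} (or a near-optimal second-order test, controlling the sub-leading terms). It satisfies $\tr[P\,\rho_{X'E}^{\otimes n}]\geqslant 1-\kappa$ with exponentially small $\kappa$, and since it commutes with $(\gamma_E\otimes\rho_{X'})^{\otimes n}$ and $\gamma_E=\Ident/d_E$ it obeys $\tr[P\,(\gamma_E^{\otimes n}\otimes\rho_{X'}^{\otimes n})]\leqslant \ee^{-n(\ln d_E-H(E|X')_\rho-\delta)}$; choosing $\ee^m$ a factor $\kappa'=\theta^2/16$ below this bound makes the erasure error $\sqrt{2\kappa+4\kappa'}\leqslant\theta$ while costing $O(n^{-1}\log(1/\theta))$ per copy in extracted work. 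Because $P$ need not commute with $H_{X'}^{\otimes n}$, the position-based-decoding unitary $W^{(2)}$ built from $P$ in \cref{x:universal-isometry-cond-erasure-J} is not energy-conserving, so I would use a second ladder $C_2$ of dimension $d_{C_2}=O(n\delta/\theta^2)$ (again after an energy-window projection) to realize it up to a further error $\theta$ and then discard $C_2$; the conditional swaps $W^{(1)}$ are already energy-conserving since $H_{A_j}=H_E=0$. The resulting thermal operation resets $E^n$ to $\gamma_E^{\otimes n}$, leaves $\rho_{X'^nR^n}$ intact, and extracts $m\approx n(\ln d_E-H(E|X')_\rho)$ nats; after discarding $E^n$, the overall map $X^n\to X'^n$ is a concatenation of thermal operations whose output on $\sigma_{XR}^{\otimes n}$ is within $3\theta+O(\ee^{-cn\delta^2})$ of $\mathcal{E}^{\otimes n}(\sigma_{XR}^{\otimes n})$, the three $\theta$'s coming from $C_1$, $C_2$ and the position-based decoding. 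Summing the ledger, $+n\ln d_E$ (preparation), $+n\beta(\tr[H_{X'}\rho_{X'}]-\tr[H_X\sigma_X])$ (dilation), $-m$ (erasure), and using $H(E|X')_\rho=H(\sigma_X)-H(\rho_{X'})$, gives a per-copy cost $\beta\tr[H_{X'}\rho_{X'}]-\beta\tr[H_X\sigma_X]+H(\sigma_X)-H(\rho_{X'})+O(n^{-1/2})+O(n^{-1}\log(1/\theta))$, which equals $\DD{\rho_{X'}}{\ee^{-\beta H_{X'}}}-\DD{\sigma_X}{\ee^{-\beta H_X}}$ up to the stated corrections; optimality of the rate follows separately from \cref{lemma:main-thm-easy} with the AEP~\eqref{eq:coh-rel-entr-AEP}, since a thermal operation is in particular Gibbs-preserving.

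\textbf{Main obstacle.}
The hardest part will be the rigorous implementation and error accounting of the two coherence-assisted unitaries: one must verify that the catalytic-coherence construction can be carried out \emph{within} thermal operations (an energy-conserving unitary on system, information battery, bath, and a disposable ladder), that the ladder may indeed be taken of size sublinear in $n$ once the state is confined to an energy window of width $O(n\delta)$, and that the induced operational errors and the transferred work compose correctly with the typicality and position-based-decoding errors of the two phases. A secondary subtlety, already visible above, is that the relative typical projector used as hypothesis test does not commute with the memory Hamiltonian — precisely what forces the second coherence source — so one must check that dressing $W^{(2)}$ with $C_2$ preserves the decoding guarantee of \cref{x:universal-isometry-cond-erasure-J}.
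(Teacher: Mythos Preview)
Your route is genuinely different from the paper's. The paper does not try to implement the Stinespring dilation or the conditional erasure under the real Hamiltonians. Instead it (i) projects onto the input energy-typical window, (ii) uses $C_1$ via \cref{lemma:flatten-Hamiltonian-using-coherence} to \emph{flatten} $H_X^{\otimes n}$ to a constant Hamiltonian on that window, (iii) invokes the trivial-Hamiltonian achievability result of Ref.~\cite{Faist2015NatComm} to implement the sandwiched channel $S^{\approx_\delta}\mathcal{E}^{\otimes n}(R^{\approx_\delta}(\cdot)R^{\approx_\delta})S^{\approx_\delta}$ at work cost $n\HH[\rho]{E}[X']+O(\sqrt n)+O(\log(1/\theta))$, (iv) shifts the flat Hamiltonian by the average energy change, and (v) uses $C_2$ to unflatten to $H_{X'}^{\otimes n}$ on the output energy-typical window. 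Coherence is used only for Hamiltonian flattening (identity maps between narrow-window Hamiltonians), which is exactly what the lemma establishes; the dilation and the conditional erasure take place in the trivial-Hamiltonian world and require no coherence whatsoever.

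Your plan keeps the real Hamiltonians throughout and spends the ladders on realizing the non-energy-conserving operators $U_{XE}^{\otimes n}$ and $W^{(2)}$ directly. The gap is that \cref{lemma:flatten-Hamiltonian-using-coherence} only realizes the \emph{identity} between two Hamiltonians with eigenvalues in a narrow window; it does not say that an arbitrary contraction between two $O(n\delta)$-wide energy windows can be implemented with a ladder of size $O(n\delta/\theta^2)$. For $U_{XE}^{\otimes n}$ you can get this by factoring through the flat case (flatten, apply, unflatten)---but that is two ladder uses and is already the paper's approach. For $W^{(2)}$ the situation is more delicate: the pretty-good-measurement operators $\Omega^j_{MA}$ built from your $P$ need not keep $X'^n$ inside the $O(n\delta)$ energy window after they act, so ``project then realize with a small ladder'' is not obviously error-controlled. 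A secondary point: the relative typical projector $\Pi^{n,\delta}_{\rho_{X'E}\,|\,\gamma_E\otimes\rho_{X'}}$ alone does not yield your type-II bound, since \cref{x:relative-typical-projector} does not bound its rank; you would have to intersect with $\Pi^{n,\delta}_{\rho_{X'E}}$ or, as you note parenthetically, use a near-optimal hypothesis test. The paper's flatten-first strategy sidesteps all of these issues at once.
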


The following corollary is obtained straightforwardly from the above proposition
by choosing $\delta = n^{-1/2+\xi}$, $\theta = n^{-\xi/2}$ for any choice of
$0<\xi<1/4$.

\begin{corollary}
  \label{x:impl-TO-any-iid-channel-iidinput}
  \noproofref %
  Any i.i.d.\@{} channel $\mathcal{E}^{\otimes n}_{X\to X'}$ between $n$ copies of
  systems $X, X'$ with Hamiltonians $H_X, H_{X'}$ can be implemented on a fixed
  i.i.d.\@{} input state $\sigma_X^{\otimes n}$ using thermal operations at a work
  cost rate per copy which is asymptotically equal to
  \begin{align}
    W_{\mathrm{T.O.,\ asympt.}}
    = F(\mathcal{E}(\sigma_X), H_{X'}) - F(\sigma_X, H_X)\ ,
  \end{align}
  and using a vanishing amount of coherence per copy.
\end{corollary}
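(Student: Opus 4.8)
The plan is to derive \cref{x:impl-TO-any-iid-channel-iidinput} directly from \cref{prop:iid-implementation-of-coherent-channel-with-thermal-operation} by instantiating the two free parameters $\delta$ and $\theta$ as suitable $n$-dependent sequences and checking that all three bounds in the proposition vanish per copy as $n\to\infty$. Concretely, I would fix an arbitrary $\xi$ with $0<\xi<1/4$ and set $\delta = \delta_n = n^{-1/2+\xi}$ and $\theta = \theta_n = n^{-\xi/2}$, then verify the work rate, the accuracy, and the coherence cost in turn.

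For the work rate, the proposition gives $\tfrac1n W^{n}_{\mathrm{T.O.}} \leqslant F(\rho_{X'},H_{X'}) - F(\sigma_X,H_X) + O(n^{-1/2}) + O(n^{-1}\log(1/\theta))$. With the chosen sequences, $\log(1/\theta_n) = (\xi/2)\ln n$, so the last term is $O(n^{-1}\ln n)\to 0$ and the $O(n^{-1/2})$ term vanishes as well; using $\rho_{X'} = \mathcal{E}(\sigma_X)$ one obtains $\tfrac1n W^{n}_{\mathrm{T.O.}} \to F(\mathcal{E}(\sigma_X),H_{X'}) - F(\sigma_X,H_X)$, which is $W_{\mathrm{T.O.,\ asympt.}}$. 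For the accuracy, the proposition bounds the diamond error by $\epsilon \leqslant 3\theta + O(\exp(-cn\delta^2))$; here $3\theta_n = 3 n^{-\xi/2}\to 0$, while $n\delta_n^2 = n^{2\xi}\to\infty$ since $\xi>0$, so $\exp(-cn\delta_n^2)\to 0$ and hence $\epsilon\to 0$. Finally, the coherence cost obeys $d_{C_i}/n \leqslant O(\delta/\theta^2)$, and with our choice $\delta_n/\theta_n^2 = n^{-1/2+2\xi}$, which tends to $0$ precisely because $\xi<1/4$.

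The only real point of care — and what I would flag as the ``main obstacle'' — is the simultaneous balancing of these three scalings, since they pull $\delta$ in opposite directions: the exponential error term forces $\delta$ to decay no faster than $\Theta(n^{-1/2})$ (so that $n\delta^2\to\infty$), whereas the coherence term forces $\delta/\theta^2\to0$, and the $3\theta$ contribution forces $\theta\to0$, so $\theta$ may not go to zero too slowly relative to $\delta$. The choice $\delta_n=n^{-1/2+\xi}$, $\theta_n=n^{-\xi/2}$ threads this needle, with $\delta_n/\theta_n^2=n^{-1/2+2\xi}$, and the admissible window $0<\xi<1/4$ is exactly the set of exponents for which the work overhead, the error, and the relative coherence all vanish at once. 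Everything substantive is already contained in \cref{prop:iid-implementation-of-coherent-channel-with-thermal-operation}; the corollary itself is just this parameter specialization together with the elementary limit checks above.
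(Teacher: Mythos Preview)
Your proposal is correct and follows exactly the same approach as the paper, which states that the corollary is obtained straightforwardly from \cref{prop:iid-implementation-of-coherent-channel-with-thermal-operation} by choosing $\delta = n^{-1/2+\xi}$ and $\theta = n^{-\xi/2}$ for any $0<\xi<1/4$. Your verification of the three limits is more detailed than what the paper provides, but the parameter choice and the reasoning are identical.
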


We need a technical lemma that tells us that whenever we have a Hamiltonian
whose eigenvalues are all close, we may replace that Hamiltonian by an exactly
flat Hamiltonian at a cost of investing an amount of coherence of the order of
the energy spread we would like to flatten out.

\begin{lemma}[Flattening Hamiltonians using~\protect\cite{Aberg2014PRL_catalytic}]
  \label{lemma:flatten-Hamiltonian-using-coherence}
  Let $A$ be a system with a Hamiltonian $H_A$ with $m$ eigenvalues all lying in
  a range $[h_-, h_+]$.  Let $B\simeq A$ be a system with a completely
  degenerate Hamiltonian $H_B = [(h_- + h_+)/2]\, \Ident_B$.  We assume that the
  spacings of the eigenvalues of $H_A$ as well as the value $(h_-+h_+)/2$ are
  multiple of some unit $x$.  Fix $\theta > 0$.  Consider a coherence source $C$
  of dimension $d_C$, with Hamiltonian $ H_C = \sum_{k=0}^{d_C} k x
  \proj{k}_C$. %
  Assume that the coherence source starts in the state
  $\ket\eta_C = L^{-1/2}\sum_{k=0}^{L-1} \ket{\ell_0+k}$ such that
  $\ell_0 \geqslant (h_- + h_+)/x$ and $L \geqslant \theta^{-2} (h_+ - h_-)/x$
  and $d_C \geqslant L + \ell_0 + (h_-+h_+)/x$.
  Then there exists a partial isometry $U_{AC \to BC}$ which commutes exactly
  with the total Hamiltonians
  ($U_{AC\to BC}(H_A + H_C) = (H_B + H_C) U_{AC\to BC}$), such that for any
  $\rho_{AR}$ on any reference system $R$, we have that
  $\rho_{AR}\otimes\proj\eta_C$ is in the support of
  $U_{AC\to BC}\otimes\Ident_R$ and
  \begin{align}
    P`\Big(\tr_C`\big(U \,(\rho_{AR}\otimes\eta) U^\dagger), \rho_{BR})
    \leqslant \theta\ ,
  \end{align}
  where $\rho_{BR} = \IdentProc[A][B]{\rho_{AR}}$ denotes the same state as the
  initial state, but on the systems $BR$.

  The reverse operation $B\to A$ may also be carried out with the consumption of
  a similar coherence source, at the same accuracy.
\end{lemma}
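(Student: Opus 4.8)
The plan is to realise the flattening as an \emph{explicit energy-conserving partial isometry} that uses the ladder $C$ as a movable energy reservoir, in the spirit of the construction of Ref.~\cite{Aberg2014PRL_catalytic}, and then to bound the error by an elementary overlap computation for shifted clock windows. First I would reduce to the case of a pure input: purifying $\rho_{AR}$ and using monotonicity of the purified distance under the partial trace of the purifying register, it suffices to prove the bound when $\rho_{AR}=\proj{\psi}_{AR}$, and I would expand $\ket{\psi}_{AR}=\sum_a \ket a_A\otimes\ket{\phi_a}_R$ in an eigenbasis $\{\ket a_A\}$ of $H_A$ with $H_A\ket a_A=E_a\ket a_A$, so that $\sum_a\norm{\ket{\phi_a}}^2=1$.

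Next I would define the isometry. For each $a$ put $\Delta_a=(E_a-(h_-+h_+)/2)/x$; by the integrality hypothesis on the spacings of $H_A$ and on the midpoint $(h_-+h_+)/2$ this is an integer, and $E_a\in[h_-,h_+]$ forces $\abs{\Delta_a}\le (h_+-h_-)/(2x)$. Let $\ket a_B$ denote the image of $\ket a_A$ under the fixed isometry realising $A\simeq B$, and define
\begin{equation*}
  U_{AC\to BC}\colon\quad \ket a_A\otimes\ket k_C\ \longmapsto\ \ket a_B\otimes\ket{k+\Delta_a}_C\ \ \text{if }0\le k+\Delta_a\le d_C,\qquad \ket a_A\otimes\ket k_C\ \longmapsto\ 0\ \ \text{otherwise.}
\end{equation*}
I would then verify three routine facts: (i) $U_{AC\to BC}$ is a partial isometry, since on its support it acts as the shift $\SHIFTop_C(\Delta_a)$ on the ladder, hence maps an orthonormal family to an orthonormal family; (ii) $U_{AC\to BC}(H_A+H_C)=(H_B+H_C)U_{AC\to BC}$, because $\ket a_A\otimes\ket k_C$ is an $(H_A+H_C)$-eigenvector with energy $E_a+kx$, its image is an $(H_B+H_C)$-eigenvector with energy $(h_-+h_+)/2+(k+\Delta_a)x=E_a+kx$, and $H_A+H_C$ preserves both the support of $U$ and its complement; and (iii) $\rho_{AR}\otimes\proj{\eta}_C$ lies in the support of $U_{AC\to BC}\otimes\Ident_R$, which is where the size hypotheses $\ell_0\ge (h_-+h_+)/x$ and $d_C\ge L+\ell_0+(h_-+h_+)/x$ enter, guaranteeing $0\le k+\Delta_a\le d_C$ for every ladder site $k\in\{\ell_0,\dots,\ell_0+L-1\}$ occurring in $\ket{\eta}_C$.

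The heart of the argument is the error estimate. With $\ket{\eta^{(a)}}_C:=\SHIFTop_C(\Delta_a)\ket{\eta}_C$, applying $U_{AC\to BC}\otimes\Ident_R$ to $\ket{\psi}_{AR}\otimes\ket{\eta}_C$ produces $\ket{\Psi}_{BRC}=\sum_a\ket a_B\otimes\ket{\phi_a}_R\otimes\ket{\eta^{(a)}}_C$, which purifies $\tr_C[\,U(\psi_{AR}\otimes\eta_C)U^{\dagger}\,]$, while $\ket{\psi}_{BR}\otimes\ket{\eta}_C$ purifies $\rho_{BR}$. The overlap $\braket{\eta}{\eta^{(a)}}_C$ is just the fraction of shared ladder sites, $1-\abs{\Delta_a}/L$, and combining $\abs{\Delta_a}\le (h_+-h_-)/(2x)$ with $L\ge \theta^{-2}(h_+-h_-)/x$ gives $\braket{\eta}{\eta^{(a)}}_C\ge 1-\theta^2/2$. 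Orthonormality of the $\ket a_B$ kills the cross terms, so $\bra{\psi}_{BR}\bra{\eta}_C\ket{\Psi}_{BRC}=\sum_a\norm{\ket{\phi_a}}^2\braket{\eta}{\eta^{(a)}}_C\ge 1-\theta^2/2$; Uhlmann's theorem then gives $F\ge 1-\theta^2/2$ and hence $P\le\sqrt{1-(1-\theta^2/2)^2}\le\theta$, as claimed. The reverse map $B\to A$ is the mirror construction with a fresh coherence source, $\ket a_B\otimes\ket k_C\mapsto\ket a_A\otimes\ket{k-\Delta_a}_C$, and the identical computation gives the same accuracy.

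I do not anticipate a genuine obstacle: the lemma is essentially a careful repackaging of the catalytic-coherence mechanism. The fiddly work will be the index bookkeeping in (i)--(iii) — keeping the clock label inside $[0,d_C]$, checking integrality, and matching the shift directions between the forward and reverse maps — whereas the only real inequality, the window-overlap bound, is elementary. One point worth stressing in the write-up is that the delocalisation of $\ket{\eta}_C$ is essential: if $\ket{\eta}_C$ were a single ladder eigenstate the shifted windows $\ket{\eta^{(a)}}_C$ would be mutually orthogonal, the clock would record which $E_a$ occurred, and the output on $BR$ would be completely dephased in the $H_A$-eigenbasis; the hypothesis $L\ge \theta^{-2}(h_+-h_-)/x$ is exactly what quantifies the price of avoiding this.
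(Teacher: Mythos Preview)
Your proposal is correct and follows essentially the same approach as the paper's proof: both define the explicit energy-conserving map as a controlled ladder shift $\ket a\otimes\ket k\mapsto\ket a\otimes\ket{k+\Delta_a}$, compute the overlap of shifted clock windows $\braket{\eta}{\eta^{(a)}}=1-\abs{\Delta_a}/L$, and convert this into a purified-distance bound via Uhlmann. Your treatment is in fact slightly tighter (you use $\abs{\Delta_a}\le (h_+-h_-)/(2x)$ where the paper uses the looser $(h_+-h_-)/x$), and you handle the mixed-state case by purification and monotonicity rather than the paper's joint concavity of the fidelity, but these are cosmetic differences.
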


\begin{proof}[*lemma:flatten-Hamiltonian-using-coherence]
  First, we can reduce the problem to a system dimension $m$: Embed the system
  into a bipartite system with a ancilla with trivial Hamiltonian storing the
  degeneracy index, and a second system with nondegenerate Hamiltonian storing
  the actual energy. Then the problem reduces to change the Hamiltonian of the
  second system.
  So we may assume without loss of generality that the Hamiltonian $H_A$ is
  nondegenerate with $m=d_A$.

  Consider the protocol of \AA{}berg~\cite{Aberg2014PRL_catalytic}.  We would
  like to apply a result in the spirit of \AA{}berg's Supplementary
  Proposition~2, but we need a tighter bound.  Denoting the energy levels of $A$
  by $H_A = \sum x z_j\proj{j}$ for integers $z_j$, we apply the global
  energy-conserving unitary on $A$ and $C$ given by
  \begin{align}
    U_{AC\to BC} = \sum_j \ket{j}_B\bra{j}_A \otimes \Delta^{z'-z_j}\ ,
  \end{align}
  where $z' = (h_- + h_+)/(2x)$ represents the fixed energy of the output.
  Then, starting in the state
  $\ket\psi_{AA'} = \sum \psi_{jj'} \ket{j\,j'}_{AA'}$ using a reference system
  $A'\simeq A$ and the initial state $\ket\eta_C$ on $C$, we have
  \begin{align}
    U_{AC\to BC} \,`*( \ket\psi_{AA'}\ket\eta_C )
    = \sum_{jj'} \psi_{jj'}  \ket{j\,j'}_{BA'}\otimes(\Delta^{z'-z_j}\ket\eta_C)\ .
  \end{align}
  We may calculate the overlap with the initial state,
  \begin{align}
    \Re`*{ \bra\psi_{BA'}\bra\eta_C\, U_{AC\to BC}\, \ket\psi_{AA'}\ket\eta_C }
    = \sum_{jj'} \abs{\psi_{jj'}}^2 \Re`*{ \bra\eta_C \Delta^{z'-z_j} \ket\eta_C }
    \geqslant 1 - \frac{h_+ - h_-}{xL}\ ,
  \end{align}
  where we used the fact that $\tr(\Delta^{-a}\,\eta_C) = \max(0, 1-a/L)$ and
  $\abs{z' - z_j} \leqslant (h_+ - h_-)/x$.  Hence, using the fact that the
  partial trace can only increase the fidelity,
  \begin{align}
    F^2`\big(\Phi(\proj\psi_{AA'}), \proj\psi_{BA'})
    \geqslant 1 - \frac{h_+ - h_-}{xL}\ ,
  \end{align}
  where $\Phi(\cdot) = %
  \tr_C`*{ U_{AC\to BC} \, `*[(\cdot)\otimes\proj\eta_C] \, U^\dagger }$.  Note
  that $\ket{\psi}_{AA'}$ is arbitrary at this point.  For any state $\rho_{AR}$,
  using the joint concavity of the fidelity function, further noting that we may
  consider without loss of generality reference systems of the form
  $A'\simeq A$, we have
  \begin{align}
    F^2(\Phi_{A\to B}(\rho_{AR}), \rho_{BR})
    \geqslant
    \min_{\ket\psi_{AA'}} F^2(\Phi_{A\to B}(\proj\psi_{AA'}), \proj\psi_{BA'})
    \geqslant 1 - \frac{h_+ - h_-}{xL}\ ,
  \end{align}
  and thus, since $L \geqslant \theta^{-2} (h_+ - h_-)/x$,
  \begin{align}
    P(\Phi_{A\to B}(\rho_{AR}), \rho_{BR})
    \leqslant \sqrt{\frac{h_+ - h_-}{xL}}
    \leqslant \theta\ .
  \end{align}
  
  The same argument can be applied to the operation $B\to A$.
\end{proof}

\begin{proof}[*prop:iid-implementation-of-coherent-channel-with-thermal-operation]
  Let $`{ R^k_{X^n} }$ be the POVM elements from
  \cref{prop:energy-measurement-POVM-n-systems} for the input energy (for the
  Hamiltonian $H_X$) over the $n$ systems, and let $`{ L^\ell_{X'^n} }$ be the
  corresponding output measurement.

  We exhibit a protocol as a sequence of gentle measurements and thermal
  operations.
  For any $\delta>0$, we measure the projector
  $R^{\approx_\delta \tr(H_X\sigma_X)}_{X^n}$ on the $n$ inputs.  This
  measurement fails with probability $\leqslant 2\exp(-\eta_1 n)$ for
  $\eta_1=2\delta^2/(\Delta H_X)^2$, where $\Delta H_X$ is the difference
  between the maximal and minimal eigenvalues of $H_X$.

  Assume that the input and output Hamiltonians $H_X$ and $H_{X'}$ have
  eigenvalues that are multiples of a spacing $x$ ($x$ may be very small).
  Now the state lies in a subspace of energies in the interval
  $[n(\tr(H_X\sigma_X) \pm \delta)]$.  We invoke
  \cref{lemma:flatten-Hamiltonian-using-coherence} to change this $n$-system
  Hamiltonian to one which is entirely flat, $H_{X^n}' := h\,\Ident_{X^n}$ with
  $h=n\tr(H_X\sigma_X)$.  Given the target approximation parameter $\theta>0$,
  the cost of this operation is the consumption of a coherence source $C_1$ of
  size
  $d_{C_1} = (\theta^{-2} + 2)\,(2n\delta/x) \leqslant O(n\delta/\theta^2)$,
  because $m = 2n\delta/x$.

  Then we invoke the achievability result of Ref.~\cite{Faist2015NatComm}, that
  one can implement any channel over trivial Hamiltonians using thermal
  operations, and an amount of work given approximately by the conditional
  entropy of the environment conditioned on the output.  We use this step to
  implement the channel
  \begin{align}
    S^{\approx_\delta \tr(H_{X'}\rho_{X'})}_{X'^n} \,
    \mathcal{E}^{\otimes n}_{X^n\to X'^n}(
    R^{\approx_\delta \tr(H_X\sigma_X)}_{X^n}\, (\cdot) \,
    R^{\approx_\delta \tr(H_X\sigma_X)}_{X^n}
    ) \,
    S^{\approx_\delta \tr(H_{X'}\rho_{X'})}_{X'^n}
  \end{align}
  up to an accuracy $\theta$ and investing an amount of work
  $n\HH[\rho]{E}[X'] + O(\sqrt{n}) + \Delta(\theta)$ with
  $\Delta(\theta) = O(\log(1/\theta))$.  We can now trivially shift the whole
  Hamiltonian $H_{X^n}' \to H'_{X'^n} := h'\Ident_{X'^n}$ with
  $h' = n\tr(H_{X'}\rho_{X'})$, investing an amount of work $h'-h$.

  Finally, we have the state
  $S^{\approx_\delta \tr(H_{X'}\rho_{X'})}_{X'^n}\, \rho_{X'R}^{\otimes n}\,
  S^{\approx_\delta \tr(H_{X'}\rho_{X'})}_{X'^n}$ but the Hamiltonian is still
  the trivial $H'_{X'^n}$.  Again we invoke
  \cref{lemma:flatten-Hamiltonian-using-coherence} to change to the final
  Hamiltonian, up to accuracy $\theta$, by consuming another coherence source
  $C_2$ of size $d_{C_2} \leqslant O(n\delta/\theta^2)$.

  Because the final state is
  $S^{\approx_\delta \tr(H_{X'}\rho_{X'})}_{X'^n}\, \rho_{X'R}^{\otimes n}\,
  S^{\approx_\delta \tr(H_{X'}\rho_{X'})}_{X'^n}$ instead of
  $\rho_{X'R}^{\otimes n}$, we again pay a ``gentle measurement penalty'' of
  $O(\ee^{-n\eta_2/2})$ where $\eta_2 \sim \delta^2$ (cf.\@
  \cref{x:gentle-measurement-lemma}).

  Finally, counting the total work, total failure probability and total use of
  coherence proves the claim, noting that
  $\tr(H_{X'}\rho_{X'}) - \tr(H_X\sigma_X) + \HH[\rho]{E}[X']
  = F(\rho_{X'}, H_{X'}) - F(\sigma_X, H_X)$.
\end{proof}

\subsection{Single-shot erasure protocol for fixed
  input state and for noninteracting system and memory}
\label{appx:one-shot-conditional-erasure-H-covariant}

In the position-based decoding of Ref.~\cite{Anshu2019IEEETIT_oneshot}, one uses
the optimal distinguishing POVM for $P_{SM}$ obtained from the hypothesis
testing entropy, and we see that for a constant error we can choose $\ln(m)$ to
be proportional to the hypothesis testing entropy.  In fact, this gives us
directly a new erasure protocol in the case of a fixed input state, and in the
case where the system and memory are not interacting ($H_{SM} = H_S + H_M$):

\begin{corollary}
  \label{x:single-shot-conditional-erasure-using-convex-split}
  Let $S,M,R$ be quantum systems, with $H_S$ and $H_M$ the Hamiltonians on $S$
  and $M$.  Let $\ket\rho_{SMR}$ be any pure state such that
  $[\rho_{SM}, H_S+H_M]=0$.  Let $\epsilon > 0$.  Suppose that
  ${\Dhyp[1-\epsilon]{\rho_{SM}}{\gamma_S\otimes\rho_M}} \geqslant
  \ln(2/\epsilon)$.  Then there exists an information battery $J$ with battery
  states $\tau_J,\tau_J'$, and there exists a thermal operation
  $\mathcal{R}_{SMJ\to SMJ}$ acting on $SMJ$ such that:
  \begin{itemize}
  \item The effective work process $\mathcal{T}_{SM\to SM}$ associated with
    $\mathcal{R}_{SMJ\to SMJ}$ and $(\tau_J,\tau_J')$ satisfies
    \begin{align}
      F`\big( \mathcal{T}_{SM\to SM}(\rho_{SMR}) \, ,\, \gamma_S\otimes\rho_{MR} )
      \geqslant 1 - 6\epsilon\ ;
      \label{eq:single-shot-conditional-erasure-using-convex-split--process-accurate}
    \end{align}
  \item The associated work cost (in units of pure nats) satisfies
    \begin{align}
      w`\big(\tau_J) - w`\big(\tau_J')
      \leqslant
      -\Dhyp[1-\epsilon]{\rho_{SM}}{\gamma_S\otimes\rho_M} + \ln(2/\epsilon)\ .
      \label{eq:single-shot-conditional-erasure-using-convex-split--work-cost}
    \end{align}
  \end{itemize}
\end{corollary}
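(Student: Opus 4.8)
The plan is to obtain \cref{x:single-shot-conditional-erasure-using-convex-split} as the special case of \cref{thm:univ-conditional-erasure-PBD-as-thermalop} in which $\mathcal{S}_{SM}$ is the singleton $\{\rho_{SM}\}$ and $P_{SM}$ is chosen to be an optimal measurement effect for the hypothesis test between $\rho_{SM}$ and $\gamma_S\otimes\rho_M$. Concretely, I would first read off from the primal semidefinite program defining $\DHyp[1-\epsilon]{\rho_{SM}}{\gamma_S\otimes\rho_M}$ an operator $Q$ with $0\leqslant Q\leqslant\Ident$, $\tr[Q\rho_{SM}]\geqslant 1-\epsilon$, and $(1-\epsilon)^{-1}\tr[Q\,(\gamma_S\otimes\rho_M)] = \exp(-\DHyp[1-\epsilon]{\rho_{SM}}{\gamma_S\otimes\rho_M})$; using $\Dhyp[\eta]{\cdot}{\cdot} = \DHyp[\eta]{\cdot}{\cdot} - \ln\eta$, the latter identity rewrites as $\tr[Q\,(\gamma_S\otimes\rho_M)] = \exp(-\Dhyp[1-\epsilon]{\rho_{SM}}{\gamma_S\otimes\rho_M})$.

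The delicate point — and the reason the corollary hypothesizes $[\rho_{SM},H_S+H_M]=0$ — is that \cref{thm:univ-conditional-erasure-PBD-as-thermalop} requires $[P_{SM},H_S+H_M]=0$, which the SDP optimizer $Q$ need not satisfy. I would repair this by time-averaging: tracing out $S$ in $[\rho_{SM},H_S+H_M]=0$ yields $[\rho_M,H_M]=0$, so both $\rho_{SM}$ and $\gamma_S\otimes\rho_M$ are invariant under conjugation by $\ee^{-i(H_S+H_M)t}$; hence replacing $Q$ by $\overline Q := \lim_{T\to\infty}\frac1T\int_0^T \ee^{-i(H_S+H_M)t}\,Q\,\ee^{i(H_S+H_M)t}\,dt$ preserves $0\leqslant\overline Q\leqslant\Ident$, leaves both traces above unchanged, and forces $[\overline Q,H_S+H_M]=0$. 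Set $P_{SM}:=\overline Q$ and $\kappa:=\epsilon$, so $\tr[P_{SM}\rho_{SM}]\geqslant 1-\kappa$.

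Next I would set $\kappa':=\epsilon$ and $\ee^{m}:=\lfloor \epsilon\,\ee^{\Dhyp[1-\epsilon]{\rho_{SM}}{\gamma_S\otimes\rho_M}}\rfloor$. The hypothesis $\Dhyp[1-\epsilon]{\rho_{SM}}{\gamma_S\otimes\rho_M}\geqslant\ln(2/\epsilon)$ gives $\epsilon\,\ee^{\Dhyp}\geqslant 2$, so $\ee^m$ is a positive integer ($\geqslant 2$) and $m\geqslant\ln 2>0$; moreover $\ee^m\leqslant\epsilon\,\ee^{\Dhyp}$ yields $\tr[P_{SM}(\gamma_S\otimes\rho_M)] = \ee^{-\Dhyp}\leqslant\epsilon/\ee^m = \kappa'/\ee^m$. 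The conditions \eqref{eq:univ-isom-cond-eras-conditions-hypo-test} of \cref{thm:univ-conditional-erasure-PBD-as-thermalop} are thus met, and the proposition furnishes a thermal operation $\mathcal{R}_{SMJ\to SMJ}$ with battery states $(\tau_J^m,\ket0_J)$ whose effective work process $\mathcal{T}_{SM\to SM}$ is a universal conditional $(2\kappa+4\kappa')=6\epsilon$-erasure process with $\sigma_S=\gamma_S$ for the convex hull of $\{\rho_{SM}\}$, i.e.\ for $\{\rho_{SM}\}$ itself. Taking $\tau_J := \tau_J^m$ and $\tau_J' := \ket0_J$, \cref{defn:universal-conditional-erasure} then gives exactly \eqref{eq:single-shot-conditional-erasure-using-convex-split--process-accurate}.

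Finally, I would read off the work cost: a battery state $\tau_J^m$ has $w(\tau_J^m) = \ln d_J - m$, so $w(\tau_J) - w(\tau_J') = w(\tau_J^m) - w(\ket0_J) = -m$; and from $\lfloor x\rfloor\geqslant x/2$ for $x\geqslant 1$ we get $\ee^m\geqslant\frac12\epsilon\,\ee^{\Dhyp}$, hence $m\geqslant\Dhyp[1-\epsilon]{\rho_{SM}}{\gamma_S\otimes\rho_M}-\ln(2/\epsilon)$, which is \eqref{eq:single-shot-conditional-erasure-using-convex-split--work-cost}. The only genuine obstacle in the argument is the time-covariance fix for $P_{SM}$ in the second step; everything else is routine bookkeeping with the hypothesis-testing SDP and the floor function.
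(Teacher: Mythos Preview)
Your proof is correct and follows essentially the same approach as the paper's: pick the optimal hypothesis-testing effect for $\Dhyp[1-\epsilon]{\rho_{SM}}{\gamma_S\otimes\rho_M}$, dephase it in the energy eigenbasis of $H_S+H_M$ (your time-average is exactly this), set $\kappa=\kappa'=\epsilon$ and $\ee^{m}=\lfloor \epsilon\,\ee^{\Dhyp}\rfloor$, and feed the singleton set into \cref{thm:univ-conditional-erasure-PBD-as-thermalop}. The only cosmetic difference is that the paper bounds the floor via $\lfloor x\rfloor\geqslant x-1$ and then $-\ln(1-1/x)\leqslant\ln 2$ for $x\geqslant 2$, whereas you use $\lfloor x\rfloor\geqslant x/2$; both yield the same work-cost inequality.
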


If the register $S$ is to be returned to a pure state instead of a thermal
state, then this can be done separately as a final step, and there is a fixed
cost associated to this.  For instance, for trivial Hamiltonians, we have that
\begin{align}
  \begin{array}{c}
    \text{total work extracted in nats,}\\
    \text{also return \(S\) to pure state}
  \end{array}
  \quad
  &\geqslant\quad
    \Dhyp[1-\epsilon]`*{\rho_{SM}}{\frac{\Ident_S}{d_S}\otimes\rho_M}
    - \ln(d_S) - \ln(2/\epsilon)
    \nonumber\\
  &=\quad
    \Dhyp[1-\epsilon]`\big{\rho_{SM}}{\Ident_S\otimes\rho_M} - \ln(2/\epsilon)\ ,
\end{align}
and recalling that
$\Dhyp[1-\epsilon]{\rho_{SM}}{\Ident_S\otimes\rho_M} \approx
-\Hmaxf[\rho][\epsilon]{S}[M]$~\cite{Dupuis2013_DH,PhDTomamichel2012}, we
recover the expression in Ref.~\cite{delRio2011Nature} up to approximation
terms.

\begin{proof}[*x:single-shot-conditional-erasure-using-convex-split]
  Let $P_{SM}$ be the optimal positive semidefinite operator satisfying
  $0\leqslant P_{SM} \leqslant \Ident_{SM}$ given by
  $\Dhyp[1-\epsilon]{\rho_{SM}}{\gamma_S\otimes\rho_M}$, i.e., such that
  \begin{align}
    \tr`*( P_{SM}\,\rho_{SM} )
    &\geqslant 1 - \epsilon\ ;
    \\
    \tr`*( P_{SM}\,`*(\gamma_S\otimes \rho_{M} ) )
    &= \exp`*{-\Dhyp[1-\epsilon]{\rho_{SM}}{\gamma_S\otimes\rho_M}} \ .
  \end{align}
  We can always take $[P_{SM}, H_S+H_M]=0$ by dephasing $P_{SM}$ in energy
  blocks, if necessary, since $\rho_{SM}$ is time-covariant.  Let
  \begin{align}
    m = \ln\, \bigl\lfloor \epsilon \exp`\big{
    \Dhyp[1-\epsilon]{\rho_{SM}}{\gamma_S\otimes\rho_M} } \bigr\rfloor\ .
  \end{align}
  Then $\ee^m \leqslant %
  \epsilon \exp`\big{ \Dhyp[1-\epsilon]{\rho_{SM}}{\gamma_S\otimes\rho_M} }$ and
  \begin{align}
    \tr`*( P_{SM}\,`*(\gamma_S \otimes \rho_{M} ))
    \leqslant \frac{\epsilon}{\ee^m}\ .
  \end{align}
  We may set $\kappa=\kappa'=\epsilon$, and we are in the setting of
  \cref{thm:univ-conditional-erasure-PBD-as-thermalop} for the single-element set
  $\mathcal{S}_{SM} = `*{ \rho_{SM} }$.
  \Cref{thm:univ-conditional-erasure-PBD-as-thermalop} ensures that there exists
  a thermal operation $\mathcal{R}_{SMJ\to SMJ}$ and battery states
  $\tau_J, \tau_J'$ that satisfies the
  condition~\eqref{eq:single-shot-conditional-erasure-using-convex-split--process-accurate}.
  Furthermore according to \cref{thm:univ-conditional-erasure-PBD-as-thermalop},
  the associated work cost is
  \begin{align}
    w`\big( \tau_J ) - w`\big( \tau_J' )
    = -m\ 
    &\leqslant
    -\ln`\Big( \epsilon \exp`\big{ \Dhyp[1-\epsilon]{\rho_{SM}}{\gamma_S\otimes\rho_M}} - 1 )
      \nonumber\\
    &=
      -\Dhyp[1-\epsilon]{\rho_{SM}}{\gamma_S\otimes\rho_M} - \ln(\epsilon)
      - \ln`\Big( 1 - \frac1\epsilon \ee^{ -\Dhyp[1-\epsilon]{\rho_{SM}}{\gamma_S\otimes\rho_M}}  )
      \nonumber\\
    &\leqslant
      -\Dhyp[1-\epsilon]{\rho_{SM}}{\gamma_S\otimes\rho_M} - \ln(\epsilon)
      -\ln(1/2)
      \ ,
  \end{align}
  using in the last inequality our assumption that
  $\epsilon \exp`\big{ \Dhyp[1-\epsilon]{\rho_{SM}}{\gamma_S\otimes\rho_M}}
  \geqslant 2$.  We have
  proven~\eqref{eq:single-shot-conditional-erasure-using-convex-split--work-cost},
  thereby completing the proof of
  \cref{x:single-shot-conditional-erasure-using-convex-split}.
\end{proof}

\subsection{Single-shot protocol for implementing a
  time-covariant processes using thermal operations}
\label{sec:one-shot-timecovariant-process-H-covariant}

We can translate our protocol for conditional erasure into a protocol for
implementing more general processes. (A similar approach was taken in
Ref.~\cite{Faist2015NatComm} in the case of trivial Hamiltonians).  We obtain
the following result.

\begin{corollary}[Single-shot protocol for a time-covariant process and input state]
  \label{x:single-shot-time-covariant-process-using-convex-split}
  Let $X$ be a quantum system with a Hamiltonian $H_X$, and let
  $\beta\geqslant 0$.  Let $\mathcal{E}_{X\to X}$ be a completely positive,
  trace-preserving map satisfying
  \begin{align}
    \mathcal{E}_{X\to X}(\ee^{-iH_Xt}\,(\cdot)\,\ee^{iH_Xt})
    = \ee^{-iH_Xt}\,\mathcal{E}_{X\to X}(\cdot)\,\ee^{iH_Xt}
    \qquad\text{for all \(t\in\mathbb{R}\).}
  \end{align}
  Let $\sigma_X$ be any quantum state that satisfies $[\sigma_X, H_X] = 0$ and
  let $\epsilon>0$.  There exists an information battery $W$, a thermal
  operation $\Phi_{XW}$ and battery states $\tau_W, \tau_W'$ such that:
  \begin{itemize}
  \item The effective work process $\mathcal{T}_{X\to X}$ associated with
    $\Phi_{XW}$ and $(\tau_W, \tau_W')$ satisfies
    \begin{align}
      F`\big( \mathcal{T}_{X\to X}`\big(\proj\sigma_{XR}) ,
      \mathcal{E}_{X\to X}`\big(\proj\sigma_{XR}) )
      \geqslant 1 - 6\epsilon\ ,
      \label{eq:single-shot-time-covariant-process-using-convex-split--accurate}
    \end{align}
    where $\ket\sigma_{XR}$ is a purification of $\sigma_X$;
  \item The work cost (in units of pure nats) satisfies
    \begin{align}
      w(\tau_W) - w(\tau_W')
      \leqslant
      -\Dhyp[1-\epsilon]{\rho_{EX}}{\gamma_E\otimes\rho_X}
      + \ln(\tr\ee^{-\beta H_X})
      + \ln(2/\epsilon)\ ,
      \label{eq:single-shot-time-covariant-process-using-convex-split--work-cost}
    \end{align}
    where $E$ is an additional system with Hamiltonian $H_E$ and where
    $\rho_{EX} = V`\big(\proj0_E\otimes\sigma_X) V^\dagger$ is the output of a
    time-covariant Stinespring unitary $V_{XE\to XE}$ of $\mathcal{E}_{X\to X}$
    with $\ket{0}_E$ an eigenstate of $H_E$ with eigenvalue $0$; i.e., the
    operator $V_{X\to E}$ satisfies
    $\mathcal{E}_{X\to X}(\cdot) = \tr_E`\big[
    V\,`\big(\proj{0}_E\otimes(\cdot))\,V^\dagger ]$.  We note that
    $\rho_X = \mathcal{E}_{X\to X}(\sigma_X)$.
  \end{itemize}
\end{corollary}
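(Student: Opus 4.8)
The plan is to mimic the i.i.d.\@{} construction of \cref{x:universal-magical-thermal-op}, but in one shot and with no typicality machinery: dilate $\mathcal{E}$ by an energy-conserving unitary, and then reduce the problem to the single-shot conditional erasure primitive of \cref{x:single-shot-conditional-erasure-using-convex-split}. First I would invoke \cref{x:covariant-process-H-env} to obtain an environment system $E$ with Hamiltonian $H_E$, a zero-energy eigenstate $\ket{0}_E$, and an energy-conserving unitary $V_{XE\to XE}$ with $\mathcal{E}_{X\to X}(\cdot)=\tr_E[V(\proj{0}_E\otimes(\cdot))V^\dagger]$ and $V(H_X+H_E)V^\dagger=H_X+H_E$. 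Writing $\ket{\rho}_{EXR}=V_{XE}(\ket{0}_E\otimes\ket{\sigma}_{XR})$ and $\rho_{EX}=V(\proj{0}_E\otimes\sigma_X)V^\dagger$, the target output $\mathcal{E}(\proj{\sigma}_{XR})$ equals $\tr_E[\proj{\rho}_{EXR}]$, so producing it with the correlations to $R$ intact is the same as producing $\rho_{EXR}$ and then resetting $E$ (using $X$ as side information) to any fixed state, which we take to be $\gamma_E$. We note $\rho_X = \tr_E[\rho_{EX}] = \mathcal{E}(\sigma_X)$, as claimed.

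The protocol is a sequence of four thermal operations, bookkept with a joint information battery $W=W'\otimes J$: (a)~introduce $E$ in its Gibbs state $\gamma_E$ (free); (b)~convert $\gamma_E\to\proj{0}_E$ with \cref{thm:thermal-to-zero-thermomaj-workcost}, consuming on the battery $W'$ an amount of work arbitrarily close to a partition-function term $\ln\tr[\ee^{-\beta H_E}]$ (as $\ket{0}_E$ carries zero energy); (c)~apply $V_{XE}$, which is free since $[V,H_X+H_E]=0$; and (d)~apply the conditional-erasure thermal operation $\mathcal{R}_{EXJ}$ of \cref{x:single-shot-conditional-erasure-using-convex-split} with $S=E$, $M=X$, which returns $E$ to $\gamma_E$ while leaving $\rho_{XR}$ essentially unchanged, after which $E$ is discarded. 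The key structural fact enabling step~(d) is that $\rho_{EX}$ is time-covariant: $[\sigma_X,H_X]=0$ by hypothesis and $\ket{0}_E$ is an $H_E$-eigenstate, so $\proj{0}_E\otimes\sigma_X$ commutes with $H_X+H_E$, and hence so does $\rho_{EX}=V(\proj{0}_E\otimes\sigma_X)V^\dagger$ because $V$ conserves $H_X+H_E$. One then chooses $P_{EX}$ in \cref{x:single-shot-conditional-erasure-using-convex-split} to be the hypothesis-testing-optimal operator for $\Dhyp[1-\epsilon]{\rho_{EX}}{\gamma_E\otimes\rho_X}$, dephased into energy blocks so that $[P_{EX},H_E+H_X]=0$ (harmless by covariance).

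Accuracy and cost then follow by direct accounting. \Cref{x:single-shot-conditional-erasure-using-convex-split} gives $F(\mathcal{T}^{\mathcal{R}}_{EX}(\rho_{EXR}),\gamma_E\otimes\rho_{XR})\geqslant 1-6\epsilon$ for the effective work process $\mathcal{T}^{\mathcal{R}}_{EX}$ of $\mathcal{R}_{EXJ}$; the partial trace over $E$ cannot decrease the fidelity and $\tr_E[\gamma_E\otimes\rho_{XR}]=\rho_{XR}=\mathcal{E}(\sigma_{XR})$, which yields \eqref{eq:single-shot-time-covariant-process-using-convex-split--accurate}. For the work, the composite battery runs from $\tau_{W'}^{(1)}\otimes\tau_J$ to $\tau_{W'}^{(2)}\otimes\tau_J'$, so the total consumed work is the sum of the step-(b) contribution ($\approx\ln\tr[\ee^{-\beta H_E}]$) and the step-(d) contribution, which \cref{x:single-shot-conditional-erasure-using-convex-split} bounds by $-\Dhyp[1-\epsilon]{\rho_{EX}}{\gamma_E\otimes\rho_X}+\ln(2/\epsilon)$; summing these, and recalling $\rho_X=\mathcal{E}(\sigma_X)$, gives the bound \eqref{eq:single-shot-time-covariant-process-using-convex-split--work-cost}.

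I expect the main obstacle to be bookkeeping rather than anything conceptual. One must verify that the effective work process of the composite thermal operation, taken with respect to the product battery states $(\tau_{W'}^{(1)}\otimes\tau_J,\tau_{W'}^{(2)}\otimes\tau_J')$, equals $\tr_E[\mathcal{T}^{\mathcal{R}}_{EX}(V_{XE}((\cdot)\otimes\proj{0}_E)V_{XE}^\dagger)]$ — exactly the chain of identities already carried out in the proof of \cref{x:universal-magical-thermal-op}, where the extra ancilla step is $\widetilde\Phi_{E W'}$ preparing $\proj{0}_E$. A minor point needing separate treatment is the degenerate regime where $\Dhyp[1-\epsilon]{\rho_{EX}}{\gamma_E\otimes\rho_X}$ is too small for the hypothesis of \cref{x:single-shot-conditional-erasure-using-convex-split} to apply; there one simply omits step~(d) and discards $E$ outright (it is already close to $\gamma_E$ and nearly uncorrelated from $XR$ in that case), and the claimed work bound holds a fortiori.
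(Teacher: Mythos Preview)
Your proposal is correct and follows the same protocol as the paper's own proof: bring in $E$ in its thermal state, reset it to $\ket0_E$ via \cref{thm:thermal-to-zero-thermomaj-workcost}, apply the energy-conserving $V_{XE}$, invoke \cref{x:single-shot-conditional-erasure-using-convex-split} with $(S,M)=(E,X)$, and discard $E$. You are actually more careful than the paper in explicitly verifying the hypothesis $[\rho_{EX},H_E+H_X]=0$ needed by the erasure primitive, and in treating the degenerate regime where $\Dhyp[1-\epsilon]{\rho_{EX}}{\gamma_E\otimes\rho_X}<\ln(2/\epsilon)$ --- though there your parenthetical justification is unnecessary: tracing out $E$ after $V$ already yields $\mathcal{E}(\sigma_{XR})$ \emph{exactly}, at work cost $\ln\tr[\ee^{-\beta H_E}]$, which lies within the claimed bound in that regime. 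One discrepancy worth flagging: your step-(b) cost is $\ln\tr[\ee^{-\beta H_E}]$, whereas the paper's statement and proof both write $\ln\tr[\ee^{-\beta H_X}]$; yours is what \cref{thm:thermal-to-zero-thermomaj-workcost} actually gives for preparing $\ket0_E$ from $\gamma_E$, so this appears to be a typo in the paper rather than an error on your part.
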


Observe that for any time-covariant process $\mathcal{E}_{X\to X}$, the
existence of $V_{XE\to XE}$ with the desired properties is guaranteed by
\cref{x:covariant-process-H-env}.

\begin{proof}[**x:single-shot-time-covariant-process-using-convex-split]
  We start with the system $X$ and a reference system $R$ in the pure state
  $\ket{\sigma_{XR}}$.  The protocol is assembled using the following steps:
  \begin{itemize}
  \item Bring in an ancilla $E$ which can be used as the Stinespring environment
    system of $\mathcal{E}_{X\to X}$.  The system $E$ is initialized in its
    thermal state $\gamma_E$.  This is a free operation;
  \item Reset the ancilla $E$ to the state $\ket{0}_E$, using
    \cref{thm:thermal-to-zero-thermomaj-workcost}, at a work cost arbitrarily
    close to $\ln\tr[\ee^{-\beta H_X}]$;
  \item Apply the energy-conserving unitary $V_{XE\to XE}$ at no work cost.  The
    resulting state on $XE$ is $\rho_{XE}$;
  \item Invoke \cref{x:single-shot-conditional-erasure-using-convex-split} to
    transform the state $\rho_{XE}$ into a state $\rho'_{XE}$ that is close to
    $\gamma_E\otimes\rho_X$, with
    \begin{align}
      F`\big(\rho'_{XE}, \gamma_E\otimes\rho_X)
      \geq 1 - 6\epsilon\ ,
    \end{align}
    and at a work cost $w$ satisfing
    $w \leqslant -\Dhyp[1-\epsilon]{\rho_{EM}}{\gamma_E\otimes\rho_M} +
    \ln(2/\epsilon)$;
  \item Discard the system $E$, at no work cost.
  \end{itemize}
  
  The
  condition~\eqref{eq:single-shot-time-covariant-process-using-convex-split--accurate}
  is satisfied by construction.  Furthermore, collecting the contributions to
  the work cost
  yields~\eqref{eq:single-shot-time-covariant-process-using-convex-split--work-cost}.
\end{proof}

\section{Discussion}
\label{sec:discussion}

Our results fits in the line of research extending results in thermodynamics
from state-to-state transformations to quantum processes. Implementations of
quantum processes are difficult to construct because they need to reproduce the
correct correlations between the output and the reference system, and not only
produce the correct output state. Here, we have seen that it is nevertheless
possible to implement any quantum process at an optimal work cost: Any
implementation that would use less work would violate the second law of
thermodynamics on a macroscopic scale. As a special case this also provides an
operational interpretation of the minimal entropy gain of a
channel~\cite{Alicki2004arXiv_isotropic,Devetak2006CMP_multiplicativity,Holevo2011ISIT_entropygain,Holevo2010DM_infinitedim,Holevo2011TMP_CJ,BookHolevo2012_QuSystemsChannelsInformation,Buscemi2016PRA_reversibility,Gour2021PRR_entropychannel}.

Our three constructions of optimal implementations of processes are valid in
different settings, and it remains unclear if they can be unified in a single
protocol that presents the advantages of all three constructions.  Namely, is it
possible to use a physically well-justified framework, e.g.\@ thermal
operations, to universally implement any i.i.d.\@ process? We expect this to be
possible only if an arbitrary amount of coherence is allowed, in analogy with
the entanglement embezzling state required in the reverse Shannon
theorem~\cite{Bennett2014_reverse,Berta2011_reverse}.

Finally, the notion of quantum typicality that we have introduced in
\cref{defn:universal-relative-conditional-typical-smoother} and
\cref{thm:universal-relative-conditional-typical-smoother} might be interesting
in its own right. We anticipate that similar considerations might provide
pathways to smooth other information-theoretic
quantities~\cite{Fang2019IEEETIT_simulation,Anshu2020ITIT_partially,Gour19} and
to study the joint typicality
conjecture~\cite{PhDDutil2011_multiparty,Noetzel2012arXiv_two,Drescher2013ISIT_simultaneous,Sen2018arXiv_joint,Anshu18-2}.

\begin{acknowledgments}
  The authors thank \'Alvaro Alhambra, David Ding, Patrick Hayden, Rahul Jain,
  David Jennings, Mart\'\i{} Perarnau-Llobet, Mark Wilde, and Andreas Winter for
  discussions. PhF acknowledges support from the Swiss National Science
  Foundation (SNSF) through the Early PostDoc.Mobility Fellowship No.\@
  {P2EZP2\_165239} hosted by the Institute for Quantum Information and Matter
  (IQIM) at Caltech, from the IQIM which is a National Science Foundation (NSF)
  Physics Frontiers Center (NSF Grant {PHY}-{1733907}), from the Department of
  Energy Award {DE}-{SC0018407}, from the Swiss National Science Foundation
  (SNSF) through the NCCR QSIT and through Project No.~{200020\_16584}, and from
  the Deutsche Forschungsgemeinschaft (DFG) Research Unit {FOR}~{2724}. FB is
  supported by the NSF. This work was completed prior to MB and FB joining the AWS Center for Quantum Computing.
\end{acknowledgments}

\appendix
\section*{Appendix}

\section{Missing proofs}
\label{app:missing}
\label{appx:first}%

\begin{proof}[*lemma:Schur-Weyl-bipartite-trAn-PiLambdaPiLambdaprime]
  A useful expression for $\Pi^\lambda_{A^nB^n}$ may be obtained
  following~\cite[Section V]{Haah2017IEEETIT_sampleoptimal}
  \begin{align}
    \Pi^{\lambda}_{A^nB^n}
    &= \frac{\dim(\mathcal{Q}_\lambda)}{s_\lambda(\diag(\lambda/n))}
      \int dU_{AB}\;  \Pi^\lambda_{A^nB^n}
      `*( U_{AB} \, \diag(\lambda/n)_{AB} \, U_{AB}^{\dagger} )^{\otimes n}
      \, \Pi^\lambda_{A^nB^n}
      \nonumber\\
    &\leqslant
      \poly(n)\,\ee^{n\bar{H}(\lambda)} \,
      \int dU_{AB}\; `*( U_{AB} \, \diag(\lambda/n)_{AB} \, U_{AB}^{\dagger} )^{\otimes n}\ ,
      \label{eq:Schur-Weyl-PiLambda-expr-as-iid-states}
  \end{align}
  recalling that $\Pi^\lambda_{A^nB^n}$ commutes with any i.i.d.\@{} state, with
  $s_\lambda(X) = \tr[q_\lambda(X)]$ and using bounds on
  $\dim(\mathcal{Q}_\lambda)$ and $s_\lambda(\diag(\lambda/n))$ derived in
  Ref.~\cite{Haah2017IEEETIT_sampleoptimal}. Here, $dU_{AB}$ denotes the Haar measure
  over all unitaries acting on $\Hs_{AB}$, normalized such that $\int dU_{AB} = 1$. We then have
  \begin{align}
    \tr_{A^n}`*[\Pi^\lambda_{A^nB^n}]
    \leqslant \poly(n)\,\ee^{n\bar{H}(\lambda)}
    \int dU_{AB}\,
    \tr_{A^n}`*[`*(U_{AB}\,\diag(\lambda/n)_{AB}\,U_{AB}^\dagger)^{\otimes n} ]\ .
  \end{align}
  Observe that for any state $\omega_B$, we have
  \begin{multline}
    \norm[\big]{\Pi^{\lambda'}_{B^n}\,\omega_B^{\otimes n}\,\Pi^{\lambda'}_{B^n}}_\infty
    =
    \norm[\big]{ \mapinlambda{q_{\lambda'}(\omega_B)\otimes
    \Ident_{\mathcal{P}_{\lambda'}}}[\lambda'] }_\infty
    =
    \norm{ q_{\lambda'}(\omega_B) }_\infty
    \leqslant \tr`*[q_{\lambda'}(\omega_B)]
    \\
    \leqslant \poly(n)\,\ee^{-n\bar{H}(\lambda')}
  \end{multline}
  as derived e.g.\@ in~\cite[Eq.~(9)]{Haah2017IEEETIT_sampleoptimal}, and thus
  for any state $\omega_B$,
  \begin{align}
    \Pi^{\lambda'}_{B^n}\,\omega_B^{\otimes n}\,\Pi^{\lambda'}_{B^n}
    \leqslant \poly(n)\,\ee^{-n\bar{H}(\lambda')}\,  \Pi^{\lambda'}_{B^n} \ .
  \end{align}
  Hence, we get
  \begin{align}
    \hspace*{3em}
    &\hspace*{-3em}
    \Pi^{\lambda'}_{B^n}\,\tr_{A^n}`*[\Pi^\lambda_{A^nB^n}]\,\Pi^{\lambda'}_{B^n}
    \nonumber\\
    &\leqslant \poly(n)\,\ee^{n\bar{H}(\lambda)}
    \int dU_{AB}\;
    \Pi^{\lambda'}_{B^n} \,
    `*(\tr_A`*[U_{AB}\,\diag(\lambda/n)_{AB}\,U_{AB}^\dagger])^{\otimes n}
    \Pi^{\lambda'}_{B^n}
    \nonumber\\
    &\leqslant \poly(n)\,\ee^{n\bar{H}(\lambda)}
    \int dU_{AB}\;
      \poly(n)\,\ee^{-n\bar{H}(\lambda')}\, \Pi_{B^n}^{\lambda'}
    \nonumber\\
    &= \poly(n)\,\ee^{n(\bar{H}(\lambda) - \bar{H}(\lambda'))}\, \Pi_{B^n}^{\lambda'}\ ,
  \end{align}
  as required.
\end{proof}

\begin{proof}[*prop:entropy-measurement-POVM-n-systems]
  The Fannes-Audenaert continuity bound~\cite{Fannes1973CMP_continuity,Audenaert2007JPA_sharp} of the entropy
  states that for any $\delta'>0$ there exists $\xi(\delta')>0$ such that for
  any quantum states $\rho,\sigma$ with $D(\rho,\sigma)\leqslant\delta'$ we have
  \begin{align}
    \abs{H(\rho) - H(\sigma)} \leqslant \xi(\delta')\ ,
  \end{align}
  and furthermore $\xi(\delta')$ is monotonically strictly decreasing and
  $\xi(\delta')\to0$ if $\delta'\to0$. Now, let $\delta>0$, let $\xi^{-1}$ be
  the inverse function of $\xi$, and let $\delta'=\xi^{-1}(\delta)$.  Consider
  the set of Young diagrams
  $\Lambda_{\delta'} = `{ \lambda\in\Young{d_A}{n} : D(\diag(\lambda/n),\rho)
    \leqslant \delta' }$.  For all $\lambda\in\Lambda_{\delta'}$, we have that
  $\abs{H(\rho) - \bar{H}(\lambda)} \leqslant \delta$ thanks to the
  Fannes-Audenaert inequality. Then, we have
  \begin{align}
    \tr`*[ `*(\sum_{\lambda:~\bar{H}(\lambda) \in [H(\rho)\pm\delta] }
    \Pi_{A^n}^\lambda ) \rho_A^{\otimes n} ]
    \geqslant
    \tr`*[ `*(\sum_{\lambda \in \Lambda_{\delta'}} \Pi_{A^n}^\lambda )
    \rho_A^{\otimes n} ]
    \label{eq:kjfiduygou9qiowfjdoask}
  \end{align}
  because all terms in the sum in the right hand side are included in the sum on
  the left hand side.  We may now invoke~\cite[Eq.~(6.23)]{PhDHarrow2005} to see
  that
  \begin{align}
    \text{\eqref{eq:kjfiduygou9qiowfjdoask}}
    \geqslant 1 - \poly(n)\exp`*{- n\eta}\ ,
  \end{align}
  where $\eta=\delta'^2/2$.
\end{proof}

\begin{proof}[*prop:energy-measurement-POVM-n-systems]
  The fact that there are only $\poly(n)$ elements follows because there are
  only so many types.  Property~(ii) holds by definition.  Property~(iv) holds
  because $e^{-n(k\pm\delta)}$ is the minimum / maximum eigenvalue of
  $\Gamma_A^{\otimes n}$ in the subspace spanned by
  $R^{\approx_\delta h}_{A^n}$.  Finally, we need to show Property~(iii): This
  follows from a large deviation analysis.  More precisely, let $Z_j$ for
  $j=1,\ldots,n$ be random variables where $Z_j$ represents the measurement
  outcome of $H_A$ on the $j$-th system of the i.i.d.\@{} state
  $\rho_A^{\otimes n}$.  By Hoeffding's inequality, we have that
  \begin{align}
    \Pr`*[ \abs*{(1/n) \sum Z_j - \tr[\rho_A H_A]} > \delta ]
    &\leqslant 2\exp`*(-\frac{2n\delta^2}{\Delta H_A^2})
    \leqslant 2\exp`*(-\frac{n\delta^2}{2\,\norm{H_A}_\infty^2})\ ,
  \end{align}
  where $\Delta H_A$ is the difference between the maximum and minimum
  eigenvalue of $H_A$, and $\Delta H_A \leqslant 2\norm{H_A}_\infty$. Thus, the
  event consisting of the outcomes $k$ satisfying
  $\abs{k-\tr[\rho_A H_A]} \leqslant \delta$ happens with probability at least
  $1-2\ee^{-n\eta}$,
  proving~\eqref{eq:energy-measurement-POVM-n-systems-exp-success-on-iid-states}.
\end{proof}

\begin{proof}[*prop:iid-process-diamond-norm-with-any-W]
  We use the post-selection technique (Theorem \ref{x:post-selection-technique})
  to bound the diamond norm distance between $\mathcal{T}_{X^n\to X'^n}$ and
  $\mathcal{E}_{X\to X'}^{\otimes n}$.  Let $\ket\zeta_{X^n\bar{R}^n R'}$ be the
  purification of the de Finetti state given
  by~\eqref{eq:de-Finetti-state-purified-with-poly-iid-states}.  Calculate
  \begin{align}
    \hspace*{3em}
    &\hspace*{-3em}
      \Re`*{ \bra\zeta_{X^n\bar{R}^n R'} (V_{X\to EX'}^{\otimes n})^\dagger
      W_{X^n\to E^nX'^n} \ket\zeta_{X^n\bar{R}^n R'} }
    \nonumber\\
    &= \sum p_i\,
      \Re`*{ \bra{\phi_i}_{X\bar{R}}^{\otimes n}
      \, (V_{X\to EX'}^{\otimes n})^\dagger
      \, W_{X^n\to E^nX'^n} \ket{\phi_i}_{X\bar{R}}^{\otimes n} }
      \nonumber\\
    &\geqslant 1 - \poly(n)\exp(-n\eta)
  \end{align}
  which implies, recalling that
  $F(\ket\psi,\ket\phi) = \abs{\braket\psi\phi} \geqslant \Re`{\braket\psi\phi}$
  and that $(1-x)^2 \geqslant 1 - 2x$,
  \begin{align}
    F^2`*( V_{X\to EX'}^{\otimes n}\, \ket\zeta_{X^n\bar{R}^n R'} \, , \,
    W_{X^n\to E^nX'^n} \ket\zeta_{X^n\bar{R}^n R'} )
    \geqslant 1 - \poly(n) \exp(-n\eta)
  \end{align}
  and hence
  \begin{align}
    P`*( V_{X\to EX'}^{\otimes n}\, \ket\zeta_{X^n\bar{R}^n R'} \, , \,
    W_{X^n\to E^nX'^n} \ket\zeta_{X^n\bar{R}^n R'} )
    \leqslant \poly(n) \exp(-n\eta/2)\ .
  \end{align}
  Recalling the relations between the trace distance and the purified distance,
  and noting that these distance measures cannot increase under the partial
  trace, we obtain
  \begin{multline}
    D`\big( \mathcal{T}(\zeta_{X^n\bar{R}^n R'}) ,
    \mathcal{E}^{\otimes n}(\zeta_{X^n\bar{R}^n R'}) )
    \leqslant
    P`\big( \mathcal{T}(\zeta_{X^n\bar{R}^n R'}) \;,\;
    \mathcal{E}^{\otimes n}(\zeta_{X^n\bar{R}^n R'}) )
    \\
    \leqslant
    P`\big( W_{X^n\to E^nX'^n} \ket\zeta_{X^n\bar{R}^n R'}   \;,\;
    V_{X\to EX'}^{\otimes n}\, \ket\zeta_{X^n\bar{R}^n R'} )
    \leqslant \poly(n) \exp(-n\eta/2)\ .
  \end{multline}
  The post-selection technique then asserts that
  \begin{align}
    \frac12\norm{ \mathcal{T} - \mathcal{E}^{\otimes n} }_\diamond
    \leqslant \poly(n)\,\exp(-n\eta/2)
  \end{align}
  as claimed.
\end{proof}

\begin{proof}[*x:covariant-process-H-env]
  Let $V'_{X\to XE}$ be any Stinespring dilation isometry of
  $\mathcal{E}_{X\to X}$, such that
  $\mathcal{E}_{X\to X}(\cdot) = \tr_E`*[ V'_{X\to XE}\,(\cdot)\,V'^\dagger
  ]$. For the input state $\ket\Phi_{X:R_X}$, consider the output state
  $\ket{\varphi}_{XER_X}$ corresponding to first time-evolving by some time $t$,
  and then applying $V'$
  \begin{align}
    \ket{\varphi}_{XER_X}
    = V'\,\ee^{-i H_X t}\,\ket\Phi_{X:R_X}
    = \ee^{-i V' H_X V'^\dag t}\, V'\, \ket\Phi_{X:R_X}\ .
  \end{align}
  Now, let us define
  $\ket{\varphi'}_{XER_X} = \ee^{-i H_X t} \, V'\,\ket\Phi_{X:R_X}$. By the
  covariance property of $\mathcal{E}_{X\to X}$ both $\ket{\varphi}$ and
  $\ket{\varphi'}$ have the same reduced state on $X R_X$. Hence, they are
  related by some unitary $W_E^{(t)}$ on the system $E$ which in general depends
  on $t$
  \begin{align}
    \ket{\varphi}_{XER_X} = W^{(t)}_E \, \ket{\varphi'}_{XER_X}\ .
    \label{eq:8efuihdon}
  \end{align}
  We have
  \begin{align}
    \tr_{X}`*[ V' \ee^{-iH_X t} \Phi_{X:R_X} \ee^{i H_X t} V'^\dagger ]
    = W^{(t)}_{E} \tr_X`\big[ V' \Phi_{X:R_X} V'^\dagger ] W^{(t)\,\dagger}_{E}
  \end{align}
  so $W^{(t)}_E$ must define a representation of time evolution, at least on the
  support of the operator $\tr_X`\big[ V' \Phi_{X:R_X} V'^\dagger ]$. Hence, we may write
  $W^{(t)}_E = \ee^{-i H_E t}$ for some Hamiltonian $H_E$, and
  from~\eqref{eq:8efuihdon}, we have for all $t$
  \begin{align}
    V'_{X\to XE} \, \ee^{-i H_{X} t}  = \ee^{-i (H_X + H_E) t} \, V'_{X\to XE}\ .
  \end{align}
  Expanding for infinitesimal $t$ we obtain
  \begin{align}
    V'_{X\to XE} \, H_X = `*( H_X + H_E )  \, V'_{X\to XE}\ .
    \label{eq:ojifudsno}
  \end{align}
  Let $\ket0_E$ be an eigenvector of $H_E$ corresponding to the eigenvalue zero;
  if $H_E$ does not contain an eigenvector with eigenvalue equal to zero, we may
  trivially add a dimension to the system $E$ to accommodate this vector. Then,
  the operator $V'_{X\to XE} \bra0_E$ maps each state of a subset of energy levels of $XE$ to a corresponding energy
  level of same energy on $XE$; it may thus be completed to a fully
  energy-preserving unitary $V_{XE\to XE}$. More precisely, let $\ket{j}_X$ be
  a complete set of eigenvectors of $H_X$ with energies $h_j$. Then
  $\ket{\psi_j'} = V'_{X\to XE}\ket{j}_X$ is an eigenvector of $H_X+H_E$ of
  energy $h_j$ thanks to~\eqref{eq:ojifudsno}. We have two orthonormal sets
  $`\big{ \ket0_E\otimes\ket{j}_X }$ and $`\big{ \ket{\psi_j'}_X }$ in which the
  $j$-th vector of each set has the same energy; we can thus complete these sets
  into two bases $`*{ \ket{\chi_i}_{XE} }$, $`*{ \ket{\chi'_i}_{XE} }$ of
  eigenvectors of $H_X+H_E$, where the $i$-th element of either basis has
  exactly the same energy. This defines a unitary
  $V_{XE\to XE} = \sum_i \ket{\chi'_i}_{XE}\bra{\chi_i}_{XE}$ that is an
  extension of $V'_{X\to XE} \bra0_E$, and that satisfies all the conditions of
  the claim.
\end{proof}

\section{Technical lemmas}\label{appx:technical-lemmas}

\begin{lemma}[Pinching-like operator inequality]%
  \label{lemma:Ai-coherent-can-ignore}
  Let $`{E^i}_{i=1}^M$ be a collection of $M$ operators and $T\geqslant
  0$. Then, we have
  \begin{align}
    `*(\sum E^i)\, T\, `*(\sum E^{j\,\dagger})
    \leqslant M \sum E^i\,T\,E^{i\,\dagger}\ .
  \end{align}
\end{lemma}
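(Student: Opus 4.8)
The plan is to reduce the operator inequality to a statement about a single vector, and then apply the elementary bound $\norm*{\sum_{i=1}^M \ket{w_i}}^2 \leqslant M\sum_{i=1}^M \norm{\ket{w_i}}^2$, which holds for any $M$ vectors $\ket{w_i}$ (it is the Cauchy--Schwarz inequality applied to the constant vector $(1,\dots,1)$ in $\mathbb{C}^M$, equivalently the convexity of $x\mapsto x^2$).

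First I would use $T\geqslant 0$ to write $T = T^{1/2}T^{1/2}$ with $T^{1/2}\geqslant 0$. Fixing an arbitrary vector $\ket\psi$ and setting $\ket{w_i} = T^{1/2}\,E^{i\,\dagger}\ket\psi$, one computes
\begin{align}
  \bra\psi\,`*(\sum_i E^i)\,T\,`*(\sum_j E^{j\,\dagger})\,\ket\psi
  = \norm*{\sum_{i=1}^M T^{1/2}E^{i\,\dagger}\ket\psi}^2
  = \norm*{\sum_{i=1}^M \ket{w_i}}^2
  \leqslant M\sum_{i=1}^M \norm{\ket{w_i}}^2
  = M\,\bra\psi\,`*(\sum_i E^i\,T\,E^{i\,\dagger})\,\ket\psi\ ,
\end{align}
and since $\ket\psi$ is arbitrary this is exactly the claimed inequality.

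As an alternative that avoids the square root, one can argue pairwise: for each $i\neq j$, positivity of $(E^i - E^j)\,T\,(E^i-E^j)^\dagger$ yields $E^iTE^{j\,\dagger} + E^jTE^{i\,\dagger} \leqslant E^iTE^{i\,\dagger} + E^jTE^{j\,\dagger}$; summing over all unordered pairs $\{i,j\}$ with $i\neq j$ (each index appearing in $M-1$ of them) and then adding back the $M$ diagonal terms $E^iTE^{i\,\dagger}$ gives $\sum_{i,j}E^iTE^{j\,\dagger} \leqslant M\sum_i E^iTE^{i\,\dagger}$, which is the claim since the left-hand side equals $`*(\sum_i E^i)\,T\,`*(\sum_j E^{j\,\dagger})$. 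I do not expect any genuine obstacle here: the only place positivity of $T$ enters is to form $T^{1/2}$ in the first argument, or to invoke $(E^i-E^j)\,T\,(E^i-E^j)^\dagger\geqslant 0$ in the second; everything else is bookkeeping.
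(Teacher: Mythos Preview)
Your proof is correct; both arguments you give are complete and elementary. They are, however, genuinely different in spirit from the paper's proof. The paper introduces an auxiliary register $C$ of dimension $M$, writes
\[
  \Bigl(\sum_i E^i\Bigr)\,T\,\Bigl(\sum_j E^{j\,\dagger}\Bigr)
  = \tr_C\Bigl[\Bigl(\sum_i E^i\otimes\ket{i}_C\Bigr)\,T\,\Bigl(\sum_j E^{j\,\dagger}\otimes\bra{j}_C\Bigr)\,\bigl(\Ident\otimes M\proj\chi_C\bigr)\Bigr]
\]
with $\ket\chi_C = M^{-1/2}\sum_k\ket{k}_C$, and then uses the operator inequality $\proj\chi_C\leqslant\Ident_C$ under the partial trace of a positive operator. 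This is the ``pinching'' picture that motivates the lemma's name: dephasing the ancilla kills the cross terms. Your first argument instead sandwiches with an arbitrary vector and reduces everything to the scalar Cauchy--Schwarz bound $\norm{\sum_i\ket{w_i}}^2\leqslant M\sum_i\norm{\ket{w_i}}^2$; your second is a purely algebraic pairwise AM--GM-type estimate. All three are short, but yours avoid any ancilla construction and are arguably more transparent for a reader unfamiliar with the pinching formalism, while the paper's version makes the connection to the pinching inequality explicit and works directly at the operator level without ever fixing a test vector.
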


\begin{proof}[**lemma:Ai-coherent-can-ignore]
  Call our system $S$ and consider an additional register $C$ of dimension
  $\abs{C} = M$, and let $\ket{\chi}_C = M^{-1/2}\sum_{k=1}^M \ket{k}_C$. Then, we have
  \begin{align}
    `*(\sum E_S^i)\, T_S\, `*(\sum E_S^{j\,\dagger})
    &= \tr_C`*[ `*(\sum E_S^i\otimes\ket{i}_C)\, T_S\,
      `*(\sum E_S^{j\,\dagger}\otimes\bra{j}_C)
      \, `*(\Ident_S\otimes(M\,\proj\chi_C) ) ]
      \nonumber\\
    &\leqslant  M
      \tr_C`*[ `*(\sum E_S^i\otimes\ket{i}_C)\, T_S\,
      `*(\sum E_S^{j\,\dagger}\otimes\bra{j}_C)
      \, `*(\Ident_S\otimes\Ident_C) ]
      \nonumber\\
    &=  M
      \sum E_S^i \, T_S\, E_S^{i\,\dagger}\ ,
  \end{align}
  using $\proj\chi_C\leqslant \Ident_C$.
\end{proof}

\begin{lemma}[Gentle measurement]%
  \label{x:gentle-measurement-lemma}
  Let $\rho$ be a sub-normalized quantum state and
  $0\leqslant Q\leqslant \Ident$. For $\tr[Q\rho]\geqslant 1 - \delta$ we then
  have
  \begin{align}
    P(\rho, Q^{1/2}\rho Q^{1/2}) \leqslant \sqrt{2\delta}\ .
  \end{align}
\end{lemma}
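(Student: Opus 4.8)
The plan is to lower-bound the generalized fidelity $\bar{F}(\rho, Q^{1/2}\rho Q^{1/2})$ by $1-\delta$ and then convert this to the purified-distance bound via $P(\cdot,\cdot) = \sqrt{1-\bar{F}^2(\cdot,\cdot)}$. First I would dispose of the trivial regime: if $\delta\geqslant 1/2$ then $\sqrt{2\delta}\geqslant 1\geqslant P(\rho, Q^{1/2}\rho Q^{1/2})$ and there is nothing to prove, so I may assume $\delta<1/2$, in particular $1-\delta\geqslant 0$.

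Writing $\sigma := Q^{1/2}\rho Q^{1/2}$, the key estimate to establish is $\bar{F}(\rho,\sigma)\geqslant\tr[Q\rho]$. Since $\bar{F}(\rho,\sigma) = \norm{\sqrt{\rho}\sqrt{\sigma}}_1 + \sqrt{(1-\tr[\rho])(1-\tr[\sigma])}\geqslant\norm{\sqrt{\rho}\sqrt{\sigma}}_1$ — both factors under the root being nonnegative because $\tr[\sigma]=\tr[Q\rho]\leqslant\tr[\rho]\leqslant 1$ — it suffices to bound $\norm{\sqrt{\rho}\sqrt{\sigma}}_1$ from below. I would pick any purification $\ket\psi$ of $\rho$ on a reference system $R$, i.e.\ a (possibly sub-normalized) vector with $\tr_R\proj\psi=\rho$. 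Then $(Q^{1/2}\otimes\Ident_R)\ket\psi$ is a purification of $\sigma=Q^{1/2}\rho Q^{1/2}$, and Uhlmann's theorem, in the elementary form that $\abs{\braket{\psi_\rho}{\psi_\sigma}}\leqslant\norm{\sqrt{\rho}\sqrt{\sigma}}_1$ for every pair of purifications of $\rho$ and $\sigma$, yields $\norm{\sqrt{\rho}\sqrt{\sigma}}_1\geqslant\abs{\bra\psi(Q^{1/2}\otimes\Ident_R)\ket\psi}=\tr[Q^{1/2}\rho]$. Finally, $0\leqslant Q\leqslant\Ident$ forces $Q^{1/2}\geqslant Q$ (the function $t\mapsto t^{1/2}$ dominates $t\mapsto t$ on $[0,1]$, apply functional calculus), so $\tr[Q^{1/2}\rho]\geqslant\tr[Q\rho]\geqslant 1-\delta$.

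Collecting the pieces gives $\bar{F}(\rho, Q^{1/2}\rho Q^{1/2})\geqslant 1-\delta$, hence $P(\rho, Q^{1/2}\rho Q^{1/2}) = \sqrt{1-\bar{F}^2(\rho, Q^{1/2}\rho Q^{1/2})}\leqslant\sqrt{1-(1-\delta)^2}=\sqrt{2\delta-\delta^2}\leqslant\sqrt{2\delta}$, which is the claim. I expect no genuine obstacle here; the only point needing care is the bookkeeping with sub-normalized states — verifying that the Uhlmann bound $\abs{\braket{\psi_\rho}{\psi_\sigma}}\leqslant\norm{\sqrt{\rho}\sqrt{\sigma}}_1$ holds for non-normalized purification vectors (which follows at once by writing $\braket{\psi_\rho}{\psi_\sigma}=\tr[\sqrt{\rho}\sqrt{\sigma}\,M]$ for a contraction $M$ and invoking Hölder) and that dropping the additive term in the definition of $\bar{F}$ is legitimate. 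Should one wish to avoid even this, the whole argument can be rerun with normalized states by passing to the one-dimension-larger extensions $\rho\oplus(1-\tr[\rho])$ and $\sigma\oplus(1-\tr[\sigma])$, on which $\bar{F}$ coincides with the ordinary fidelity, and applying the standard Uhlmann theorem.
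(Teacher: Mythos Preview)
Your proof is correct and follows essentially the same route as the paper: bound $\bar F(\rho,Q^{1/2}\rho Q^{1/2})\geqslant\tr[Q^{1/2}\rho]\geqslant\tr[Q\rho]\geqslant 1-\delta$, then convert to the purified distance. The only difference is cosmetic: where you invoke Uhlmann with the explicit purifications $\ket\psi$ and $(Q^{1/2}\otimes\Ident)\ket\psi$, the paper simply observes that $\rho^{1/2}(Q^{1/2}\rho Q^{1/2})\rho^{1/2}=(\rho^{1/2}Q^{1/2}\rho^{1/2})^2$ is a perfect square, so $F(\rho,Q^{1/2}\rho Q^{1/2})=\tr[\rho^{1/2}Q^{1/2}\rho^{1/2}]=\tr[Q^{1/2}\rho]$ exactly, avoiding any appeal to Uhlmann.
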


This is a cruder statement than that of, e.g.,
\cite[Lemma~7]{Berta2010_uncertainty}, allowing for a more straightforward
proof.

\begin{proof}[**x:gentle-measurement-lemma]
  We have
  \begin{align}
    \bar{F}(\rho, Q^{1/2}\rho Q^{1/2}) \geqslant
    F(\rho, Q^{1/2}\rho Q^{1/2})
    &= \tr\left[\sqrt{\rho^{1/2}(Q^{1/2}\rho Q^{1/2})\rho^{1/2}}\right]\nonumber\\
    &= \tr\left[Q^{1/2}\rho\right]\geqslant \tr[Q\rho] \geqslant 1 - \delta\ .
  \end{align}
  Then, we get
  $
    P(\rho, Q^{1/2}\rho Q^{1/2})
    \leqslant \sqrt{ 1 - (1 - \delta)^2 }
    \leqslant \sqrt{2\delta}
    $.
\end{proof}

\begin{proposition}[Controlled-unitary using a POVM]%
  \label{x:POVM-controlled-unitary}
  Let $`{ Q^j }$ be a set of positive semi-definite operators on a system $X$
  satisfying $\sum Q^j \leqslant \Ident$, $`{ U^j }$ be a collection of
  unitaries on a system $Y$, and
  \begin{align}
    W_{XY} = \sum_j Q_X^j \otimes U_Y^j\ .
  \end{align}
  Then, we have $W^\dagger W \leqslant \Ident$.
\end{proposition}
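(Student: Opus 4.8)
Let $\{Q^j\}$ be positive semi-definite operators on $X$ with $\sum_j Q^j \leqslant \Ident$, let $\{U^j\}$ be unitaries on $Y$, and set $W_{XY} = \sum_j Q_X^j \otimes U_Y^j$. Then $W^\dagger W \leqslant \Ident$.

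This is a small technical lemma; the plan is a direct block-matrix / dilation computation. First I would expand $W^\dagger W = \sum_{j,k} (Q^j Q^k)_X \otimes (U^{j\dagger} U^k)_Y$, which mixes different $j,k$ and is not obviously bounded by $\Ident$ term-by-term. The cleanest route is to introduce an ancilla register $C$ with $\abs{C}$ equal to the number of operators, fix an orthonormal basis $\{\ket j_C\}$, and write $W_{XY} = \bra{\chi}_C \, \widetilde W_{XYC} $ where $\widetilde W_{XYC} = \sum_j Q^{j\,1/2}_X \otimes U^j_Y \otimes \ket{j}_C$ acting with $\ket{\chi}_C$ suitably. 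Actually the slicker packaging: define the isometry-like object $K_{X\to XC} = \sum_j Q^{j\,1/2}_X \otimes \ket j_C$; then $K^\dagger K = \sum_j Q^j \leqslant \Ident$, so $K$ is a contraction. Also define the controlled unitary $\mathcal U_{YC} = \sum_j U^j_Y \otimes \proj j_C$, which is a genuine unitary on $YC$ (each block is unitary). One checks that $W_{XY}$ equals $(\Ident_X \otimes \bra{+}_C)\,(\mathcal U_{YC})\,(K_{X\to XC}\otimes \Ident_Y)\cdot(\text{something})$ — I would instead avoid the $\ket+$ and simply observe that $W = (\Ident_X\otimes\langle\text{flat}|_C)\mathcal U_{YC}(K\otimes\Ident_Y)$ up to a normalization that I would track carefully.

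The simplest self-contained argument, which I would actually write, is this: for any normalized $\ket\psi_{XY}$ (and any reference $R$), let $\ket{\Psi}_{XYC} = (K_{X\to XC}\otimes\Ident_Y)\ket\psi_{XY}$, so $\langle\Psi\ket\Psi = \langle\psi| (K^\dagger K \otimes\Ident_Y)|\psi\rangle = \langle\psi|(\textstyle\sum_j Q^j)\otimes\Ident_Y|\psi\rangle \leqslant 1$. Then $\mathcal U_{YC}\ket\Psi_{XYC} = \sum_j (Q^{j\,1/2}_X\otimes U^j_Y)\ket\psi_{XY}\otimes\ket j_C$, which has the same norm $\leqslant 1$ since $\mathcal U$ is unitary. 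Finally, projecting this onto the flat state: but one sees that $W_{XY}\ket\psi = \sum_j (Q^{j}_X\otimes U^j_Y)\ket\psi$ does not quite match — I've conflated $Q^j$ with $Q^{j\,1/2}$. The correct move is to not take square roots at all: set $K_{X\to XC} = \sum_j Q^j_X \otimes \ket j_C$ is not a contraction in general. So the right statement is $W^\dagger W = \sum_{j,k} Q^j Q^k \otimes U^{j\dagger}U^k$; write $V_{XY\to XYC} = \sum_j (Q^j_X\otimes U^j_Y)\otimes\ket j_C$, then $V^\dagger V = W^\dagger W$ by direct computation (the $\ket j_C$ enforce $j=k$... no, they don't, $\langle j | k\rangle = \delta_{jk}$ so actually $V^\dagger V = \sum_j Q^j Q^j \otimes \Ident$, which is wrong).

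Given the confusion above, the honest plan is: \textbf{write $W = \mathcal U_{YC}\,L_{X\to XC}$ reinterpreted as follows.} Let $L_{X\to XC} = \sum_j Q^j_X\otimes\ket j_C$; it is \emph{not} a contraction, but $W_{XY} = (\Ident_{XY}\otimes\bra{e}_C)(\Ident_X\otimes\mathcal U_{YC})(L_{X\to XC}\otimes\Ident_Y)$ is false dimensionally. The main obstacle — and it is genuine — is finding the correct dilation so that $W$ appears as $\langle e|\,(\text{unitary})\,(\text{contraction})$; I expect the resolution is the standard one with $\ket{\chi}_C = \abs C^{-1/2}\sum_j\ket j_C$ and a factor $\abs C$ absorbed via $\proj\chi_C\leqslant\Ident_C$, exactly mirroring \cref{lemma:Ai-coherent-can-ignore}. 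Concretely I would prove: for any $\ket\psi_{XYR}$,
\begin{align}
  \bra\psi W^\dagger W\ket\psi
  &= \sum_{j,k}\bra\psi (Q^jQ^k)_X\otimes(U^{j\dagger}U^k)_Y\ket\psi
  = \norm[\Big]{\sum_j (Q^j_X\otimes U^j_Y)\ket\psi}^2 \nonumber\\
  &= \norm[\Big]{(\Ident_{XY}\otimes\bra{\chi}_C)\sum_j (Q^j_X\otimes U^j_Y\otimes\ket j_C)\ket\psi}^2
  \leqslant \norm[\Big]{\sum_j (Q^j_X\otimes U^j_Y\otimes\ket j_C)\ket\psi}^2,
\end{align}
where the last step drops $\bra\chi_C$ using $\norm{\bra\chi_C}=1$, hence $(\Ident\otimes\bra\chi_C)$ is a contraction. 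But this bound overshoots: the right-hand norm equals $\sum_j\norm{(Q^j_X\otimes U^j_Y)\ket\psi}^2 = \sum_j\bra\psi (Q^j)^2\otimes\Ident\ket\psi$, which is $\leqslant \bra\psi(\sum_j Q^j)\otimes\Ident\ket\psi \leqslant 1$ precisely because $0\leqslant Q^j\leqslant \sum Q^j\leqslant\Ident$ forces $(Q^j)^2\leqslant Q^j$. This closes the argument. \textbf{The one subtlety to get right} is the inequality $\sum_j (Q^j)^2 \leqslant \sum_j Q^j$: it holds since each $Q^j\leqslant\Ident$ implies $(Q^j)^2 = Q^{j\,1/2}Q^jQ^{j\,1/2}\leqslant Q^{j\,1/2}\Ident Q^{j\,1/2}=Q^j$, and summing preserves the order. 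I would present the proof in this order: (i) rewrite $\bra\psi W^\dagger W\ket\psi$ as the squared norm of $\sum_j(Q^j\otimes U^j)\ket\psi$; (ii) use orthonormality of $\{\ket j_C\}$ to turn the cross terms into a single $\sum_j$ via the contraction $\bra\chi_C$ — or more directly, note $\|\sum_j(Q^j\otimes U^j)\ket\psi\|^2 = \sum_{j,k}\bra\psi Q^jQ^k\otimes U^{j\dagger}U^k\ket\psi$ and bound using the ancilla trick; (iii) invoke $(Q^j)^2\leqslant Q^j$ and $\sum Q^j\leqslant\Ident$ to conclude $\leqslant 1$; (iv) since $\ket\psi$ was arbitrary, $W^\dagger W\leqslant\Ident$.
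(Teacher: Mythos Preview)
Your final argument has a genuine gap. With the normalized flat state $\ket\chi_C = \abs{C}^{-1/2}\sum_j\ket j_C$ you have $\bra\chi_C\ket j_C = \abs{C}^{-1/2}$, so
\[
(\Ident_{XY}\otimes\bra\chi_C)\sum_j (Q^j\otimes U^j\otimes\ket j_C)\ket\psi
= \abs{C}^{-1/2}\sum_j (Q^j\otimes U^j)\ket\psi\ ,
\]
not $\sum_j (Q^j\otimes U^j)\ket\psi$. The missing factor $\abs{C}$ means your contraction step only yields $\bra\psi W^\dagger W\ket\psi \leqslant \abs{C}\sum_j\bra\psi (Q^j)^2\otimes\Ident\ket\psi$, which is precisely the pinching-type bound of \cref{lemma:Ai-coherent-can-ignore} and is too weak. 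In fact the inequality you ultimately rely on, $\norm{\sum_j(Q^j\otimes U^j)\ket\psi}^2 \leqslant \sum_j\norm{(Q^j\otimes U^j)\ket\psi}^2$, is false: take $Q^1=Q^2=\tfrac12\Ident_X$ and $U^1=U^2=\Ident_Y$, so that $W=\Ident$ and the left side is $1$ while the right side is $\tfrac12$.

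The fix is exactly the square-root dilation you wrote down and then discarded. Set $V_{X\to XK} = \sum_j \sqrt{Q^j}\otimes\ket j_K$; then $V^\dagger V = \sum_j Q^j \leqslant \Ident$, so $V$ is a contraction and hence $VV^\dagger\leqslant\Ident_{XK}$. The controlled unitary $\mathcal U = \sum_j \Ident_X\otimes U^j_Y\otimes\proj j_K$ is genuinely unitary. Now compute
\[
(V\otimes\Ident_Y)^\dagger\,\mathcal U\,(V\otimes\Ident_Y)
= \sum_{j}\sqrt{Q^j}\,\sqrt{Q^j}\otimes U^j = \sum_j Q^j\otimes U^j = W_{XY}\ ,
\]
so the square roots recombine to give $Q^j$, not $\sqrt{Q^j}$ as you feared. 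Then $W^\dagger W = V^\dagger\mathcal U^\dagger (VV^\dagger)\mathcal U V \leqslant V^\dagger\mathcal U^\dagger\mathcal U V = V^\dagger V \leqslant \Ident$. This is the paper's proof.
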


\begin{proof}[**x:POVM-controlled-unitary]
  Using an additional register $K$, define
  \begin{align}
    V_{X\to XK} = \sum \sqrt{Q^j} \otimes \ket{j}_K\ .
  \end{align}
  Then, we have $V^\dagger V = \sum Q^j \leqslant \Ident$.  Clearly,
  $VV^\dagger \leqslant \Ident_{XK}$ because $VV^\dagger$ and $V^\dagger V$ have
  the same non-zero eigenvalues. Now, let
  \begin{align}
    W = V^\dagger `*( \sum \Ident_X\otimes U^j_Y \otimes \proj{j}_K) \, V\ .
  \end{align}
  Because the middle term in parentheses is unitary, we manifestly have
  $W^\dagger W \leqslant \Ident$.
\end{proof}

\section{Dilation of energy-conserving operators to unitaries}
\label{appx:dilation-energy-conserving-subunitaries}

This appendix collects a few technical lemmas on constructing an
energy-conserving unitary that extends a given operator of norm less than one.

\begin{proposition}
  \label{x:complete-subunitary-to-unitary}
  Let $W_X$ be an operator on a system $X$, such that
  $W^\dagger W \leqslant \Ident$. Then, there exists a unitary operator $U_{XQ}$
  acting on $X$ and a qubit $Q$ such that for any $\ket\psi_X$,
  \begin{align}
    \bra{0}_Q\,U_{XQ}\,(\ket\psi_X\otimes\ket{0}_Q)
    = W_{X} \ket\psi_X\ .
  \end{align}
  That is, any operator $W$ with $\norm{W}_\infty\leqslant1$ can be dilated to a
  unitary, with a post-selection on the output.
\end{proposition}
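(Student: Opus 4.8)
The plan is to obtain $U_{XQ}$ as a unitary extension of a natural isometry built out of $W$. Since $W^\dagger W\leqslant\Ident_X$, the operator $D_X := \sqrt{\Ident_X - W^\dagger W}\geqslant 0$ is well defined through functional calculus, and the same hypothesis (equivalently $\norm{W}_\infty\leqslant 1$) also gives $WW^\dagger\leqslant\Ident_X$. First I would check that the linear map $V_{X\to XQ}$ determined by $V_{X\to XQ}\ket{\psi}_X = W_X\ket{\psi}_X\otimes\ket{0}_Q + D_X\ket{\psi}_X\otimes\ket{1}_Q$ is an isometry; this is immediate, since $V^\dagger V = W^\dagger W + D_X^2 = W^\dagger W + (\Ident_X - W^\dagger W) = \Ident_X$.

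Next I would extend $V$ to a unitary on the full $2d_X$-dimensional space $\Hs_X\otimes\Hs_Q$, using the orthogonal decomposition $\Hs_X\otimes\Hs_Q = (\Hs_X\otimes\ket{0}_Q)\oplus(\Hs_X\otimes\ket{1}_Q)$ and identifying the first summand with $\Hs_X$. The image $\mathrm{Im}\,V$ is a $d_X$-dimensional subspace, so $(\mathrm{Im}\,V)^\perp$ also has dimension $d_X$; choosing any isometry $V'$ from $\Hs_X\otimes\ket{1}_Q$ onto $(\mathrm{Im}\,V)^\perp$ and letting $U_{XQ}$ act as $V$ on $\Hs_X\otimes\ket{0}_Q$ and as $V'$ on $\Hs_X\otimes\ket{1}_Q$ yields a unitary, because the images of $V$ and $V'$ are mutually orthogonal and together span $\Hs_X\otimes\Hs_Q$. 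Finally, for every $\ket{\psi}_X$ we have $\bra{0}_Q U_{XQ}(\ket{\psi}_X\otimes\ket{0}_Q) = \bra{0}_Q V_{X\to XQ}\ket{\psi}_X = W_X\ket{\psi}_X$, which is the asserted identity.

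I do not anticipate any genuine obstacle here: this is the textbook dilation of a contraction to a unitary, and the only point requiring attention is well-definedness of the square roots, which is exactly where $W^\dagger W\leqslant\Ident$ is used. For concreteness one could also avoid the abstract extension and write $U_{XQ}$ explicitly in block form, $U_{XQ} = W_X\otimes\proj{0}_Q + W_X^\dagger\otimes\proj{1}_Q - \sqrt{\Ident_X - W_XW_X^\dagger}\otimes\ketbra{0}{1}_Q + \sqrt{\Ident_X - W_X^\dagger W_X}\otimes\ketbra{1}{0}_Q$, and verify $U^\dagger U = UU^\dagger = \Ident_{XQ}$ using the intertwining identity $\sqrt{\Ident_X - W_XW_X^\dagger}\,W_X = W_X\sqrt{\Ident_X - W_X^\dagger W_X}$ (a consequence of functional calculus); with this choice $\bra{0}_Q U_{XQ}(\ket{\psi}_X\otimes\ket{0}_Q) = W_X\ket{\psi}_X$ by inspection.
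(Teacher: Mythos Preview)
Your proof is correct and follows essentially the same approach as the paper: both construct the isometry $V_{X\to XQ} = W\otimes\ket{0}_Q + \sqrt{\Ident - W^\dagger W}\otimes\ket{1}_Q$, verify $V^\dagger V=\Ident_X$, and then extend $V$ to a unitary by mapping $\Hs_X\otimes\ket{1}_Q$ onto the orthogonal complement of the image of $V$. Your additional explicit block form (the Halmos/Julia dilation) is a valid concrete instantiation of this extension, which the paper does not spell out, but the core argument is the same.
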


\begin{proof}[**x:complete-subunitary-to-unitary]
  Setting
  $V_{X\to XQ} = W\otimes\ket{0}_Q + \sqrt{\Ident - W^\dagger{}W} \otimes
  \ket{1}_Q$, we see that
  $V^\dagger V = W^\dagger W + \Ident - W^\dagger W = \Ident_X$, and hence
  $V_{X\to XQ}$ is an isometry. We can complete this isometry to a unitary
  $U_{XQ}$ that acts as $V$ on the support of $\Ident_X\otimes\proj{0}_Q$ and
  that maps the the support of $\Ident_X\otimes\proj{1}_Q$ onto the
  complementary space to the image of $V$. It then follows that for any
  $\ket\psi_X$, we have
  $U_{XQ}\,(\ket\psi_X\otimes\ket0_Q) = V_{X\to XQ}\, \ket\psi_X =
  (W_X\ket\psi_X)\otimes\ket0_Q + (\ldots)\otimes\ket1_Q$, and the claim
  follows.
\end{proof}

\begin{proposition}
  \label{x:complete-energy-preserving-subunitary-to-unitary}
  Let $X$ be a quantum system with Hamiltonian $H_X$ and $W_X$ be an operator
  with $W^\dagger W \leqslant \Ident$ as well as $[W_X, H_X] = 0$. Then, there
  exists a unitary operator $U_{XQ}$ acting on $X$ and a qubit $Q$ with $H_Q=0$,
  that satisfies $[U_{XQ}, H_X] = 0$ such that
  \begin{align}
    \bra{0}_Q\,U_{XQ}\,\ket{0}_Q = W_{X}\ .
  \end{align}
  That is, any energy-preserving operator $W$ with $\norm{W}_\infty\leqslant1$
  can be dilated to an energy-preserving unitary on an ancilla with a
  post-selection on the output.
\end{proposition}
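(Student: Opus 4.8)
The plan is to mimic the proof of \cref{x:complete-subunitary-to-unitary}, taking care that every step respects the energy grading. First I would set
\begin{align}
  V_{X\to XQ} = W_X\otimes\ket{0}_Q + \sqrt{\Ident_X - W_X^\dagger W_X}\otimes\ket{1}_Q\ ,
\end{align}
which is an isometry since $V^\dagger V = W^\dagger W + \Ident_X - W^\dagger W = \Ident_X$. Because $[W_X,H_X]=0$ we also have $[W_X^\dagger W_X,H_X]=0$ and hence $[\sqrt{\Ident_X - W_X^\dagger W_X},H_X]=0$; therefore $V_{X\to XQ}\,H_X = (H_X+H_Q)\,V_{X\to XQ}$ (recall $H_Q=0$), i.e. $V$ maps each eigenspace of $H_X$ isometrically into the eigenspace of $H_X+H_Q$ of the same energy.

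The only remaining task is to complete $V$ to an energy-conserving unitary $U_{XQ}$ on $\Hs_X\otimes\Hs_Q$ that agrees with $V$ on $\Hs_X\otimes\ket0_Q$, and I would do this energy sector by energy sector. Write $\Hs_X = \bigoplus_E \Hs_X^{(E)}$ for the eigenspace decomposition of $H_X$, so the energy-$E$ eigenspace of $H_X+H_Q$ is $\Hs_X^{(E)}\otimes\Hs_Q$, of dimension $2\dim\Hs_X^{(E)}$. Within this sector $V$ restricts to an isometry from $\Hs_X^{(E)}\otimes\ket0_Q$ into $\Hs_X^{(E)}\otimes\Hs_Q$, whose image has dimension $\dim\Hs_X^{(E)}$; hence the orthogonal complement of that image inside $\Hs_X^{(E)}\otimes\Hs_Q$ also has dimension $\dim\Hs_X^{(E)}$, matching the dimension of $\Hs_X^{(E)}\otimes\ket1_Q$. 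I pick any unitary between these two equidimensional spaces and let $U_{XQ}$ act as $V$ on $\Hs_X^{(E)}\otimes\ket0_Q$ and as this chosen unitary on $\Hs_X^{(E)}\otimes\ket1_Q$; performing this in every sector yields a unitary $U_{XQ}$ that preserves each eigenspace of $H_X+H_Q$, i.e. $[U_{XQ},H_X]=0$.

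Finally, for any $\ket\psi_X$ we get $U_{XQ}(\ket\psi_X\otimes\ket0_Q) = V_{X\to XQ}\ket\psi_X = (W_X\ket\psi_X)\otimes\ket0_Q + (\cdots)\otimes\ket1_Q$, so $\bra0_Q U_{XQ}(\ket\psi_X\otimes\ket0_Q) = W_X\ket\psi_X$, which is the asserted identity $\bra0_Q U_{XQ}\ket0_Q = W_X$. I do not expect any genuine obstacle: the only point requiring a moment's thought is the sector-wise dimension count, which works precisely because the ancilla $Q$ doubles each sector's dimension while the image of $V$ occupies only half of it, so the completion can be carried out independently in each energy sector without ever mixing energies.
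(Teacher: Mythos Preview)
Your proof is correct and follows essentially the same approach as the paper: both construct the isometry $V_{X\to XQ}=W\otimes\ket0_Q+\sqrt{\Ident-W^\dagger W}\otimes\ket1_Q$, verify it intertwines $H_X$ with $H_X+H_Q$, and then complete it to an energy-conserving unitary. Your sector-by-sector dimension count is just a more explicit phrasing of the paper's basis-completion argument.
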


\begin{proof}[**x:complete-energy-preserving-subunitary-to-unitary]
  First we calculate
  $[W^\dagger W, H_X] = W^\dagger[W, H_X] + [W^\dagger, H_X]\, W = 0 - [W,
  H_X]^\dagger W = 0$.  This implies that
  $[\sqrt{\Ident - W^\dagger W}, H_X] = 0$, as $W^\dagger W$ and
  $\sqrt{\Ident - W^\dagger W}$ have the same eigenspaces.
  We define
  \begin{align}
    V_{X\to XQ} =
    W\otimes\ket{0}_Q + \sqrt{\Ident - W^\dagger{}W} \otimes \ket{1}_Q\ .
  \end{align}
  The operator $V_{X\to XQ}$ is an isometry, because
  $V^\dagger V = W^\dagger W + \Ident - W^\dagger W = \Ident_X$. Furthermore, we
  have
  \begin{align}
  V_{X\to XQ} \, H_X &= (W_X H_X)\otimes\ket0 + (\sqrt{\Ident - W^\dagger W}\,
  H_X)\otimes\ket1\\
  &= (H_X W_X)\otimes\ket0 + (H_X \sqrt{\Ident - W^\dagger W})
  \otimes\ket1 = H_X \, V_{X\to XQ}
  \end{align}
  and thus we find $[V_{X\to XQ}, H_X] = 0$.  Let $`\big{ \ket{j}_X }$ be an
  eigenbasis of $H_X$, and let $\ket{\psi'_j}_{XQ} = V_{X\to XQ} \ket{j}_X$,
  noting that both $\ket{j}_X$ and $\ket{\psi'_j}_{XQ}$ have the same energy.
  The two collections of vectors $`\big{ \ket{j}_X\otimes\ket0_Q }$ and
  $`\big{ \ket{\psi'_j}_{XQ} }$ can thus be completed into two bases
  $`\big{ \ket{\chi_i}_{XQ} }$ and $`\big{ \ket{\chi'_i}_{XQ} }$ of eigenvectors
  of $H_X + H_Q$ where the $i$-th element of both bases have the same energy.
  Define finally $U_{XQ} = \sum_i \ketbra{\chi'_i}{\chi_i}_{XQ}$, noting that by
  construction $U_{XQ}\ket0_Q = V_{X\to XQ}$ and $[U_{XQ}, H_X] = 0$.
\end{proof}

\begin{proposition}
  \label{x:complete-energy-preserving-subunitary-to-unitary-noflag}
  Let $X$ be a quantum system with Hamiltonian $H_X$ and, $W_X$ an operator
  satisfying $W^\dagger W \leqslant \Ident$ such that $[W_X, H_X]=0$. Then, for
  any $\epsilon>0$, there exists a unitary operator $U_X$ satisfying
  $[U_X,H_X]=0$ such that for any states $\ket{\psi}_X, \ket{\psi'}_{X}$
  satisfying $\Re`*{ \matrixel{\psi'}{W}{\psi} } \geqslant 1 - \epsilon$, we
  have $\Re`*{ \matrixel{\psi'}{U}{\psi} } \geqslant 1 - 6\epsilon^{1/4}$.
\end{proposition}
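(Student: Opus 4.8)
The plan is to take $U_X$ to be the unitary part of the polar decomposition of $W_X$: this can be carried out energy-block by energy-block, so that the resulting unitary automatically commutes with $H_X$. Since $[W_X,H_X]=0$, decompose $\Hs_X=\bigoplus_E\Hs_E$ into the eigenspaces of $H_X$, so that $W_X=\bigoplus_E W_E$ with each $W_E$ acting on $\Hs_E$ and $W_E^\dagger W_E\leqslant\Ident$. For each $E$ write the polar decomposition $W_E=V_E\,\lvert W_E\rvert$ with $\lvert W_E\rvert=\sqrt{W_E^\dagger W_E}\leqslant\Ident$ and $V_E$ a partial isometry with initial space $(\ker W_E)^\perp$ and final space $\mathrm{ran}\,W_E$. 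Because $\dim\Hs_E<\infty$ and $\dim\ker W_E=\dim(\mathrm{ran}\,W_E)^\perp$ by rank--nullity, each $V_E$ may be completed to a unitary on $\Hs_E$; since $\lvert W_E\rvert$ vanishes on $\ker W_E$, the completion still obeys $V_E\,\lvert W_E\rvert=W_E$. Setting $U_X=\bigoplus_E V_E$ then gives a unitary that is block-diagonal in the energy decomposition, hence $[U_X,H_X]=0$, and that satisfies $W_X=U_X\,\lvert W_X\rvert$ with $\lvert W_X\rvert=\bigoplus_E\lvert W_E\rvert=\sqrt{W_X^\dagger W_X}$.

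Next I would estimate the error. Let $\ket{\psi}_X,\ket{\psi'}_X$ be unit vectors with $\Re`*{\matrixel{\psi'}{W}{\psi}}\geqslant1-\epsilon$. By Cauchy--Schwarz, $\norm{W\ket{\psi}}\geqslant\lvert\matrixel{\psi'}{W}{\psi}\rvert\geqslant1-\epsilon$, so $\matrixel{\psi}{\Ident-W^\dagger W}{\psi}=1-\norm{W\ket{\psi}}^2\leqslant2\epsilon$. Since $W\ket{\psi}-U\ket{\psi}=U(\Ident-\lvert W\rvert)\ket{\psi}$, using unitarity of $U$ and the operator inequalities $(\Ident-T)^2\leqslant\Ident-T\leqslant\Ident-T^2$ (valid for $0\leqslant T\leqslant\Ident$) applied to $T=\lvert W\rvert$, we get $\norm{W\ket{\psi}-U\ket{\psi}}^2=\matrixel{\psi}{(\Ident-\lvert W\rvert)^2}{\psi}\leqslant\matrixel{\psi}{\Ident-W^\dagger W}{\psi}\leqslant2\epsilon$. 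Also $\norm{W\ket{\psi}-\ket{\psi'}}^2=\norm{W\ket{\psi}}^2+1-2\Re`*{\matrixel{\psi'}{W}{\psi}}\leqslant2\epsilon$. The triangle inequality then yields $\norm{U\ket{\psi}-\ket{\psi'}}\leqslant2\sqrt{2\epsilon}$, hence $\Re`*{\matrixel{\psi'}{U}{\psi}}=1-\frac12\norm{U\ket{\psi}-\ket{\psi'}}^2\geqslant1-4\epsilon$, which implies the claimed bound $1-6\epsilon^{1/4}$ (with room to spare when $\epsilon$ is small, and trivially otherwise since $\Re`*{\matrixel{\psi'}{U}{\psi}}\geqslant-1$).

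I do not expect a genuine obstacle, but the point worth flagging is why the obvious shortcut fails: one cannot just let $U$ agree with $W$ on the eigenspace of $W^\dagger W$ with eigenvalue $1$ and extend arbitrarily elsewhere, because $W^\dagger W$ may have eigenvalues arbitrarily close to (but below) $1$ with no spectral gap, so $\ket{\psi}$ need not be close to that eigenspace even when $\matrixel{\psi}{\Ident-W^\dagger W}{\psi}$ is tiny. The polar-decomposition route avoids this entirely because the key estimate $\norm{(\Ident-\lvert W\rvert)\ket{\psi}}^2\leqslant\matrixel{\psi}{\Ident-W^\dagger W}{\psi}$ needs no gap. The remaining items --- the dimension count that lets each $V_E$ be closed up into a unitary, and the automatic commutation $[U_X,H_X]=0$ from block-diagonality --- are routine in finite dimension. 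One could alternatively start from \cref{x:complete-energy-preserving-subunitary-to-unitary} (dilating $W$ to an energy-preserving unitary on $X$ together with a qubit $Q$) and then argue that $Q$ decouples near $\ket0_Q$ on the relevant inputs; but removing the ancilla requires essentially the same polar-type estimate, so the direct construction above is cleaner.
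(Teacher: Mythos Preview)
Your proof is correct and takes a genuinely cleaner route than the paper's. The paper introduces a spectral threshold $\nu$: it sets $P$ equal to the projector onto the eigenspaces of $F=W^\dagger W$ with eigenvalue at least $\nu$, defines the partial isometry $V=WF^{-1/2}P$, completes it to an energy-preserving unitary $U$, and then bounds $\Re\langle\psi'|U|\psi\rangle$ by splitting into the $P$ and $\Ident-P$ pieces; the factor $F^{-1/2}$ forces the estimate $\lVert(F^{-1/2}-\Ident)P\rVert\leqslant\nu^{-1/2}$, and optimizing over $\nu$ (the paper takes $\nu=2\sqrt{\epsilon}$) is what produces the $\epsilon^{1/4}$ loss. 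You bypass the threshold entirely by taking the full polar decomposition $W=U\,\lvert W\rvert$ and using the operator inequality $(\Ident-\lvert W\rvert)^2\leqslant\Ident-W^\dagger W$, which is precisely the ``no-gap'' estimate the thresholded argument is missing. The payoff is twofold: your $U$ does not depend on $\epsilon$ (so a single unitary works uniformly), and you obtain the sharper bound $\Re\langle\psi'|U|\psi\rangle\geqslant 1-4\epsilon$, which dominates the paper's $1-6\epsilon^{1/4}$ in the nontrivial regime. Both constructions are essentially the polar part of $W$ on the support of $W^\dagger W$; the difference is only in the analysis, and yours is shorter and loses less.
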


\begin{proof}[*x:complete-energy-preserving-subunitary-to-unitary-noflag]
  Let $F = W^\dagger W \leqslant \Ident$ noting that $F^\dagger = F$.  For some
  $\nu$ with $0<\nu<1$ to be determined later, let $P$ be the projector onto the
  eigenspaces of $F$ corresponding to eigenvalues greater or equal to $\nu$. Define $V = WF^{-1/2} P$.  The operator $V$ is a partial isometry, meaning
  that its singular values are all equal to one or to zero, because
  $V^\dagger V = P F^{-1/2} W^\dagger W F^{-1/2} P = P F^{-1/2} F F^{-1/2} P =
  P$, since $P$ lies within the support of $F$. Observe that
  $[F, H] = [W^\dagger W, H] = W^\dagger [W, H] - [W, H]^\dagger W = 0$ and
  hence $[F^{-1/2}, H] = 0$ and $[P,H]=0$.  Hence, we have
  $[V, H] = [W F^{-1/2} P, H] = 0$.  So we may complete the partial isometry $V$
  into a full unitary $U$ that also commutes with $H$ by acting as the identity
  on the remaining elements of the eigenbasis of $H$ in which $V$ is diagonal.
  We may thus write $U = V + X = WF^{-1/2}P + X$ for some operator $X$
  satisfying $X P = 0$ and $[X, H] = 0$.

  Now, let $\ket\psi$, $\ket{\psi'}$ such that
  $\Re`*{ \matrixel{\psi'}{W}{\psi} } \geqslant 1 - \epsilon$, and write
  \begin{align}
    \Re`*{ \matrixel{\psi'}{U}{\psi} }
    &= \Re`*{ \matrixel{\psi'}{(U - W)}{\psi} } + \Re`*{ \matrixel{\psi'}{W}{\psi} }
    \nonumber\\
    &\geqslant
      \Re`*{ \matrixel{\psi'}{(U - W)}{\psi} }
      + 1 - \epsilon \ .
    \label{eq:fidnsofda3e}
  \end{align}
  We have
  $\dmatrixel{\psi}{P} \geqslant \dmatrixel{\psi}{PFP} = \dmatrixel{\psi}{F} -
  \dmatrixel{\psi}{(\Ident-P)F} \geqslant \dmatrixel{\psi}{F} - \nu$, recalling
  that $\Ident-P$ projects onto the eigenspaces of $F$ whose eigenvalues are
  less than $\nu$. Then,
  $
    \dmatrixel{\psi}{F} = 
    \dmatrixel{\psi}{W^\dagger W} \geqslant
    \matrixel{\psi}{W^\dagger}{\psi'}\matrixel{\psi'}{W}{\psi} =
    \abs{\matrixel{\psi'}{W}{\psi}}^2 \geqslant (1-\epsilon)^2 \geqslant 1 -
    2\epsilon
  $,
  and hence
  $\norm[\big]{(\Ident - P)\ket\psi}^2 = \dmatrixel{\psi}{(\Ident-P)}
  \leqslant 1 - (\dmatrixel{\psi}{F} - \nu)
  \leqslant 1 - (1-2\epsilon) + \nu = 2\epsilon + \nu$.
  Hence, we get
  \begin{align}
    \Re`\big{ \matrixel{\psi'}{(U - W)}{\psi} }
    &= \Re`\big{ \matrixel{\psi'}{(U - W) P}{\psi} }
    + \Re`\big{ \matrixel{\psi'}{(U - W)(\Ident- P)}{\psi} }
      \nonumber\\
    &\geqslant \Re`\big{ \matrixel{\psi'}{(U - W) P}{\psi} } - 2\sqrt{2\epsilon+\nu}
      \label{eq:fiodbsnfois}
  \end{align}
  since by Cauchy-Schwarz
  $\abs[\big]{ \matrixel{\psi'}{(U - W)(\Ident-P)}{\psi} } \leqslant
  \norm[\big]{(U - W)^\dagger\ket{\psi'}}\, \norm[\big]{(\Ident-P)\ket\psi}$,
  where $\norm{(U-W)^\dagger\ket{\psi'}}\leqslant 2$. In order to continue, let $\ket{\chi} = W^\dagger \ket{\psi'} - \ket{\psi}$,
  and calculate
  $
    \braket\chi\chi = \dmatrixel{\psi'}{WW^\dagger} + \braket{\psi}{\psi}
    - 2\Re`*{ \matrixel{\psi'}{W}{\psi} } \leqslant 2 - 2(1-\epsilon) = 2\epsilon
  $,
  and hence we deduce that
  $\norm[\big]{\ket\chi} = \norm[\big]{ W^\dagger \ket{\psi'} - \ket{\psi} }
  \leqslant \sqrt{2\epsilon}$. Then, with $\bra{\psi'} W = \bra{\psi} + \bra{\chi}$ we have
  \begin{align}
    \hspace*{3em}
    &\hspace*{-3em}
      \Re`\big{ \matrixel{\psi'}{(U - W) P}{\psi} }
      = \Re`\big{ \matrixel{\psi'}{W (F^{-1/2} - \Ident) P}{\psi} }
      \nonumber\\
    &= \Re`\big{ \matrixel{\chi}{ (F^{-1/2} - \Ident) P }{\psi} }
      + \Re`\big{ \dmatrixel{\psi}{ (F^{-1/2} - \Ident) P } }
      \nonumber\\
    &\geqslant  \Re`\big{ \matrixel{\chi}{ (F^{-1/2} - \Ident) P }{\psi} }
      \label{eq:fodisafodisahif}
  \end{align}
  since
  $\dmatrixel{\psi}{ (F^{-1/2} - \Ident) P } = \dmatrixel{\psi}{ P (F^{-1/2} -
    \Ident) P } \geqslant 0$ as $P$ commutes with $F^{-1/2}$ and since
  $F=W^\dagger W\leqslant\Ident$ implies that $F^{-1/2}\geqslant\Ident$.  To
  bound the remaining term we first write
  $\abs[\big]{ \matrixel{\chi}{ (F^{-1/2} - \Ident) P }{\psi} } \leqslant
  \norm[\big]{\ket\chi}\norm[\big]{(F^{-1/2}-\Ident) P \ket{\psi}} \leqslant
  \sqrt{2\epsilon/\nu}$; the last inequality follows since $P$ projects onto the
  eigenspaces of $F$ with eigenvalues larger than or equal to $\nu$, thus
  $F^{-1/2}P \leqslant \nu^{-1/2}P$ and hence
  $\norm[\big]{(F^{-1/2}-\Ident) P\ket{\psi}} \leqslant \nu^{-1/2}-1 \leqslant
  \nu^{-1/2}$.  Hence, we have
  \begin{align}
    \text{\eqref{eq:fodisafodisahif}}
    \geqslant - \sqrt{\frac{2\epsilon}{\nu}}
  \end{align}
  Following the inequalities from~\eqref{eq:fidnsofda3e},
  invoking~\eqref{eq:fiodbsnfois} and with the above, we finally obtain
  \begin{align}
    \Re`*{ \matrixel{\psi'}{U}{\psi} }
    \geqslant 1 - \epsilon - 2\sqrt{2\epsilon+\nu} - \sqrt{\frac{2\epsilon}{\nu}}\ .
  \end{align}
  Choosing $\nu=2\epsilon^{1/2}$, we obtain, using $\epsilon\leqslant\sqrt{\epsilon}$,
  \begin{align}
    1 - \Re`*{ \matrixel{\psi'}{U}{\psi} }
    \leqslant
    \epsilon + 2\sqrt{2\epsilon+2\epsilon^{1/2}} + \sqrt{\frac{2\epsilon}{2\epsilon^{1/2}}}
    \leqslant (1 + 4 + 1)\, \epsilon^{1/4} = 6\,\epsilon^{1/4}
  \end{align}
  as claimed.
\end{proof}

\section{Robust counterexample against extensions of Construction \#1}
\label{app:smooth}

In this appendix we show that the counterexample
of~\cref{sec:entropic-proof-approach-nontrivial-Hamiltonians} is robust to small
errors on the process. The process is
$\mathcal{E}_{X\to X'}(\cdot) = \tr[\cdot]\proj+$, where
$\ket+ = [\ket0+\ket1]/\sqrt{2}$ with $\ket0,\ket1$ energy eigenstates of
respective energies $E_0=0$, $E_1>0$; we write $H_X = \sum_{j=0,1} E_j \proj{j}$
and $\Gamma_X = \ee^{-\beta H_X}$.  The initial state on $X$ and a reference
system $R_X\simeq X$ is the maximally entangled state
$\ket\sigma_{XR_X} = [\ket{00} + \ket{11}]/\sqrt{2} =
\ket{\Phi}_{X:R_X}/\sqrt{2}$.

We seek a map $\mathcal{T}_{X\to X'}$ such that
\begin{align}
  P(\mathcal{T}_{X\to X'}(\sigma_{XR_X}), \mathcal{E}_{X\to X'}(\sigma_{XR_X}))
  \leqslant \epsilon
  \qquad\text{and}\qquad
  \mathcal{T}_{X\to X}(\Gamma_X)\leqslant\alpha\Gamma_{X'}\ ,
  \label{eq:ofiuoepjdijklndls}
\end{align}
for a $\alpha$ that is independent of $E_0,E_1$.  Here we have $X\simeq X'$ and
$\Gamma_X=\Gamma_{X'}$.

Let $\rho_{X'R_X} = \mathcal{E}_{X\to
  X'}(\sigma_{XR_X})$. From~\eqref{eq:ofiuoepjdijklndls} we find
$\frac12 \norm{\mathcal{T}_{X\to X'}(\sigma_{XR_X}) - \rho_{X'R_X}}_1
\leqslant\epsilon$, which in turn implies that
$(1/4)\bigl\lVert\mathcal{T}_{X\to X'}(\Phi_{X:R_X}) -
\proj+_{X'}\otimes\Ident_{R_X}\bigr\rVert_1 \leqslant \epsilon$, and hence that
$\mathcal{T}_{X\to X'}(\cdot) = \tr[\cdot]\,\proj{+}_{X'} + \Delta(\cdot)$ for
some Hermiticity preserving map $\Delta(\cdot)$ satisfying
$\frac12\norm{\Delta(\Phi_{XR_X})}_1\leqslant 2\epsilon$.

Let $\Delta_\pm\geqslant 0$ be the positive and negative parts of
$\Delta(\Gamma) = \Delta_+ - \Delta_-$, noting that
$\tr(\Delta_-) \leqslant \tr(\Delta_-) + \tr(\Delta_+) = \norm{\Delta(\Gamma)}_1
= \norm{\tr_{R_X}`\big(
  \Gamma_{R_X}^{1/2}\,\Delta(\Phi_{X:R_X})\,\Gamma_{R_X}^{1/2} ) }_1$, defining
$\Gamma_{R_X}$ as the transpose of $\Gamma_X$ onto the system $R_X$, and
continuing the computation we obtain
$\tr(\Delta_-) \leqslant
\norm{\Gamma_{R_X}^{1/2}\,\Delta(\Phi_{X:R_X})\,\Gamma_{R_X}^{1/2} }_1 \leqslant
\norm{\Gamma_{R_X}}_\infty \norm{ \Delta(\Phi_{X:R_X}) }_1 \leqslant 4\epsilon$,
using the fact that $\norm{\Gamma_X}_\infty = \max_j `{ \ee^{-\beta E_j} } = 1$.

The condition $\mathcal{T}_{X\to X'}(\Gamma) \leqslant \alpha \Gamma$ implies
that
$\alpha\Gamma \geqslant \tr[\Gamma]\proj{+} + \Delta(\Gamma) \geqslant \proj{+}
- \Delta_-$. Hence, we have that
$\alpha^{-1}\,\proj{+} \leqslant \Gamma + \Delta_-/\alpha$.  Hence, for any
$0<\eta\leqslant 1$ to be fixed later, $\mu=\alpha^{-1}$ is feasible for the
dual problem~\eqref{eq:defn-hypothesis-testing-entropy--dual} defining the
hypothesis testing entropy $\DHyp[\eta]{\proj{+}}{\Gamma}$, and
$\ee^{-\DHyp[\eta]{\proj{+}}{\Gamma}} \geqslant \alpha^{-1} -
\tr[\Delta_-/\alpha]/\eta \geqslant \alpha^{-1} `\big( 1 -
4\epsilon/\eta)$. Thus, we have
$\ln(\alpha) \geqslant \DHyp[\eta]{\proj+}{\Gamma} + \ln(1 - 4\epsilon/\eta)$.
Choosing $\eta=8\epsilon$ yields $\ln(1 - 4\epsilon/\eta) = -\ln(2)$.

On the other hand, by definition we have
$\ee^{-\DHyp[\eta]{\proj+}{\Gamma}} \leqslant \tr[Q\Gamma]/\eta$ for any
$0\leqslant Q\leqslant \Ident$ satisfying $\tr[Q\proj+]\geqslant\eta$; with
$Q=2\eta\proj1$ we obtain
$\ee^{-\DHyp[\eta]{\proj+}{\Gamma}} \leqslant 2\ee^{-\beta E_1}$ and thus
$\DHyp[\eta]{\proj+}{\Gamma} \geqslant \beta E_1 -\ln(2)$.

Then,
$\ln(\alpha) \geqslant -\ln(2) + \beta E_1 -\ln(2) = -2\ln(2) + \beta E_1$.  Now
let $\alpha$ be the optimal candidate in the coherent relative entropy
$\DCohz[\epsilon]{\rho}{X}{X'}{\Gamma}{\Gamma} = -\ln(\alpha)$.  We finally see
that the transformation $\Ident/2\to\ket+$ may require arbitrarily much energy
if $E_1\to\infty$, even for a small $\epsilon>0$, since
\begin{align}
  \text{energy cost} =
  - \beta^{-1}\DCohz[\epsilon]{\rho}{X}{X'}{\Gamma}{\Gamma}
  = \beta^{-1}\ln(\alpha)
  \geqslant E_1 - 2 \beta^{-1} \ln(2)\ .
\end{align}

\section{Universal conditional typical projector for trivial Hamiltonians}
\label{appx:universal-conditional-typical-projector}

In the case of trivial Hamiltonians,
\cref{defn:universal-relative-conditional-typical-smoother} can be
simplified. We call the corresponding object a universal conditional typical
projector

\begin{definition}
  \label{defn:univ-cond-typical-proj}
  Consider two systems with Hilbert spaces $\Hs_A,\Hs_B$ and let
  $s\in\mathbb{R}$.  We define a \emph{universal conditional typical projector}
  $P^{s,\delta}_{A^nB^n}$ with parameter $\delta>0$ as a projector acting on
  $(\Hs_A\otimes\Hs_B)^{\otimes n}$ such that:
  \begin{enumerate}[label=(\roman*)]
  \item \label{item:prop-univ-quant-cond-typical-subspace-high-weight} There
    exists $\eta>0$ independent of $n$ such that for any quantum state
    $\rho_{AB}$ with $\HH[\rho]{A}[B]\leqslant s$, we have
    \begin{align}
      \tr`\big[P^{s,\delta}_{A^nB^n}\,\rho_{AB}^{\otimes n}]
      \geqslant 1 - \poly(n)\exp(-n\eta)\ ;
    \end{align}
  \item \label{item:prop-univ-quant-cond-typical-subspace-conditional-size}
    $\displaystyle \tr_{A^n}`\big[ P^{s,\delta}_{A^nB^n} ] \leqslant
    \poly(n)\,\ee^{n (s + 2\delta)}\,\Ident_{B^n}$.
  \end{enumerate}
\end{definition}

Observe that we choose to define the object in
\cref{defn:univ-cond-typical-proj} as a projector whereas we only require the
object in \cref{defn:universal-relative-conditional-typical-smoother} to be an
operator of norm at most $1$.  The reason is that while we can prove that a
projector satisfying the conditions of \cref{defn:univ-cond-typical-proj}
exists, we are currently not able to guarantee the existence of a projector
satisfying the criteria of
\cref{defn:universal-relative-conditional-typical-smoother}.

\begin{proposition}\label{prop:univ-quant-cond-typical-subspace}
  Consider two systems $A,B$ and let $s\in\mathbb{R}$.  For any $\delta>0$ and
  $n\in\mathbb{N}$ there exists a universal conditional typical projector
  $P_{A^nB^n}^{s,\delta}$ that is permutation-invariant.
\end{proposition}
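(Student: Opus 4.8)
The plan is to specialize the construction in the proof of \cref{thm:universal-relative-conditional-typical-smoother} to the degenerate case $\Gamma_{AB}=\Ident$, $\Gamma_B'=\Ident$, in which the energy-measurement POVMs of \cref{prop:energy-measurement-POVM-n-systems} collapse and only the Schur--Weyl projectors remain. Writing $\Pi^\lambda_{A^nB^n}$ and $\Pi^\mu_{B^n}$ for the Schur--Weyl block projectors of $\Hs_{AB}^{\otimes n}$ and $\Hs_{B}^{\otimes n}$ with $\lambda\in\Young{d_A d_B}{n}$ and $\mu\in\Young{d_B}{n}$, I would define
\begin{align}
  P^{s,\delta}_{A^nB^n}
  = \sum_{\substack{\lambda,\mu\;:\\ \bar{H}(\lambda) - \bar{H}(\mu) \leqslant s + 2\delta}}
  \Pi^\lambda_{A^nB^n}\,\Pi^\mu_{B^n}\ .
\end{align}
Here $\bar{H}(\lambda)$ estimates $H(\rho_{AB})$ and $\bar{H}(\mu)$ estimates $H(\rho_B)$, so the sum retains exactly the blocks where the natural proxy $\bar{H}(\lambda)-\bar{H}(\mu)$ for the conditional entropy $\HH[\rho]{A}[B]$ does not exceed $s$ up to slack $2\delta$.

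First I would check that $P^{s,\delta}_{A^nB^n}$ is a permutation-invariant projector. By \cref{lemma:SchurWeylTensProdDecompCommute}, $\Ident_{A^n}\otimes\Pi^\mu_{B^n}$ commutes with $\Pi^\lambda_{A^nB^n}$; hence each summand $\Pi^\lambda_{A^nB^n}\Pi^\mu_{B^n}$ is itself a projector, and any two summands with different $\lambda$, or with equal $\lambda$ but different $\mu$, multiply to zero because the $\Pi^\lambda$ (resp.\@ the $\Pi^\mu$) are mutually orthogonal and commute past the other family. Thus the sum is a projector, and it is permutation-invariant since every $\Pi^\lambda$ and $\Pi^\mu$ is.

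For Property~\ref{item:prop-univ-quant-cond-typical-subspace-high-weight}, fix $\rho_{AB}$ with $\HH[\rho]{A}[B]\leqslant s$ and set $X_2=\sum_{\bar{H}(\lambda)\leqslant H(\rho_{AB})+\delta}\Pi^\lambda_{A^nB^n}$ and $X_3=\sum_{\bar{H}(\mu)\geqslant H(\rho_B)-\delta}\Pi^\mu_{B^n}$. Every pair contributing to $X_2X_3$ obeys $\bar{H}(\lambda)-\bar{H}(\mu)\leqslant H(\rho_{AB})-H(\rho_B)+2\delta=\HH[\rho]{A}[B]+2\delta\leqslant s+2\delta$, so $X_2X_3\leqslant P^{s,\delta}_{A^nB^n}$. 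Since $X_2$ and $X_3$ commute, $\Ident-X_2X_3\leqslant(\Ident-X_2)+(\Ident-X_3)$, and \cref{prop:entropy-measurement-POVM-n-systems}, applied to $\rho_{AB}^{\otimes n}$ and to $\rho_B^{\otimes n}$ respectively (noting $X_2$ and $X_3$ each dominate the corresponding entropy-typical window, and that $X_3$ acts on $B^n$ only), bounds $\tr[(\Ident-X_2)\rho_{AB}^{\otimes n}]$ and $\tr[(\Ident-X_3)\rho_{B}^{\otimes n}]$ each by $\poly(n)\exp(-n\eta)$. A union bound then gives $\tr[P^{s,\delta}_{A^nB^n}\rho_{AB}^{\otimes n}]\geqslant\tr[X_2X_3\rho_{AB}^{\otimes n}]\geqslant 1-\poly(n)\exp(-n\eta)$, with $\eta$ the minimum of the two exponents from \cref{prop:entropy-measurement-POVM-n-systems} and hence independent of $\rho_{AB}$.

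For Property~\ref{item:prop-univ-quant-cond-typical-subspace-conditional-size}, I would use that $\Pi^\mu_{B^n}$ commutes with $\tr_{A^n}[\Pi^\lambda_{A^nB^n}]$ (again via \cref{lemma:SchurWeylTensProdDecompCommute}) to write $\tr_{A^n}[\Pi^\lambda_{A^nB^n}\Pi^\mu_{B^n}]=\Pi^\mu_{B^n}\,\tr_{A^n}[\Pi^\lambda_{A^nB^n}]\,\Pi^\mu_{B^n}$, and then apply \cref{lemma:Schur-Weyl-bipartite-trAn-PiLambdaPiLambdaprime} with $\lambda'=\mu$ to bound each term by $\poly(n)\,\ee^{n(\bar{H}(\lambda)-\bar{H}(\mu))}\,\Pi^\mu_{B^n}\leqslant\poly(n)\,\ee^{n(s+2\delta)}\,\Ident_{B^n}$ on the admissible pairs. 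Summing over the at most $\poly(n)$ such pairs $(\lambda,\mu)$ — absorbing that count, together with $\sum_\mu\Pi^\mu_{B^n}=\Ident_{B^n}$, into the polynomial factor — yields $\tr_{A^n}[P^{s,\delta}_{A^nB^n}]\leqslant\poly(n)\,\ee^{n(s+2\delta)}\,\Ident_{B^n}$, as required. The only point requiring care is the bookkeeping of the commutation relations between the two Schur--Weyl families (all consequences of \cref{lemma:SchurWeylTensProdDecompCommute}) and the consistent splitting of the $2\delta$ slack between the $\lambda$- and $\mu$-windows; since this is precisely the $\Gamma=\Ident$ instance of \cref{thm:universal-relative-conditional-typical-smoother}, I expect no substantive obstacle.
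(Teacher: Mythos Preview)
Your proposal is correct and follows essentially the same approach as the paper: you define the projector as the sum of commuting products $\Pi^\lambda_{A^nB^n}\Pi^\mu_{B^n}$ over pairs with $\bar H(\lambda)-\bar H(\mu)\leqslant s+2\delta$, establish the projector property and permutation invariance via \cref{lemma:SchurWeylTensProdDecompCommute}, verify the high-weight property by a union bound on the two entropy-estimation events from \cref{prop:entropy-measurement-POVM-n-systems}, and bound the conditional size using \cref{lemma:Schur-Weyl-bipartite-trAn-PiLambdaPiLambdaprime}. The paper's proof is organized in the same way (with $\lambda'$ in place of your $\mu$), and your framing of this as the $\Gamma_{AB}=\Gamma_B'=\Ident$ degeneration of \cref{thm:universal-relative-conditional-typical-smoother} is exactly the motivation the paper gives for isolating this appendix.
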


The proof of \cref{prop:univ-quant-cond-typical-subspace} is developed in the
rest of this appendix. To understand why the projector of
\cref{defn:univ-cond-typical-proj} is conditional\,---\,as well as for a simple
illustration of its use\,---\,consider the smooth R\'enyi-zero conditional
max-entropy, also known as the smooth alternative
max-entropy~\cite{BookTomamichel2016_Finite}. It is defined for a bipartite
state $\rho_{AB}$ as
\begin{align}
  \Hzero[\rho][\epsilon]{A}[B] = \min_{\hat\rho\approx_\epsilon\rho} \,
  \ln\,\norm[\big]{ \tr_A`\big[ \Pi^{\hat\rho_{AB}}_{AB} ] }_\infty\ ,
\end{align}
where $\Pi^{\hat\rho_{AB}}_{AB}$ is the projector onto the support of
$\hat\rho_{AB}$, and where the optimization ranges over sub-normalized states
$\hat\rho_{AB}$ which are $\epsilon$-close to $\rho_{AB}$ in purified distance.
We may understand the i.i.d.\@{} behaviour of this quantity as follows. For
$\delta>0$ and $n\in\mathbb{N}$ let $P^{s,\delta}_{A^nB^n}$ be a universal
conditional typical projector with $s=\HH[\rho]{A}[B]$. We define
$\hat\rho_{A^nB^n} = P^{s,\delta}\,\rho_{AB}^{\otimes n}\,P^{s,\delta}$. Then,
we have $\hat\rho_{A^nB^n} \approx_\epsilon \rho_{AB}^{\otimes n}$ for
$n\in\mathbb{N}$ large enough, thanks to
Property~\ref{item:prop-univ-quant-cond-typical-subspace-high-weight} and the
gentle measurement lemma (\cref{x:gentle-measurement-lemma}).  On the other hand, using
Property~\ref{item:prop-univ-quant-cond-typical-subspace-conditional-size} we
have
\begin{align}
  \frac1n \Hzero[\rho^{\otimes n}][\epsilon]{A^n}[B^n]
  \leqslant \frac1n \ln\,\norm[\big]{ \tr_{A^n} `\big[ P^{s,\delta} ] }_\infty
  \leqslant \HH[\rho]{A}[B] + 2\delta + \frac1n\ln(\poly(n))
\end{align}
such that taking the limits $n\to\infty$ and $\delta\to0$, we get that the
smooth R\'enyi-zero conditional entropy is asymptotically upper bounded by the
von Neumann conditional entropy in the i.i.d.\@{} regime.

We proceed to construct a universal conditional typical projector based on ideas
from Schur-Weyl duality.  The construction presented here is similar to, and
inspired by, techniques put forward in earlier
work~\cite{Bjelakovic2003arXiv_revisted,PhDHarrow2005,Noetzel2012arXiv_two,Bennett2014_reverse,Berta2015QIC_monotonicity,Haah2017IEEETIT_sampleoptimal}.

\begin{proof}[*prop:univ-quant-cond-typical-subspace]
  Let
  \begin{align}
    P^{s,\delta}_{A^nB^n}
    = \sum_{\substack{ \lambda,\lambda'\;:\\
    \bar{H}(\lambda) - \bar{H}(\lambda') \leqslant s + 2\delta }}
    (\Ident_{A^n}\otimes\Pi^{\lambda'}_{B^n})\,\Pi^{\lambda}_{A^nB^n}\ ,
  \end{align}
  where the respective projectors $\Pi^{\lambda'}_{B^n}$,
  $\Pi^{\lambda}_{A^nB^n}$ refer to Schur-Weyl decompositions of
  $\Hs_{B}^{\otimes n}$ and of $(\Hs_{A}\otimes\Hs_B)^{\otimes n}$,
  respectively, $\lambda\in\Young{d_Ad_B}{n}$ and
  $\lambda'\in\Young{d_{B}}{n}$.  Observe that $P^{s,\delta}_{A^nB^n}$ is a
  projector: Each term in the sum is a projector as a product of two commuting
  projectors (\cref{lemma:SchurWeylTensProdDecompCommute}), and each term of
  the sum acts on a different subspace of $(\Hs_A\otimes\Hs_B)^{\otimes n}$. The projector $P^{s,\delta}_{A^nB^n}$ corresponds to the measurement of the
  two commuting POVMs $`\big{ \Pi^{\lambda}_{A^nB^n} }$ and
  $`\big{ \Pi^{\lambda'}_{B^n} }$, and testing whether or not the event
  $\bar{H}(\lambda) - \bar{H}(\lambda') \leqslant s + 2\delta$ is satisfied.
  Also by construction $P^{s,\delta}_{A^nB^n}$ is permutation-invariant.

  For any $\rho_{AB}$ with $\HH[\rho]{A}[B] \leqslant s$, the probability that the
  measurement of $P^{s,\delta}_{A^nB^n}$ fails on $\rho_{AB}^{\otimes n}$ can be
  upper bounded as follows.  The passing event
  $\bar{H}(\lambda) - \bar{H}(\lambda') \leqslant s + 2\delta$ is implied in
  particular by the two events (a)
  $\bar{H}(\lambda) \leqslant H(AB)_\rho + \delta$ and (b)
  $\bar{H}(\lambda') \geqslant H(B)_\rho - \delta$ happening simultaneously,
  recalling that $H(AB)_\rho - H(B)_\rho = \HH[\rho]{A}[B] \leqslant s$.  The
  probability of event (a) failing is
  \begin{align}
  \Pr`*[ \bar{H}(\lambda) > H(AB)_\rho + \delta ]\leqslant \poly(n) \exp`*(-n\eta)
  \end{align}
  as given by \cref{prop:entropy-measurement-POVM-n-systems}, and similarly for
  event (b)
  \begin{align}
    \Pr`*[ \bar{H}(\lambda') < H(B)_\rho - \delta ]
    \leqslant \poly(n) \exp`*(-n\eta)\ .
  \end{align}
  We can use the same $\eta$ in both cases by picking the lesser of the two
  values given
  by \cref{prop:entropy-measurement-POVM-n-systems}, if
  necessary.  Note furthermore that $\eta>0$ does not depend on $\rho$. Hence
  with this $\eta$, for any $\rho_{AB}$ we have
  \begin{align}
    \tr`\big[P^{s,\delta}_{A^nB^n}\, \rho_{AB}^{\otimes n}]
    \geqslant 1 - \poly(n)\exp(-n\eta)
  \end{align}
  as required.
  
  For the second property, we use
  \cref{lemma:Schur-Weyl-bipartite-trAn-PiLambdaPiLambdaprime} to write
  \begin{align}
    \tr_{A^n}`\big[ P^{s,\delta}_{A^nB^n} ]
    &= \sum_{\substack{ \lambda,\lambda'\;:\\
    \bar{H}(\lambda) - \bar{H}(\lambda') \leqslant s + 2\delta }}
    \Pi^{\lambda'}_{B^n}\, \tr_{A^n}`*[\Pi^{\lambda}_{A^nB^n}] \,
    \Pi^{\lambda'}_{B^n}
    \nonumber\\
    &\leqslant \sum_{\substack{ \lambda,\lambda'\;:\\
    \bar{H}(\lambda) - \bar{H}(\lambda') \leqslant s + 2\delta }}
    \poly(n)\,\ee^{n(\bar{H}(\lambda) - \bar{H}(\lambda'))} \Ident_{B^n}
    \nonumber\\
    &\leqslant 
    \poly(n)\,\ee^{n(s+2\delta)}\, \Ident_{B^n}
  \end{align}
  recalling that there are only $\poly(n)$ many possible Young diagrams and
  hence at most so many terms in the sum.
\end{proof}

\section{Universal conditional erasure for $n$ copies and trivial Hamiltonians}
\label{app:conditional-erasure}
\label{appx:last} %

\begin{corollary}[Thermodynamic protocol for universal conditional erasure for $n$ copies]%
  \label{thm:thermo-universal-cond-erasure-ncopies-trivH}
  Let $S,M$ be systems, let $\sigma_{S}$ be the maximally mixed state on $S$.
  Let $s < \ln(d_S)$, where $d_S$ is the dimension of $S$, and let $\delta>0$
  small enough.  Let $n\in\mathbb{N}$ be large enough. Let $J$ be a large enough information battery and let any
  $m \leqslant n(\ln(d_S) - s - 3\delta)$ such that $e^m$ is integer.
  
  Then, there exists $\eta'>0$ and a thermal operation
  $\mathcal{R}_{S^nM^nJ\to S^nM^nJ}$ acting on the systems $S^nM^nJ$, such that
  the effective work process $\mathcal{T}_{S^nM^n\to S^nM^n}$ of
  $\mathcal{R}_{S^nM^nJ\to S^nM^nJ}$ with respect to the battery states
  $(\tau_J^m,\ket0_J)$ is a universal conditional
  $(\poly(n)\,\ee^{-n\eta'})$-erasure process resetting $S^n$ to the state
  $\sigma_{S}^{\otimes n}$ with respect to the set of states
  $\mathcal{S}_{S^nM^n}'$, where $\mathcal{S}_{S^nM^n}'$ is the convex hull of
  $\mathcal{S}_{S^nM^n} = `\big{ \rho_{SM}^{\otimes n} : \HH[\rho]{S}[M]
    \leqslant s }$.
\end{corollary}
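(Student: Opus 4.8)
The plan is to obtain this corollary as a direct specialization of \cref{thm:univ-conditional-erasure-PBD-as-thermalop} to trivial Hamiltonians, feeding it the universal conditional typical projector of \cref{prop:univ-quant-cond-typical-subspace} as the hypothesis-testing operator. Since $H_S=H_M=0$, the thermal state $\gamma_S$ is the maximally mixed state $\sigma_S = \Ident_S/d_S$, the energy-conservation requirement $[P_{S^nM^n},H_{S^n}+H_{M^n}]=0$ in \cref{thm:univ-conditional-erasure-PBD-as-thermalop} is vacuous, and $\gamma_{S^n} = \gamma_S^{\otimes n} = \sigma_S^{\otimes n}$, so ``resetting $S^n$ to $\sigma_S^{\otimes n}$'' is exactly what that proposition produces. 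The requirements that $\delta>0$ be small enough (so that $\ln(d_S)-s-3\delta>0$) and that $n$ be large and $J$ of dimension at least $2\ee^m$ are just what is needed for the quantities below to be well defined.

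Concretely, I would set $P_{S^nM^n} := P^{s,\delta}_{S^nM^n}$, the permutation-invariant universal conditional typical projector from \cref{prop:univ-quant-cond-typical-subspace} with parameter $\delta$. I then verify the two hypothesis-testing conditions~\eqref{eq:univ-isom-cond-eras-conditions-hypo-test} for every i.i.d.\ state $\rho_{SM}^{\otimes n}\in\mathcal{S}_{S^nM^n}$, i.e., with $\HH[\rho]{S}[M]\leqslant s$. The first follows immediately from Property~\ref{item:prop-univ-quant-cond-typical-subspace-high-weight}: $\tr`\big[P^{s,\delta}_{S^nM^n}\,\rho_{SM}^{\otimes n}] \geqslant 1 - \poly(n)\exp(-n\eta)$ with $\eta>0$ independent of $\rho$ and $n$, so $\kappa=\poly(n)\,\ee^{-n\eta}$. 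For the second, I would use $\gamma_S^{\otimes n} = d_S^{-n}\,\Ident_{S^n}$ together with Property~\ref{item:prop-univ-quant-cond-typical-subspace-conditional-size} to bound
\begin{align}
  \tr`\big[P^{s,\delta}_{S^nM^n}\,`\big(\gamma_S^{\otimes n}\otimes\rho_M^{\otimes n})]
  &= d_S^{-n}\,\tr`\big[`\big(\tr_{S^n}`\big[P^{s,\delta}_{S^nM^n}])\,\rho_M^{\otimes n}]
  \nonumber\\
  &\leqslant d_S^{-n}\,\poly(n)\,\ee^{n(s+2\delta)}
  = \poly(n)\,\ee^{-n(\ln(d_S) - s - 2\delta)}\ ,
\end{align}
so that, since by hypothesis $\ee^m \leqslant \ee^{n(\ln(d_S) - s - 3\delta)}$, this quantity is at most $\kappa'/\ee^m$ with $\kappa' = \poly(n)\,\ee^{-n\delta}$ (absorbing polynomial prefactors into $\kappa'$).

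It then remains to apply \cref{thm:univ-conditional-erasure-PBD-as-thermalop} with $S\to S^n$, $M\to M^n$, the set $\mathcal{S}_{S^nM^n}$, the operator $P_{S^nM^n}$, and the parameters $m,\kappa,\kappa'$ above: this yields a thermal operation $\mathcal{R}_{S^nM^nJ\to S^nM^nJ}$ whose effective work process with respect to $(\tau_J^m,\ket0_J)$ is a universal conditional $(2\kappa+4\kappa')$-erasure process resetting $S^n$ to $\gamma_{S^n}=\sigma_S^{\otimes n}$ for the convex hull $\mathcal{S}'_{S^nM^n}$ of $\mathcal{S}_{S^nM^n}$. Taking $\eta'=\min(\eta,\delta)$ gives $2\kappa+4\kappa'\leqslant\poly(n)\,\ee^{-n\eta'}$, the advertised accuracy. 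I do not anticipate a real obstacle: both \cref{prop:univ-quant-cond-typical-subspace} and \cref{thm:univ-conditional-erasure-PBD-as-thermalop} are invoked as black boxes, and the only mildly delicate point is the bookkeeping that reserves one $\delta$ in the exponent for the projector width ($s+2\delta$) and one more $\delta$ of slack so that $\kappa'$ still decays exponentially in $n$ — precisely why the hypothesis on $m$ carries the $3\delta$.
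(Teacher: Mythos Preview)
Your proposal is correct and follows essentially the same argument as the paper: invoke \cref{thm:univ-conditional-erasure-PBD-as-thermalop} over $n$ copies, take $P_{S^nM^n}=P^{s,\delta}_{S^nM^n}$ from \cref{prop:univ-quant-cond-typical-subspace}, set $\kappa=\poly(n)\,\ee^{-n\eta}$ and $\kappa'=\poly(n)\,\ee^{-n\delta}$ via the same computation, and conclude with $\eta'=\min(\eta,\delta)$. Your added remarks on why the trivial-Hamiltonian case makes the energy-conservation constraint vacuous and why $\gamma_{S^n}=\sigma_S^{\otimes n}$ are helpful context but do not change the argument.
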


The case where $s = \ln(d_S)$ is uninteresting as we cannot hope to extract any
work. In such cases one can simply set $m=0$ and take $\mathcal{R}_{S^nM^nJ}$
to be the thermal operation that completely thermalizes $S^n$.

\begin{proof}[**thm:thermo-universal-cond-erasure-ncopies-trivH]
  This is in fact a relatively straightforward application of
  \cref{thm:univ-conditional-erasure-PBD-as-thermalop} over $n$ copies of $SM$.
  Let $P_{S^nM^n}^{s,\delta}$ be given by
  \cref{prop:univ-quant-cond-typical-subspace}.  We seek $\kappa,\kappa'$ that
  satisfy \eqref{eq:univ-isom-cond-eras-conditions-hypo-test}.  We can choose
  $\kappa = \poly(n)\exp`*{-n\eta(\delta)}$ thanks to
  \cref{defn:univ-cond-typical-proj}.  Furthermore for any
  $\rho_{SM}^{\otimes n}\in\mathcal{S}_{S^nM^n}$ we have
  \begin{align}
    \tr`*[ P_{S^nM^n} `*(\frac{\Ident_S}{d_S}\otimes\rho_M)^{\otimes n}]
    \leqslant \poly(n)\,\ee^{n(s+2\delta)}\,d_S^{-n}\,\tr[\rho_M^{\otimes n}]
    &= \poly(n)\,\ee^{-n(\ln(d_S) - s - 2\delta)}\notag
    \\
    &\leqslant \frac{\poly(n)\,\ee^{-n\delta}}{{e^m}}
    \label{eq:fiodbafdosakf}
  \end{align}
  and thus we may take $\kappa' = \poly(n)\,\ee^{-n\delta}$.
  Finally, $\eta'$ is given as $\eta'=\min\{\delta,\eta(\delta)\}$.
\end{proof}

\bibliographysetup


\begin{thebibliography}{77}%
\makeatletter
\providecommand \@ifxundefined [1]{%
 \@ifx{#1\undefined}
}%
\providecommand \@ifnum [1]{%
 \ifnum #1\expandafter \@firstoftwo
 \else \expandafter \@secondoftwo
 \fi
}%
\providecommand \@ifx [1]{%
 \ifx #1\expandafter \@firstoftwo
 \else \expandafter \@secondoftwo
 \fi
}%
\providecommand \natexlab [1]{#1}%
\providecommand \enquote  [1]{``#1''}%
\providecommand \bibnamefont  [1]{#1}%
\providecommand \bibfnamefont [1]{#1}%
\providecommand \citenamefont [1]{#1}%
\providecommand \href@noop [0]{\@secondoftwo}%
\providecommand \href [0]{\begingroup \@sanitize@url \@href}%
\providecommand \@href[1]{\@@startlink{#1}\@@href}%
\providecommand \@@href[1]{\endgroup#1\@@endlink}%
\providecommand \@sanitize@url [0]{\catcode `\\12\catcode `\$12\catcode
  `\&12\catcode `\#12\catcode `\^12\catcode `\_12\catcode `\%12\relax}%
\providecommand \@@startlink[1]{}%
\providecommand \@@endlink[0]{}%
\providecommand \url  [0]{\begingroup\@sanitize@url \@url }%
\providecommand \@url [1]{\endgroup\@href {#1}{\urlprefix }}%
\providecommand \urlprefix  [0]{URL }%
\providecommand \Eprint [0]{\href }%
\providecommand \doibase [0]{https://doi.org/}%
\providecommand \selectlanguage [0]{\@gobble}%
\providecommand \bibinfo  [0]{\@secondoftwo}%
\providecommand \bibfield  [0]{\@secondoftwo}%
\providecommand \translation [1]{[#1]}%
\providecommand \BibitemOpen [0]{}%
\providecommand \bibitemStop [0]{}%
\providecommand \bibitemNoStop [0]{.\EOS\space}%
\providecommand \EOS [0]{\spacefactor3000\relax}%
\providecommand \BibitemShut  [1]{\csname bibitem#1\endcsname}%
\let\auto@bib@innerbib\@empty
\bibitem [{\citenamefont {Goold}\ \emph {et~al.}(2016)\citenamefont {Goold},
  \citenamefont {Huber}, \citenamefont {Riera}, \citenamefont {Rio},\ and\
  \citenamefont {Skrzypczyk}}]{Goold2016JPA_review}%
  \BibitemOpen
  \bibfield  {author} {\bibinfo {author} {\bibfnamefont {J.}~\bibnamefont
  {Goold}}, \bibinfo {author} {\bibfnamefont {M.}~\bibnamefont {Huber}},
  \bibinfo {author} {\bibfnamefont {A.}~\bibnamefont {Riera}}, \bibinfo
  {author} {\bibfnamefont {L.~d.}\ \bibnamefont {Rio}},\ and\ \bibinfo {author}
  {\bibfnamefont {P.}~\bibnamefont {Skrzypczyk}},\ }\bibfield  {title}
  {\bibinfo {title} {{T}he role of quantum information in
  thermodynamics{\textemdash}a topical review},\ }\href
  {https://doi.org/10.1088/1751-8113/49/14/143001} {\bibfield  {journal}
  {\bibinfo  {journal} {Journal of Physics A: Mathematical and Theoretical}\
  }\textbf {\bibinfo {volume} {49}},\ \bibinfo {pages} {143001} (\bibinfo
  {year} {2016})}\BibitemShut {NoStop}%
\bibitem [{\citenamefont {Brand{\~a}o}\ \emph {et~al.}(2013)\citenamefont
  {Brand{\~a}o}, \citenamefont {Horodecki}, \citenamefont {Oppenheim},
  \citenamefont {Renes},\ and\ \citenamefont
  {Spekkens}}]{Brandao2013_resource}%
  \BibitemOpen
  \bibfield  {author} {\bibinfo {author} {\bibfnamefont {F.~G. S.~L.}\
  \bibnamefont {Brand{\~a}o}}, \bibinfo {author} {\bibfnamefont
  {M.}~\bibnamefont {Horodecki}}, \bibinfo {author} {\bibfnamefont
  {J.}~\bibnamefont {Oppenheim}}, \bibinfo {author} {\bibfnamefont {J.~M.}\
  \bibnamefont {Renes}},\ and\ \bibinfo {author} {\bibfnamefont {R.~W.}\
  \bibnamefont {Spekkens}},\ }\bibfield  {title} {\bibinfo {title} {{R}esource
  theory of quantum states out of thermal equilibrium},\ }\href
  {https://doi.org/10.1103/PhysRevLett.111.250404} {\bibfield  {journal}
  {\bibinfo  {journal} {Physical Review Letters}\ }\textbf {\bibinfo {volume}
  {111}},\ \bibinfo {pages} {250404} (\bibinfo {year} {2013})}\BibitemShut
  {NoStop}%
\bibitem [{\citenamefont {Brand{\~a}o}\ \emph {et~al.}(2015)\citenamefont
  {Brand{\~a}o}, \citenamefont {Horodecki}, \citenamefont {Ng}, \citenamefont
  {Oppenheim},\ and\ \citenamefont {Wehner}}]{Brandao2015PNAS_secondlaws}%
  \BibitemOpen
  \bibfield  {author} {\bibinfo {author} {\bibfnamefont {F.}~\bibnamefont
  {Brand{\~a}o}}, \bibinfo {author} {\bibfnamefont {M.}~\bibnamefont
  {Horodecki}}, \bibinfo {author} {\bibfnamefont {N.}~\bibnamefont {Ng}},
  \bibinfo {author} {\bibfnamefont {J.}~\bibnamefont {Oppenheim}},\ and\
  \bibinfo {author} {\bibfnamefont {S.}~\bibnamefont {Wehner}},\ }\bibfield
  {title} {\bibinfo {title} {{T}he second laws of quantum thermodynamics},\
  }\href {https://doi.org/10.1073/pnas.1411728112} {\bibfield  {journal}
  {\bibinfo  {journal} {Proceedings of the National Academy of Sciences}\
  }\textbf {\bibinfo {volume} {112}},\ \bibinfo {pages} {3275--3279} (\bibinfo
  {year} {2015})}\BibitemShut {NoStop}%
\bibitem [{\citenamefont {Chitambar}\ and\ \citenamefont
  {Gour}(2019)}]{Chitambar2019RMP_resource}%
  \BibitemOpen
  \bibfield  {author} {\bibinfo {author} {\bibfnamefont {E.}~\bibnamefont
  {Chitambar}}\ and\ \bibinfo {author} {\bibfnamefont {G.}~\bibnamefont
  {Gour}},\ }\bibfield  {title} {\bibinfo {title} {{Q}uantum resource
  theories},\ }\href {https://doi.org/10.1103/RevModPhys.91.025001} {\bibfield
  {journal} {\bibinfo  {journal} {Reviews of Modern Physics}\ }\textbf
  {\bibinfo {volume} {91}},\ \bibinfo {pages} {025001} (\bibinfo {year}
  {2019})}\BibitemShut {NoStop}%
\bibitem [{\citenamefont {Janzing}\ \emph {et~al.}(2000)\citenamefont
  {Janzing}, \citenamefont {Wocjan}, \citenamefont {Zeier}, \citenamefont
  {Geiss},\ and\ \citenamefont {Beth}}]{Janzing2000_cost}%
  \BibitemOpen
  \bibfield  {author} {\bibinfo {author} {\bibfnamefont {D.}~\bibnamefont
  {Janzing}}, \bibinfo {author} {\bibfnamefont {P.}~\bibnamefont {Wocjan}},
  \bibinfo {author} {\bibfnamefont {R.}~\bibnamefont {Zeier}}, \bibinfo
  {author} {\bibfnamefont {R.}~\bibnamefont {Geiss}},\ and\ \bibinfo {author}
  {\bibfnamefont {T.}~\bibnamefont {Beth}},\ }\bibfield  {title} {\bibinfo
  {title} {{T}hermodynamic cost of reliability and low temperatures: Tightening
  {Landauer}'s principle and the second law},\ }\href
  {https://doi.org/10.1023/A:1026422630734} {\bibfield  {journal} {\bibinfo
  {journal} {International Journal of Theoretical Physics}\ }\textbf {\bibinfo
  {volume} {39}},\ \bibinfo {pages} {2717--2753} (\bibinfo {year}
  {2000})}\BibitemShut {NoStop}%
\bibitem [{\citenamefont {Faist}\ \emph
  {et~al.}(2015{\natexlab{a}})\citenamefont {Faist}, \citenamefont
  {Oppenheim},\ and\ \citenamefont {Renner}}]{Faist2015NJP_Gibbs}%
  \BibitemOpen
  \bibfield  {author} {\bibinfo {author} {\bibfnamefont {P.}~\bibnamefont
  {Faist}}, \bibinfo {author} {\bibfnamefont {J.}~\bibnamefont {Oppenheim}},\
  and\ \bibinfo {author} {\bibfnamefont {R.}~\bibnamefont {Renner}},\
  }\bibfield  {title} {\bibinfo {title} {{Gibbs}-preserving maps outperform
  thermal operations in the quantum regime},\ }\href
  {https://doi.org/10.1088/1367-2630/17/4/043003} {\bibfield  {journal}
  {\bibinfo  {journal} {New Journal of Physics}\ }\textbf {\bibinfo {volume}
  {17}},\ \bibinfo {pages} {043003} (\bibinfo {year}
  {2015}{\natexlab{a}})}\BibitemShut {NoStop}%
\bibitem [{\citenamefont {{\r{A}}berg}(2013)}]{Aberg2013_worklike}%
  \BibitemOpen
  \bibfield  {author} {\bibinfo {author} {\bibfnamefont {J.}~\bibnamefont
  {{\r{A}}berg}},\ }\bibfield  {title} {\bibinfo {title} {{T}ruly work-like
  work extraction via a single-shot analysis},\ }\href
  {https://doi.org/10.1038/ncomms2712} {\bibfield  {journal} {\bibinfo
  {journal} {Nature Communications}\ }\textbf {\bibinfo {volume} {4}},\
  \bibinfo {pages} {1925} (\bibinfo {year} {2013})}\BibitemShut {NoStop}%
\bibitem [{\citenamefont {Horodecki}\ and\ \citenamefont
  {Oppenheim}(2013)}]{Horodecki2013_ThermoMaj}%
  \BibitemOpen
  \bibfield  {author} {\bibinfo {author} {\bibfnamefont {M.}~\bibnamefont
  {Horodecki}}\ and\ \bibinfo {author} {\bibfnamefont {J.}~\bibnamefont
  {Oppenheim}},\ }\bibfield  {title} {\bibinfo {title} {{F}undamental
  limitations for quantum and nanoscale thermodynamics},\ }\href
  {https://doi.org/10.1038/ncomms3059} {\bibfield  {journal} {\bibinfo
  {journal} {Nature Communications}\ }\textbf {\bibinfo {volume} {4}},\
  \bibinfo {pages} {2059} (\bibinfo {year} {2013})}\BibitemShut {NoStop}%
\bibitem [{\citenamefont {Renner}(2005)}]{PhDRenner2005_SQKD}%
  \BibitemOpen
  \bibfield  {author} {\bibinfo {author} {\bibfnamefont {R.}~\bibnamefont
  {Renner}},\ }\emph {\bibinfo {title} {{S}ecurity of Quantum Key
  Distribution}},\ \href {https://doi.org/10.3929/ethz-a-005115027} {Ph.D.
  thesis},\ \bibinfo  {school} {ETH Z{\"u}rich} (\bibinfo {year} {2005}),\
  \Eprint {https://arxiv.org/abs/quant-ph/0512258} {arXiv:quant-ph/0512258}
  \BibitemShut {NoStop}%
\bibitem [{\citenamefont {Tomamichel}(2012)}]{PhDTomamichel2012}%
  \BibitemOpen
  \bibfield  {author} {\bibinfo {author} {\bibfnamefont {M.}~\bibnamefont
  {Tomamichel}},\ }\emph {\bibinfo {title} {{A} Framework for Non-Asymptotic
  Quantum Information Theory}},\ \href {https://doi.org/10.3929/ethz-a-7356080}
  {\bibinfo {type} {Quantum physics; mathematical physics; mathematical
  physics}},\ \bibinfo  {school} {ETH Zurich} (\bibinfo {year} {2012}),\
  \Eprint {https://arxiv.org/abs/1203.2142} {arXiv:1203.2142} \BibitemShut
  {NoStop}%
\bibitem [{\citenamefont {Tomamichel}(2016)}]{BookTomamichel2016_Finite}%
  \BibitemOpen
  \bibfield  {author} {\bibinfo {author} {\bibfnamefont {M.}~\bibnamefont
  {Tomamichel}},\ }\href {https://doi.org/10.1007/978-3-319-21891-5} {\emph
  {\bibinfo {title} {{Q}uantum Information Processing with Finite
  Resources}}},\ \bibinfo {series} {SpringerBriefs in Mathematical Physics},
  Vol.~\bibinfo {volume} {5}\ (\bibinfo  {publisher} {Springer International
  Publishing},\ \bibinfo {address} {Cham},\ \bibinfo {year} {2016})\ \Eprint
  {https://arxiv.org/abs/1504.00233} {arXiv:1504.00233} \BibitemShut {NoStop}%
\bibitem [{\citenamefont {Chubb}\ \emph {et~al.}(2018)\citenamefont {Chubb},
  \citenamefont {Tomamichel},\ and\ \citenamefont
  {Korzekwa}}]{Chubb2018Qu_beyond}%
  \BibitemOpen
  \bibfield  {author} {\bibinfo {author} {\bibfnamefont {C.~T.}\ \bibnamefont
  {Chubb}}, \bibinfo {author} {\bibfnamefont {M.}~\bibnamefont {Tomamichel}},\
  and\ \bibinfo {author} {\bibfnamefont {K.}~\bibnamefont {Korzekwa}},\
  }\bibfield  {title} {\bibinfo {title} {{B}eyond the thermodynamic limit:
  finite-size corrections to state interconversion rates},\ }\href
  {https://doi.org/10.22331/q-2018-11-27-108} {\bibfield  {journal} {\bibinfo
  {journal} {Quantum}\ }\textbf {\bibinfo {volume} {2}},\ \bibinfo {pages}
  {108} (\bibinfo {year} {2018})}\BibitemShut {NoStop}%
\bibitem [{\citenamefont {Faist}\ \emph
  {et~al.}(2015{\natexlab{b}})\citenamefont {Faist}, \citenamefont {Dupuis},
  \citenamefont {Oppenheim},\ and\ \citenamefont {Renner}}]{Faist2015NatComm}%
  \BibitemOpen
  \bibfield  {author} {\bibinfo {author} {\bibfnamefont {P.}~\bibnamefont
  {Faist}}, \bibinfo {author} {\bibfnamefont {F.}~\bibnamefont {Dupuis}},
  \bibinfo {author} {\bibfnamefont {J.}~\bibnamefont {Oppenheim}},\ and\
  \bibinfo {author} {\bibfnamefont {R.}~\bibnamefont {Renner}},\ }\bibfield
  {title} {\bibinfo {title} {{T}he minimal work cost of information
  processing},\ }\href {https://doi.org/10.1038/ncomms8669} {\bibfield
  {journal} {\bibinfo  {journal} {Nature Communications}\ }\textbf {\bibinfo
  {volume} {6}},\ \bibinfo {pages} {7669} (\bibinfo {year}
  {2015}{\natexlab{b}})}\BibitemShut {NoStop}%
\bibitem [{\citenamefont {C{\^\i}rstoiu}\ and\ \citenamefont
  {Jennings}(2017)}]{Cirstoiu2017arXiv_gauge}%
  \BibitemOpen
  \bibfield  {author} {\bibinfo {author} {\bibfnamefont {C.}~\bibnamefont
  {C{\^\i}rstoiu}}\ and\ \bibinfo {author} {\bibfnamefont {D.}~\bibnamefont
  {Jennings}},\ }\bibfield  {title} {\bibinfo {title} {{G}lobal and local gauge
  symmetries beyond lagrangian formulations}} (\bibinfo {year} {2017}),\
  \bibinfo {note}
  {\href{https://arxiv.org/abs/1707.09826}{arXiv:1707.09826}}\BibitemShut
  {NoStop}%
\bibitem [{\citenamefont {Ben~Dana}\ \emph {et~al.}(2017)\citenamefont
  {Ben~Dana}, \citenamefont {Garc{\'\i}a~D{\'\i}az}, \citenamefont {Mejatty},\
  and\ \citenamefont {Winter}}]{Dana2017PRA_beyond}%
  \BibitemOpen
  \bibfield  {author} {\bibinfo {author} {\bibfnamefont {K.}~\bibnamefont
  {Ben~Dana}}, \bibinfo {author} {\bibfnamefont {M.}~\bibnamefont
  {Garc{\'\i}a~D{\'\i}az}}, \bibinfo {author} {\bibfnamefont {M.}~\bibnamefont
  {Mejatty}},\ and\ \bibinfo {author} {\bibfnamefont {A.}~\bibnamefont
  {Winter}},\ }\bibfield  {title} {\bibinfo {title} {{R}esource theory of
  coherence: Beyond states},\ }\href
  {https://doi.org/10.1103/PhysRevA.95.062327} {\bibfield  {journal} {\bibinfo
  {journal} {Physical Review A}\ }\textbf {\bibinfo {volume} {95}},\ \bibinfo
  {pages} {062327} (\bibinfo {year} {2017})}\BibitemShut {NoStop}%
\bibitem [{\citenamefont {Faist}\ and\ \citenamefont
  {Renner}(2018)}]{Faist2018PRX_workcost}%
  \BibitemOpen
  \bibfield  {author} {\bibinfo {author} {\bibfnamefont {P.}~\bibnamefont
  {Faist}}\ and\ \bibinfo {author} {\bibfnamefont {R.}~\bibnamefont {Renner}},\
  }\bibfield  {title} {\bibinfo {title} {{F}undamental work cost of quantum
  processes},\ }\href {https://doi.org/10.1103/PhysRevX.8.021011} {\bibfield
  {journal} {\bibinfo  {journal} {Physical Review X}\ }\textbf {\bibinfo
  {volume} {8}},\ \bibinfo {pages} {021011} (\bibinfo {year}
  {2018})}\BibitemShut {NoStop}%
\bibitem [{\citenamefont {Smith}(2010)}]{Smith10}%
  \BibitemOpen
  \bibfield  {author} {\bibinfo {author} {\bibfnamefont {G.}~\bibnamefont
  {Smith}},\ }\bibfield  {title} {\bibinfo {title} {{Q}uantum channel
  capacities},\ }in\ \href {https://doi.org/10.1109/CIG.2010.5592851} {\emph
  {\bibinfo {booktitle} {IEEE Information Theory Workshop}}}\ (\bibinfo {year}
  {2010})\ pp.\ \bibinfo {pages} {1--5}\BibitemShut {NoStop}%
\bibitem [{\citenamefont {Christandl}\ \emph {et~al.}(2009)\citenamefont
  {Christandl}, \citenamefont {K{\"o}nig},\ and\ \citenamefont
  {Renner}}]{Christandl2009PRL_Postselection}%
  \BibitemOpen
  \bibfield  {author} {\bibinfo {author} {\bibfnamefont {M.}~\bibnamefont
  {Christandl}}, \bibinfo {author} {\bibfnamefont {R.}~\bibnamefont
  {K{\"o}nig}},\ and\ \bibinfo {author} {\bibfnamefont {R.}~\bibnamefont
  {Renner}},\ }\bibfield  {title} {\bibinfo {title} {{P}ostselection technique
  for quantum channels with applications to quantum cryptography},\ }\href
  {https://doi.org/10.1103/PhysRevLett.102.020504} {\bibfield  {journal}
  {\bibinfo  {journal} {Physical Review Letters}\ }\textbf {\bibinfo {volume}
  {102}},\ \bibinfo {pages} {20504} (\bibinfo {year} {2009})}\BibitemShut
  {NoStop}%
\bibitem [{\citenamefont {Anshu}\ \emph
  {et~al.}(2019{\natexlab{a}})\citenamefont {Anshu}, \citenamefont {Jain},\
  and\ \citenamefont {Warsi}}]{Anshu2019IEEETIT_oneshot}%
  \BibitemOpen
  \bibfield  {author} {\bibinfo {author} {\bibfnamefont {A.}~\bibnamefont
  {Anshu}}, \bibinfo {author} {\bibfnamefont {R.}~\bibnamefont {Jain}},\ and\
  \bibinfo {author} {\bibfnamefont {N.~A.}\ \bibnamefont {Warsi}},\ }\bibfield
  {title} {\bibinfo {title} {{B}uilding blocks for communication over noisy
  quantum networks},\ }\href {https://doi.org/10.1109/TIT.2018.2851297}
  {\bibfield  {journal} {\bibinfo  {journal} {IEEE Transactions on Information
  Theory}\ }\textbf {\bibinfo {volume} {65}},\ \bibinfo {pages} {1287--1306}
  (\bibinfo {year} {2019}{\natexlab{a}})}\BibitemShut {NoStop}%
\bibitem [{\citenamefont {Faist}\ \emph {et~al.}(2019)\citenamefont {Faist},
  \citenamefont {Berta},\ and\ \citenamefont
  {Brand{\~a}o}}]{OurShortPaperToAppear}%
  \BibitemOpen
  \bibfield  {author} {\bibinfo {author} {\bibfnamefont {P.}~\bibnamefont
  {Faist}}, \bibinfo {author} {\bibfnamefont {M.}~\bibnamefont {Berta}},\ and\
  \bibinfo {author} {\bibfnamefont {F.}~\bibnamefont {Brand{\~a}o}},\
  }\bibfield  {title} {\bibinfo {title} {{T}hermodynamic capacity of quantum
  processes},\ }\href {https://doi.org/10.1103/PhysRevLett.122.200601}
  {\bibfield  {journal} {\bibinfo  {journal} {Physical Review Letters}\
  }\textbf {\bibinfo {volume} {122}},\ \bibinfo {pages} {200601} (\bibinfo
  {year} {2019})}\BibitemShut {NoStop}%
\bibitem [{\citenamefont {Navascu{\'e}s}\ and\ \citenamefont
  {Garc{\'\i}a-Pintos}(2015)}]{Navascues2015PRL_nonthermal}%
  \BibitemOpen
  \bibfield  {author} {\bibinfo {author} {\bibfnamefont {M.}~\bibnamefont
  {Navascu{\'e}s}}\ and\ \bibinfo {author} {\bibfnamefont {L.~P.}\ \bibnamefont
  {Garc{\'\i}a-Pintos}},\ }\bibfield  {title} {\bibinfo {title} {{N}onthermal
  quantum channels as a thermodynamical resource},\ }\href
  {https://doi.org/10.1103/PhysRevLett.115.010405} {\bibfield  {journal}
  {\bibinfo  {journal} {Physical Review Letters}\ }\textbf {\bibinfo {volume}
  {115}},\ \bibinfo {pages} {010405} (\bibinfo {year} {2015})}\BibitemShut
  {NoStop}%
\bibitem [{\citenamefont {Bennett}\ \emph {et~al.}(2014)\citenamefont
  {Bennett}, \citenamefont {Devetak}, \citenamefont {Harrow}, \citenamefont
  {Shor},\ and\ \citenamefont {Winter}}]{Bennett2014_reverse}%
  \BibitemOpen
  \bibfield  {author} {\bibinfo {author} {\bibfnamefont {C.~H.}\ \bibnamefont
  {Bennett}}, \bibinfo {author} {\bibfnamefont {I.}~\bibnamefont {Devetak}},
  \bibinfo {author} {\bibfnamefont {A.~W.}\ \bibnamefont {Harrow}}, \bibinfo
  {author} {\bibfnamefont {P.~W.}\ \bibnamefont {Shor}},\ and\ \bibinfo
  {author} {\bibfnamefont {A.}~\bibnamefont {Winter}},\ }\bibfield  {title}
  {\bibinfo {title} {{T}he quantum reverse {Shannon} theorem and resource
  tradeoffs for simulating quantum channels},\ }\href
  {https://doi.org/10.1109/TIT.2014.2309968} {\bibfield  {journal} {\bibinfo
  {journal} {IEEE Transactions on Information Theory}\ }\textbf {\bibinfo
  {volume} {60}},\ \bibinfo {pages} {2926--2959} (\bibinfo {year}
  {2014})}\BibitemShut {NoStop}%
\bibitem [{\citenamefont {Berta}\ \emph {et~al.}(2011)\citenamefont {Berta},
  \citenamefont {Christandl},\ and\ \citenamefont
  {Renner}}]{Berta2011_reverse}%
  \BibitemOpen
  \bibfield  {author} {\bibinfo {author} {\bibfnamefont {M.}~\bibnamefont
  {Berta}}, \bibinfo {author} {\bibfnamefont {M.}~\bibnamefont {Christandl}},\
  and\ \bibinfo {author} {\bibfnamefont {R.}~\bibnamefont {Renner}},\
  }\bibfield  {title} {\bibinfo {title} {{T}he quantum reverse {Shannon}
  theorem based on one-shot information theory},\ }\href
  {https://doi.org/10.1007/s00220-011-1309-7} {\bibfield  {journal} {\bibinfo
  {journal} {Communications in Mathematical Physics}\ }\textbf {\bibinfo
  {volume} {306}},\ \bibinfo {pages} {579--615} (\bibinfo {year}
  {2011})}\BibitemShut {NoStop}%
\bibitem [{\citenamefont {Harrow}(2005)}]{PhDHarrow2005}%
  \BibitemOpen
  \bibfield  {author} {\bibinfo {author} {\bibfnamefont {A.~W.}\ \bibnamefont
  {Harrow}},\ }\emph {\bibinfo {title} {{A}pplications of coherent classical
  communication and the Schur transform to quantum information theory}},\
  \href@noop {} {Ph.D. thesis},\ \bibinfo  {school} {Massachusetts Institute of
  Technology} (\bibinfo {year} {2005}),\ \Eprint
  {https://arxiv.org/abs/quant-ph/0512255} {arXiv:quant-ph/0512255}
  \BibitemShut {NoStop}%
\bibitem [{\citenamefont {Haah}\ \emph {et~al.}(2017)\citenamefont {Haah},
  \citenamefont {Harrow}, \citenamefont {Ji}, \citenamefont {Wu},\ and\
  \citenamefont {Yu}}]{Haah2017IEEETIT_sampleoptimal}%
  \BibitemOpen
  \bibfield  {author} {\bibinfo {author} {\bibfnamefont {J.}~\bibnamefont
  {Haah}}, \bibinfo {author} {\bibfnamefont {A.~W.}\ \bibnamefont {Harrow}},
  \bibinfo {author} {\bibfnamefont {Z.}~\bibnamefont {Ji}}, \bibinfo {author}
  {\bibfnamefont {X.}~\bibnamefont {Wu}},\ and\ \bibinfo {author}
  {\bibfnamefont {N.}~\bibnamefont {Yu}},\ }\bibfield  {title} {\bibinfo
  {title} {{S}ample-optimal tomography of quantum states},\ }\href
  {https://doi.org/10.1109/TIT.2017.2719044} {\bibfield  {journal} {\bibinfo
  {journal} {IEEE Transactions on Information Theory}\ }\textbf {\bibinfo
  {volume} {63}},\ \bibinfo {pages} {5628--5641} (\bibinfo {year}
  {2017})}\BibitemShut {NoStop}%
\bibitem [{\citenamefont {N{\"o}tzel}(2012)}]{Noetzel2012arXiv_two}%
  \BibitemOpen
  \bibfield  {author} {\bibinfo {author} {\bibfnamefont {J.}~\bibnamefont
  {N{\"o}tzel}},\ }\bibfield  {title} {\bibinfo {title} {{A} solution to two
  party typicality using representation theory of the symmetric group}}
  (\bibinfo {year} {2012}),\ \bibinfo {note}
  {\href{https://arxiv.org/abs/1209.5094}{arXiv:1209.5094}}\BibitemShut
  {NoStop}%
\bibitem [{\citenamefont {Tomamichel}\ \emph {et~al.}(2010)\citenamefont
  {Tomamichel}, \citenamefont {Colbeck},\ and\ \citenamefont
  {Renner}}]{Tomamichel2010IEEE_Duality}%
  \BibitemOpen
  \bibfield  {author} {\bibinfo {author} {\bibfnamefont {M.}~\bibnamefont
  {Tomamichel}}, \bibinfo {author} {\bibfnamefont {R.}~\bibnamefont
  {Colbeck}},\ and\ \bibinfo {author} {\bibfnamefont {R.}~\bibnamefont
  {Renner}},\ }\bibfield  {title} {\bibinfo {title} {{D}uality between smooth
  min- and max-entropies},\ }\href {https://doi.org/10.1109/TIT.2010.2054130}
  {\bibfield  {journal} {\bibinfo  {journal} {IEEE Transactions on Information
  Theory}\ }\textbf {\bibinfo {volume} {56}},\ \bibinfo {pages} {4674--4681}
  (\bibinfo {year} {2010})}\BibitemShut {NoStop}%
\bibitem [{\citenamefont {Nielsen}\ and\ \citenamefont
  {Chuang}(2000)}]{BookNielsenChuang2000}%
  \BibitemOpen
  \bibfield  {author} {\bibinfo {author} {\bibfnamefont {M.~A.}\ \bibnamefont
  {Nielsen}}\ and\ \bibinfo {author} {\bibfnamefont {I.~L.}\ \bibnamefont
  {Chuang}},\ }\href@noop {} {\emph {\bibinfo {title} {{Q}uantum Computation
  and Quantum Information}}}\ (\bibinfo  {publisher} {Cambridge University
  Press},\ \bibinfo {year} {2000})\BibitemShut {NoStop}%
\bibitem [{\citenamefont {Wang}\ and\ \citenamefont
  {Renner}(2012)}]{Wang2012PRL_oneshot}%
  \BibitemOpen
  \bibfield  {author} {\bibinfo {author} {\bibfnamefont {L.}~\bibnamefont
  {Wang}}\ and\ \bibinfo {author} {\bibfnamefont {R.}~\bibnamefont {Renner}},\
  }\bibfield  {title} {\bibinfo {title} {{O}ne-shot classical-quantum capacity
  and hypothesis testing},\ }\href
  {https://doi.org/10.1103/PhysRevLett.108.200501} {\bibfield  {journal}
  {\bibinfo  {journal} {Physical Review Letters}\ }\textbf {\bibinfo {volume}
  {108}},\ \bibinfo {pages} {200501} (\bibinfo {year} {2012})}\BibitemShut
  {NoStop}%
\bibitem [{\citenamefont {Tomamichel}\ and\ \citenamefont
  {Hayashi}(2013)}]{Tomamichel2013_hierarchy}%
  \BibitemOpen
  \bibfield  {author} {\bibinfo {author} {\bibfnamefont {M.}~\bibnamefont
  {Tomamichel}}\ and\ \bibinfo {author} {\bibfnamefont {M.}~\bibnamefont
  {Hayashi}},\ }\bibfield  {title} {\bibinfo {title} {{A} hierarchy of
  information quantities for finite block length analysis of quantum tasks},\
  }\href {https://doi.org/10.1109/TIT.2013.2276628} {\bibfield  {journal}
  {\bibinfo  {journal} {IEEE Transactions on Information Theory}\ }\textbf
  {\bibinfo {volume} {59}},\ \bibinfo {pages} {7693--7710} (\bibinfo {year}
  {2013})}\BibitemShut {NoStop}%
\bibitem [{\citenamefont {Matthews}\ and\ \citenamefont
  {Wehner}(2014)}]{Matthews2014IEEETIT_blocklength}%
  \BibitemOpen
  \bibfield  {author} {\bibinfo {author} {\bibfnamefont {W.}~\bibnamefont
  {Matthews}}\ and\ \bibinfo {author} {\bibfnamefont {S.}~\bibnamefont
  {Wehner}},\ }\bibfield  {title} {\bibinfo {title} {{F}inite blocklength
  converse bounds for quantum channels},\ }\href
  {https://doi.org/10.1109/TIT.2014.2353614} {\bibfield  {journal} {\bibinfo
  {journal} {IEEE Transactions on Information Theory}\ }\textbf {\bibinfo
  {volume} {60}},\ \bibinfo {pages} {7317--7329} (\bibinfo {year}
  {2014})}\BibitemShut {NoStop}%
\bibitem [{\citenamefont {Buscemi}\ and\ \citenamefont
  {Datta}(2010)}]{Buscemi2010IEEETIT_capacity}%
  \BibitemOpen
  \bibfield  {author} {\bibinfo {author} {\bibfnamefont {F.}~\bibnamefont
  {Buscemi}}\ and\ \bibinfo {author} {\bibfnamefont {N.}~\bibnamefont
  {Datta}},\ }\bibfield  {title} {\bibinfo {title} {{T}he quantum capacity of
  channels with arbitrarily correlated noise},\ }\href
  {https://doi.org/10.1109/TIT.2009.2039166} {\bibfield  {journal} {\bibinfo
  {journal} {IEEE Transactions on Information Theory}\ }\textbf {\bibinfo
  {volume} {56}},\ \bibinfo {pages} {1447--1460} (\bibinfo {year}
  {2010})}\BibitemShut {NoStop}%
\bibitem [{\citenamefont {Brand{\~a}o}\ and\ \citenamefont
  {Datta}(2011)}]{Brandao2011IEEETIT_oneshot}%
  \BibitemOpen
  \bibfield  {author} {\bibinfo {author} {\bibfnamefont {F.~G. S.~L.}\
  \bibnamefont {Brand{\~a}o}}\ and\ \bibinfo {author} {\bibfnamefont
  {N.}~\bibnamefont {Datta}},\ }\bibfield  {title} {\bibinfo {title}
  {{O}ne-shot rates for entanglement manipulation under non-entangling maps},\
  }\href {https://doi.org/10.1109/TIT.2011.2104531} {\bibfield  {journal}
  {\bibinfo  {journal} {IEEE Transactions on Information Theory}\ }\textbf
  {\bibinfo {volume} {57}},\ \bibinfo {pages} {1754--1760} (\bibinfo {year}
  {2011})}\BibitemShut {NoStop}%
\bibitem [{\citenamefont {Dupuis}\ \emph {et~al.}(2013)\citenamefont {Dupuis},
  \citenamefont {Kraemer}, \citenamefont {Faist}, \citenamefont {Renes},\ and\
  \citenamefont {Renner}}]{Dupuis2013_DH}%
  \BibitemOpen
  \bibfield  {author} {\bibinfo {author} {\bibfnamefont {F.}~\bibnamefont
  {Dupuis}}, \bibinfo {author} {\bibfnamefont {L.}~\bibnamefont {Kraemer}},
  \bibinfo {author} {\bibfnamefont {P.}~\bibnamefont {Faist}}, \bibinfo
  {author} {\bibfnamefont {J.~M.}\ \bibnamefont {Renes}},\ and\ \bibinfo
  {author} {\bibfnamefont {R.}~\bibnamefont {Renner}},\ }\bibfield  {title}
  {\bibinfo {title} {{G}eneralized entropies},\ }in\ \href
  {https://doi.org/10.1142/9789814449243_0008} {\emph {\bibinfo {booktitle}
  {XVIIth International Congress on Mathematical Physics}}}\ (\bibinfo
  {publisher} {World Scientific},\ \bibinfo {address} {Singapore},\ \bibinfo
  {year} {2013})\ pp.\ \bibinfo {pages} {134--153},\ \Eprint
  {https://arxiv.org/abs/1211.3141} {arXiv:1211.3141} \BibitemShut {NoStop}%
\bibitem [{\citenamefont {Watrous}(2009)}]{Watrous2009_sdps}%
  \BibitemOpen
  \bibfield  {author} {\bibinfo {author} {\bibfnamefont {J.}~\bibnamefont
  {Watrous}},\ }\bibfield  {title} {\bibinfo {title} {{S}emidefinite programs
  for completely bounded norms},\ }\href
  {https://doi.org/10.4086/toc.2009.v005a011} {\bibfield  {journal} {\bibinfo
  {journal} {Theory of Computing}\ }\textbf {\bibinfo {volume} {5}},\ \bibinfo
  {pages} {217--238} (\bibinfo {year} {2009})}\BibitemShut {NoStop}%
\bibitem [{\citenamefont {Szilard}(1929)}]{Szilard1929ZeitschriftFuerPhysik}%
  \BibitemOpen
  \bibfield  {author} {\bibinfo {author} {\bibfnamefont {L.}~\bibnamefont
  {Szilard}},\ }\bibfield  {title} {{\selectlanguage {german}\bibinfo {title}
  {{\"U}ber die entropieverminderung in einem thermodynamischen system bei
  eingriffen intelligenter wesen}},\ }\href
  {https://doi.org/10.1007/BF01341281} {\bibfield  {journal} {\bibinfo
  {journal} {Zeitschrift f{\"u}r Physik}\ }\textbf {\bibinfo {volume} {53}},\
  \bibinfo {pages} {840--856} (\bibinfo {year} {1929})}\BibitemShut {NoStop}%
\bibitem [{\citenamefont {Boyd}\ and\ \citenamefont
  {Vandenberghe}(2004)}]{BookBoyd2004ConvexOptimization}%
  \BibitemOpen
  \bibfield  {author} {\bibinfo {author} {\bibfnamefont {S.~P.}\ \bibnamefont
  {Boyd}}\ and\ \bibinfo {author} {\bibfnamefont {L.}~\bibnamefont
  {Vandenberghe}},\ }\href@noop {} {\emph {\bibinfo {title} {{C}onvex
  Optimization}}}\ (\bibinfo  {publisher} {Cambridge University Press},\
  \bibinfo {address} {Cambridge, UK},\ \bibinfo {year} {2004})\BibitemShut
  {NoStop}%
\bibitem [{\citenamefont {Pitchford}\ \emph {et~al.}(2016)\citenamefont
  {Pitchford}, \citenamefont {Granade}, \citenamefont {Nation},\ and\
  \citenamefont {Johansson}}]{QuTipPy_4_1_0}%
  \BibitemOpen
  \bibfield  {author} {\bibinfo {author} {\bibfnamefont {A.}~\bibnamefont
  {Pitchford}}, \bibinfo {author} {\bibfnamefont {C.}~\bibnamefont {Granade}},
  \bibinfo {author} {\bibfnamefont {P.~D.}\ \bibnamefont {Nation}},\ and\
  \bibinfo {author} {\bibfnamefont {R.~J.}\ \bibnamefont {Johansson}},\ }\href
  {http://qutip.org} {\bibinfo {title} {{QuTip} 4.1.0}} (\bibinfo {year}
  {2016})\BibitemShut {NoStop}%
\bibitem [{\citenamefont {Johansson}\ \emph {et~al.}(2013)\citenamefont
  {Johansson}, \citenamefont {Nation},\ and\ \citenamefont
  {Nori}}]{Johansson2013CPC_QuTip2}%
  \BibitemOpen
  \bibfield  {author} {\bibinfo {author} {\bibfnamefont {J.}~\bibnamefont
  {Johansson}}, \bibinfo {author} {\bibfnamefont {P.}~\bibnamefont {Nation}},\
  and\ \bibinfo {author} {\bibfnamefont {F.}~\bibnamefont {Nori}},\ }\bibfield
  {title} {\bibinfo {title} {{QuTip} 2: A {Python} framework for the dynamics
  of open quantum systems},\ }\href {https://doi.org/10.1016/j.cpc.2012.11.019}
  {\bibfield  {journal} {\bibinfo  {journal} {Computer Physics Communications}\
  }\textbf {\bibinfo {volume} {184}},\ \bibinfo {pages} {1234--1240} (\bibinfo
  {year} {2013})}\BibitemShut {NoStop}%
\bibitem [{\citenamefont {Andersen}\ \emph {et~al.}(2016)\citenamefont
  {Andersen}, \citenamefont {Dahl},\ and\ \citenamefont
  {Vandenberghe}}]{CVXOPTPy_1_1_9}%
  \BibitemOpen
  \bibfield  {author} {\bibinfo {author} {\bibfnamefont {M.~S.}\ \bibnamefont
  {Andersen}}, \bibinfo {author} {\bibfnamefont {J.}~\bibnamefont {Dahl}},\
  and\ \bibinfo {author} {\bibfnamefont {L.}~\bibnamefont {Vandenberghe}},\
  }\href {https://cvxopt.org/} {\bibinfo {title} {{CVXOPT} 1.1.9}} (\bibinfo
  {year} {2016})\BibitemShut {NoStop}%
\bibitem [{\citenamefont {Ramakrishnan}\ \emph {et~al.}(2021)\citenamefont
  {Ramakrishnan}, \citenamefont {Iten}, \citenamefont {Scholz},\ and\
  \citenamefont {Berta}}]{Berta21}%
  \BibitemOpen
  \bibfield  {author} {\bibinfo {author} {\bibfnamefont {N.}~\bibnamefont
  {Ramakrishnan}}, \bibinfo {author} {\bibfnamefont {R.}~\bibnamefont {Iten}},
  \bibinfo {author} {\bibfnamefont {V.~B.}\ \bibnamefont {Scholz}},\ and\
  \bibinfo {author} {\bibfnamefont {M.}~\bibnamefont {Berta}},\ }\bibfield
  {title} {\bibinfo {title} {{C}omputing quantum channel capacities},\ }\href
  {https://doi.org/10.1109/TIT.2020.3034471} {\bibfield  {journal} {\bibinfo
  {journal} {IEEE Transactions on Information Theory}\ }\textbf {\bibinfo
  {volume} {67}},\ \bibinfo {pages} {946--960} (\bibinfo {year}
  {2021})}\BibitemShut {NoStop}%
\bibitem [{\citenamefont {Alicki}(2004)}]{Alicki2004arXiv_isotropic}%
  \BibitemOpen
  \bibfield  {author} {\bibinfo {author} {\bibfnamefont {R.}~\bibnamefont
  {Alicki}},\ }\bibfield  {title} {\bibinfo {title} {{I}sotropic quantum spin
  channels and additivity questions}} (\bibinfo {year} {2004}),\ \bibinfo
  {note}
  {\href{https://arxiv.org/abs/quant-ph/0402080}{arXiv:quant-ph/0402080}}\BibitemShut
  {NoStop}%
\bibitem [{\citenamefont {Devetak}\ \emph {et~al.}(2006)\citenamefont
  {Devetak}, \citenamefont {Junge}, \citenamefont {King},\ and\ \citenamefont
  {Ruskai}}]{Devetak2006CMP_multiplicativity}%
  \BibitemOpen
  \bibfield  {author} {\bibinfo {author} {\bibfnamefont {I.}~\bibnamefont
  {Devetak}}, \bibinfo {author} {\bibfnamefont {M.}~\bibnamefont {Junge}},
  \bibinfo {author} {\bibfnamefont {C.}~\bibnamefont {King}},\ and\ \bibinfo
  {author} {\bibfnamefont {M.~B.}\ \bibnamefont {Ruskai}},\ }\bibfield  {title}
  {\bibinfo {title} {{M}ultiplicativity of completely bounded p-norms implies a
  new additivity result},\ }\href {https://doi.org/10.1007/s00220-006-0034-0}
  {\bibfield  {journal} {\bibinfo  {journal} {Communications in Mathematical
  Physics}\ }\textbf {\bibinfo {volume} {266}},\ \bibinfo {pages} {37--63}
  (\bibinfo {year} {2006})}\BibitemShut {NoStop}%
\bibitem [{\citenamefont
  {Holevo}(2011{\natexlab{a}})}]{Holevo2011ISIT_entropygain}%
  \BibitemOpen
  \bibfield  {author} {\bibinfo {author} {\bibfnamefont {A.~S.}\ \bibnamefont
  {Holevo}},\ }\bibfield  {title} {\bibinfo {title} {{T}he entropy gain of
  quantum channels},\ }in\ \href {https://doi.org/10.1109/ISIT.2011.6034107}
  {\emph {\bibinfo {booktitle} {Proceedings of the 2011 IEEE International
  Symposium on Information Theory}}},\ Vol.~\bibinfo {volume} {82}\ (\bibinfo
  {publisher} {IEEE},\ \bibinfo {year} {2011})\ pp.\ \bibinfo {pages}
  {289--292}\BibitemShut {NoStop}%
\bibitem [{\citenamefont {Holevo}(2010)}]{Holevo2010DM_infinitedim}%
  \BibitemOpen
  \bibfield  {author} {\bibinfo {author} {\bibfnamefont {A.~S.}\ \bibnamefont
  {Holevo}},\ }\bibfield  {title} {\bibinfo {title} {{T}he entropy gain of
  infinite-dimensional quantum evolutions},\ }\href
  {https://doi.org/10.1134/S1064562410050133} {\bibfield  {journal} {\bibinfo
  {journal} {Doklady Mathematics}\ }\textbf {\bibinfo {volume} {82}},\ \bibinfo
  {pages} {730--731} (\bibinfo {year} {2010})}\BibitemShut {NoStop}%
\bibitem [{\citenamefont {Holevo}(2011{\natexlab{b}})}]{Holevo2011TMP_CJ}%
  \BibitemOpen
  \bibfield  {author} {\bibinfo {author} {\bibfnamefont {A.~S.}\ \bibnamefont
  {Holevo}},\ }\bibfield  {title} {\bibinfo {title} {{E}ntropy gain and the
  {Choi}-{Jamiolkowski} correspondence for infinite-dimensional quantum
  evolutions},\ }\href {https://doi.org/10.1007/s11232-011-0010-5} {\bibfield
  {journal} {\bibinfo  {journal} {Theoretical and Mathematical Physics}\
  }\textbf {\bibinfo {volume} {166}},\ \bibinfo {pages} {123--138} (\bibinfo
  {year} {2011}{\natexlab{b}})}\BibitemShut {NoStop}%
\bibitem [{\citenamefont
  {Holevo}(2012)}]{BookHolevo2012_QuSystemsChannelsInformation}%
  \BibitemOpen
  \bibfield  {author} {\bibinfo {author} {\bibfnamefont {A.~S.}\ \bibnamefont
  {Holevo}},\ }\href {https://doi.org/10.1515/9783110273403} {\emph {\bibinfo
  {title} {{Q}uantum Systems, Channels, Information}}}\ (\bibinfo  {publisher}
  {De Gruyter},\ \bibinfo {address} {Berlin, Boston},\ \bibinfo {year}
  {2012})\BibitemShut {NoStop}%
\bibitem [{\citenamefont {Buscemi}\ \emph {et~al.}(2016)\citenamefont
  {Buscemi}, \citenamefont {Das},\ and\ \citenamefont
  {Wilde}}]{Buscemi2016PRA_reversibility}%
  \BibitemOpen
  \bibfield  {author} {\bibinfo {author} {\bibfnamefont {F.}~\bibnamefont
  {Buscemi}}, \bibinfo {author} {\bibfnamefont {S.}~\bibnamefont {Das}},\ and\
  \bibinfo {author} {\bibfnamefont {M.~M.}\ \bibnamefont {Wilde}},\ }\bibfield
  {title} {\bibinfo {title} {{A}pproximate reversibility in the context of
  entropy gain, information gain, and complete positivity},\ }\href
  {https://doi.org/10.1103/PhysRevA.93.062314} {\bibfield  {journal} {\bibinfo
  {journal} {Physical Review A}\ }\textbf {\bibinfo {volume} {93}},\ \bibinfo
  {pages} {062314} (\bibinfo {year} {2016})}\BibitemShut {NoStop}%
\bibitem [{\citenamefont {Gour}\ and\ \citenamefont
  {Wilde}(2021)}]{Gour2021PRR_entropychannel}%
  \BibitemOpen
  \bibfield  {author} {\bibinfo {author} {\bibfnamefont {G.}~\bibnamefont
  {Gour}}\ and\ \bibinfo {author} {\bibfnamefont {M.~M.}\ \bibnamefont
  {Wilde}},\ }\bibfield  {title} {\bibinfo {title} {{E}ntropy of a quantum
  channel},\ }\href {https://doi.org/10.1103/PhysRevResearch.3.023096}
  {\bibfield  {journal} {\bibinfo  {journal} {Physical Review Research}\
  }\textbf {\bibinfo {volume} {3}},\ \bibinfo {pages} {023096} (\bibinfo {year}
  {2021})}\BibitemShut {NoStop}%
\bibitem [{\citenamefont {Berta}\ \emph {et~al.}(2014)\citenamefont {Berta},
  \citenamefont {Renes},\ and\ \citenamefont
  {Wilde}}]{Berta2014IEEETIT_InfGainMeas}%
  \BibitemOpen
  \bibfield  {author} {\bibinfo {author} {\bibfnamefont {M.}~\bibnamefont
  {Berta}}, \bibinfo {author} {\bibfnamefont {J.~M.}\ \bibnamefont {Renes}},\
  and\ \bibinfo {author} {\bibfnamefont {M.~M.}\ \bibnamefont {Wilde}},\
  }\bibfield  {title} {\bibinfo {title} {{I}dentifying the information gain of
  a quantum measurement},\ }\href {https://doi.org/10.1109/TIT.2014.2365207}
  {\bibfield  {journal} {\bibinfo  {journal} {IEEE Transactions on Information
  Theory}\ }\textbf {\bibinfo {volume} {60}},\ \bibinfo {pages} {7987--8006}
  (\bibinfo {year} {2014})}\BibitemShut {NoStop}%
\bibitem [{\citenamefont {Faist}(2016)}]{Faist16}%
  \BibitemOpen
  \bibfield  {author} {\bibinfo {author} {\bibfnamefont {P.}~\bibnamefont
  {Faist}},\ }\emph {\bibinfo {title} {{Q}uantum Coarse-Graining: An
  Information-Theoretic Approach to Thermodynamics}},\ \href
  {https://doi.org/10.3929/ethz-a-010695790} {Ph.D. thesis},\ \bibinfo
  {school} {ETH Z{\"u}rich} (\bibinfo {year} {2016}),\ \Eprint
  {https://arxiv.org/abs/1607.03104} {arXiv:1607.03104} \BibitemShut {NoStop}%
\bibitem [{\citenamefont {Morgan}\ and\ \citenamefont
  {Winter}(2014)}]{Morgan2014IEEETIT_prettystrong}%
  \BibitemOpen
  \bibfield  {author} {\bibinfo {author} {\bibfnamefont {C.}~\bibnamefont
  {Morgan}}\ and\ \bibinfo {author} {\bibfnamefont {A.}~\bibnamefont
  {Winter}},\ }\bibfield  {title} {\bibinfo {title}
  {{\textquotedblleft}{P}retty strong{\textquotedblright} converse for the
  quantum capacity of degradable channels},\ }\href
  {https://doi.org/10.1109/TIT.2013.2288971} {\bibfield  {journal} {\bibinfo
  {journal} {IEEE Transactions on Information Theory}\ }\textbf {\bibinfo
  {volume} {60}},\ \bibinfo {pages} {317--333} (\bibinfo {year}
  {2014})}\BibitemShut {NoStop}%
\bibitem [{\citenamefont {Tomamichel}\ \emph {et~al.}(2009)\citenamefont
  {Tomamichel}, \citenamefont {Colbeck},\ and\ \citenamefont
  {Renner}}]{Tomamichel2009IEEE_AEP}%
  \BibitemOpen
  \bibfield  {author} {\bibinfo {author} {\bibfnamefont {M.}~\bibnamefont
  {Tomamichel}}, \bibinfo {author} {\bibfnamefont {R.}~\bibnamefont
  {Colbeck}},\ and\ \bibinfo {author} {\bibfnamefont {R.}~\bibnamefont
  {Renner}},\ }\bibfield  {title} {\bibinfo {title} {{A} fully quantum
  asymptotic equipartition property},\ }\href
  {https://doi.org/10.1109/TIT.2009.2032797} {\bibfield  {journal} {\bibinfo
  {journal} {IEEE Transactions on Information Theory}\ }\textbf {\bibinfo
  {volume} {55}},\ \bibinfo {pages} {5840--5847} (\bibinfo {year}
  {2009})}\BibitemShut {NoStop}%
\bibitem [{\citenamefont {Bjelakovic}\ and\ \citenamefont
  {Siegmund-Schultze}(2003)}]{Bjelakovic2003arXiv_revisted}%
  \BibitemOpen
  \bibfield  {author} {\bibinfo {author} {\bibfnamefont {I.}~\bibnamefont
  {Bjelakovic}}\ and\ \bibinfo {author} {\bibfnamefont {R.}~\bibnamefont
  {Siegmund-Schultze}},\ }\bibfield  {title} {\bibinfo {title} {{Q}uantum
  {Stein}'s lemma revisited, inequalities for quantum entropies, and a
  concavity theorem of {Lieb}}} (\bibinfo {year} {2003}),\ \bibinfo {note}
  {\href{https://arxiv.org/abs/quant-ph/0307170}{arXiv:quant-ph/0307170}}\BibitemShut
  {NoStop}%
\bibitem [{\citenamefont {Berta}\ \emph {et~al.}(2015)\citenamefont {Berta},
  \citenamefont {Lemm},\ and\ \citenamefont
  {Wilde}}]{Berta2015QIC_monotonicity}%
  \BibitemOpen
  \bibfield  {author} {\bibinfo {author} {\bibfnamefont {M.}~\bibnamefont
  {Berta}}, \bibinfo {author} {\bibfnamefont {M.}~\bibnamefont {Lemm}},\ and\
  \bibinfo {author} {\bibfnamefont {M.~M.}\ \bibnamefont {Wilde}},\ }\bibfield
  {title} {\bibinfo {title} {{M}onotonicity of quantum relative entropy and
  recoverability},\ }\href@noop {} {\bibfield  {journal} {\bibinfo  {journal}
  {Quantum Information and Computation}\ }\textbf {\bibinfo {volume} {15}},\
  \bibinfo {pages} {1333--1354} (\bibinfo {year} {2015})}\BibitemShut {NoStop}%
\bibitem [{\citenamefont {Anshu}\ \emph {et~al.}(2017)\citenamefont {Anshu},
  \citenamefont {Devabathini},\ and\ \citenamefont
  {Jain}}]{Anshu2017PRL_convexsplit}%
  \BibitemOpen
  \bibfield  {author} {\bibinfo {author} {\bibfnamefont {A.}~\bibnamefont
  {Anshu}}, \bibinfo {author} {\bibfnamefont {V.~K.}\ \bibnamefont
  {Devabathini}},\ and\ \bibinfo {author} {\bibfnamefont {R.}~\bibnamefont
  {Jain}},\ }\bibfield  {title} {\bibinfo {title} {{Q}uantum communication
  using coherent rejection sampling},\ }\href
  {https://doi.org/10.1103/PhysRevLett.119.120506} {\bibfield  {journal}
  {\bibinfo  {journal} {Physical Review Letters}\ }\textbf {\bibinfo {volume}
  {119}},\ \bibinfo {pages} {120506} (\bibinfo {year} {2017})}\BibitemShut
  {NoStop}%
\bibitem [{\citenamefont {Anshu}\ \emph
  {et~al.}(2018{\natexlab{a}})\citenamefont {Anshu}, \citenamefont {Jain},\
  and\ \citenamefont {Warsi}}]{Anshu2018IEEETIT_redistribution}%
  \BibitemOpen
  \bibfield  {author} {\bibinfo {author} {\bibfnamefont {A.}~\bibnamefont
  {Anshu}}, \bibinfo {author} {\bibfnamefont {R.}~\bibnamefont {Jain}},\ and\
  \bibinfo {author} {\bibfnamefont {N.~A.}\ \bibnamefont {Warsi}},\ }\bibfield
  {title} {\bibinfo {title} {{A} one-shot achievability result for quantum
  state redistribution},\ }\href {https://doi.org/10.1109/TIT.2017.2776112}
  {\bibfield  {journal} {\bibinfo  {journal} {IEEE Transactions on Information
  Theory}\ }\textbf {\bibinfo {volume} {64}},\ \bibinfo {pages} {1425--1435}
  (\bibinfo {year} {2018}{\natexlab{a}})}\BibitemShut {NoStop}%
\bibitem [{\citenamefont {Anshu}\ \emph
  {et~al.}(2018{\natexlab{b}})\citenamefont {Anshu}, \citenamefont {Jain},\
  and\ \citenamefont {Warsi}}]{Anshu2018IEEETIT_SlepianWolf}%
  \BibitemOpen
  \bibfield  {author} {\bibinfo {author} {\bibfnamefont {A.}~\bibnamefont
  {Anshu}}, \bibinfo {author} {\bibfnamefont {R.}~\bibnamefont {Jain}},\ and\
  \bibinfo {author} {\bibfnamefont {N.~A.}\ \bibnamefont {Warsi}},\ }\bibfield
  {title} {\bibinfo {title} {{A} generalized quantum
  {Slepian}{\textendash}{Wolf}},\ }\href
  {https://doi.org/10.1109/TIT.2017.2786348} {\bibfield  {journal} {\bibinfo
  {journal} {IEEE Transactions on Information Theory}\ }\textbf {\bibinfo
  {volume} {64}},\ \bibinfo {pages} {1436--1453} (\bibinfo {year}
  {2018}{\natexlab{b}})}\BibitemShut {NoStop}%
\bibitem [{\citenamefont {Anshu}\ \emph
  {et~al.}(2019{\natexlab{b}})\citenamefont {Anshu}, \citenamefont {Jain},\
  and\ \citenamefont {Warsi}}]{Anshu2019IEEETIT_compression}%
  \BibitemOpen
  \bibfield  {author} {\bibinfo {author} {\bibfnamefont {A.}~\bibnamefont
  {Anshu}}, \bibinfo {author} {\bibfnamefont {R.}~\bibnamefont {Jain}},\ and\
  \bibinfo {author} {\bibfnamefont {N.~A.}\ \bibnamefont {Warsi}},\ }\bibfield
  {title} {\bibinfo {title} {{C}onvex-split and hypothesis testing approach to
  one-shot quantum measurement compression and randomness extraction},\ }\href
  {https://doi.org/10.1109/TIT.2019.2915242} {\bibfield  {journal} {\bibinfo
  {journal} {IEEE Transactions on Information Theory}\ }\textbf {\bibinfo
  {volume} {65}},\ \bibinfo {pages} {5905--5924} (\bibinfo {year}
  {2019}{\natexlab{b}})}\BibitemShut {NoStop}%
\bibitem [{\citenamefont {Majenz}\ \emph {et~al.}(2017)\citenamefont {Majenz},
  \citenamefont {Berta}, \citenamefont {Dupuis}, \citenamefont {Renner},\ and\
  \citenamefont {Christandl}}]{Majenz2017PRL_catdecoupling}%
  \BibitemOpen
  \bibfield  {author} {\bibinfo {author} {\bibfnamefont {C.}~\bibnamefont
  {Majenz}}, \bibinfo {author} {\bibfnamefont {M.}~\bibnamefont {Berta}},
  \bibinfo {author} {\bibfnamefont {F.}~\bibnamefont {Dupuis}}, \bibinfo
  {author} {\bibfnamefont {R.}~\bibnamefont {Renner}},\ and\ \bibinfo {author}
  {\bibfnamefont {M.}~\bibnamefont {Christandl}},\ }\bibfield  {title}
  {\bibinfo {title} {{C}atalytic decoupling of quantum information},\ }\href
  {https://doi.org/10.1103/PhysRevLett.118.080503} {\bibfield  {journal}
  {\bibinfo  {journal} {Physical Review Letters}\ }\textbf {\bibinfo {volume}
  {118}},\ \bibinfo {pages} {080503} (\bibinfo {year} {2017})}\BibitemShut
  {NoStop}%
\bibitem [{\citenamefont {Anshu}\ \emph {et~al.}(2020)\citenamefont {Anshu},
  \citenamefont {Berta}, \citenamefont {Jain},\ and\ \citenamefont
  {Tomamichel}}]{Anshu2020ITIT_partially}%
  \BibitemOpen
  \bibfield  {author} {\bibinfo {author} {\bibfnamefont {A.}~\bibnamefont
  {Anshu}}, \bibinfo {author} {\bibfnamefont {M.}~\bibnamefont {Berta}},
  \bibinfo {author} {\bibfnamefont {R.}~\bibnamefont {Jain}},\ and\ \bibinfo
  {author} {\bibfnamefont {M.}~\bibnamefont {Tomamichel}},\ }\bibfield  {title}
  {\bibinfo {title} {{P}artially smoothed information measures},\ }\href
  {https://doi.org/10.1109/TIT.2020.2981573} {\bibfield  {journal} {\bibinfo
  {journal} {IEEE Transactions on Information Theory}\ }\textbf {\bibinfo
  {volume} {66}},\ \bibinfo {pages} {5022--5036} (\bibinfo {year}
  {2020})}\BibitemShut {NoStop}%
\bibitem [{\citenamefont {Berta}\ and\ \citenamefont {Majenz}(2018)}]{Berta18}%
  \BibitemOpen
  \bibfield  {author} {\bibinfo {author} {\bibfnamefont {M.}~\bibnamefont
  {Berta}}\ and\ \bibinfo {author} {\bibfnamefont {C.}~\bibnamefont {Majenz}},\
  }\bibfield  {title} {\bibinfo {title} {{D}isentanglement cost of quantum
  states},\ }\href {https://doi.org/10.1103/PhysRevLett.121.190503} {\bibfield
  {journal} {\bibinfo  {journal} {Physical Review Letters}\ }\textbf {\bibinfo
  {volume} {121}},\ \bibinfo {pages} {190503} (\bibinfo {year}
  {2018})}\BibitemShut {NoStop}%
\bibitem [{\citenamefont {del Rio}\ \emph {et~al.}(2011)\citenamefont {del
  Rio}, \citenamefont {{\r{A}}berg}, \citenamefont {Renner}, \citenamefont
  {Dahlsten},\ and\ \citenamefont {Vedral}}]{delRio2011Nature}%
  \BibitemOpen
  \bibfield  {author} {\bibinfo {author} {\bibfnamefont {L.}~\bibnamefont {del
  Rio}}, \bibinfo {author} {\bibfnamefont {J.}~\bibnamefont {{\r{A}}berg}},
  \bibinfo {author} {\bibfnamefont {R.}~\bibnamefont {Renner}}, \bibinfo
  {author} {\bibfnamefont {O.}~\bibnamefont {Dahlsten}},\ and\ \bibinfo
  {author} {\bibfnamefont {V.}~\bibnamefont {Vedral}},\ }\bibfield  {title}
  {\bibinfo {title} {{T}he thermodynamic meaning of negative entropy},\ }\href
  {https://doi.org/10.1038/nature10123} {\bibfield  {journal} {\bibinfo
  {journal} {Nature}\ }\textbf {\bibinfo {volume} {474}},\ \bibinfo {pages}
  {61--63} (\bibinfo {year} {2011})}\BibitemShut {NoStop}%
\bibitem [{\citenamefont {Hayashi}\ and\ \citenamefont
  {Nagaoka}(2003)}]{Hayashi2003IEEETIT_formulas}%
  \BibitemOpen
  \bibfield  {author} {\bibinfo {author} {\bibfnamefont {M.}~\bibnamefont
  {Hayashi}}\ and\ \bibinfo {author} {\bibfnamefont {H.}~\bibnamefont
  {Nagaoka}},\ }\bibfield  {title} {\bibinfo {title} {{G}eneral formulas for
  capacity of classical-quantum channels},\ }\href
  {https://doi.org/10.1109/TIT.2003.813556} {\bibfield  {journal} {\bibinfo
  {journal} {IEEE Transactions on Information Theory}\ }\textbf {\bibinfo
  {volume} {49}},\ \bibinfo {pages} {1753--1768} (\bibinfo {year}
  {2003})}\BibitemShut {NoStop}%
\bibitem [{\citenamefont {Scutaru}(1979)}]{Scutaru1979RMP_covariant}%
  \BibitemOpen
  \bibfield  {author} {\bibinfo {author} {\bibfnamefont {H.}~\bibnamefont
  {Scutaru}},\ }\bibfield  {title} {\bibinfo {title} {{S}ome remarks on
  covariant completely positive linear maps on
  {C$^{\ensuremath{*}}$}-algebras},\ }\href
  {https://doi.org/10.1016/0034-4877(79)90040-5} {\bibfield  {journal}
  {\bibinfo  {journal} {Reports on Mathematical Physics}\ }\textbf {\bibinfo
  {volume} {16}},\ \bibinfo {pages} {79--87} (\bibinfo {year}
  {1979})}\BibitemShut {NoStop}%
\bibitem [{\citenamefont {Keyl}\ and\ \citenamefont
  {Werner}(1999)}]{Keyl1999JMP_cloning}%
  \BibitemOpen
  \bibfield  {author} {\bibinfo {author} {\bibfnamefont {M.}~\bibnamefont
  {Keyl}}\ and\ \bibinfo {author} {\bibfnamefont {R.~F.}\ \bibnamefont
  {Werner}},\ }\bibfield  {title} {\bibinfo {title} {{O}ptimal cloning of pure
  states, testing single clones},\ }\href {https://doi.org/10.1063/1.532887}
  {\bibfield  {journal} {\bibinfo  {journal} {Journal of Mathematical Physics}\
  }\textbf {\bibinfo {volume} {40}},\ \bibinfo {pages} {3283--3299} (\bibinfo
  {year} {1999})}\BibitemShut {NoStop}%
\bibitem [{\citenamefont {Marvian~Mashhad}(2012)}]{PhDMarvian2012_symmetry}%
  \BibitemOpen
  \bibfield  {author} {\bibinfo {author} {\bibfnamefont {I.}~\bibnamefont
  {Marvian~Mashhad}},\ }\emph {\bibinfo {title} {{S}ymmetry, Asymmetry and
  Quantum Information}},\ \href {https://hdl.handle.net/10012/7088} {Ph.D.
  thesis},\ \bibinfo  {school} {University of Waterloo} (\bibinfo {year}
  {2012})\BibitemShut {NoStop}%
\bibitem [{\citenamefont {{\r{A}}berg}(2014)}]{Aberg2014PRL_catalytic}%
  \BibitemOpen
  \bibfield  {author} {\bibinfo {author} {\bibfnamefont {J.}~\bibnamefont
  {{\r{A}}berg}},\ }\bibfield  {title} {\bibinfo {title} {{C}atalytic
  coherence},\ }\href {https://doi.org/10.1103/PhysRevLett.113.150402}
  {\bibfield  {journal} {\bibinfo  {journal} {Physical Review Letters}\
  }\textbf {\bibinfo {volume} {113}},\ \bibinfo {pages} {150402} (\bibinfo
  {year} {2014})}\BibitemShut {NoStop}%
\bibitem [{\citenamefont {Fang}\ \emph {et~al.}(2020)\citenamefont {Fang},
  \citenamefont {Wang}, \citenamefont {Tomamichel},\ and\ \citenamefont
  {Berta}}]{Fang2019IEEETIT_simulation}%
  \BibitemOpen
  \bibfield  {author} {\bibinfo {author} {\bibfnamefont {K.}~\bibnamefont
  {Fang}}, \bibinfo {author} {\bibfnamefont {X.}~\bibnamefont {Wang}}, \bibinfo
  {author} {\bibfnamefont {M.}~\bibnamefont {Tomamichel}},\ and\ \bibinfo
  {author} {\bibfnamefont {M.}~\bibnamefont {Berta}},\ }\bibfield  {title}
  {\bibinfo {title} {{Q}uantum channel simulation and the
  channel{\textquoteright}s smooth max-information},\ }\href
  {https://doi.org/10.1109/TIT.2019.2943858} {\bibfield  {journal} {\bibinfo
  {journal} {IEEE Transactions on Information Theory}\ }\textbf {\bibinfo
  {volume} {66}},\ \bibinfo {pages} {2129 -- 2140} (\bibinfo {year}
  {2020})}\BibitemShut {NoStop}%
\bibitem [{\citenamefont {Gour}\ and\ \citenamefont {Winter}(2019)}]{Gour19}%
  \BibitemOpen
  \bibfield  {author} {\bibinfo {author} {\bibfnamefont {G.}~\bibnamefont
  {Gour}}\ and\ \bibinfo {author} {\bibfnamefont {A.}~\bibnamefont {Winter}},\
  }\bibfield  {title} {\bibinfo {title} {{H}ow to quantify a dynamical
  resource?},\ }\href {https://doi.org/10.1103/PhysRevLett.123.150401}
  {\bibfield  {journal} {\bibinfo  {journal} {Physical Review Letters}\
  }\textbf {\bibinfo {volume} {123}},\ \bibinfo {pages} {150401} (\bibinfo
  {year} {2019})}\BibitemShut {NoStop}%
\bibitem [{\citenamefont {Dutil}(2011)}]{PhDDutil2011_multiparty}%
  \BibitemOpen
  \bibfield  {author} {\bibinfo {author} {\bibfnamefont {N.}~\bibnamefont
  {Dutil}},\ }\emph {\bibinfo {title} {{M}ultiparty quantum protocols for
  assisted entanglement distillation}},\ \href@noop {} {Ph.D. thesis},\
  \bibinfo  {school} {McGill University, Montr{\'e}al} (\bibinfo {year}
  {2011}),\ \Eprint {https://arxiv.org/abs/1105.4657} {arXiv:1105.4657}
  \BibitemShut {NoStop}%
\bibitem [{\citenamefont {Drescher}\ and\ \citenamefont
  {Fawzi}(2013)}]{Drescher2013ISIT_simultaneous}%
  \BibitemOpen
  \bibfield  {author} {\bibinfo {author} {\bibfnamefont {L.}~\bibnamefont
  {Drescher}}\ and\ \bibinfo {author} {\bibfnamefont {O.}~\bibnamefont
  {Fawzi}},\ }\bibfield  {title} {\bibinfo {title} {{O}n simultaneous
  min-entropy smoothing},\ }in\ \href
  {https://doi.org/10.1109/ISIT.2013.6620208} {\emph {\bibinfo {booktitle}
  {2013 IEEE International Symposium on Information Theory}}}\ (\bibinfo
  {publisher} {IEEE},\ \bibinfo {year} {2013})\ pp.\ \bibinfo {pages}
  {161--165},\ \Eprint {https://arxiv.org/abs/1312.7642} {arXiv:1312.7642}
  \BibitemShut {NoStop}%
\bibitem [{\citenamefont {Sen}(2018)}]{Sen2018arXiv_joint}%
  \BibitemOpen
  \bibfield  {author} {\bibinfo {author} {\bibfnamefont {P.}~\bibnamefont
  {Sen}},\ }\bibfield  {title} {\bibinfo {title} {{A} one-shot quantum joint
  typicality lemma}} (\bibinfo {year} {2018}),\ \bibinfo {note}
  {\href{https://arxiv.org/abs/1806.07278}{arXiv:1806.07278}}\BibitemShut
  {NoStop}%
\bibitem [{\citenamefont {Anshu}\ \emph
  {et~al.}(2019{\natexlab{c}})\citenamefont {Anshu}, \citenamefont {Berta},
  \citenamefont {Jain},\ and\ \citenamefont {Tomamichel}}]{Anshu18-2}%
  \BibitemOpen
  \bibfield  {author} {\bibinfo {author} {\bibfnamefont {A.}~\bibnamefont
  {Anshu}}, \bibinfo {author} {\bibfnamefont {M.}~\bibnamefont {Berta}},
  \bibinfo {author} {\bibfnamefont {R.}~\bibnamefont {Jain}},\ and\ \bibinfo
  {author} {\bibfnamefont {M.}~\bibnamefont {Tomamichel}},\ }\bibfield  {title}
  {\bibinfo {title} {{A} minimax approach to one-shot entropy inequalities},\
  }\bibfield  {journal} {\bibinfo  {journal} {Journal of Mathematical Physics}\
  }\textbf {\bibinfo {volume} {60}},\ \href {https://doi.org/10.1063/1.5126723}
  {10.1063/1.5126723} (\bibinfo {year} {2019}{\natexlab{c}})\BibitemShut
  {NoStop}%
\bibitem [{\citenamefont {Fannes}(1973)}]{Fannes1973CMP_continuity}%
  \BibitemOpen
  \bibfield  {author} {\bibinfo {author} {\bibfnamefont {M.}~\bibnamefont
  {Fannes}},\ }\bibfield  {title} {\bibinfo {title} {{A} continuity property of
  the entropy density for spin lattice systems},\ }\href
  {https://doi.org/10.1007/BF01646490} {\bibfield  {journal} {\bibinfo
  {journal} {Communications in Mathematical Physics}\ }\textbf {\bibinfo
  {volume} {31}},\ \bibinfo {pages} {291--294} (\bibinfo {year}
  {1973})}\BibitemShut {NoStop}%
\bibitem [{\citenamefont {Audenaert}(2007)}]{Audenaert2007JPA_sharp}%
  \BibitemOpen
  \bibfield  {author} {\bibinfo {author} {\bibfnamefont {K.~M.~R.}\
  \bibnamefont {Audenaert}},\ }\bibfield  {title} {\bibinfo {title} {{A} sharp
  continuity estimate for the von {Neumann} entropy},\ }\href
  {https://doi.org/10.1088/1751-8113/40/28/S18} {\bibfield  {journal} {\bibinfo
   {journal} {Journal of Physics A: Mathematical and Theoretical}\ }\textbf
  {\bibinfo {volume} {40}},\ \bibinfo {pages} {8127--8136} (\bibinfo {year}
  {2007})}\BibitemShut {NoStop}%
\bibitem [{\citenamefont {Berta}\ \emph {et~al.}(2010)\citenamefont {Berta},
  \citenamefont {Christandl}, \citenamefont {Colbeck}, \citenamefont {Renes},\
  and\ \citenamefont {Renner}}]{Berta2010_uncertainty}%
  \BibitemOpen
  \bibfield  {author} {\bibinfo {author} {\bibfnamefont {M.}~\bibnamefont
  {Berta}}, \bibinfo {author} {\bibfnamefont {M.}~\bibnamefont {Christandl}},
  \bibinfo {author} {\bibfnamefont {R.}~\bibnamefont {Colbeck}}, \bibinfo
  {author} {\bibfnamefont {J.~M.}\ \bibnamefont {Renes}},\ and\ \bibinfo
  {author} {\bibfnamefont {R.}~\bibnamefont {Renner}},\ }\bibfield  {title}
  {\bibinfo {title} {{T}he uncertainty principle in the presence of quantum
  memory},\ }\href {https://doi.org/10.1038/nphys1734} {\bibfield  {journal}
  {\bibinfo  {journal} {Nature Physics}\ }\textbf {\bibinfo {volume} {6}},\
  \bibinfo {pages} {659--662} (\bibinfo {year} {2010})}\BibitemShut {NoStop}%
\end{thebibliography}
\end{document}